\documentclass[12pt]{article}
\usepackage[utf8]{inputenc}
\usepackage[sectionbib]{natbib}

\title{\vspace{-2cm}Personalized Federated Vector Autoregression with Personalization Diversity\vspace{-0.2cm}}

\author{Zhiyun Fan, Xiaoyu Zhang, Guodong Li, and Di Wang\footnote{Zhiyun Fan and Xiaoyu Zhang contribute equally and are joint first authors. Di Wang is the corresponding author (di.wang@sjtu.edu.cn).}\\ \small{Shanghai Jiao Tong University, Tongji University, and University of Hong Kong}}

\usepackage[margin=1in]{geometry}
\usepackage{graphicx}
\usepackage{mathrsfs}
\usepackage{multirow}
\usepackage{amsfonts}
\usepackage{amsmath}
\usepackage{comment}
\usepackage{amsthm}
\usepackage{color}
\usepackage{mathtools}
\usepackage{algorithm}
\usepackage{sectsty}
\usepackage{amsmath}
\usepackage{amssymb}
\usepackage[toc,page]{appendix}
\usepackage[mathscr]{euscript} 
\usepackage[toc]{appendix}

\usepackage{chngcntr}
\usepackage{apptools}
\usepackage{booktabs}
\usepackage{lscape}
\usepackage{arydshln}
\usepackage{soul}
\usepackage{setspace}
\usepackage{epstopdf}
\usepackage{booktabs}
\usepackage{array}
\usepackage{makecell}
\usepackage{xcolor}
\usepackage{arydshln}
\usepackage{algpseudocode}
\algrenewcommand\algorithmicrequire{\textbf{Input:}}
\algrenewcommand\algorithmicensure{\textbf{Output:}}
\usepackage[sectionbib]{natbib}
\usepackage{enumitem}
\usepackage{etoc}
\usepackage[colorlinks]{hyperref}

\newtheorem{assumption}{Assumption}
\newtheorem{definition}{Definition}
\newtheorem{lemma}{Lemma}
\newtheorem{proposition}{Proposition}
\newtheorem{theorem}{Theorem}
\newtheorem{corollary}{Corollary}
\newtheorem{remark}{Remark}

\theoremstyle{definition}

\hypersetup{
    colorlinks=true,
    linkcolor=blue,
    citecolor=blue,
    filecolor=magenta,
    urlcolor=magenta,
}

\DeclareMathOperator*{\argmin}{arg\,min}

\DeclareMathOperator{\supp}{supp}
\DeclareMathOperator{\op}{op}

\newcommand{\Eall}{\mathcal E_{\mathrm{all}}}

\newcommand{\bm}{\mathbf}
\newcommand{\bbm}{\boldsymbol}

\newcommand{\inner}[2]{\left\langle #1, #2 \right\rangle}
\newcommand{\Ttot}{T_{\mathrm{tot}}}

\mathtoolsset{showonlyrefs}

\begin{document}
\etocdepthtag.toc{main}

\maketitle

\vspace{-1cm}
\begin{abstract}

Personalized federated learning can improve estimation for high-dimensional time series while preserving client-specific dynamics. However, parameter heterogeneity creates a fundamental ambiguity between shared and personalized structure. We introduce the principle of \emph{personalization diversity}, under which genuinely personalized effects recur in only a limited fraction of clients. Based on this principle, we propose \textsc{PerFeCT-VAR}, Personalized Federated Vector Autoregression via Frequency-Capped Thresholding, which decomposes each client-specific transition matrix into shared low-rank dynamics, shared sparse links, and personalized sparse departures. A cross-client frequency cap yields a sharp shared--personalized separation threshold and motivates frequency-capped thresholding for federated estimation. Our theory further accounts for heterogeneous client distributions through a distributional compatibility condition and establishes joint linear convergence of the shared and personalized components. Under sufficient personalization diversity, the shared dynamics retain federated total-sample-size gains while the personalized components achieve client-level accuracy. Simulations and an application to multi-store retail revenue data demonstrate the predictive and interpretive benefits of the proposed framework.

\end{abstract}

\textit{Keywords}: Federated learning, heterogeneity, high-dimensional time series, personalized learning.

\setlength\abovedisplayskip{2pt}
\setlength\belowdisplayskip{2pt}

\newpage
\vspace{-10pt}

\section{Introduction}
\label{sec:introduction}

\vspace{-5pt}
High-dimensional multivariate time series are increasingly collected by networks of related organizations, branches, or operational units \citep{wilms2018multiclassVAR,lucas2019bank}. These clients often observe the same variables and share common dynamic mechanisms, yet their raw data may need to remain decentralized because of data ownership, regulatory, or operational constraints. Federated learning provides a natural way to borrow information without pooling the raw data \citep{tan2022towards}. To make this concrete, suppose there are \(N\) clients, indexed by \(i=1,\ldots,N\), and client \(i\) follows a \(p\)-dimensional first-order vector autoregression
\[
    \bm y_{i,t} = \bm A_i \bm y_{i,t-1} + \bbm\epsilon_{i,t},
\]
where \(\bm A_i\in\mathbb R^{p\times p}\) is its transition matrix. Estimating each \(\bm A_i\) separately can be inefficient when the local time series is only moderately long, since it contains \(p^2\) transition coefficients. Conversely, when client dynamics are heterogeneous, complete pooling forces distinct transition mechanisms into a common model, creating a nonvanishing approximation error even as the pooled sample size grows. Personalized federated learning therefore seeks to borrow strength across related clients while retaining genuinely client-specific dynamics.

A first challenge arises from \emph{parameter heterogeneity}. A natural
representation is
\[
    \bm A_i=\bm G+\bm S_i,
\]
where \(\bm G\) captures shared dynamics and \(\bm S_i\) is a sparse personalized deviation. Sparsity of the individual \(\bm S_i\)'s alone does not identify this decomposition: if the same personalized relationship repeatedly appears across clients, it can be absorbed into the shared component. Thus, meaningful personalization requires not only within-client sparsity, but also diversity in how personalized effects are distributed across clients. In particular, a relationship should be regarded as genuinely personalized only if it does not recur too broadly across the population.

A second challenge is \emph{distributional heterogeneity}, which is particularly consequential for time series. Different transition matrices \(\bm A_i\) and innovation covariances \(\bbm\Sigma_{\epsilon,i}:=\operatorname{Var}(\bbm\epsilon_{i,t})\) induce different stationary covariance matrices \(\bbm\Gamma_i:=\operatorname{Var}(\bm y_{i,t})\), so clients can differ in both their target parameters and their local curvature. As a result, even after shared and personalized parameters are structurally separated, personalized estimation errors may affect aggregated shared updates differently across clients. Unequal local sample sizes can further amplify this effect. Personalized federated VAR estimation must therefore address two distinct questions: how to separate shared and personalized dynamics under parameter heterogeneity, and how to aggregate information efficiently under distributional heterogeneity. This leads to our central question: \emph{under what forms of parameter and distributional heterogeneity can personalized federated learning retain the statistical gains of federation?}

To address parameter heterogeneity, we introduce the principle of \emph{personalization diversity}: genuinely personalized relationships should occur only in a limited fraction of clients. We consider a three-component transition structure
\[
    \bm A_i
    =
    \underbrace{\bm L\bm R^\top}_{\text{shared low-rank}}
    +
    \underbrace{\bm S_0}_{\text{shared sparse}}
    +
    \underbrace{\bm S_i}_{\text{personalized sparse}},
\]
where these three components represent pervasive predictive structure, recurrent sparse links, and client-specific departures, respectively. Based on the elements of $\bm S_i$, we formalize personalization diversity through the cross-client frequency cap
\[
    \max_{1\leq j,k\leq p}
    \sum_{i=1}^N
    \mathbf 1\{(\bm S_i)_{jk}\neq0\}
    \leq m,
\]
where $(\bm S_i)_{jk}$ is the $(j,k)$-th element of $\bm S_i$.
The ratio \(m/N\) is the personalization frequency, with smaller values corresponding to greater personalization diversity. When \(m<N/2\), at every coordinate more than half of the clients have no personalized deviation, so the shared component is uniquely determined by the client majority. Consequently, the shared--personalized separation is identifiable, and the threshold $m<N/2$ is sharp for uniform identification.

To leverage personalization diversity computationally, we develop \textsc{PerFeCT-VAR}, short for \emph{Personalized Federated Vector Autoregression via Frequency-Capped Thresholding}. Its key operation, Frequency-Capped Thresholding (FCT), jointly enforces clientwise sparsity and the cross-client frequency cap on the personalized components. We combine FCT with scaled gradient updates for the shared low-rank component, sparsification for the shared sparse component, and a federated initialization based on local low-rank-plus-sparse fits. The resulting procedure preserves the structural restriction that gives personalization its identifiable interpretation while requiring no centralization of the raw time series data.

Distributional heterogeneity creates an additional difficulty in the theoretical analysis. Even when personalization is sufficiently diverse, client-specific covariance structures can transmit personalized estimation errors differently into shared updates. We capture this effect through directional covariance discrepancy measures in Section \ref{sec:4}, and impose a distributional compatibility condition that limits their interaction with the personalization frequency and sample-size imbalance. Importantly, this condition does not require identical stationary distributions: covariance matrices may differ substantially as long as their differences do not align too strongly with the relevant personalized and shared error directions.

Our theory characterizes when personalization diversity allows federated gains to persist under such heterogeneous client distributions. We establish a three-component contraction for the shared low-rank, shared sparse, and personalized sparse errors. The personalized-to-shared terms explicitly show how the personalization frequency, covariance discrepancy, and sample-size imbalance govern error propagation. In the balanced setting with \(N\) clients, \(T\) observations per client, and a frequency cap $m$, the statistical errors satisfy
\[
    \left\|
    \widehat{\bm L}\widehat{\bm R}^{\top}
    -
    \bm L_*\bm R_*^{\top}
    \right\|_{\mathrm F}^2
    \vee
    \left\|
    \widehat{\bm S}_0-\bm S_{0,*}
    \right\|_{\mathrm F}^2
    \lesssim
    \frac{D_{\mathrm G}}{NT}
    +
    \frac{m}{N}\frac{D_{\mathrm P}}{T},\text{ and } \frac1N\sum_{i=1}^N
    \left\|
    \widehat{\bm S}_i-\bm S_{i,*}
    \right\|_{\mathrm F}^2
    \lesssim
    \frac{D_{\mathrm P}}{T}
    +
    \frac{D_{\mathrm G}}{NT},
\]
where $D_{\mathrm G}$ and $D_{\mathrm P}$ are the effective dimensions of shared and personalized components, respectively.
Thus, the common-structure contribution benefits from the total sample size \(NT\), while the additional cost of personalization is controlled by its cross-client frequency. The main message is that \emph{personalization and federation need not be competing objectives}: when personalized effects are sufficiently diverse, the shared dynamics can retain substantial federated gains even across heterogeneous client distributions.

Our main contributions are summarized as follows.
\begin{enumerate}[leftmargin=1.2em]
\vspace{-5pt}

\item
\textbf{Personalization diversity and structural identification.}
We introduce personalization diversity as a structural principle for distinguishing client-specific dynamics from shared relationships. Its frequency-cap formulation yields a sharp shared--personalized separation threshold \(m<N/2\) for the three-component VAR model.

\vspace{-5pt}
\item
\textbf{Frequency-capped federated estimation.}
We develop FCT to exploit personalization diversity by jointly enforcing clientwise sparsity and cross-client frequency constraints, and integrate it with scaled low-rank updates, shared sparsification, and a federated initializer. The resulting \textsc{PerFeCT-VAR} procedure preserves the identification structure throughout estimation and admits communication-efficient implementations.

\vspace{-5pt}
\item
\textbf{Federation gains under heterogeneous client distributions.}
We establish a three-component contraction theory that explicitly characterizes personalized-to-shared error propagation. Directional covariance discrepancy measures and a distributional compatibility condition quantify the effects of heterogeneous stationary distributions, while the resulting rates show how personalization diversity and sample-size imbalance determine the statistical cost of personalization. The theory identifies regimes in which shared dynamics continue to benefit from the total sample size.

\end{enumerate}

\vspace{-10pt}
\subsection{Related work}
\label{subsec:related_work}

\vspace{-5pt}
\paragraph{Structured and multi-client VAR models.}
High-dimensional VAR estimation has been studied extensively under sparse, low-rank, and related structural assumptions \citep{basu2015regularized,kock2015oracle,basu2019low, wang2022high,huang2025supervised}. For multiple related time series, \citet{wilms2018multiclassVAR}, \citet{skripnikov2019joint}, and \citet{fisher2022penalized} exploit commonality among sparse transition structures. More recently, \citet{xu2025joint} consider panel VARs with shared low-rank and entity-specific sparse components, while \citet{lin2025improvingtimeseriesestimation} use shared multilinear representations for transfer across time series. Recent work by \citet{chen2026private} studies privacy--utility trade-offs for federated high-dimensional VAR estimation under shared low-rank structure and client-specific weakly sparse deviations. In contrast, our three-component model distinguishes pervasive shared dynamics, recurrent shared links, and personalized links, while personalization diversity provides a structural criterion for distinguishing the latter from common effects and motivates the frequency-capped estimation.

\vspace{-10pt}
\paragraph{Multitask and personalized federated learning.}
Federated multitask and personalized learning seek to exploit common information without requiring identical client parameters \citep{smith2017federated,tan2022towards}. Statistical analyses include the global--local trade-off of \citet{chen2023minimax} and sparse personalized regression of \citet{liu2025robust}. Particularly related, \citet{pal2024sample} and \citet{huang2025optimal} show that sparse and dispersed task-specific deviations can facilitate estimation of shared structure. Our personalization diversity principle is related to this idea but plays a broader role: it gives a sharp shared--personalized separation condition, defines the admissible personalized parameter space, and motivates FCT during federated estimation. The time series setting introduces the additional complication of heterogeneous stationary covariance structures; our theoretical analysis accounts for their effect through directional covariance discrepancy and distributional compatibility.

\vspace{-10pt}
\subsection{Notation and outline}
\label{subsec:notation}

For a positive integer \(n\), let \([n]=\{1,\ldots,n\}\), and let
\(\mathrm{GL}(r)\) denote the set of invertible \(r\times r\) matrices. For a matrix \(\bm M\), let \(\|\bm M\|_{\mathrm F}\), \(\|\bm M\|_*\), \(\|\bm M\|_{\mathrm{op}}\), \(\|\bm M\|_{2,\infty}\), \(\|\bm M\|_1\), and \(\|\bm M\|_{\max}\) denote its Frobenius, nuclear, operator, maximum rowwise Euclidean, entrywise $\ell_1$, and entrywise maximum norms, respectively. We write \(\langle\bm M,\bm H\rangle=\operatorname{tr}(\bm M^\top\bm H)\) and use \(\supp(\bm M)\) for its entrywise support. The singular values of \(\bm M\) are denoted by \(\sigma_j(\bm M)\), and \(\rho(\bm M)\) denotes the spectral radius of a square matrix. We write \(a\vee b=\max\{a,b\}\), and use \(a\lesssim b\) and \(a\asymp b\) with their usual meanings up to generic positive constants. 

Section~\ref{sec:2} introduces the \textsc{PerFeCT-VAR} model, personalization diversity, and the structural identification conditions. Section~\ref{sec:3} develops the federated estimator, including FCT, initialization, and communication-efficient implementations. Section~\ref{sec:4} establishes the computational and statistical theory, including the effect of heterogeneous client covariance structures. Section~\ref{sec:simulations} reports the simulation studies, and Section~\ref{sec:realdata} applies \textsc{PerFeCT-VAR} to multi-store retail revenue data. All appendices are provided in the Supplementary Material, containing additional theoretical results, proofs, algorithmic details, and numerical results.

\vspace{-0.5cm}
\section{Model and Structural Identification}
\label{sec:2}

\vspace{-0.2cm}
\subsection{The PerFeCT-VAR model}
\label{subsec:2.1}

We consider \(N\) clients observing the same \(p\)-dimensional time series variables over possibly different periods. After re-indexing each local series, client \(i\) contributes \(T_i\) usable transitions from the VAR(1) model
\begin{equation}
    \label{eq:global_local_model}
    \bm y_{i,t}
    =
    \bm A_i \bm y_{i,t-1}
    +
    \bbm\epsilon_{i,t},
    \qquad
    t=1,\ldots,T_i,\quad i=1,\ldots,N,
\end{equation}
where \(\bm A_i\in\mathbb R^{p\times p}\) is the client-specific transition matrix. For each client, \(\{\bbm\epsilon_{i,t}\}_{t\geq1}\) is a mean-zero white-noise process with positive-definite covariance $\bbm\Sigma_{\epsilon,i}$, which may vary across clients. We write \(\Ttot=\sum_{i=1}^N T_i\) for the total number of usable transitions.

The \textsc{PerFeCT-VAR} model decomposes each transition matrix as
\begin{equation}
\label{eq:decomp_full}
    \bm A_i
    =
    \underbrace{\bm G}_{\text{shared dynamics}}
    +
    \underbrace{\bm S_i}_{\text{personalized dynamics}},
    \qquad
    \bm G
    =
    \underbrace{\bm L\bm R^\top}_{\text{shared low-rank dynamics}}
    +
    \underbrace{\bm S_0}_{\text{shared sparse links}},
\end{equation}
where \(\bm L,\bm R\in\mathbb R^{p\times r}\), \(\operatorname{rank}(\bm L\bm R^\top)=r\), and typically \(r \ll p\). The low-rank component captures pervasive predictive dependence through a common low-dimensional structure, \(\bm S_0\) captures recurrent lagged relationships shared across clients, and \(\bm S_i\) represents client-specific sparse departures. Thus, the shared dynamics accommodate both pervasive and localized dependence \citep{basu2019low}, while the personalized component allows additional client-specific relationships; see Figure \ref{fig:perfect_var} for the interpretations of this three-component structure.

\begin{figure}[!htb]
    \centering
    \includegraphics[width=0.75\textwidth]{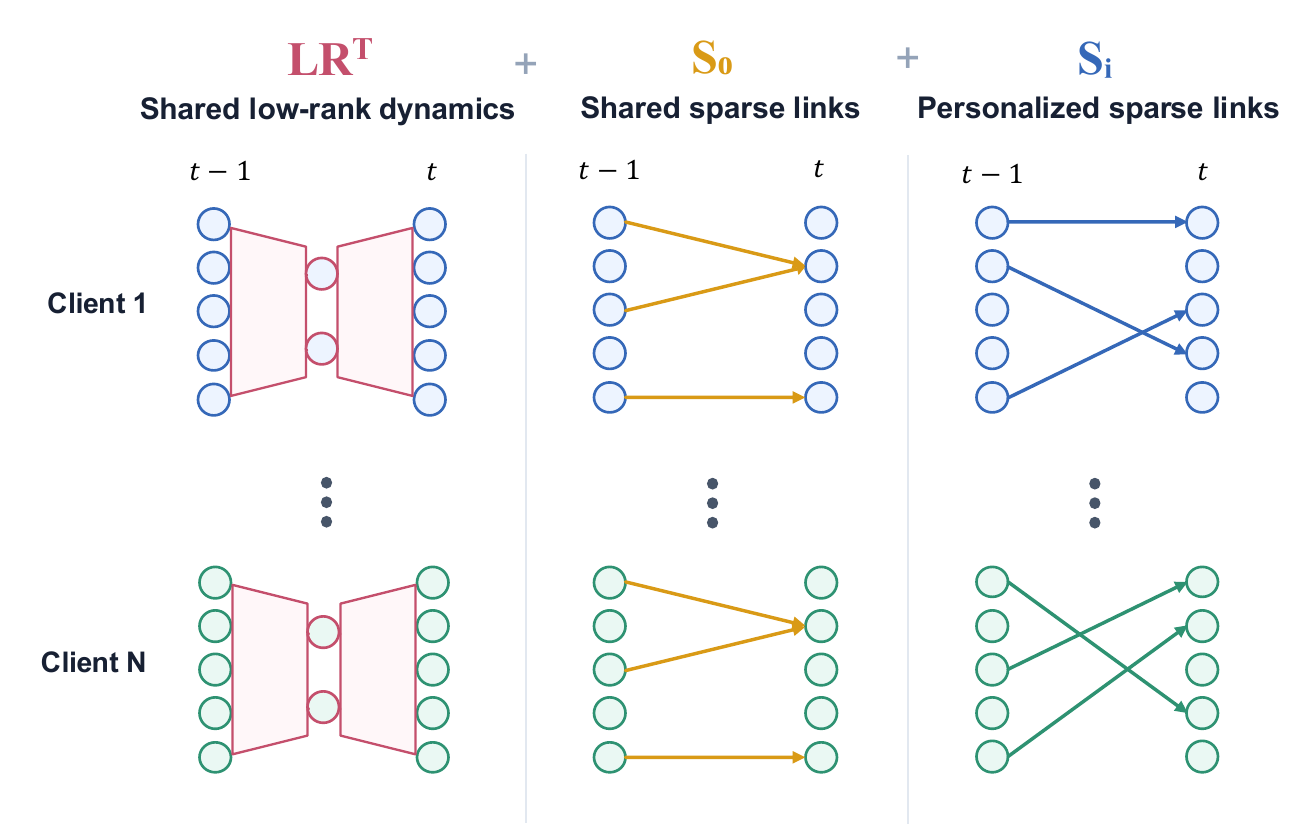}
    \vspace{-0.3cm}
    \caption{Three-component transition structure of \textsc{PerFeCT-VAR}.}
    \label{fig:perfect_var}
\end{figure}

We restrict the client-specific VAR processes to the standard stationary regime.

\begin{assumption}[Stationarity]
\label{assump:stationarity}
For every \(i\in[N]\), the spectral radius $\rho(\bm A_i)<1$.
\end{assumption}

Under Assumption~\ref{assump:stationarity}, client \(i\) admits a stationary
covariance matrix $\bbm\Gamma_i := \operatorname{Var}(\bm y_{i,t})$ with $\bbm\Gamma_i = \bm A_i\bbm\Gamma_i\bm A_i^\top + \bbm\Sigma_{\epsilon,i}$.
Hence, heterogeneity in either the transition matrices or the innovation covariances can induce different stationary covariance structures across clients. We study their effect on federated estimation in Section~\ref{sec:4}; we first establish the structural identification of the shared and personalized components.

\vspace{-0.5cm}
\subsection{Personalization diversity and identification}
\label{subsec:2.2}

Even under clientwise sparsity, the decomposition \(\bm A_i=\bm G+\bm S_i\) need not be identifiable. If the same personalized relationship repeatedly occurs across clients, it can instead be incorporated into the shared component. Motivated by the \emph{personalization diversity} principle introduced in Section~\ref{sec:introduction}, we restrict the cross-client frequency of personalized links. For clientwise sparsity levels \(\{s_i\}_{i=1}^N\) and a frequency cap \(m\), define the following support class.

\begin{definition}[Clientwise-sparse and frequency-capped support class]
\label{def:frequency_capped_class}
For \(s_i\in\{0,\ldots,p^2\}\) and \(m\in\{0,\ldots,N\}\), define
\[
    \mathcal D(\{s_i\}_{i=1}^N,m)
    =
    \left\{
    (\Omega_1,\ldots,\Omega_N):
    |\Omega_i|\leq s_i,\ i\in[N],\
    \max_{1\leq j,k\leq p}
    \sum_{i=1}^N
    \mathbf 1\{(j,k)\in\Omega_i\}
    \leq m
    \right\}.
\]
\end{definition}

Thus, each client has at most \(s_i\) personalized links, while any given transition coordinate can be personalized in at most \(m\) clients. The next proposition shows that the same frequency restriction gives an identification threshold.

\begin{proposition}[Shared--personalized identification]
\label{prop:frequency_capped_identifiability}
Suppose that
\[
    \bm A_i
    =
    \bm G+\bm S_i
    =
    \widetilde{\bm G}+\widetilde{\bm S}_i,
    \qquad i\in[N],
\]
and both supports belong to \(\mathcal D(\{s_i\}_{i=1}^N,m)\). If \(m<N/2\), then $\widetilde{\bm G}=\bm G$ and $\widetilde{\bm S}_i=\bm S_i$ for $i\in[N]$.
\end{proposition}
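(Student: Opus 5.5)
The argument is entrywise and uses a pigeonhole count. Set $\bm\Delta := \bm G - \widetilde{\bm G}$. Subtracting the two decompositions gives $\widetilde{\bm S}_i - \bm S_i = \bm\Delta$ for every $i\in[N]$. So the difference of personalized components is the same matrix for all clients. It suffices to show $\bm\Delta = \bm 0$, because then $\widetilde{\bm S}_i = \bm S_i$ follows immediately for each $i$.

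Fix a coordinate $(j,k)$ and suppose, for contradiction, that $\Delta_{jk}\neq 0$. Then for every client $i$ we have $(\widetilde{\bm S}_i)_{jk} - (\bm S_i)_{jk} = \Delta_{jk} \neq 0$. Hence at least one of $(\bm S_i)_{jk}$ and $(\widetilde{\bm S}_i)_{jk}$ is nonzero, that is, $(j,k)\in \supp(\bm S_i)\cup\supp(\widetilde{\bm S}_i)$ for all $i\in[N]$. Let $\Omega_i=\supp(\bm S_i)$ and $\widetilde\Omega_i=\supp(\widetilde{\bm S}_i)$. Both tuples lie in $\mathcal D(\{s_i\}_{i=1}^N,m)$, so the frequency cap in Definition~\ref{def:frequency_capped_class} gives
\[
N \le \sum_{i=1}^N \mathbf 1\{(j,k)\in\Omega_i\cup\widetilde\Omega_i\}
\le \sum_{i=1}^N \mathbf 1\{(j,k)\in\Omega_i\} + \sum_{i=1}^N \mathbf 1\{(j,k)\in\widetilde\Omega_i\} \le 2m.
\]
This contradicts $m<N/2$. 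Therefore $\Delta_{jk}=0$, and since $(j,k)$ was arbitrary, $\bm\Delta=\bm 0$.

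No step is a real obstacle; the only point needing care is the union bound across the two supports. The clientwise sparsity levels $s_i$ play no role in the argument. I would also add a brief remark on sharpness to match the claim in the introduction. When $N=2m$, take $\bm S_i = \bm E_{jk}$ for $m$ clients and $\bm 0$ for the rest, and set $\widetilde{\bm G} = \bm G + \bm E_{jk}$, $\widetilde{\bm S}_i = \bm S_i - \bm E_{jk}$. This swaps the roles of the two groups, both support tuples remain in the class with frequency $m$, and uniqueness fails. This remark is not required for the proposition as stated.
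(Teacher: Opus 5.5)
Your proof is correct and is essentially the paper's argument. The paper also sets $\bm H=\widetilde{\bm G}-\bm G$, observes that at a fixed coordinate the union of the two client index sets has at most $2m<N$ members, and evaluates the identity at a client outside that union; this is the direct form of your contradiction. Your sharpness example likewise matches the paper's appendix construction, which also tacitly needs $s_i\ge 1$ for the clients whose personalized components are changed.
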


Proposition~\ref{prop:frequency_capped_identifiability} follows from a coordinatewise majority principle. When \(m<N/2\), more than half of the
clients have no personalized departure at any fixed coordinate, so their
common value uniquely determines the corresponding entry of \(\bm G\).

\begin{remark}[Sharpness of the frequency threshold]
\label{rem:frequency_cap_sharpness}
The condition \(m<N/2\) is sharp for uniform shared--personalized identification over the frequency-capped parameter class. In general, once the frequency cap reaches one half of the clients, distinct shared baselines can be supported by different collections of personalized departures. A formal discussion of the sharpness of this threshold is provided in Appendix~\ref{subsec:app-global-local-identifiability} of the Supplementary Material.
\end{remark}

For the remainder of this article, a subscript \(*\) denotes the true
data-generating parameters. In particular, $\bm A_{i,*} = \bm G_*+\bm S_{i,*}$ and $\bm G_* = \bm L_*\bm R_*^\top+\bm S_{0,*}$. Define the clientwise personalized sparsities and the maximum cross-client
frequency by
\[
    s_{i,*}
    :=
    |\supp(\bm S_{i,*})|,
    \quad\text{and}\quad
    m_*
    :=
    \max_{1\leq j,k\leq p}
    \sum_{i=1}^N
    \mathbf 1\{(\bm S_{i,*})_{jk}\neq0\}.
\]

\begin{assumption}[Personalization diversity]
\label{assump:frequency_cap}
The true components satisfy \(m_*<N/2\).
\end{assumption}

\begin{figure}[!htb]
    \centering
    \includegraphics[width=0.45\textwidth]{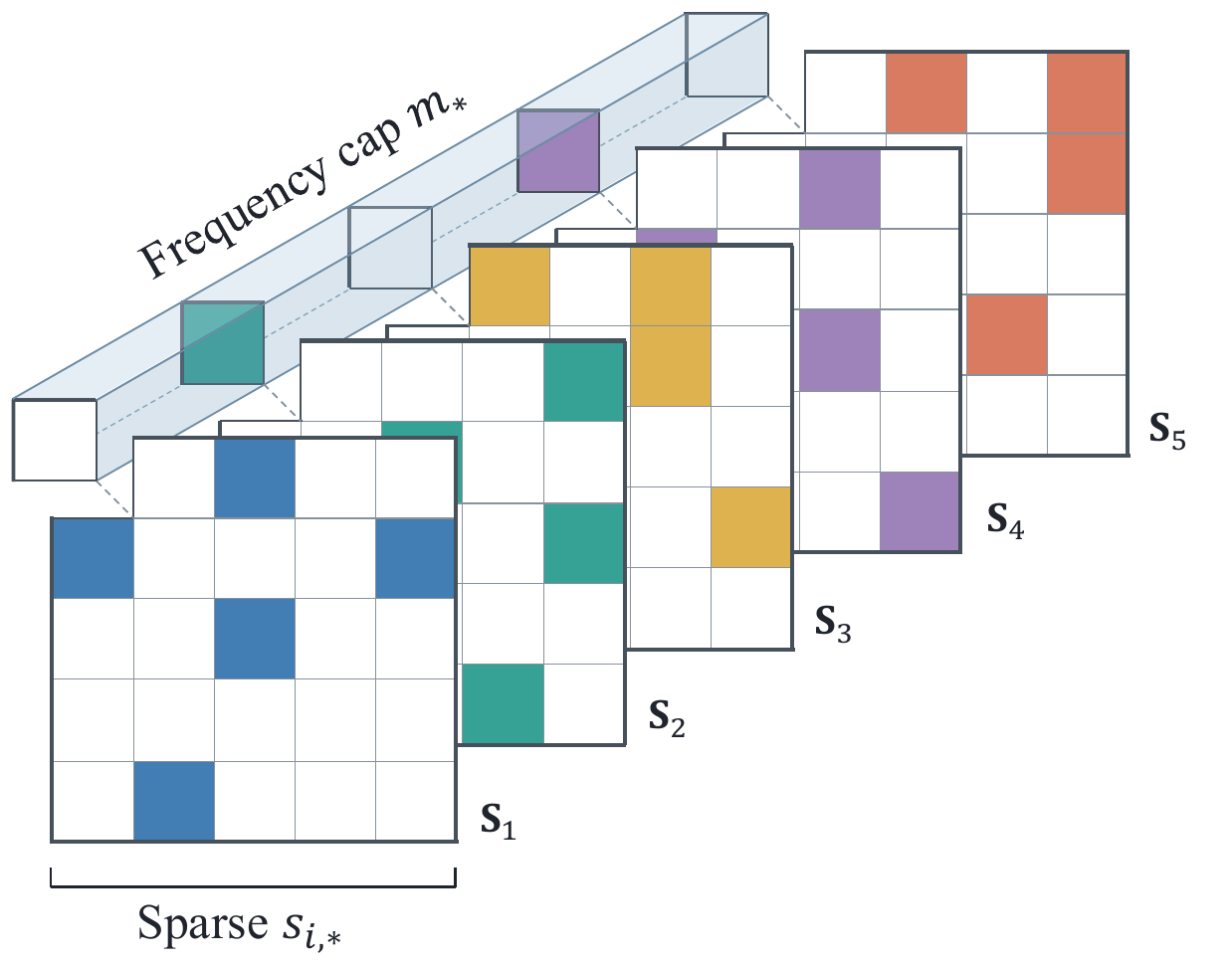}
    \vspace{-0.3cm}
    \caption{Illustration of personalization diversity under
    sparsity and frequency cap.}
    \label{fig:personalization_diversity}
\end{figure}

The ratio \(m_*/N\) is the personalization frequency. A smaller personalization frequency means that any particular personalized link recurs in fewer clients and hence corresponds to greater personalization diversity. Importantly, Assumption~\ref{assump:frequency_cap} restricts the frequency rather than the magnitude of client-specific departures. In estimation, we use a fitting cap \(m\) satisfying $m_*\leq m<N/2$. Section~\ref{sec:3} enforces this constraint through Frequency-Capped Thresholding, while Section~\ref{sec:4} quantifies how it affects shared estimation.

\vspace{-0.5cm}

\subsection{Shared low-rank plus sparse separation}
\label{subsec:2.3}

Having separated the shared transition matrix \(\bm G_*\) from the personalized components, we next identify its low-rank and sparse parts:
\begin{equation}
\label{eq:global_lowrank_sparse}
    \bm G_*
    =
    \bm B_*
    +
    \bm S_{0,*},\quad \bm B_* := \bm L_*\bm R_*^\top
    \text{ with }
    \operatorname{rank}(\bm B_*)=r.
\end{equation}
As in standard low-rank-plus-sparse decomposition, separation requires the
low-rank component to be sufficiently diffuse and the sparse component to
be sufficiently localized. For a support
\(\Omega\subseteq[p]\times[p]\), let
\(\Omega_{j,\cdot}\) and \(\Omega_{\cdot,k}\) denote its entries in row
\(j\) and column \(k\), respectively.

\begin{definition}[\((s,\gamma)\)-sparse support class]
\label{def:row_column_sparse_class}
For \(s\in\{0,\ldots,p^2\}\) and \(\gamma\in[0,1]\), define
\[
    \mathcal C(s,\gamma)
    =
    \left\{
    \Omega\subseteq[p]\times[p]:
    |\Omega|\leq s,\quad
    \max_j|\Omega_{j,\cdot}|
    \vee
    \max_k|\Omega_{\cdot,k}|
    \leq \gamma p
    \right\}.
\]
A matrix \(\bm S\) is \((s,\gamma)\)-sparse if
\(\supp(\bm S)\in\mathcal C(s,\gamma)\).
\end{definition}

Let \(\bm B_* = \bm U_*\bbm\Sigma_*\bm V_*^\top\) be the compact rank-\(r\) singular value decomposition. The tangent space of the rank-\(r\) matrix manifold at \(\bm B_*\) is
\[
    \mathcal T_*
    =
    \left\{
    \bm U_*\bm X^\top+\bm Y\bm V_*^\top:
    \bm X,\bm Y\in\mathbb R^{p\times r}
    \right\}.
\]
Let \(\Omega_{0,*}:=\supp(\bm S_{0,*})\), and denote by \(\mathcal P_{\mathcal T_*}\) and \(\mathcal P_{\Omega_{0,*}}\) the orthogonal projections onto \(\mathcal T_*\) and the subspace of matrices supported on \(\Omega_{0,*}\), respectively. We impose the following standard low-rank--sparse separation condition \citep{yi2016fast,zhang2018unified,cai2021learned,tong2021accelerating}.

\begin{assumption}[Shared low-rank--sparse separation]
\label{assump:shared_separation}
The singular vectors of \(\bm B_*\) satisfy
\[
    \|\bm U_*\|_{2,\infty}
    \leq
    \sqrt{\frac{\mu r}{p}},
    \qquad
    \|\bm V_*\|_{2,\infty}
    \leq
    \sqrt{\frac{\mu r}{p}},
\]
for some incoherence parameter $\mu \geq 1$. For the overall and row--column sparsity levels $s_{0,*}$ and $\gamma_*$, respectively,
\(
    \Omega_{0,*}\in\mathcal C(s_{0,*},\gamma_*).
\)
Moreover,
\(
    \left\|
    \mathcal P_{\Omega_{0,*}}
    \mathcal P_{\mathcal T_*}
    \right\|_{\mathrm{op}}
    <1.
\)
\end{assumption}

The incoherence and row--column sparsity conditions prevent the low-rank and sparse components from being excessively concentrated on the same coordinates. The final transversality condition directly ensures that $\mathcal T_* \cap \left\{ \bm H:\supp(\bm H)\subseteq\Omega_{0,*}\right\}=\{\bm 0\}$. Consequently, \(\bm B_*\) and \(\bm S_{0,*}\) are locally identifiable within their respective structural classes.

Although the low-rank matrix \(\bm B_*\) is identifiable, its factorization is not. For any \(\bm Q\in\mathrm{GL}(r)\),
\(
    \bm L_*\bm R_*^\top
    =
    (\bm L_*\bm Q)
    (\bm R_*\bm Q^{-\top})^\top
\).
We choose the balanced representative $\bm L_* = \bm U_*\bbm\Sigma_*^{1/2}$ and $\bm R_* = \bm V_*\bbm\Sigma_*^{1/2}$, and measure factor estimation error modulo this equivalence in Section~\ref{subsec:main_theory}.

\vspace{-0.5cm}

\section{Federated Estimation}
\label{sec:3}

We now develop a federated estimator for \textsc{PerFeCT-VAR}. We first formulate the constrained empirical problem, introduce Frequency-Capped Thresholding,
and then combine it with scaled gradient updates for the shared low-rank
component and capacity-aware thresholding for the shared sparse component.
We also provide a federated initializer and discuss communication-efficient
implementations.

\vspace{-0.5cm}
\subsection{Federated objective and constrained estimator}
\label{sec:3.1}

For client \(i\), define the local least-squares loss
\begin{equation}
\label{eq:local_loss}
F_i(\bm L,\bm R,\bm S_0,\bm S_i) = \frac{1}{2T_i} \sum_{t=1}^{T_i} \left\| \bm y_{i,t} - (\bm L\bm R^\top+\bm S_0+\bm S_i)\bm y_{i,t-1} \right\|_2^2.
\end{equation}
Let \(w_i=T_i/\Ttot\) denote the sample-size weight of client \(i\). The federated empirical loss is
\begin{equation}
\label{eq:global_loss}
\begin{aligned}
F(\bm L,\bm R,\bm S_0,\{\bm S_i\}_{i=1}^N)
&=
\sum_{i=1}^N
w_iF_i(\bm L,\bm R,\bm S_0,\bm S_i)\\
&=
\frac{1}{2\Ttot}
\sum_{i=1}^N\sum_{t=1}^{T_i}
\left\|
\bm y_{i,t}
-
(\bm L\bm R^\top+\bm S_0+\bm S_i)
\bm y_{i,t-1}
\right\|_2^2.
\end{aligned}
\end{equation}

Given the rank \(r\), shared sparsity levels \((s_0,\gamma)\), personalized sparsity levels \(\{s_i\}_{i=1}^N\), and frequency cap \(m\), we consider
\begin{equation}
\label{eq:federated_estimation_problem}
\begin{aligned}
\min_{\bm L,\bm R,\bm S_0,\{\bm S_i\}_{i=1}^N}
\quad&
F(\bm L,\bm R,\bm S_0,\{\bm S_i\}_{i=1}^N)\\
\text{subject to}\quad&
\supp(\bm S_0)\in\mathcal C(s_0,\gamma),\quad
\bigl(\supp(\bm S_i)\bigr)_{i=1}^N
\in\mathcal D(\{s_i\}_{i=1}^N,m).
\end{aligned}
\end{equation}
The factor dimensions automatically restrict \(\bm L\bm R^\top\) to rank at most
\(r\), while the support constraints impose the prescribed sparsity
and frequency restrictions introduced in Section~\ref{sec:2}.
Indeed, for each client, the population least-squares loss depends on the components only through their sum
\(\bm L\bm R^\top+\bm S_0+\bm S_i\), and is minimized when this sum
equals the true transition matrix \(\bm A_{i,*}\).
The loss alone therefore does not distinguish the components.
Our algorithm is designed to maintain the prescribed support
constraints at each iteration. In particular, for the personalized components, this requires joint support selection
under both clientwise sparsity and the cross-client frequency cap,
motivating the operator introduced
next.


\vspace{-0.5cm}
\subsection{Frequency-capped thresholding}
\label{sec:3.2}
\label{sec:fct}

To enforce clientwise sparsity, one can apply hard thresholding
separately at each client. However, these independent selections
may retain the same coordinate in too many clients,
violating the cross-client frequency cap.
We therefore introduce \emph{Frequency-Capped Thresholding} (FCT)
to enforce both constraints jointly.

Given proposal matrices \(\{\bm M_i\}_{i=1}^N\), define
\begin{equation}
\label{eq:FCT}
\operatorname{FCT}_{\{s_i\},m}
\bigl(\{\bm M_i\}_{i=1}^N\bigr)
\in
\argmin_{\substack{
(\supp(\bm S_i))_{i=1}^N
\in\mathcal D(\{s_i\}_{i=1}^N,m)}}
\sum_{i=1}^N
w_i\|\bm S_i-\bm M_i\|_{\mathrm F}^2.
\end{equation}
For fixed feasible supports \((\Omega_i)_{i=1}^N\), minimizing \eqref{eq:FCT} retains the entries
of \(\bm M_i\) on \(\Omega_i\) and sets all other entries to zero.
Thus, FCT selects feasible supports that maximize the total weighted
energy of the retained entries,
\(\sum_{i=1}^N w_i\sum_{(j,k)\in\Omega_i}|(\bm M_i)_{jk}|^2\).
This support selection can be formulated as a maximum-weight bipartite
\(b\)-matching problem and solved exactly via its linear programming
relaxation; see
Appendix~\ref{subsec:app-fct-computation} of the Supplementary Material
for computational details.

By preserving the clientwise sparsity and frequency-cap constraints
throughout the iterations, FCT provides the computational mechanism for
leveraging personalization diversity. Section~\ref{sec:4} shows how the
resulting personalization frequency controls the propagation of personalized
estimation errors into the shared updates.

\vspace{-0.5cm}
\subsection{The PerFeCT-VAR algorithm}
\label{sec:perfect_algorithm}

We now combine FCT with updates for the shared low-rank and sparse components. At communication round \(l\), client \(i\) has access to the current shared parameters \(\bm L^{(l)},\bm R^{(l)},\bm S_0^{(l)}\) and its personalized component \(\bm S_i^{(l)}\). Define \(\bm A_i^{(l)}=\bm L^{(l)}(\bm R^{(l)})^\top+\bm S_0^{(l)}+\bm S_i^{(l)}\).  Client \(i\) computes the gradient with respect to its transition matrix,
\begin{equation}
\label{eq:local_gradient}
\bm D_i^{(l)}
=
\frac{1}{T_i}
\sum_{t=1}^{T_i}
\left(
\bm A_i^{(l)}\bm y_{i,t-1} - \bm y_{i,t}
\right)
\bm y_{i,t-1}^\top.
\end{equation}
For the personalized component, using a step size \(\eta_{\mathrm P}\), it forms
\begin{equation}
\label{eq:local_Si}
\bm S_i^{(l+1/2)}
=
\bm S_i^{(l)}
-
\eta_{\mathrm P}\cdot\bm D_i^{(l)}
\end{equation}
and computes the FCT scores
\(q_{i,jk}^{(l)}=|(\bm S_i^{(l+1/2)})_{jk}|^2\). In the exact algorithm, client \(i\) then sends \(\bm D_i^{(l)}\) and all \(p^2\) scores \(\{q_{i,jk}^{(l)}\}_{j,k\in[p]}\) to the server.

The server aggregates the local gradients as
\(\overline{\bm D}^{(l)}
=\sum_{i=1}^Nw_i\bm D_i^{(l)}\).
For the shared low-rank component, we adopt the ScaledGD updates of
\citet{tong2021accelerating}:
\begin{equation}
\label{eq:server_LR}
\begin{aligned}
\bm L^{(l+1)}
&=
\bm L^{(l)}
-
\eta_{\mathrm{LR}}\cdot
\overline{\bm D}^{(l)}
\bm R^{(l)}
\left\{
\bm R^{(l)\top}\bm R^{(l)}
\right\}^{-1},\\
\bm R^{(l+1)}
&=
\bm R^{(l)}
-
\eta_{\mathrm{LR}}\cdot
\overline{\bm D}^{(l)\top}
\bm L^{(l)}
\left\{
\bm L^{(l)\top}\bm L^{(l)}
\right\}^{-1}.
\end{aligned}
\end{equation}
The inverse Gram matrices provide adaptive preconditioning for the factorized low-rank update; their effect on the convergence rate is established in Section~\ref{sec:4}.

For the shared sparse component, we use a capacity-aware hard-thresholding
operator \(\mathcal G_{s_0,\gamma}\) to maintain \((s_0,\gamma)\)-sparsity.
It scans entries in decreasing order of magnitude and retains an entry
only when doing so preserves the overall sparsity budget \(s_0\) and the
row/column capacities \(\gamma p\). The server updates
\begin{equation}
\label{eq:global_sparse_thresholding_update}
\bm S_0^{(l+1)}
=
\mathcal G_{s_0,\gamma}
\left(
\bm S_0^{(l)}
-
\eta_0\overline{\bm D}^{(l)}
\right).
\end{equation}

Finally, using the uploaded scores, the server performs the support
selection step of FCT, choosing feasible supports
\(\{\widehat\Omega_i^{(l+1)}\}_{i=1}^N\) that maximize the total
weighted score.
It sends \(\bm L^{(l+1)},\bm R^{(l+1)},\bm S_0^{(l+1)}\) to all clients
and \(\widehat\Omega_i^{(l+1)}\) to client \(i\).
Each client then completes its personalized update as
\begin{equation}
\label{eq:client_HT}
\bm S_i^{(l+1)}
=
\left(
\bm S_i^{(l+1/2)}
\right)_{\widehat\Omega_i^{(l+1)}},
\qquad i=1,\ldots,N.
\end{equation}
Hence, each communication round consists of one client-to-server upload followed by one server-to-client download. The exact procedure is summarized in Algorithm~\ref{alg:perfect_var}; communication-efficient variants are discussed in Section~\ref{sec:communication}.

\begin{algorithm}[tbp]
\caption{The PerFeCT-VAR Algorithm}
\label{alg:perfect_var}
\label{alg:exact_HT_FCHT}
\small
\begin{algorithmic}[1]
\Require Initial values
\(\bm L^{(0)},\bm R^{(0)},\bm S_0^{(0)},
\{\bm S_i^{(0)}\}_{i=1}^N\);
structural parameters
\(r,s_0,\gamma,\{s_i\}_{i=1}^N,m\);
step sizes
\(\eta_{\mathrm{LR}},\eta_0,\eta_{\mathrm P}\);
and number of communication rounds \(K\).

\State Server sends \(\bm L^{(0)},\bm R^{(0)},\bm S_0^{(0)}\) to all clients.
\For{\(l=0,1,\ldots,K-1\)}
    \State Each client \(i\) computes \(\bm D_i^{(l)}\) by
    \eqref{eq:local_gradient}, forms \(\bm S_i^{(l+1/2)}\) by
    \eqref{eq:local_Si}, and calculates the scores
    \(\{q_{i,jk}^{(l)}:j,k\in[p]\}\).
    \State Each client \(i\) sends \(\bm D_i^{(l)}\) and the FCT scores to the server.
    \State Server forms
    \(\overline{\bm D}^{(l)}
    =\sum_{i=1}^Nw_i\bm D_i^{(l)}\), updates
    \(\bm L^{(l+1)},\bm R^{(l+1)}\) by \eqref{eq:server_LR}, and
    updates \(\bm S_0^{(l+1)}\) by
    \eqref{eq:global_sparse_thresholding_update}.
    \State Server solves the FCT support-allocation problem  associated with \eqref{eq:FCT} and obtains
    \(\{\widehat\Omega_i^{(l+1)}\}_{i=1}^N\).
    \State Server sends
    \(\bm L^{(l+1)},\bm R^{(l+1)},\bm S_0^{(l+1)}\)
    and \(\widehat\Omega_i^{(l+1)}\) to client \(i\).
    \State Each client \(i\) updates \(\bm S_i^{(l+1)}\) by
    \eqref{eq:client_HT}.
\EndFor
\Ensure
\(\widehat{\bm L}=\bm L^{(K)}\),
\(\widehat{\bm R}=\bm R^{(K)}\),
\(\widehat{\bm S}_0=\bm S_0^{(K)}\), and
\(\widehat{\bm S}_i=\bm S_i^{(K)}\), \(i=1,\ldots,N\).
\end{algorithmic}
\end{algorithm}

\vspace{-0.5cm}
\subsection{Initialization and complete estimation protocol}
\label{sec:algorithm_initialization}

Because the \textsc{PerFeCT-VAR} objective is nonconvex, Algorithm~\ref{alg:perfect_var} requires a suitable warm start. We construct a federated initializer from local low-rank-plus-sparse estimates, using cross-client aggregation for the shared low-rank component and the majority structure implied by personalization diversity for the shared sparse component.

Following \citet{basu2019low}, each client first solves
\begin{equation}
\label{eq:init_local_basu}
\begin{aligned}
(\widetilde{\bm B}_i,\widetilde{\bm C}_i)
\in
\argmin_{\substack{
\|\bm B\|_{\max}\leq \nu/p}}
\bigg\{
\frac{1}{2T_i}
\sum_{t=1}^{T_i}
\left\|
\bm y_{i,t}
-
(\bm B+\bm C)\bm y_{i,t-1}
\right\|_2^2+
\lambda_{B,i}\|\bm B\|_*
+
\lambda_{C,i}\|\bm C\|_1
\bigg\}.
\end{aligned}
\end{equation}
Here, \(\bm B\) is the shared low-rank component, while \(\bm C\) captures the combined sparse component \(\bm S_{0}+\bm S_{i}\). The constraint on \(\bm B\) controls the ambiguity between matrices that are simultaneously low-rank and sparse. The theoretical choices of \(\nu\), \(\lambda_{B,i}\), and \(\lambda_{C,i}\) are specified in Section~\ref{sec:4}.

Let \(\operatorname{SVD}_r(\cdot)\) denote the best rank-\(r\) approximation obtained by truncated singular value decomposition. Client \(i\) computes
\begin{equation}
\label{eq:init_local_rank_compression}
\widetilde{\bm B}_{i,r}
:=
\operatorname{SVD}_r(\widetilde{\bm B}_i)
=
\widetilde{\bm U}_i
\widetilde{\bbm\Sigma}_i
\widetilde{\bm V}_i^\top
=
\widetilde{\bm L}_i
\widetilde{\bm R}_i^\top,
\end{equation}
where
\(\widetilde{\bm L}_i = \widetilde{\bm U}_i\widetilde{\bbm\Sigma}_i^{1/2}\) and \(\widetilde{\bm R}_i = \widetilde{\bm V}_i\widetilde{\bbm\Sigma}_i^{1/2}\). It then sends \((T_i,\widetilde{\bm L}_i,\widetilde{\bm R}_i, \widetilde{\bm C}_i)\) to the server. Transmitting the factorized low-rank estimate requires \(2pr\) entries rather than \(p^2\).

The server first reconstructs the local low-rank estimates and forms \(\overline{\bm B} = \sum_{i=1}^Nw_i\widetilde{\bm B}_{i,r}\). It computes the rank-\(r\) truncated singular value decomposition \(\operatorname{SVD}_r(\overline{\bm B}) = \bm U_0\bbm\Sigma_0\bm V_0^\top\) and initializes \(\bm L^{(0)}=\bm U_0\bbm\Sigma_0^{1/2}\) and \(\bm R^{(0)}=\bm V_0\bbm\Sigma_0^{1/2}\). Under personalization diversity, \(m_*<N/2\), so at every coordinate more than half of the clients have no personalized departure. Consequently, at the true parameter level,
\(
\operatorname{median}_{1\leq i\leq N}
\left\{
(\bm S_{0,*}+\bm S_{i,*})_{jk}
\right\}
=
(\bm S_{0,*})_{jk}.
\)
Motivated by this majority property, the server forms the coordinatewise
median
\(
(\widetilde{\bm S}_0)_{jk}
=
\operatorname{median}_{1\leq i\leq N}
(\widetilde{\bm C}_i)_{jk}
\)
and sets
\(
\bm S_0^{(0)}
=
\mathcal G_{s_0,\gamma}
(\widetilde{\bm S}_0).
\)
Finally, the server forms personalized residuals
\(
\widetilde{\bm S}_i^{\,\mathrm{res}}
=
\widetilde{\bm C}_i-\bm S_0^{(0)}\) and applies FCT:
\[
\{\bm S_i^{(0)}\}_{i=1}^N
=
\operatorname{FCT}_{\{s_i\},m}
\left(
\{\widetilde{\bm S}_i^{\,\mathrm{res}}\}_{i=1}^N
\right).
\]
The server then broadcasts \((\bm L^{(0)},\bm R^{(0)},\bm S_0^{(0)})\) to all clients and sends \(\bm S_i^{(0)}\) to client \(i\).

The complete procedure therefore has two stages. The first constructs a federated initializer that exploits personalization diversity through coordinatewise majority aggregation and FCT, while the second applies Algorithm~\ref{alg:perfect_var} for \(K\) communication rounds. The validity of the initializer and the convergence of the refinement stage are established in Section~\ref{sec:4}. Figure~\ref{fig:fed_protocol} summarizes the complete estimation protocol.

\begin{figure}[!htp]
\centering
\includegraphics[width=0.95\textwidth]{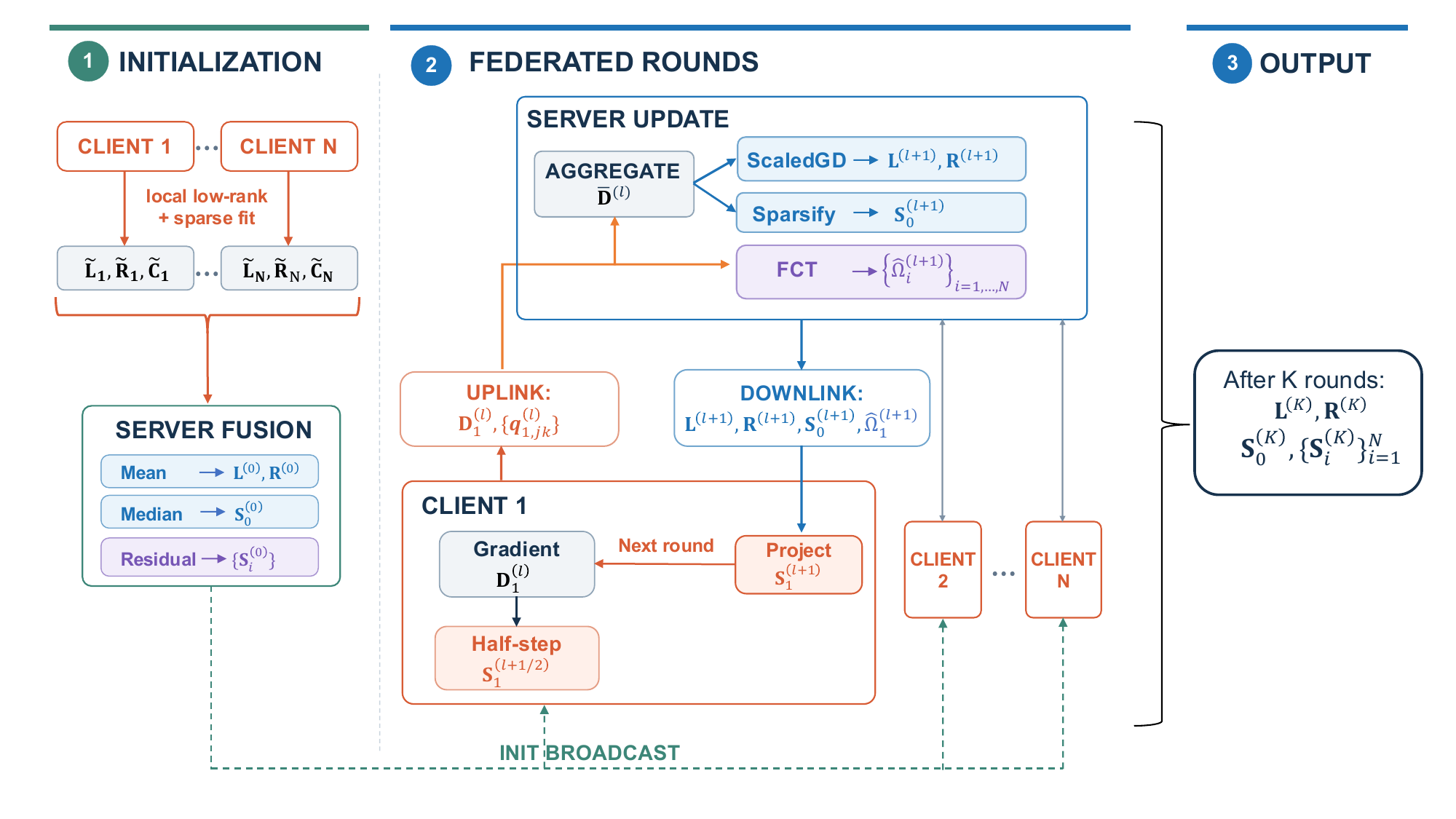}
\vspace{-0.5cm}
\caption{Complete \textsc{PerFeCT-VAR} estimation protocol.}
\label{fig:fed_protocol}
\end{figure}

\vspace{-0.4cm}
\subsection{Communication-efficient implementation}
\label{sec:communication}

At each communication round of Algorithm~\ref{alg:perfect_var}, client
\(i\) uploads the dense \(p\times p\) gradient \(\bm D_i^{(l)}\) for the
shared updates and all \(p^2\) FCT scores for personalized support selection.
We now discuss how this uplink communication cost can be reduced.

First, when updating $\bm S_i$, a lossless screening rule can be applied to allow each client to transmit only a subset of its largest FCT scores. When \(s_i=s\), the score transmission can be reduced from \(p^2\) to \(\mathcal O(\min\{Ns/m,p^2\})\) indexed values. Second, for the shared sparse update, we consider sparse-gradient compression with error feedback \citep{stich2018sparsified}, allowing each client to transmit at most \(\widetilde{s}_0\) gradient entries for this update.
Meanwhile, for the shared low-rank update, each client uploads only the two
\(p\times r\) factor gradients \(\bm D_i^{(l)}\bm R^{(l)}\) and
\((\bm D_i^{(l)})^\top\bm L^{(l)}\).
Transmitting the compressed gradient together with these factor gradients
avoids uploading the full \(p\times p\) gradient matrix. Finally, we consider less frequent FCT support updates to further
reduce score uploads.

FCT screening and factor-gradient transmission are exact, whereas gradient compression and less frequent FCT support updates are approximate. Detailed screening results, communication complexities, and the communication-efficient algorithm are provided in Appendix~\ref{app:reduced-communication} of the Supplementary Material. Unless otherwise stated, the theoretical results in Section~\ref{sec:4} concern the exact \textsc{PerFeCT-VAR} algorithm.

\vspace{-0.5cm}
\section{Theoretical Analysis}
\label{sec:4}

\vspace{-0.3cm}
We now study when \textsc{PerFeCT-VAR} can simultaneously recover the shared and personalized dynamics and retain the statistical benefit of federation. We first introduce the standard regularity conditions, then formalize the distributional compatibility, and finally establish joint contraction, statistical rates, and validity of the federated initializer. Complete quantitative conditions are provided in Appendix~\ref{app:full-theory}, with proofs in Appendix~\ref{app:main-theorem-proof}.

\vspace{-0.5cm}
\subsection{Theoretical setup and regularity conditions}
\label{subsec:theoretical_setup}

The model formulation in Section~\ref{sec:2} requires only mean-zero white-noise innovations. For the finite-sample analysis, we impose Gaussian innovations to obtain concentration bounds for the dependent sample covariance and score matrices. This assumption is used only for the theory; the estimation procedure itself does not require Gaussianity.

\begin{assumption}[Gaussian innovations]
\label{assump:gaussian_sampling}
For each client \(i\), \(\bbm\epsilon_{i,t}\sim N(\bm 0,\bbm\Sigma_{\epsilon,i})\), where \(\bbm\Sigma_{\epsilon,i}\) is positive definite. The innovations are independent across clients and time.
\end{assumption}

Recall that \(\bbm\Gamma_i=\operatorname{Var}(\bm y_{i,t})\) denotes the stationary covariance matrix of client \(i\). The shared updates in the federated objective are governed by the weighted covariance \(\bbm\Gamma_w=\sum_{i=1}^Nw_i\bbm\Gamma_i\). We write \(\overline T=\Ttot/N\) and \(T_{\max}=\max_{i\in[N]}T_i\).
To quantify temporal dependence, let
\(\mathcal A_i(z)=\bm I_p-\bm A_{i,*}z\) and define the spectral dependence measures \citep{basu2015regularized}
\begin{equation}
\label{eq:temporal_dependence_scale}
M_{i}
:=
\frac{
\lambda_{\max}(\bbm\Sigma_{\epsilon,i})
}{
\min_{|z|=1}
\lambda_{\min}
\left\{
\mathcal A_i(z)^{\mathrm H}\mathcal A_i(z)
\right\}
},
\quad\text{and}\quad
M:=\max_{i\in[N]}M_{i},
\end{equation}
where \(\mathcal A_i(z)^{\mathrm H}\) denotes the conjugate transpose. We
also define
\(\tau^2:=\max_{i\in[N]}
\|\bbm\Sigma_{\epsilon,i}\|_{\mathrm{op}}\).

\begin{assumption}[Covariance and temporal dependence]
\label{assump:regular_covariance}
There exist constants \(0<\phi_{\min}\leq\phi_{\max}<\infty\), such that \( \phi_{\min}\bm I_p \preceq \bbm\Gamma_i \preceq \phi_{\max}\bm I_p\), for $i\in[N]$. Meanwhile, \(M<\infty\) and \(\tau^2<\infty\).
\end{assumption}

The covariance bounds determine the curvature of the population losses, while \(M\) and \(\tau^2\) control concentration of the sample covariance for VAR processes \citep{basu2015regularized}. Fully explicit curvature and dependence constants are provided in Appendix~\ref{subsec:app-full-curvature}.

The incoherence condition in Section~\ref{subsec:2.3} controls the coordinate concentration of the shared low-rank component itself. For the federated gradient updates, we additionally require this diffuseness to be preserved after weighting by the pooled covariance.

\begin{assumption}[Covariance-adjusted incoherence]
\label{assump:cov_adjusted_incoherence}
Let \(\bm L_*\bm R_*^\top = \bm U_*\bbm\Sigma_*\bm V_*^\top\) be the compact rank-\(r\) singular value decomposition. For some constant \(\widetilde\phi_w>0\),
\[
\left\|
\bbm\Gamma_w\bm V_*
\right\|_{2,\infty}
\leq
\widetilde\phi_w
\sqrt{\frac{\mu r}{p}}.
\]
\end{assumption}

The vector \(\bm V_*^\top\bm y_{i,t-1}\) provides a low-dimensional
representation of the lagged predictors using the shared basis
\(\bm V_*\), while \(\bbm\Gamma_w\bm V_*\) is the weighted cross-covariance between the original predictors and this representation.
Assumption~\ref{assump:cov_adjusted_incoherence} prevents this covariance from concentrating on a small number of predictor coordinates and thereby preserves the separation between the shared low-rank and sparse components.

\vspace{-0.5cm}
\subsection{Covariance discrepancy and distributional compatibility}
\label{subsec:covariance_compatibility}

The regularity conditions in Section~\ref{subsec:theoretical_setup} ensure well-behaved curvature and concentration within the individual structural components. However, they do not guarantee that personalized estimation errors aggregate benignly in the shared updates when the stationary covariances differ across clients. We therefore quantify the part of covariance heterogeneity that interacts with feasible personalized and shared estimation errors.

Comparisons among the truth, thresholded iterates, and intermediate updates enlarge the relevant supports by constant factors. Define $s_0^\dagger:=2s_0+s_{0,*}$, $\gamma^\dagger:=2\gamma+\gamma_*$, $s_i^\dagger:=2s_i+s_{i,*}$, and $m^\dagger:=2m+m_*$. These quantities are determined by the fitting levels and the true support sizes rather than being additional tuning parameters. Since \(m_*\leq m\), we have \(m^\dagger\leq3m\), so \(m^\dagger/N\) remains of the same order as the fitting personalization frequency \(m/N\).

For the shared sparse component, define the normalized error directions
\begin{equation}
\label{eq:shared_sparse_error_class}
\mathcal X_{\mathrm S}
:=
\left\{
\bm X\in\mathbb R^{p\times p}:
\|\bm X\|_{\mathrm F}=1,\ 
\supp(\bm X)
\in
\mathcal C(s_0^\dagger,\gamma^\dagger)
\right\}.
\end{equation}
Thus, \(\mathcal X_{\mathrm S}\) contains the directions accessible through the shared sparse update. For the shared low-rank component, let \(\varepsilon>0\) be a small local error radius and define a neighborhood of the true singular subspaces by
\begin{equation}
\label{eq:lowrank_local_neighborhood}
\mathcal N_{\varepsilon}
:=
\Bigl\{
(\widetilde{\bm U},\widetilde{\bm V}):
\widetilde{\bm U}^\top\widetilde{\bm U}
=
\widetilde{\bm V}^\top\widetilde{\bm V}
=
\bm I_r,~
\left\|
\widetilde{\bm U}\widetilde{\bm U}^\top
-
\bm U_*\bm U_*^\top
\right\|_{\mathrm{op}}
\vee
\left\|
\widetilde{\bm V}\widetilde{\bm V}^\top
-
\bm V_*\bm V_*^\top
\right\|_{\mathrm{op}}
\leq2\varepsilon
\Bigr\}.
\end{equation}
The corresponding normalized low-rank error directions are
\begin{equation}
\label{eq:shared_lowrank_error_class}
\begin{aligned}
\mathcal X_{\mathrm{LR}}
:=
\Bigl\{
\bm X\in\mathbb R^{p\times p}:
&~
\|\bm X\|_{\mathrm F}=1,\ 
\bm X
=
\bm K_L\widetilde{\bm V}^\top
+
\widetilde{\bm U}\bm K_R^\top,
(\widetilde{\bm U},\widetilde{\bm V})
\in\mathcal N_{\varepsilon},~
\bm K_L,\bm K_R\in\mathbb R^{p\times r}
\Bigr\}.
\end{aligned}
\end{equation}
Every matrix in \(\mathcal X_{\mathrm{LR}}\) is a normalized tangent direction associated with a rank-\(r\) matrix in the local contraction region and has rank at most \(2r\).

For the personalized components, define
\begin{equation}
\label{eq:personalized_error_class}
\begin{aligned}
\mathcal H_{\mathrm P}
:=
\Biggl\{
(\bm H_1,\ldots,\bm H_N):
&~
\bigl(
\supp(\bm H_1),\ldots,\supp(\bm H_N)
\bigr)
\in
\mathcal D(\{s_i^\dagger\}_{i=1}^N,m^\dagger),~
\sum_{i=1}^N
\|\bm H_i\|_{\mathrm F}^2
=
1
\Biggr\}.
\end{aligned}
\end{equation}
Hence, \(\mathcal H_{\mathrm P}\) captures jointly the clientwise sparsity and cross-client frequency restrictions satisfied by personalized error directions.

We next isolate the covariance variation relevant to these structured errors. Let $\overline{\bbm\Gamma} = N^{-1}\sum_{i=1}^N\bbm\Gamma_i$ and $\bm E_i = \bbm\Gamma_i-\overline{\bbm\Gamma}$. The unweighted average \(\overline{\bbm\Gamma}\) is used here because personalization diversity is defined by the number of clients. In contrast, the sample-size-weighted covariance \(\bbm\Gamma_w=\sum_iw_i\bbm\Gamma_i\) in Section~\ref{subsec:theoretical_setup} governs the curvature of the federated objective. The effect of unequal local sample sizes is accounted for separately through \(T_{\max}/\overline T\).

For \(a\in\{\mathrm{LR},\mathrm S\}\), define the \emph{directional covariance discrepancy}
\begin{equation}
\label{eq:directional_heterogeneity_levels}
\Lambda_a
:=
\frac{1}{\sqrt{m^\dagger}}
\sup_{\substack{
\bm X\in\mathcal X_a,
(\bm H_1,\ldots,\bm H_N)\in\mathcal H_{\mathrm P}
}}
\left|
\sum_{i=1}^N
\left\langle
\bm H_i,
\bm X\bm E_i
\right\rangle
\right|.
\end{equation}
The quantity inside the supremum measures how client-specific covariance differences transform feasible personalized errors into a shared error direction. The normalization by \(\sqrt{m^\dagger}\) separates this directional covariance effect from the mechanical accumulation allowed by the frequency cap.

These measures are directional rather than global distances between
covariance matrices. They vanish when all clients have the same
stationary covariance, but can remain bounded even when
the covariance matrices differ substantially,
provided that the aggregate effect of these differences along
the relevant error directions remains controlled. Appendix~\ref{subsec:app-distributional-discrepancy} of the Supplementary Material provides several representative examples.

We impose the following condition linking covariance discrepancy with personalization diversity and sample-size imbalance.

\begin{assumption}[Distributional compatibility]
\label{ass:cov_compat}
For a sufficiently small constant \(c>0\),
\begin{equation}
\label{eq:weighted_compatibility_smallness}
\frac{m^\dagger}{N}\cdot
\frac{T_{\max}}{\overline T}\cdot
\frac{
\phi_{\max}^2
+
\Lambda_{\mathrm{LR}}^2
+
\Lambda_{\mathrm S}^2
}{
\phi_{\min}^2
}
\leq c.
\end{equation}
\end{assumption}

Assumption~\ref{ass:cov_compat} isolates three mechanisms governing the
propagation of personalized errors into shared estimation. The ratio
\(m^\dagger/N\asymp m/N\) represents personalization frequency,
\(T_{\max}/\overline T\) measures sample-size imbalance, and
\(\Lambda_{\mathrm{LR}}\) and \(\Lambda_{\mathrm S}\) quantify the
additional effect of heterogeneous covariance structures. The factors
\(\phi_{\min}\) and \(\phi_{\max}\) account for the population curvature
scale controlled in Assumption~\ref{assump:regular_covariance}.

Importantly, Assumption~\ref{ass:cov_compat} does not require identical
client distributions. With balanced sample sizes and uniformly regular
covariances, it essentially requires the interaction between
personalization frequency and directional covariance discrepancy to remain
controlled. More generally, stronger covariance discrepancy or sample-size
imbalance can be accommodated when personalization is sufficiently diverse.
Section~\ref{subsec:main_theory} shows that these same quantities enter the
personalized-to-shared terms of the joint contraction directly.

\vspace{-0.5cm}
\subsection{Joint computational and statistical convergence}
\label{subsec:main_theory}
Given the true rank \(r\), we next establish joint convergence of the shared low-rank, shared sparse,
and personalized components. Because the factorization
\(\bm B_*=\bm L_*\bm R_*^\top\) is not unique, we measure factor error
modulo the equivalence described in Section~\ref{subsec:2.3}. Following
\citet{tong2021accelerating}, define the scaled alignment error
\begin{equation}
\label{eq:scaled_factor_error}
\left\|(\bm L,\bm R)\right\|_{\mathrm{align},*}^2
:=
\inf_{\bm Q\in\mathrm{GL}(r)}
\left\{
\|(\bm L\bm Q-\bm L_*)\bbm\Sigma_*^{1/2}\|_{\mathrm F}^2
+
\|(\bm R\bm Q^{-\top}-\bm R_*)\bbm\Sigma_*^{1/2}\|_{\mathrm F}^2
\right\}.
\end{equation}
Within the local contraction region, this quantity is equivalent, up to
fixed constants, to
\(\|\bm L\bm R^\top-\bm B_*\|_{\mathrm F}^2\).
For communication round \(l\), define the three componentwise errors
\begin{equation}
\label{eq:componentwise_errors}
\begin{aligned}
e_{\mathrm{LR}}^{(l)} :=
\left\|
(\bm L^{(l)},\bm R^{(l)})
\right\|_{\mathrm{align},*}^2,\quad
e_0^{(l)}
:=
\|\bm S_0^{(l)}-\bm S_{0,*}\|_{\mathrm F}^2,\quad
e_{\mathrm P}^{(l)} :=
\sum_{i=1}^N
w_i
\|\bm S_i^{(l)}-\bm S_{i,*}\|_{\mathrm F}^2,
\end{aligned}
\end{equation}
and write
\(
\bm e^{(l)}
=
\left(
e_{\mathrm{LR}}^{(l)},
e_0^{(l)},
e_{\mathrm P}^{(l)}
\right)^\top\) and
\(
\mathcal V^{(l)}
=
e_{\mathrm{LR}}^{(l)}
+
e_0^{(l)}
+
e_{\mathrm P}^{(l)}
\).
The complete analysis in the Supplementary Material uses an equivalent
curvature-weighted version of \(\mathcal V^{(l)}\). Let \(\sigma_r\) denote the smallest nonzero singular value
of \(\bm L_*\bm R_*^\top\).

To make the statistical scales transparent, define
\( D_{\mathrm G} := pr+s_{0,*}\log p\),
\(D_{\mathrm P,i} := s_{i,*}\log p+\log N\), and
\(D_{\mathrm P} := N^{-1}\sum_{i=1}^N D_{\mathrm P,i}\).
Here \(D_{\mathrm G}\) is the effective complexity of the two shared
components, whereas \(D_{\mathrm P,i}\) is the personalized complexity of
client \(i\). For later interpretation, also define the personalized-to-shared coupling
levels
\begin{equation}
\label{eq:distributional_coupling_levels}
\chi_a
:=
\frac{m^\dagger}{N}\cdot
\frac{T_{\max}}{\overline T}\cdot
\frac{
\phi_{\max}^2+\Lambda_a^2
}{
\phi_{\min}^2
},
\qquad
a\in\{\mathrm{LR},\mathrm S\}.
\end{equation}
Thus, \(\chi_a\) summarizes the joint effect of
personalization frequency, sample-size imbalance, and directional
covariance discrepancy on the corresponding shared update. Assumption~\ref{ass:cov_compat} ensures that these quantities are
sufficiently small.

The following theorem states the joint contraction at the model-complexity
level. The complete version, including the explicit fitting levels, step sizes, and dependence factors, is given in
Appendix~\ref{subsec:app-full-main-theorem} of the Supplementary Material.

\begin{theorem}[Joint contraction of \textsc{PerFeCT-VAR}]
\label{thm:main_fed_iht}
Suppose Assumptions~\ref{assump:stationarity}--\ref{ass:cov_compat} hold. 
Suppose the fitting levels $(s_0, \gamma, \{s_i\}_{i=1}^N, m)$ and step sizes are chosen as in
Appendix~\ref{subsec:app-full-main-theorem},
with the fitting levels comparable to their true counterparts
and \(m_*\leq m<N/2\). Suppose $T_i \gtrsim D_{\mathrm P,i}$ for each client, $\Ttot \gtrsim D_{\mathrm G} + \sum_{i=1}^N D_{\mathrm P,i}$, and $\gamma_*\lesssim (\mu r)^{-1}$. If the initializer satisfies
\(\mathcal V^{(0)} \leq \varepsilon^2\sigma_r^2\) for a sufficiently small constant \(\varepsilon>0\), then, with high probability,
the errors generated by Algorithm~\ref{alg:perfect_var} satisfy
\begin{equation}
\label{eq:main_vector_closed_form}
\bm e^{(l+1)}
\preceq
\bbm\Theta\bm e^{(l)}
+
\bm b,
\quad\text{with }
\rho(\bbm\Theta)<1,
\end{equation}
where \(\preceq\) denotes componentwise inequality and
\(\bbm\Theta\) is an entrywise nonnegative \(3\times3\) matrix. The
statistical error vector satisfies
\begin{equation}
\label{eq:main_statistical_error_vector}
\bm b
\lesssim
\left(
\frac{pr}{\Ttot},
\frac{s_{0,*}\log p}{\Ttot},
\frac{D_{\mathrm P}}{\overline T}
\right)^\top.
\end{equation}

In particular, the entries of \(\bbm\Theta\) governing propagation of
personalized errors into the shared low-rank and sparse updates satisfy
\begin{equation}
\label{eq:main_local_to_global_couplings}
\begin{aligned}
\Theta_{\mathrm{LR},\mathrm P}
&\lesssim
\chi_{\mathrm{LR}}
+
\frac{
pr+\sum_{i=1}^N s_{i,*}\log p
}{
\Ttot
},\quad\text{and}\quad
\Theta_{\mathrm{S},\mathrm P}
\lesssim
\chi_{\mathrm S}
+
\frac{
s_{0,*}\log p+\sum_{i=1}^N s_{i,*}\log p
}{
\Ttot
}.
\end{aligned}
\end{equation}
\end{theorem}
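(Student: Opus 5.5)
We will prove the contraction by induction on the round \(l\), showing simultaneously that \(\mathcal V^{(l)}\le\varepsilon^2\sigma_r^2\) and that the iterates stay in the local region. Everything is first placed on a single high-probability event. On this event two properties hold. The first is restricted strong convexity and smoothness for each client, \(\phi_{\min}/2\|\bm\Delta\|_{\mathrm F}^2\le T_i^{-1}\sum_t\|\bm\Delta\bm y_{i,t-1}\|^2\le 2\phi_{\max}\|\bm\Delta\|_{\mathrm F}^2\), uniformly over \((2s_i^\dagger)\)-sparse directions and over directions of the form low-rank plus \((s_0^\dagger,\gamma^\dagger)\)-sparse. The second is deviation bounds for the noise scores \(\bm W_i=T_i^{-1}\sum_t\bbm\epsilon_{i,t}\bm y_{i,t-1}^\top\). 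Both follow from the Basu--Michailidis concentration for VAR processes using \(M\) and \(\tau^2\). A union bound over clients gives the \(\log N\) in \(D_{\mathrm P,i}\) and requires \(T_i\gtrsim D_{\mathrm P,i}\). The pooled quantities \(\sum_iw_i\bm W_i\) are controlled at scale \(\sqrt{pr/\Ttot}\) in operator norm restricted to rank-\(2r\) directions, and \(\sqrt{s_{0,*}\log p/\Ttot}\) on sparse supports. Squaring these scales yields the vector \(\bm b\).

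Writing \(\bm\Delta_i^{(l)}=\bm A_i^{(l)}-\bm A_{i,*}\), we decompose the gradient as \(\bm D_i^{(l)}=\bm\Delta_i^{(l)}\widehat{\bbm\Gamma}_i-\bm W_i\), and then split \(\widehat{\bbm\Gamma}_i=\bbm\Gamma_i+(\widehat{\bbm\Gamma}_i-\bbm\Gamma_i)\). Each of the three updates is analysed separately.

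\textbf{Low-rank update.} We follow the ScaledGD analysis of \citet{tong2021accelerating} applied to \(\overline{\bm D}^{(l)}\). The aggregated gradient is \(\overline{\bm D}^{(l)}=(\bm B^{(l)}-\bm B_*+\bm S_0^{(l)}-\bm S_{0,*})\bbm\Gamma_w+\sum_iw_i(\bm S_i^{(l)}-\bm S_{i,*})\bbm\Gamma_i+\text{noise}+\text{sample error}\). The first term gives contraction \(1-c\eta_{\mathrm{LR}}\phi_{\min}\). The shared-sparse cross term is small because of incoherence, Assumption~\ref{assump:cov_adjusted_incoherence}, and \(\gamma_*\lesssim(\mu r)^{-1}\). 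This is the standard transversality argument and yields \(\Theta_{\mathrm{LR},0}\) small.

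\textbf{Shared sparse update.} We use the capacity-aware thresholding expansion lemma \(\|\mathcal G_{s_0,\gamma}(\bm Z)-\bm S_{0,*}\|_{\mathrm F}^2\le(1+C\sqrt{s_{0,*}/s_0})\|\bm Z_{\Omega}-\bm S_{0,*}\|_{\mathrm F}^2\). Here \(\Omega\) is the union of the supports, which stays in \(\mathcal C(s_0^\dagger,\gamma^\dagger)\).

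\textbf{Personalized update.} We need an FCT analogue of this expansion lemma. The plan is to view FCT as a projection onto a union of supports defined by a \(b\)-matching polytope. We then compare the weighted energy retained on \(\widehat\Omega_i\) against the feasible choice given by the true supports. This gives \(\sum_iw_i\|\bm S_i^{(l+1)}-\bm S_{i,*}\|^2\le(1+C\max_i\sqrt{s_{i,*}/s_i}\vee\sqrt{m_*/m})\sum_iw_i\|(\bm S_i^{(l+1/2)}-\bm S_{i,*})_{\Omega_i^\dagger}\|^2\). After that, the clientwise RSC/RSS bounds give the \(\Theta_{\mathrm P,\mathrm P}\) contraction. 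The shared errors feed in through \(\phi_{\max}\), and the noise yields \(D_{\mathrm P}/\overline T\).

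The main obstacle is the personalized-to-shared coupling \eqref{eq:main_local_to_global_couplings}. We test the term \(\sum_iw_i\bm H_i\bbm\Gamma_i\), with \(\bm H_i=\bm S_i^{(l)}-\bm S_{i,*}\), against a direction \(\bm X\in\mathcal X_{\mathrm{LR}}\) or \(\mathcal X_{\mathrm S}\), and write \(\bbm\Gamma_i=\overline{\bbm\Gamma}+\bm E_i\).

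The \(\overline{\bbm\Gamma}\) part gives \(\langle\sum_iw_i\bm H_i,\bm X\overline{\bbm\Gamma}\rangle\). Since each coordinate of \(\sum_iw_i\bm H_i\) receives at most \(m^\dagger\) nonzero summands, Cauchy--Schwarz yields \(\|\sum_iw_i\bm H_i\|_{\mathrm F}^2\le m^\dagger\max_iw_i\sum_iw_i\|\bm H_i\|^2\le \frac{m^\dagger}{N}\frac{T_{\max}}{\overline T}e_{\mathrm P}\). Combined with \(\|\overline{\bbm\Gamma}\|\le\phi_{\max}\), this produces the \(\phi_{\max}^2\) part of \(\chi_a\).

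The \(\bm E_i\) part is bounded by definition \eqref{eq:directional_heterogeneity_levels}. Normalizing by \((\sum_iw_i^2\|\bm H_i\|^2)^{1/2}\le(\max_iw_i\,e_{\mathrm P})^{1/2}\) gives \(\Lambda_a\sqrt{m^\dagger\max_iw_i\,e_{\mathrm P}}\), which is the \(\Lambda_a^2\) part of \(\chi_a\). Dividing by the curvature \(\phi_{\min}\) that appears in the contraction of the shared step yields \(\chi_a\).

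The sample-covariance error \(\widehat{\bbm\Gamma}_i-\bbm\Gamma_i\) in the same cross term is handled by a uniform restricted concentration over the product of structured sets. This contributes the terms \((pr+\sum_is_{i,*}\log p)/\Ttot\), respectively \((s_{0,*}\log p+\sum_is_{i,*}\log p)/\Ttot\).

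Finally, we collect all entries into \(\bbm\Theta\). With the appropriate step sizes the diagonal entries are at most \(1-c\), and by Assumption~\ref{ass:cov_compat} and the sample-size conditions every off-diagonal entry is at most a small \(c'\). Gershgorin's theorem, applied to a suitably curvature-weighted norm, then gives \(\rho(\bbm\Theta)<1\). The resulting bound \(\mathcal V^{(l+1)}\le\varepsilon^2\sigma_r^2\) closes the induction.
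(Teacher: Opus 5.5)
Your architecture matches the paper's: induction on one high-probability event, three one-step bounds, thresholding and FCT expansion lemmas, the split $\bbm\Gamma_i=\overline{\bbm\Gamma}+\bm E_i$ for the personalized-to-shared term, and a weighted Perron/Gershgorin step. Your treatment of $\Theta_{\mathrm{LR},\mathrm P}$ and $\Theta_{\mathrm S,\mathrm P}$ also agrees with the paper's. The gap is in the \emph{reverse} coupling, where shared errors enter the personalized update.

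Your event includes clientwise RSC/RSS along low-rank-plus-$(s_0^\dagger,\gamma^\dagger)$-sparse directions, and you let shared errors enter ``through $\phi_{\max}$''. The theorem only gives $T_i\gtrsim D_{\mathrm P,i}=s_{i,*}\log p+\log N$, so $T_i$ may be far below $p$ (and below $s_{0,*}\log p$). This has three consequences:
\begin{itemize}
\item $\widehat{\bbm\Gamma}_i$ is singular, so RSC fails outright along $\bm u\bm v^\top$ with $\bm v\in\ker\widehat{\bbm\Gamma}_i$.
\item $\|\widehat{\bbm\Gamma}_i\|_{\mathrm{op}}$ can be of order $\phi_{\max}p/T_i$.
\item $\|\mathcal P_{\Omega_i}\{\bbm\Delta_{\bm B}(\widehat{\bbm\Gamma}_i-\bbm\Gamma_i)\}\|_{\mathrm F}$ cannot be controlled clientwise uniformly over the data-dependent direction $\bbm\Delta_{\bm B}/\|\bbm\Delta_{\bm B}\|_{\mathrm F}\in\mathcal X_{\mathrm{LR}}$, whose complexity is of order $pr$.
\end{itemize}
Hence your argument does not bound $\Theta_{\mathrm P,\mathrm{LR}}$ or $\Theta_{\mathrm P,\mathrm S}$, and $\rho(\bbm\Theta)<1$ does not follow.

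The paper bounds $\sum_iw_i\|\mathcal P_{\Omega_i}(\bbm\Delta_{\bm B}\widehat{\bbm\Gamma}_i)\|_{\mathrm F}^2$ and $\sum_iw_i\|\mathcal P_{\Omega_i}(\bbm\Delta_0\widehat{\bbm\Gamma}_i)\|_{\mathrm F}^2$ with the same machinery you use in the other direction:
\begin{itemize}
\item the frequency cap gives $\sum_iw_i\|\mathcal P_{\Omega_i}(\bm Y)\|_{\mathrm F}^2\le m^\dagger w_{\max}\|\bm Y\|_{\mathrm F}^2$ for the $\overline{\bbm\Gamma}$ part;
\item $\Lambda_a$ handles the $\bm E_i$ part;
\item the pooled concentration over (shared direction)$\times$(clientwise sparse supports) in Lemma~\ref{lem:prob-mixed-geometry} charges the $pr$ (or $s_0^\dagger\log p$) complexity to $\Ttot$ rather than $T_i$.
\end{itemize}
As a result, $\Theta_{31}$ and $\Theta_{32}$ carry the same small bracket as $\Theta_{13}$ and $\Theta_{23}$. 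This is what makes your claim that every off-diagonal entry is small actually true. Your clientwise event should cover only $s_i^\dagger$-sparse directions.

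A second gap is the mechanism for the FCT expansion lemma. Comparing the retained energy of the FCT supports with the feasible allocation $(\Omega_{i,*})_i$ only gives $\sum_iw_i\|\mathcal P_{\mathcal A_i}(\widetilde{\bm S}_i)\|_{\mathrm F}^2\le\sum_iw_i\|\mathcal P_{\mathcal B_i}(\widetilde{\bm S}_i)\|_{\mathrm F}^2$, where $\mathcal A_i$ and $\mathcal B_i$ are the missed and falsely selected cells. That is a tail ratio of $1$ and an expansion factor of about $2.6$, which destroys contraction, because it never uses $s_i\gg s_{i,*}$ or $m\gg m_*$.

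FCT does not select by a global order, so Lemma~\ref{lem:det-weighted-fcht} uses exchange optimality instead. Write $\widetilde q_{i,ab}=w_i|(\widetilde{\bm S}_i)_{ab}|^2$. A missed cell $(a,b,i)$ with positive score faces a binding client cap, a binding coordinate cap, or both.
\begin{itemize}
\item If one cap binds, a single swap bounds $\widetilde q_{i,ab}$ by every false score for client $i$, or by every false score at coordinate $(a,b)$.
\item If both caps bind, a double swap applies: drop one false cell of client $i$ and one false cell at $(a,b)$, then insert $(a,b,i)$. This gives $\widetilde q_{i,ab}\le\bar q_i+\bar q^{ab}$, the sum of the two average false scores.
\end{itemize}
The counts $|\mathcal A_i|/|\mathcal B_i|\le s_{i,*}/s_i$ and $|\mathcal A^{ab}|/|\mathcal B^{ab}|\le m_*/m$ then yield the ratio $\max_is_{i,*}/s_i+m_*/m$.

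Similarly, your factor for $\mathcal G_{s_0,\gamma}$ must include $\gamma_*/\gamma$. True entries can be missed because a row or column is saturated, and $s_{0,*}/s_0$ does not control that case.
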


Theorem~\ref{thm:main_fed_iht} separates the two sources of difficulty in
personalized federated estimation. The diagonal entries of
\(\bbm\Theta\) describe contraction of the individual structural
components, while the off-diagonal entries describe their interaction.
Most importantly, \eqref{eq:main_local_to_global_couplings} shows that the
effect of personalized errors on shared estimation is governed by
\(
(m^\dagger/N)\cdot
(T_{\max}/\overline T)\cdot
\left(
\phi_{\max}^2+\Lambda_a^2
\right),
\)
up to population-curvature scaling. Hence FCT controls personalized-to-shared
error propagation through the personalization frequency, while covariance
discrepancy and sample-size imbalance determine how strongly the remaining
personalized errors affect the shared updates. The other cross-component
terms are controlled by the shared low-rank--sparse separation and
population curvature conditions.

Iterating \eqref{eq:main_vector_closed_form} gives
$\bm e^{(l)}
\preceq
\bbm\Theta^l\bm e^{(0)}
+
(\bm I-\bbm\Theta)^{-1}\bm b$.
The first term is the optimization error and decreases geometrically,
whereas the second term is the statistical error floor. Thus,
\textsc{PerFeCT-VAR} reaches statistical accuracy after a logarithmic
number of communication rounds. Thanks to the scaled low-rank updates, the
contraction rate does not depend directly on the condition
number of \(\bm B_*\), although covariance curvature and the initialization
requirements enter the finite-sample conditions.

\vspace{-0.5cm}
\subsection{Statistical rates and federation gains}
\label{subsec:statistical_rates}

We now translate the joint contraction in Theorem~\ref{thm:main_fed_iht} into statistical estimation rates. For brevity, write $\widehat{\bm B} := \widehat{\bm L}\widehat{\bm R}^{\top}$ and define $\chi := \chi_{\mathrm{LR}} \vee \chi_{\mathrm S}$.

\begin{corollary}[General statistical rates]
\label{cor:main_stat_rates}
Under the conditions of Theorem~\ref{thm:main_fed_iht}, after sufficiently
many communication rounds,
\begin{equation}
\label{eq:global_stat_errors}
\|\widehat{\bm B}-\bm B_*\|_{\mathrm F}^2
\vee
\|\widehat{\bm S}_0-\bm S_{0,*}\|_{\mathrm F}^2
\lesssim
\frac{D_{\mathrm G}}{\Ttot}
+
\chi\frac{D_{\mathrm P}}{\overline T}
+
\mathcal E_{\mathrm{pers}},
\end{equation}
where the finite-sample personalization remainder satisfies
\begin{equation}
\label{eq:personalization_remainder}
\mathcal E_{\mathrm{pers}}
\lesssim
\frac{
\left(
D_{\mathrm G}
+
\sum_{i=1}^N s_{i,*}\log p
\right)
D_{\mathrm P}
}{
\Ttot\,\overline T
}.
\end{equation}
Moreover,
\begin{equation}
\label{eq:local_sparse_stat_error}
\sum_{i=1}^N
w_i
\|\widehat{\bm S}_i-\bm S_{i,*}\|_{\mathrm F}^2
\lesssim
\frac{D_{\mathrm P}}{\overline T}
+
\frac{D_{\mathrm G}}{\Ttot}.
\end{equation}
\end{corollary}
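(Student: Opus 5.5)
The plan is to take the limit of the vector recursion in Theorem~\ref{thm:main_fed_iht}. Iterating \eqref{eq:main_vector_closed_form} gives $\bm e^{(l)}\preceq\bbm\Theta^l\bm e^{(0)}+(\bm I-\bbm\Theta)^{-1}\bm b$. Since $\rho(\bbm\Theta)<1$, the first term decays geometrically. After $l\gtrsim\log(\mathcal V^{(0)}/\|\bm b\|)$ rounds it is dominated by the second term, so it suffices to bound each entry of $(\bm I-\bbm\Theta)^{-1}\bm b$. Within the local region, $e_{\mathrm{LR}}$ is equivalent to $\|\bm L\bm R^\top-\bm B_*\|_{\mathrm F}^2$, as noted after \eqref{eq:scaled_factor_error}. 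Hence bounds on the three coordinates translate directly into \eqref{eq:global_stat_errors} and \eqref{eq:local_sparse_stat_error}.

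To compute $(\bm I-\bbm\Theta)^{-1}\bm b$ explicitly, I would use the structure of $\bbm\Theta$ rather than only its spectral radius. I would block it as shared indices $\{\mathrm{LR},\mathrm S\}$ versus $\mathrm P$. From the full theorem in the appendix I would take three facts:
\begin{itemize}
\item the shared diagonal block $\bbm\Theta_{GG}$ has $\|(\bm I-\bbm\Theta_{GG})^{-1}\|\lesssim1$, using low-rank--sparse separation and $\gamma_*\lesssim(\mu r)^{-1}$;
\item $1-\Theta_{\mathrm P\mathrm P}\gtrsim1$;
\item the shared-to-personalized entries $\Theta_{\mathrm P,G}$ are $O(1)$.
\end{itemize}
The personalized-to-shared entries are bounded by \eqref{eq:main_local_to_global_couplings} by $\chi+\delta$, where
\[
\delta:=\frac{D_{\mathrm G}+\sum_i s_{i,*}\log p}{\Ttot}.
\]
The Schur complement $1-\Theta_{\mathrm P\mathrm P}-\Theta_{\mathrm P G}(\bm I-\bbm\Theta_{GG})^{-1}\Theta_{G\mathrm P}$ is then bounded below by a constant, since $\chi\le c$ by Assumption~\ref{ass:cov_compat} and $\delta\lesssim1$ by the sample-size condition. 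Block inversion gives
\[
e_{\mathrm P}^{\infty}\lesssim b_{\mathrm P}+\|b_G\|\lesssim \frac{D_{\mathrm P}}{\overline T}+\frac{D_{\mathrm G}}{\Ttot}.
\]
This is \eqref{eq:local_sparse_stat_error}, after noting $e_{\mathrm P}=\sum_i w_i\|\cdot\|_{\mathrm F}^2$ and $b_{\mathrm{LR}}+b_{\mathrm S}\lesssim D_{\mathrm G}/\Ttot$.

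For the shared components, block inversion gives
\[
e_G^\infty\lesssim\|b_G\|+\|\Theta_{G\mathrm P}\|\,e_{\mathrm P}^\infty\lesssim \frac{D_{\mathrm G}}{\Ttot}+(\chi+\delta)\Big(\frac{D_{\mathrm P}}{\overline T}+\frac{D_{\mathrm G}}{\Ttot}\Big).
\]
Expanding this product produces four terms:
\begin{itemize}
\item $\chi D_{\mathrm P}/\overline T$, which is the displayed middle term;
\item $\chi D_{\mathrm G}/\Ttot\lesssim D_{\mathrm G}/\Ttot$, because $\chi\le c$;
\item $\delta D_{\mathrm G}/\Ttot\lesssim D_{\mathrm G}/\Ttot$, because $\delta\lesssim1$;
\item $\delta D_{\mathrm P}/\overline T$, which is exactly the remainder $\mathcal E_{\mathrm{pers}}$ in \eqref{eq:personalization_remainder}.
\end{itemize}
This yields \eqref{eq:global_stat_errors}.

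The main obstacle is that $\rho(\bbm\Theta)<1$ alone does not give the entrywise bounds on $(\bm I-\bbm\Theta)^{-1}$ used above. The argument needs quantitative control, namely that the diagonal margins are bounded away from zero by constants and that the off-diagonal shared--shared couplings are small. If the appendix states the contraction through a weighted Lyapunov function, I would instead derive the bound from that function: take a positive left eigenvector, or a weight vector $\bm\omega$ with $\bm\omega^\top\bbm\Theta\preceq(1-\kappa)\bm\omega^\top$ and all entries of $\bm\omega$ of constant order. I would then still apply the block elimination to the personalized row, so that the $\chi D_{\mathrm P}/\overline T$ term keeps its small factor $\chi$ and is not absorbed into a generic $O(1)$ constant.
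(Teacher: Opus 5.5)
Your proposal is correct and follows essentially the paper's route. You iterate to $\bbm\Theta^l\bm e^{(0)}+(\bm I-\bbm\Theta)^{-1}\bm b$ and then invert using the quantitative structure of $\bbm\Theta$ from the full theorem; the paper does this by writing $\bm I-\bbm\Theta=\operatorname{diag}(1-\Theta_{jj})(\bm I-\bbm G)$ and bounding the cofactors of $(\bm I-\bbm G)^{-1}$, which is the same elimination as your shared/personalized Schur complement, and like you it keeps the personalized-to-shared coupling so that only $\chi D_{\mathrm P}/\overline T$ and the remainder $\delta D_{\mathrm P}/\overline T$ survive in the shared rate. One small fix: calibrate the number of rounds against the smallest entry of $\bm b$, as the paper's round condition does, rather than $\|\bm b\|$, since otherwise an optimization error of order $D_{\mathrm P}/\overline T$ without the factor $\chi$ could remain in the shared coordinates.
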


The first term in \eqref{eq:global_stat_errors} is the usual pooled
contribution from estimating the shared structure. The second is the
additional cost induced by personalization. Through \(\chi\), this term
shows explicitly that personalized estimation errors affect the shared
components according to their cross-client frequency, the imbalance among
local sample sizes, and the directional covariance discrepancy. In
contrast, \eqref{eq:local_sparse_stat_error} contains the client-level
personalized rate together with the additional error from estimating the
shared components. The finite-sample remainder
\(\mathcal E_{\mathrm{pers}}\) arises from interactions between stochastic
shared and personalized errors. It is of smaller order under the
sample-size regime stated below. The optimization error reaches these
statistical scales after a logarithmic number of communication rounds.

The following balanced-client specialization makes the roles of federation
and personalization particularly transparent.

\begin{corollary}[Balanced-client rates]
\label{cor:main_equal_sample_rates}
Under the conditions of Corollary~\ref{cor:main_stat_rates}, suppose
\(T_i=T\) and \(s_{i,*}=s_*\) for $i\in[N]$. Suppose
that $T\gtrsim N s_*\log p\,
(s_*\log p+\log N)(pr+s_{0,*}\log p)^{-1}$, and assume that the covariance, temporal-dependence,
and directional covariance discrepancy quantities are uniformly bounded. Then, after \(K\gtrsim\log(NT)\) communication rounds,
\begin{equation}
\label{eq:balanced_global_rate}
\|\widehat{\bm B}-\bm B_*\|_{\mathrm F}^2
\vee
\|\widehat{\bm S}_0-\bm S_{0,*}\|_{\mathrm F}^2
\lesssim \frac{D_{\mathrm G}}{NT} + \frac{m}{N}\cdot \frac{D_{\mathrm P}}{T},\text{ and }
\frac1N
\sum_{i=1}^N
\|\widehat{\bm S}_i-\bm S_{i,*}\|_{\mathrm F}^2
\lesssim
\frac{D_{\mathrm P}}{T}
+
\frac{D_{\mathrm G}}{NT}.
\end{equation}
\end{corollary}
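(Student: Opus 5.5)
The plan is to specialize Corollary~\ref{cor:main_stat_rates} to the balanced case and bound each term. With \(T_i=T\) we have \(\Ttot=NT\), \(\overline T=T\), \(T_{\max}/\overline T=1\) and \(w_i=1/N\). The weighted personalized error in \eqref{eq:local_sparse_stat_error} is therefore exactly \(N^{-1}\sum_i\|\widehat{\bm S}_i-\bm S_{i,*}\|_{\mathrm F}^2\), and the second display of \eqref{eq:balanced_global_rate} follows immediately. Because \(s_{i,*}=s_*\), we also have \(D_{\mathrm P}=s_*\log p+\log N\).

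For the shared components, the first step is to bound \(\chi\). By definition \eqref{eq:distributional_coupling_levels}, in the balanced case
\[
\chi_a=\frac{m^\dagger}{N}\cdot\frac{\phi_{\max}^2+\Lambda_a^2}{\phi_{\min}^2}.
\]
Under the standing hypothesis that \(\phi_{\min},\phi_{\max},\Lambda_{\mathrm{LR}},\Lambda_{\mathrm S}\) are uniformly bounded, with \(\phi_{\min}\) bounded away from zero, the second factor is \(O(1)\). Since \(m_*\le m\), we have \(m^\dagger\le 3m\), so \(\chi\lesssim m/N\). Consequently the middle term of \eqref{eq:global_stat_errors} is \(\lesssim (m/N)D_{\mathrm P}/T\).

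The remaining step, which is the only place the sample-size condition enters, is to absorb \(\mathcal E_{\mathrm{pers}}\). By \eqref{eq:personalization_remainder},
\[
\mathcal E_{\mathrm{pers}}\lesssim \frac{(D_{\mathrm G}+Ns_*\log p)\,D_{\mathrm P}}{NT^2}.
\]
I split this into two pieces. The piece \(D_{\mathrm G}D_{\mathrm P}/(NT^2)\) is at most \(D_{\mathrm G}/(NT)\) provided \(T\gtrsim D_{\mathrm P}\), and this follows from the per-client requirement \(T_i\gtrsim D_{\mathrm P,i}\) in Theorem~\ref{thm:main_fed_iht}. For the piece \(Ns_*\log p\,D_{\mathrm P}/(NT^2)\), the assumed lower bound \(T\gtrsim Ns_*\log p\,D_{\mathrm P}/D_{\mathrm G}\) rearranges to \(Ns_*\log p\,D_{\mathrm P}\lesssim TD_{\mathrm G}\). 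Hence this piece is also \(\lesssim D_{\mathrm G}/(NT)\).

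Finally, the bounds hold after sufficiently many rounds. Iterating \eqref{eq:main_vector_closed_form} gives an optimization error bounded by \(\|\bbm\Theta^K\|\,\mathcal V^{(0)}\le C\rho^K\varepsilon^2\sigma_r^2\) for some \(\rho<1\). Taking \(K\gtrsim\log(NT)\), which treats \(\sigma_r\) and the contraction constants as fixed, drives this below \(D_{\mathrm G}/(NT)\). Here I use \(D_{\mathrm G}\ge pr\ge1\).

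The main obstacle is this bookkeeping of \(\mathcal E_{\mathrm{pers}}\), in particular confirming that the stated \(T\)-condition is exactly what absorbs the cross term \(N s_*\log p\cdot D_{\mathrm P}\). The other steps are direct substitutions into Corollary~\ref{cor:main_stat_rates}.
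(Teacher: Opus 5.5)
Your proposal is correct and follows the paper's proof essentially step for step. You set \(\Ttot=NT\) and \(w_i=1/N\), and bound \(\chi\lesssim m/N\) via \(m^\dagger\le 3m\) and the bounded coefficients. You then split \(\mathcal E_{\mathrm{pers}}\) into \(D_{\mathrm G}D_{\mathrm P}/(NT^2)\), absorbed by \(T\gtrsim D_{\mathrm P}\), and \(s_*\log p\,D_{\mathrm P}/T^2\), absorbed by the stated \(T\)-condition. Finally you remove the optimization error with \(K\gtrsim\log(NT)\) using \(D_{\mathrm G}\ge1\).

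The only difference is cosmetic. The paper gets uniform geometric decay from the weighted contraction factor \(\varrho\gtrsim1\) together with \(\mathcal V_\beta^{(0)}\lesssim1\) under the bounded-coefficient assumptions. That is a cleaner way to make your ``\(\|\bbm\Theta^K\|\le C\rho^K\)'' step uniform in \(N\) and \(T\).
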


The first term in \eqref{eq:balanced_global_rate} enjoys the total-sample-size gain, and the second shows that
the personalization uncertainty is transferred to shared
estimation through the personalization frequency \(m/N\). Thus,
personalization diversity determines the federated
gain. This distinction also clarifies the difference between identification and
statistical efficiency. The condition \(m<N/2\) is sufficient for
shared--personalized identification, but does not by itself guarantee that
the personalization contribution enjoys a substantial federation gain.
When \(m=O(1)\), both contributions in the shared rate scale with
\(1/(NT)\), up to their respective model complexities. More generally,
when \(m=o(N)\), the personalization-induced shared error is smaller than
the client-level rate by the vanishing factor \(m/N\). In contrast, if
\(m\) is proportional to \(N\) while remaining below \(N/2\), the
decomposition remains identifiable, but the personalization contribution
to shared estimation is of client-level order.

\vspace{-0.3cm}
\subsection{Validity of the federated initializer}
\label{subsec:initializer_theory}

Theorem~\ref{thm:main_fed_iht} assumes that the initial estimate lies in a
local contraction region. We now verify this condition for the federated
initializer introduced in Section~\ref{sec:algorithm_initialization}. Let
\(\sigma_1\) and \(\sigma_r\) denote the largest and smallest nonzero
singular values of \(\bm L_*\bm R_*^\top\), respectively.

\begin{theorem}[Validity of the federated initializer]
\label{thm:initializer_rate}
Suppose Assumptions~\ref{assump:stationarity}--\ref{assump:regular_covariance} hold. Assume that
\(m_*\leq(1/2-\delta)N\) for some fixed
\(\delta\in(0,1/2)\), and choose fitting levels satisfying
\(s_0\geq s_{0,*}\), \(\gamma\geq\gamma_*\),
\(s_i\geq s_{i,*}\), and \(m_*\leq m<N/2\).
Choose the spikiness radius as \(\nu\asymp\mu r\sigma_1\). For client \(i\),
choose the local penalties in \eqref{eq:init_local_basu} as
\begin{equation}
\label{eq:initializer_tuning}
\lambda_{B,i}
\asymp
\tau
\sqrt{
\frac{
\phi_{\max}(p+\log N)
}{
T_i
}
},
\qquad
\lambda_{C,i}
\asymp
\tau
\sqrt{
\frac{
\phi_{\max}\log(pN)
}{
T_i
}
}
+
\phi_{\min}
\frac{
\mu r\sigma_1
}{
p
}.
\end{equation}
Suppose that
\(
T_i
\gtrsim
r(p+\log N)
+
(s_{0,*}+s_{i,*})\log(pN)\), for $i\in[N]$, and
\(
s_{0,*}+\max_{i\in[N]}s_{i,*}
\lesssim p^2\sigma_r^2(\mu r\sigma_1)^{-2}
\). Then, with high probability, the initializer satisfies
\(\mathcal V^{(0)} \leq \varepsilon^2\sigma_r^2\) for a sufficiently small constant \(\varepsilon>0\). Consequently, it enters the
contraction region as in Theorem~\ref{thm:main_fed_iht}.
\end{theorem}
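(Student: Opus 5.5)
The plan has three stages: first obtain a uniform-in-$i$ error bound for each local Basu-type fit, then follow how these errors pass through the three server-side aggregation steps (averaging plus truncated SVD, coordinatewise median plus $\mathcal G_{s_0,\gamma}$, and FCT). Finally, convert everything into the scaled error $\mathcal V^{(0)}$.

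\textbf{Step 1: local fits.} Following \citet{basu2019low}, we condition on a high-probability event on which, for every client $i$:
\begin{itemize}
\item restricted strong convexity holds with curvature $\gtrsim\phi_{\min}$;
\item $\|T_i^{-1}\sum_t \bbm\epsilon_{i,t}\bm y_{i,t-1}^\top\|_{\mathrm{op}}\lesssim\tau\sqrt{\phi_{\max}(p+\log N)/T_i}$;
\item $\|T_i^{-1}\sum_t \bbm\epsilon_{i,t}\bm y_{i,t-1}^\top\|_{\max}\lesssim\tau\sqrt{\phi_{\max}\log(pN)/T_i}$.
\end{itemize}
The $\log N$ terms come from a union bound over clients, with dependence controlled through $M$.

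The truth satisfies $\|\bm B_*\|_{\max}\le\|\bm U_*\|_{2,\infty}\sigma_1\|\bm V_*\|_{2,\infty}\le\mu r\sigma_1/p$, so $(\bm B_*,\bm S_{0,*}+\bm S_{i,*})$ is feasible when $\nu\asymp\mu r\sigma_1$. The standard decomposable-regularizer argument then gives
\[
\|\widetilde{\bm B}_i-\bm B_*\|_{\mathrm F}^2+\|\widetilde{\bm C}_i-\bm C_{i,*}\|_{\mathrm F}^2\lesssim \phi_{\min}^{-2}\{r\lambda_{B,i}^2+(s_{0,*}+s_{i,*})\lambda_{C,i}^2\}.
\]
This bound includes the spikiness bias $(s_{0,*}+s_{i,*})(\mu r\sigma_1/p)^2$. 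The sample-size condition together with $s_{0,*}+\max_i s_{i,*}\lesssim p^2\sigma_r^2(\mu r\sigma_1)^{-2}$ makes the right side at most $c\,\varepsilon^2\sigma_r^2$ for a small constant $c$, uniformly in $i$.

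\textbf{Step 2: low-rank aggregation.} Eckart--Young gives $\|\widetilde{\bm B}_{i,r}-\bm B_*\|_{\mathrm F}\le2\|\widetilde{\bm B}_i-\bm B_*\|_{\mathrm F}$. Averaging with weights $w_i$ and applying Eckart--Young again gives $\|\bm L^{(0)}\bm R^{(0)\top}-\bm B_*\|_{\mathrm F}\lesssim\max_i\|\widetilde{\bm B}_i-\bm B_*\|_{\mathrm F}\le c'\varepsilon\sigma_r$. Lemma 11 of \citet{tong2021accelerating} then converts this into the alignment error, $e_{\mathrm{LR}}^{(0)}\lesssim\|\bm L^{(0)}\bm R^{(0)\top}-\bm B_*\|_{\mathrm F}^2$, whenever the error is below a constant multiple of $\sigma_r$.

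\textbf{Step 3: median aggregation (main obstacle).} At a coordinate $(j,k)$, at least $(1/2+\delta)N$ clients have $(\bm C_{i,*})_{jk}=(\bm S_{0,*})_{jk}$. The order-statistics argument runs as follows. Suppose $|(\widetilde{\bm S}_0-\bm S_{0,*})_{jk}|>t$. Then at least $\delta N$ of these majority clients must satisfy $|(\widetilde{\bm C}_i-\bm C_{i,*})_{jk}|>t$. Hence
\[
|(\widetilde{\bm S}_0-\bm S_{0,*})_{jk}|^2\le(\delta N)^{-1}\sum_i|(\widetilde{\bm C}_i-\bm C_{i,*})_{jk}|^2 .
\]
Summing over $(j,k)$ yields $\|\widetilde{\bm S}_0-\bm S_{0,*}\|_{\mathrm F}^2\le\delta^{-1}\max_i\|\widetilde{\bm C}_i-\bm C_{i,*}\|_{\mathrm F}^2$. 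This step is where the fixed margin $\delta$ is needed, and where the personalized and shared errors could otherwise mix.

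The projection $\mathcal G_{s_0,\gamma}$ onto the capacity class with $s_0\ge s_{0,*}$ and $\gamma\ge\gamma_*$ is not an exact Euclidean projection. I would use the same approximate-projection bound used in the main contraction analysis: $\|\mathcal G(\bm M)-\bm S_{0,*}\|_{\mathrm F}\le C\|\bm M-\bm S_{0,*}\|_{\mathrm F}$ for feasible $\bm S_{0,*}$. This gives $e_0^{(0)}\lesssim\delta^{-1}\max_i\|\widetilde{\bm C}_i-\bm C_{i,*}\|_{\mathrm F}^2$.

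\textbf{Step 4: FCT.} We have $\widetilde{\bm S}^{\,\mathrm{res}}_i-\bm S_{i,*}=(\widetilde{\bm C}_i-\bm C_{i,*})-(\bm S_0^{(0)}-\bm S_{0,*})$. The true supports are feasible for FCT because $s_i\ge s_{i,*}$ and $m\ge m_*$. FCT is an exact weighted projection onto the union of feasible support patterns, so the triangle inequality gives
\[
\sum_iw_i\|\bm S_i^{(0)}-\bm S_{i,*}\|_{\mathrm F}^2\le4\sum_iw_i\|\widetilde{\bm S}^{\,\mathrm{res}}_i-\bm S_{i,*}\|_{\mathrm F}^2\lesssim\max_i\|\widetilde{\bm C}_i-\bm C_{i,*}\|_{\mathrm F}^2+e_0^{(0)} .
\]

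\textbf{Conclusion.} Combining Steps 1--4, $\mathcal V^{(0)}\lesssim\delta^{-1}c\,\varepsilon^2\sigma_r^2$. Choosing the implied constants in the sample-size and sparsity conditions small enough makes this at most $\varepsilon^2\sigma_r^2$.
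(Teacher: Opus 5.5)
Your proposal is correct and follows essentially the same route as the paper's proof. Both arguments use the same sequence of steps:
\begin{enumerate}
\item the local Basu-type rate, including the spikiness bias;
\item Eckart--Young applied twice, followed by the Tong et al.\ alignment comparison;
\item the order-statistic bound for the coordinatewise median, using the $\delta N$ majority margin;
\item stability of $\mathcal G_{s_0,\gamma}$;
\item a constant-factor bound for FCT.
\end{enumerate}
The only differences are minor and do not affect the basin-entry conclusion. You control everything by $\max_i$ of the local errors, whereas the paper keeps the weighted averages $\sum_i(w_i+1/(\delta N))R_i^{\mathrm{init}}$. For FCT you use its exact weighted-projection property, giving a factor $4$, whereas the paper invokes its FCT stability lemma.
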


Theorem~\ref{thm:initializer_rate} completes the guarantee for the
two-stage \textsc{PerFeCT-VAR} procedure. The federated initializer enters
the local contraction region with high probability, after which the
iterative algorithm converges geometrically to its statistical error floor.
The requirement \(T_i\gtrsim p+\log N\) is a sufficient condition for this
particular initializer and is stronger than the sparse client-level
requirement in Theorem \ref{thm:main_fed_iht}. The sparsity condition on \(s_{0,*}+\max_{i\in[N]}s_{i,*}\)
ensures that the nonidentifiability error in the local
low-rank-plus-sparse fits is sufficiently small. Componentwise initialization rates and all explicit constants are provided in Appendix~\ref{app:initializer-proof} of the Supplementary Material.

\vspace{-0.3cm}
\subsection{Summary of theoretical guarantees}
\label{subsec:theory_summary}

To summarize the preceding results, Table~\ref{tab:theory_summary}
collects the main conditions and guarantees under the balanced-client
setting \(T_i=T\) and \(s_{i,*}=s_*\). For clarity, we suppress fixed covariance, temporal-dependence, thresholding, and directional covariance discrepancy
factors.

\begin{table}[!htb]
\centering
\caption{Simplified theoretical guarantees for \textsc{PerFeCT-VAR}.}
\label{tab:theory_summary}
\small
\renewcommand{\arraystretch}{1.35}
\begin{tabular}{p{2.8cm}p{5.0cm}p{7.4cm}}
\toprule
\textbf{Guarantee}
&
\textbf{Simplified condition}
&
\textbf{Result}
\\
\midrule

Identification
&
\(
m<N/2
\)
&
Unique shared--personalized decomposition.
\\

Joint refinement
&
\(
T\gtrsim D_{\mathrm P},~
NT\gtrsim D_{\mathrm G}+ND_{\mathrm P}
\)
&
Convergence after
\(K\gtrsim\log(NT)\) rounds.
\\

Stat. accuracy
&
$m<N/2$, $\chi\leq C$
&
\(
\text{Shared:}~
\frac{D_{\mathrm G}}{NT}
+
\frac{m}{N}\frac{D_{\mathrm P}}{T}\),~
\(
\text{Personalized:}~
\frac{D_{\mathrm P}}{T}
+
\frac{D_{\mathrm G}}{NT}.
\)
\\

Initialization
&
\(
T\gtrsim
p+
(s_{0,*}+s_*)\log(pN)
\),\newline
\(
s_{0,*}+s_*\lesssim p^2\sigma_r^2(\mu r\sigma_1)^{-2}
\)
&
\(
\mathcal V^{(0)}
\lesssim
\sigma_r^2
\), entering the contraction basin.
\\

\bottomrule
\end{tabular}
\end{table}

The table highlights the distinction between identification and statistical
efficiency. The condition \(m<N/2\) guarantees identification, whereas the
personalization frequency \(m/N\) determines the statistical cost of
personalization in shared estimation. In particular, the shared rate
decomposes into a pooled contribution \(D_{\mathrm G}/(NT)\) and a
personalization contribution \((m/N)\cdot(D_{\mathrm P}/T)\), showing when the
benefit of federation persists despite client-specific dynamics.

\vspace{-0.5cm}
\section{Simulation Studies}
\label{sec:simulations}

\vspace{-0.3cm}
We conduct three simulation experiments to examine (i) finite-sample
estimation accuracy and the separation between shared and personalized
rates; (ii) the effects of personalization diversity and directional
covariance discrepancy; and (iii) the communication-efficient
implementation of \textsc{PerFeCT-VAR}. Complete data-generating
mechanisms and implementation details are provided in
Appendix~\ref{app:simulation-details} of the Supplementary Material.

Throughout, \(p=40\) and \(T_i=T\). For the
client-specific transition matrices, we report the relative error
$
  \operatorname{RelErr}(\{\widehat{\bm A}_i\})
  :=
  \left\{
  \sum_{i=1}^N
  \|\widehat{\bm A}_i-\bm A_{i,*}\|_{\mathrm F}^2/(\sum_{i=1}^N\|\bm A_{i,*}\|_{\mathrm F}^2)
  \right\}^{1/2},
$
and define \(\operatorname{RelErr}(\{\widehat{\bm S}_i\})\) analogously for the
personalized components.
For the shared components, we use
$\operatorname{RelErr}(\widehat{\bm L}\widehat{\bm R}^\top) := \|\widehat{\bm L}\widehat{\bm R}^\top-\bm L_*\bm R_*^\top\|_{\mathrm F}/\|\bm L_*\bm R_*^\top\|_{\mathrm F}$,
with \(\operatorname{RelErr}(\widehat{\bm S}_0)\) defined similarly.
All reported results are averages over 200 independent replications.

\vspace{-0.5cm}

\subsection{Estimation accuracy and rate separation}
\label{subsec:sim_accuracy}

The first experiment compares \textsc{PerFeCT-VAR} with several benchmarks
as \(T\) varies and then examines its componentwise errors as the number of
clients \(N\) increases. The benchmarks are the low-rank-plus-sparse
estimator of \citet{basu2019low} fitted either to the pooled data
(Pooled LRS) or separately to each client (Local LRS), Ditto
\citep{li2021ditto} using a sample-size-weighted pooled OLS estimator for
its shared model, and the proposed federated initializer (Initial).
We set \(p=40\), \(r=2\),
\(s_{0,*}=36\), $\gamma_*p=2$, $m_*=2$, and
\(s_{i,*}=18\). The shared low-rank component is generated as
\(\bm L_*\bm R_*^\top=\bm U\operatorname{diag}(1,0.5)\bm V^\top\), where
\(\bm U\) and \(\bm V\) have random orthonormal columns. The shared and personalized sparse components are generated on random
supports satisfying their prescribed sparsity and frequency constraints.
The innovations are independent \(N(\bm 0,\bm I_{40})\). Starting from the
proposed initializer, \textsc{PerFeCT-VAR} is run for \(K=200\) rounds
with fitting levels \((s_0,s_i,m,\gamma p)=(40,20,3,3)\), and step sizes
\(\eta_{\mathrm{LR}}=\eta_0=0.1\) and \(\eta_\mathrm{P}=0.3\). The implementation details are given in
Appendix~\ref{subsec:sim-exp1-details}.

\begin{figure}[!htp]
  \centering
  \includegraphics[width=0.75\textwidth]{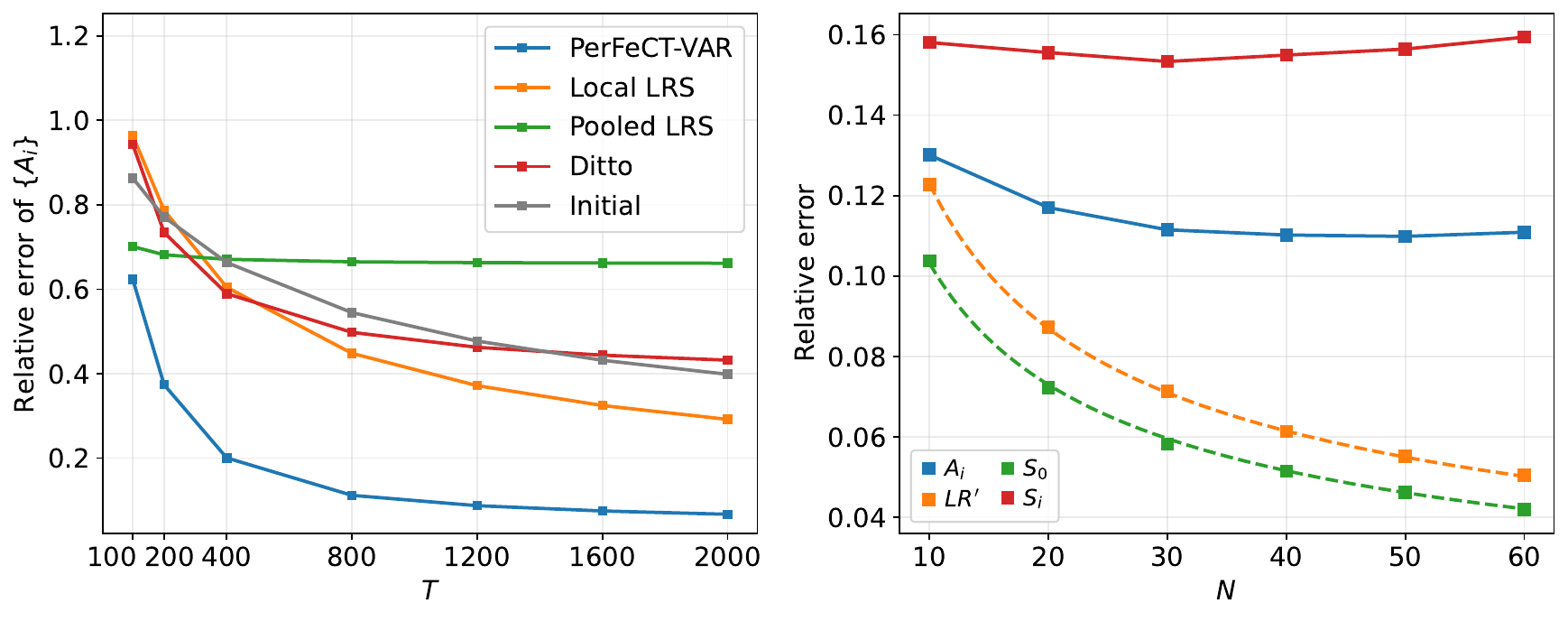}
  \vspace{-0.3cm}
  \caption{Mean relative errors as \(T\) and \(N\) vary. Left: transition-matrix
errors across \(T\) at \(N=30\). Right:
componentwise errors of \textsc{PerFeCT-VAR} across \(N\) at \(T=800\).}
  \label{fig:sim_vary_TN}
\end{figure}

The left panel of Figure~\ref{fig:sim_vary_TN} shows decreasing estimation
error with \(T\) for the methods considered, with
\textsc{PerFeCT-VAR} attaining the smallest mean relative error across the
displayed settings. The Pooled LRS error decreases much more slowly and
eventually levels off, reflecting the approximation error from imposing a
single transition matrix on heterogeneous clients. 
The right panel examines the separation between shared and personalized
estimation rates. The dashed curves for \(\bm B_*\) and \(\bm S_{0,*}\)
are fitted functions of the form \(\sqrt{c_{\mathrm{fit}}/N}\).
As \(N\) increases, the errors of both shared components decrease markedly
and track this \(N^{-1/2}\) pattern, whereas the personalized error changes
little. Since \(T\) is fixed in this experiment, this behavior is consistent
with Corollary~\ref{cor:main_equal_sample_rates}: the squared errors of the
shared components can benefit from the pooled \(NT\) sample size, while the
leading personalized error remains at the client-level \(T^{-1}\) scale.

\vspace{-0.5cm}

\subsection{Personalization diversity and covariance discrepancy}
\label{subsec:sim_diversity_heterogeneity}

The second experiment isolates the two mechanisms governing
personalized-to-shared error propagation in Section~\ref{sec:4}. The first
design varies the frequency of personalized links while holding the local
sparsity level fixed. The second varies directional covariance discrepancy
while keeping the principal curvature and scale quantities unchanged.

\vspace{-0.5cm}
\paragraph{Personalization diversity.}
We fix \(N=30\), \(T=800\), and \(p=40\), and set
\(\bm L_*\bm R_*^\top=0.5\bm u\bm u^\top\), where
\(u_j=15^{-1/2}\) for \(j\leq15\) and zero otherwise. We take $(s_{0,*}, \gamma_* p, s_{i,*})=(18,2,6)$ and restrict the personalized
supports to the active top-left \(15\times15\) block. For \(m_*\in\{1,2,3,4,6,9,12\}\), each selected coordinate is assigned to exactly \(m_*\) clients, while each client retains exactly six personalized entries. Thus, increasing \(m_*\) changes the cross-client concentration of personalization without changing clientwise sparsity. The complete construction is given in Appendix~\ref{subsec:sim-exp2-details}.

\vspace{-0.5cm}
\paragraph{Directional covariance discrepancy.}
We next construct four designs with increasing
\(\Lambda_{\mathrm{LR}}\) and \(\Lambda_{\mathrm S}\). We use
\(N=30\), \(T=800\), \(p=40\), \(m_*=2\), and
\(s_{0,*}=s_{i,*}=18\). The shared low-rank component is rank one, \(\bm B_*=0.3\bm u\bm v^\top\), with a sparse right singular direction \(\bm v\).
Clients are arranged into 15 pairs with opposite personalized components
and opposite covariance deviations, so that the average covariance remains
fixed. For the zero, low, medium, and high discrepancy settings, respectively,
\(k=0,5,10,15\) client pairs have covariance deviations aligned with the
shared directions. This construction gives $\Lambda_{\mathrm{LR}} = \Lambda_{\mathrm S} = 0.225\sqrt{k}$, while keeping the principal covariance curvature and innovation-scale
quantities fixed across the four settings. The full construction and its
verification are provided in
Appendix~\ref{subsec:sim-exp2-details}.

\begin{figure}[!htp]
  \centering
  \includegraphics[width=0.7\textwidth]{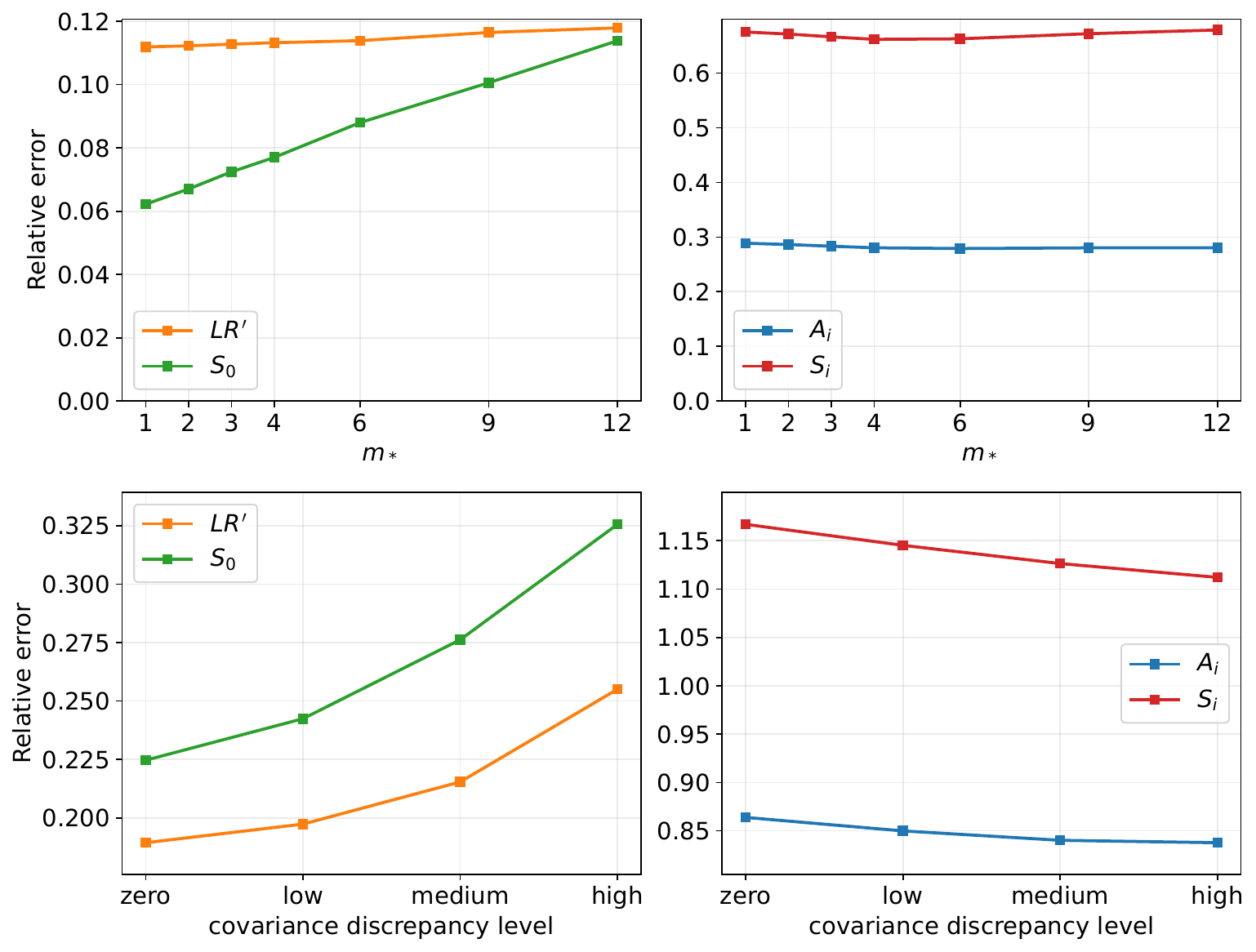}
  \vspace{-0.3cm}
  \caption{Componentwise relative errors under two configurations.}
  \label{fig:sim_diversity_heterogeneity}
\end{figure}

Figure~\ref{fig:sim_diversity_heterogeneity} shows complementary effects
of these two forms of heterogeneity. In the upper row, increasing
\(m_*\) raises the errors of the shared low-rank and sparse components,
while the personalized error remains stable. The lower row
shows a similar increase in shared-component error as
\(\Lambda_{\mathrm{LR}}\) and \(\Lambda_{\mathrm S}\) increase, though
the principal covariance curvature quantities are unchanged. These patterns
are consistent with the personalized-to-shared coupling terms in
Theorem~\ref{thm:main_fed_iht}: weaker personalization diversity and larger
directional covariance discrepancy both increase the propagation of
personalized estimation error into the shared updates.

\vspace{-0.5cm}
\subsection{Communication-efficient implementation}
\label{subsec:sim_communication}

The final experiment examines the communication-efficient implementation
described in Section~\ref{sec:communication}. Using the setting in Section~\ref{subsec:sim_accuracy} with
\(N=30\), \(T=800\), and \(p=40\), we compare the exact
\textsc{PerFeCT-VAR} algorithm with four configurations of the reduced-communication implementation,
formed by combining
\(\widetilde s_0\in\{40,160\}\) with two FCT refresh schedules:
a selective schedule with 62 refreshes and an every-10 schedule with
20 refreshes. We count total post-initialization uplink communication across all clients, assigning eight bytes to each numerical value and
two additional bytes to its coordinate index for sparse transmissions.

Over 200 rounds, the exact implementation of Algorithm \ref{alg:exact_HT_FCHT} uses 146.48 MiB of uplink
communication. The selective schedule uses 13.39 and 20.26 MiB for
\(\widetilde s_0=40\) and \(160\), respectively, while the every-10
schedule uses 10.83 and 17.70 MiB. These values correspond to an \(86.2\%\)--\(92.6\%\) reduction
in uplink communication.

\begin{figure}[!htp]
  \centering
  \includegraphics[width=0.8\textwidth]{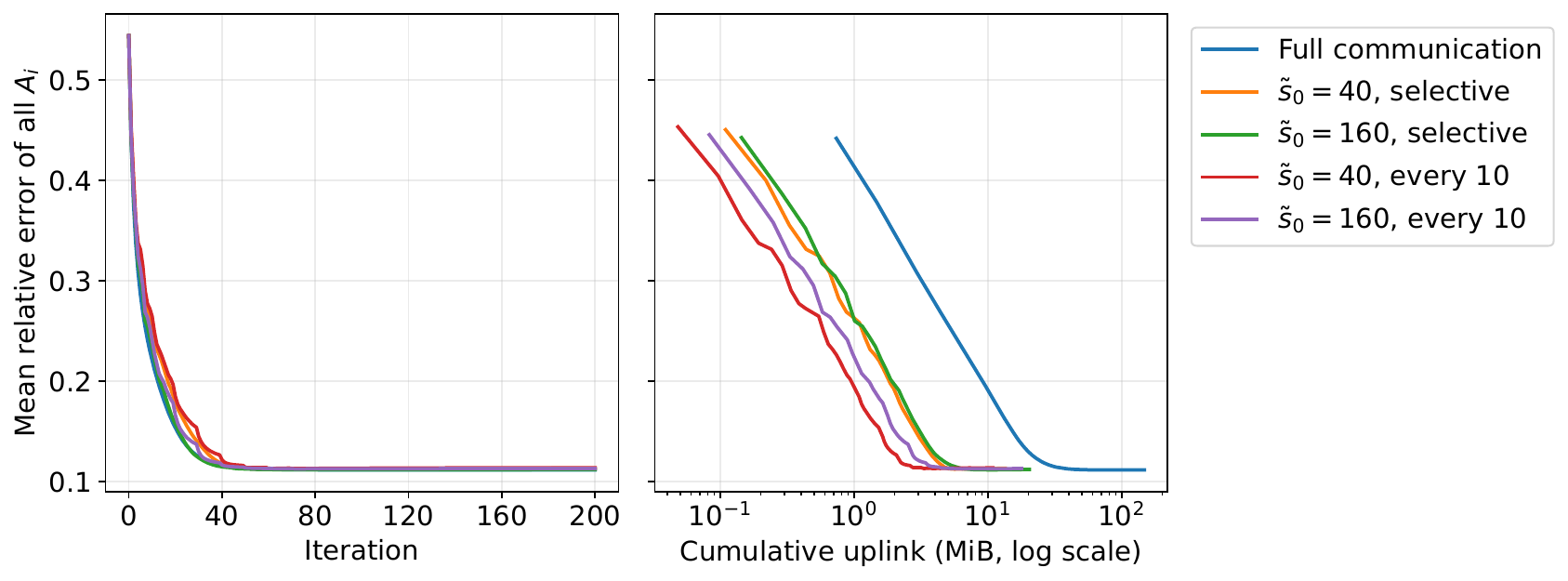}
  \vspace{-0.3cm}
  \caption{Comparison of Algorithm \ref{alg:exact_HT_FCHT} and the communication-efficient implementation.}
  \label{fig:sim_communication}
\end{figure}

Figure~\ref{fig:sim_communication} compares estimation error against both
communication rounds and cumulative uplink. In this experiment, all four configurations closely track the exact implementation and
reach similar final errors while using substantially less communication.
The results illustrate that the communication reductions in
Section~\ref{sec:communication} can yield substantial practical savings
without appreciable loss of estimation accuracy in the settings considered.

\vspace{-0.5cm}

\section{Real Data Analysis: Dominick's Sales Revenue}
\label{sec:realdata}

We apply \textsc{PerFeCT-VAR} to weekly sales data from Dominick's Finer
Foods\footnote{The original data were made available by the Kilts Center for Marketing at the University of Chicago Booth School of Business (https://www.chicagobooth.edu/research/kilts/research-data/dominicks).}, a supermarket chain in the Chicago metropolitan area. In the analysis, we aggregate the UPC-level
records to weekly category revenues, yielding \(N=25\) stores
and \(p=17\) product categories observed over 122 weeks from January 1995
to May 1997. The 17 categories are analgesics
(ANA), bottled juices (BJC), cheeses (CHE), cookies (COO), regular crackers
and saltines (CRA), dish detergent (DID), front-end candies (FEC), frozen
dinners (FRD), frozen entrees (FRE), frozen juices (FRJ), fabric softeners
(FSF), grooming products (GRO), laundry detergents (LND), oatmeal (OAT),
snack crackers (SNA), soaps (SOA), and toothpastes (TPA). We analyze
log-transformed revenues after adjusting for trend, seasonality, and holiday
effects. The adjusted series are centered within stores and standardized
using category-specific standard deviations across stores. The 122 weeks are divided chronologically into 82 for initial
training, 20 for model tuning, and 20 for out-of-sample
evaluation. Model tuning and test evaluation both use one-step-ahead
rolling forecasts. Three validation forecast occasions are excluded because of missing-data handling, leaving 17 usable occasions for tuning.
Tuning parameters are selected by minimizing the resulting validation MSE.

We compare \textsc{PerFeCT-VAR} with the Local LRS, Pooled LRS, and Ditto
methods used in Section~\ref{sec:simulations}. Additional benchmarks are
local and pooled VAR(1) models estimated by OLS or LASSO, univariate AR(1) (denoted by iAR(1))
models fitted separately to each store--category series, and a zero-forecast
benchmark on the standardized residual scale. For \textsc{PerFeCT-VAR}, we
fix \(r=2\) and \(\gamma=2/17\); validation selects
$s_0=17$, $s_i=40$, and $m=10$, satisfying \(m<N/2\). Table~\ref{tab:forecast} reports the mean and median store-level test MSE. For each store, the MSE averages squared one-step-ahead forecast errors
over the 20 test weeks and 17 categories. Among the methods considered,
\textsc{PerFeCT-VAR} has the smallest mean and median MSE. The gains over pooled and local benchmarks suggest benefits from sharing information across stores while allowing store-specific dynamics.

\begin{table}[t]
\centering
\small
\renewcommand{\arraystretch}{1.15}
\caption{Out-of-sample rolling-window forecasting performance.}
\label{tab:forecast}
\begin{tabular}{@{}lrr@{\hspace{1em}}|@{\hspace{1em}}lrr@{}}
\Xhline{0.8pt}
Method & Mean MSE & Median MSE & Method & Mean MSE & Median MSE \\
\Xhline{0.4pt}
Pooled OLS & 1.837 & 1.735 & Local OLS & 2.077 & 1.911 \\
Pooled LASSO & 1.800 & 1.729 & Local LASSO & 1.840 & 1.723 \\
Pooled LRS & 1.805 & 1.722 & Local LRS & 1.841 & 1.726 \\
\Xhline{0.4pt}
Ditto & 1.818 & 1.679 & Zero & 2.184 & 2.077 \\
PerFeCT-VAR & \textbf{1.779} & \textbf{1.635} & iAR(1) & 1.861 & 1.795 \\
\Xhline{0.8pt}
\end{tabular}
\end{table}

After the forecasting comparison, we refit \textsc{PerFeCT-VAR} using all
122 weeks to examine the estimated dynamic structure. Write
\(
\widehat{\bm B}
=
\widehat{\bm L}\widehat{\bm R}^{\top}
=
\widehat{\bm U}\widehat{\bbm\Sigma}\widehat{\bm V}^{\top}.
\)
To aid interpretation, we apply varimax rotations separately to $\widehat{\bm U}$ and $\widehat{\bm V}$. Following \citet{huang2025supervised}, we interpret the rotated bases
in terms of response and predictor loading patterns, respectively.
Figure~\ref{fig:real_decomposition}(a) displays these patterns. The first predictor and response patterns, P--1 and R--1, both
place relatively large loadings on the biscuit-related categories CRA,
COO, and SNA; R--1 also has substantial loadings on FEC and FRD. The
second predictor pattern P--2 has positive loadings on LND, FRE, and OAT
and negative loadings on FRJ, CHE, and COO, while R--2 has loadings of the
same sign for most response categories. These patterns provide a
low-dimensional summary of the shared predictive dependence across stores.

\begin{figure}[t]
  \centering
  \includegraphics[width=\textwidth]{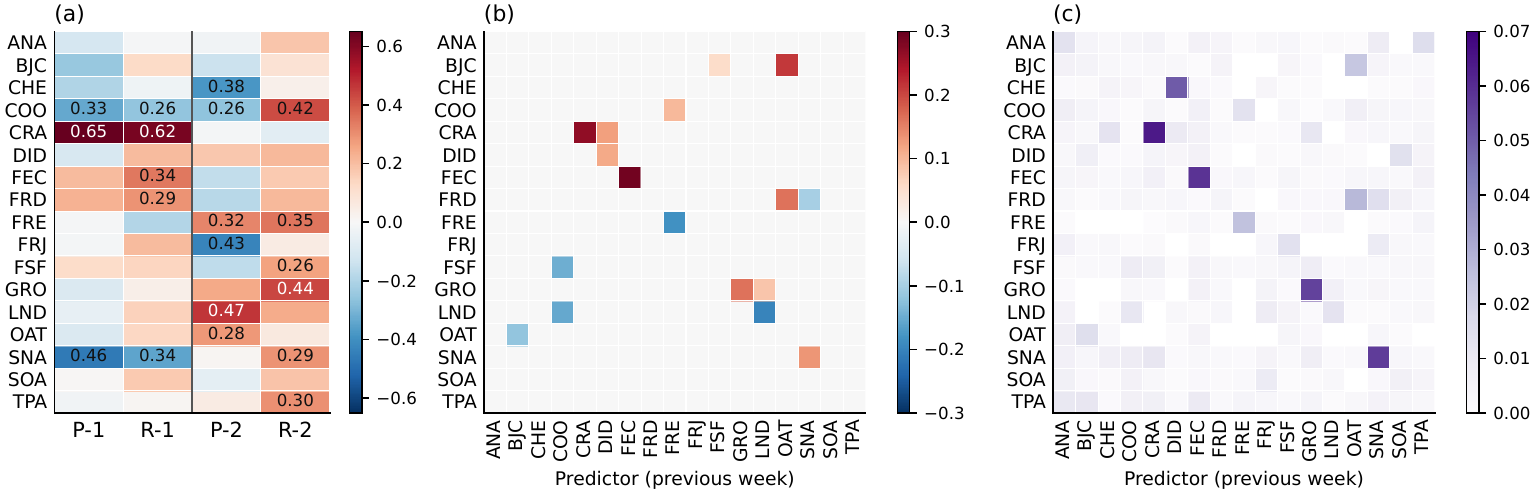}
  \caption{Estimated model components. (a) Rotated loadings of the shared low-rank component; (b) $\widehat{\bm S}_0$; (c) Mean absolute personalized coefficients across stores, $N^{-1}\sum_{i=1}^{N}|(\widehat{\bm S}_i)_{jk}|$.}
  \label{fig:real_decomposition}
\end{figure}

For a transition matrix, the \((j,k)\)-th coefficient represents the
predictive effect of category \(k\) in the previous week on category \(j\)
in the current week. Figure~\ref{fig:real_decomposition}(b) shows the
shared sparse component, which captures sparse predictive relationships
beyond the shared low-rank structure. Among its largest diagonal entries, the FEC and CRA coefficients are positive, indicating shared
persistence, whereas those for FRE and LND are negative, indicating
short-term reversal. The two directed OAT--BJC coefficients have opposite
signs, illustrating that the estimated predictive relationship need not be
symmetric across directions.

Figure~\ref{fig:real_decomposition}(c) reports
\(N^{-1}\sum_{i=1}^N|(\widehat{\bm S}_i)_{jk}|\),
which reflects both the frequency and magnitude of store-specific
adjustments at each directed relationship. The largest values include the CRA, FEC,
SNA, and GRO own-lag coefficients, each of which also has a substantial
shared component. In contrast, the DID own-lag and COO--FSF relationships
are mainly represented by the shared component, with relatively small
personalized adjustments. The estimated shared coefficients for DID--CHE
and FSF--FRJ are close to zero, while
nonzero personalized adjustments occur in eight and nine stores. These
examples illustrate that a relationship can be weak in the shared
component while remaining relevant for a subset of stores.

\begin{figure}[t]
  \centering
  \includegraphics[width=0.85\textwidth]{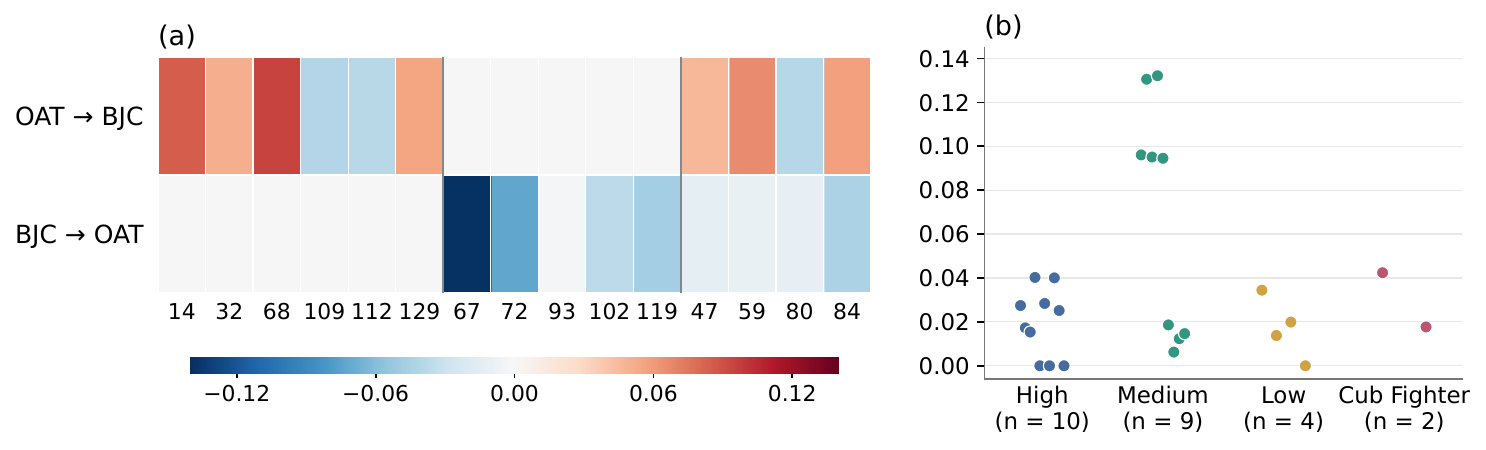}
  \vspace{-0.3cm}
  \caption{Store-specific personalizations. (a) Signed personalized links between OAT and BJC across 15 stores; (b) Personalization strength for predicting COO by price tier.}
  \label{fig:real_personalization}
\end{figure}

Figure~\ref{fig:real_personalization}(a) shows store-specific
adjustments to the asymmetric relationships between OAT and BJC
observed in $\widehat{\bm S}_0$. Across the two directions, 15 stores
have at least one personalized adjustment: 11 have an adjustment in one
direction and four in both directions. Because the frequency cap applies
to each directed coefficient separately, each direction satisfies the
selected cap \(m=10\). The OAT--BJC adjustments strengthen or weaken the shared relationship,
whereas the BJC--OAT adjustments reinforce the negative shared coefficient. Figure~\ref{fig:real_personalization}(b) summarizes heterogeneity in
predicting COO across stores. According to the data manual, stores
are grouped into four price tiers: High, Medium, Low, and Cub Fighter,
with the last group comprising stores competing directly with Cub Foods.
For store \(i\), define its COO personalization magnitude as
\(
\left\{
\sum_{k=1}^{17}
[(\widehat{\bm S}_i)_{\mathrm{COO},k}]^2
\right\}^{1/2}.
\)
The five stores with the largest values all belong to the Medium tier.
Overall, \textsc{PerFeCT-VAR} achieves competitive forecasting
performance while characterizing shared revenue dynamics and
store-specific departures.

\vspace{-0.5cm}
\section{Conclusion}
\label{sec:conclusion}

This article develops \textsc{PerFeCT-VAR} for personalized federated
estimation for high-dimensional time series. The key structural
idea is \emph{personalization diversity}: genuinely personalized dynamic
relationships should recur in only a limited fraction of clients. A
cross-client frequency cap formalizes this principle, yields a sharp
shared--personalized identification threshold, and motivates
Frequency-Capped Thresholding for estimation. Combined with a shared
low-rank-plus-sparse representation, scaled low-rank updates, and a
federated initializer, the resulting procedure separates pervasive shared
dynamics, recurrent shared links, and client-specific departures while
preserving the identifying structure throughout estimation.

Our theory characterizes when personalization remains compatible
with the statistical benefits of federation under heterogeneous client
distributions. Directional covariance discrepancy and distributional
compatibility quantify how client-specific covariance structures transmit
personalized estimation errors into the shared updates. The resulting
contraction and statistical rates show that identification and federation
gain are distinct: \(m<N/2\) guarantees separation of shared and
personalized effects, whereas the personalization frequency \(m/N\)
determines the additional statistical cost of personalization in shared
estimation. Consequently, sufficiently diverse personalization allows the
shared dynamics to retain substantial total-sample-size gains while the
personalized components achieve client-level accuracy.

{\setstretch{1.1}

\bibliography{mybib.bib}

@inproceedings{zhang2018unified,
	author = {Zhang, Xiao and Wang, Lingxiao and Gu, Quanquan},
	booktitle = {International Conference on Artificial Intelligence and Statistics},
	organization = {PMLR},
	pages = {1097--1107},
	title = {A unified framework for nonconvex low-rank plus sparse matrix recovery},
	year = {2018}}

@misc{lin2025improvingtimeseriesestimation,
	archiveprefix = {arXiv},
	author = {Yuchang Lin and Qianqian Zhu and Guodong Li},
	eprint = {2510.25236},
	primaryclass = {stat.ME},
	title = {Improving time series estimation and prediction via transfer learning},
	url = {https://arxiv.org/abs/2510.25236},
	year = {2025}}

@article{liu2025robust,
	author = {Liu, Weidong and Mao, Xiaojun and Zhang, Xiaofei and Zhang, Xin},
	journal = {Journal of the American Statistical Association},
	number = {549},
	pages = {266--277},
	publisher = {Taylor \& Francis},
	title = {Robust personalized federated learning with sparse penalization},
	volume = {120},
	year = {2025}}

@article{chen2023minimax,
	author = {Chen, Shuxiao and Zheng, Qinqing and Long, Qi and Su, Weijie J},
	journal = {Journal of Machine Learning Research},
	number = {262},
	pages = {1--59},
	title = {Minimax estimation for personalized federated learning: An alternative between FedAvg and local training?},
	volume = {24},
	year = {2023}}

@inproceedings{li2021ditto,
	author = {Li, Tian and Hu, Shengyuan and Beirami, Ahmad and Smith, Virginia},
	booktitle = {International conference on machine learning},
	organization = {PMLR},
	pages = {6357--6368},
	title = {Ditto: Fair and robust federated learning through personalization},
	year = {2021}}

@article{smith2017federated,
	author = {Smith, Virginia and Chiang, Chao-Kai and Sanjabi, Maziar and Talwalkar, Ameet S},
	journal = {Advances in neural information processing systems},
	title = {Federated multi-task learning},
	volume = {30},
	year = {2017}}

@article{huang2025optimal,
	author = {Huang, Xinmeng and Xu, Kan and Lee, Donghwan and Hassani, Hamed and Bastani, Hamsa and Dobriban, Edgar},
	journal = {Journal of the American Statistical Association},
	number = {552},
	pages = {2262--2275},
	publisher = {Taylor \& Francis},
	title = {Optimal multitask linear regression and contextual bandits under sparse heterogeneity},
	volume = {120},
	year = {2025}}

@inproceedings{pal2024sample,
	author = {Pal, Soumyabrata and Varshney, Prateek and Madan, Gagan and Jain, Prateek and Thakurta, Abhradeep and Aggarwal, Gaurav and Shenoy, Pradeep and Srivastava, Gaurav},
	booktitle = {International Conference on Artificial Intelligence and Statistics},
	organization = {PMLR},
	pages = {1702--1710},
	title = {Sample-efficient personalization: modeling user parameters as low rank plus sparse components},
	year = {2024}}

@misc{chen2026private,
	archiveprefix = {arXiv},
	author = {Kejun Chen and Qianqian Zhu},
	eprint = {2604.07135},
	primaryclass = {stat.ME},
	title = {Private Federated Learning for High-dimensional Time Series},
	url = {https://arxiv.org/abs/2604.07135},
	year = {2026}}

@article{xu2025joint,
	author = {Xu, Yuchen and Michailidis, George},
	journal = {arXiv preprint arXiv:2509.15402},
	title = {Joint learning of panel VAR models with low rank and sparse structure},
	year = {2025}}

@article{skripnikov2019joint,
	author = {Skripnikov, A and Michailidis, G},
	journal = {Econometrics and Statistics},
	pages = {120--133},
	publisher = {Elsevier},
	title = {Joint estimation of multiple network Granger causal models},
	volume = {10},
	year = {2019}}

@article{fisher2022penalized,
	author = {Fisher, Zachary F and Kim, Younghoon and Fredrickson, Barbara L and Pipiras, Vladas},
	journal = {psychometrika},
	number = {2},
	pages = {403--431},
	publisher = {Cambridge University Press \& Assessment},
	title = {Penalized estimation and forecasting of multiple subject intensive longitudinal data},
	volume = {87},
	year = {2022}}

@article{wang2022high,
	author = {Wang, Di and Zheng, Yao and Lian, Heng and Li, Guodong},
	journal = {Journal of the American Statistical Association},
	number = {539},
	pages = {1338--1356},
	publisher = {Taylor \& Francis},
	title = {High-dimensional vector autoregressive time series modeling via tensor decomposition},
	volume = {117},
	year = {2022}}

@article{kock2015oracle,
	author = {Kock, Anders Bredahl and Callot, Laurent},
	journal = {Journal of Econometrics},
	number = {2},
	pages = {325--344},
	publisher = {Elsevier},
	title = {Oracle inequalities for high dimensional vector autoregressions},
	volume = {186},
	year = {2015}}

@article{huang2025supervised,
	author = {Huang, Feiqing and Lu, Kexin and Zheng, Yao and Li, Guodong},
	journal = {Journal of Econometrics},
	pages = {105995},
	publisher = {Elsevier},
	title = {Supervised factor modeling for high-dimensional linear time series},
	volume = {249},
	year = {2025}}

@article{tan2022towards,
	author = {Tan, Alysa Ziying and Yu, Han and Cui, Lizhen and Yang, Qiang},
	journal = {IEEE transactions on neural networks and learning systems},
	number = {12},
	pages = {9587--9603},
	publisher = {IEEE},
	title = {Towards personalized federated learning},
	volume = {34},
	year = {2022}}

@article{lucas2019bank,
	author = {Lucas, Andr{\'e} and Schaumburg, Julia and Schwaab, Bernd},
	journal = {Journal of Business \& Economic Statistics},
	number = {3},
	pages = {542--555},
	publisher = {Taylor \& Francis},
	title = {Bank business models at zero interest rates},
	volume = {37},
	year = {2019}}

@article{wilms2018multiclassVAR,
	author = {Wilms, Ines and Barbaglia, Luca and Croux, Christophe},
	doi = {10.1111/rssc.12231},
	eprint = {https://academic.oup.com/jrsssc/article-pdf/67/2/435/49335909/jrsssc_67_2_435.pdf},
	issn = {0035-9254},
	journal = {Journal of the Royal Statistical Society Series C: Applied Statistics},
	month = {02},
	number = {2},
	pages = {435-452},
	title = {Multiclass Vector Auto-Regressive Models for Multistore Sales Data},
	url = {https://doi.org/10.1111/rssc.12231},
	volume = {67},
	year = {2018}}

@book{korte2018combinatorial,
	author = {Korte, Bernhard and Vygen, Jens},
	doi = {10.1007/978-3-662-56039-6},
	edition = {6},
	publisher = {Springer},
	title = {Combinatorial Optimization: Theory and Algorithms},
	year = {2018}}

@article{basu2019low,
	author = {Basu, Sumanta and Li, Xianqi and Michailidis, George},
	journal = {IEEE Transactions on Signal Processing},
	number = {5},
	pages = {1207--1222},
	publisher = {IEEE},
	title = {Low rank and structured modeling of high-dimensional vector autoregressions},
	volume = {67},
	year = {2019}}

@book{vershynin2018high,
	author = {Vershynin, Roman},
	publisher = {Cambridge university press},
	title = {High-dimensional probability: An introduction with applications in data science},
	volume = {47},
	year = {2018}}

@article{basu2015regularized,
	author = {Basu, Sumanta and Michailidis, George},
	journal = {The Annals of Statistics},
	pages = {1535--1567},
	publisher = {JSTOR},
	title = {Regularized estimation in sparse high-dimensional time series models},
	year = {2015}}

@article{tong2021accelerating,
	author = {Tong, Tian and Ma, Cong and Chi, Yuejie},
	journal = {Journal of Machine Learning Research},
	number = {150},
	pages = {1--63},
	title = {Accelerating ill-conditioned low-rank matrix estimation via scaled gradient descent},
	volume = {22},
	year = {2021}}

@article{cai2021learned,
	author = {Cai, HanQin and Liu, Jialin and Yin, Wotao},
	journal = {Advances in Neural Information Processing Systems},
	pages = {16977--16989},
	title = {Learned robust PCA: A scalable deep unfolding approach for high-dimensional outlier detection},
	volume = {34},
	year = {2021}}

@article{yi2016fast,
	author = {Yi, Xinyang and Park, Dohyung and Chen, Yudong and Caramanis, Constantine},
	journal = {Advances in neural information processing systems},
	title = {Fast algorithms for robust PCA via gradient descent},
	volume = {29},
	year = {2016}}

@article{jain2014iterative,
	author = {Jain, Prateek and Tewari, Ambuj and Kar, Purushottam},
	journal = {Advances in neural information processing systems},
	title = {On iterative hard thresholding methods for high-dimensional m-estimation},
	volume = {27},
	year = {2014}}

@inproceedings{stich2018sparsified,
	author = {Stich, Sebastian U. and Cordonnier, Jean-Baptiste and Jaggi, Martin},
	booktitle = {Advances in Neural Information Processing Systems},
	title = {Sparsified {SGD} with Memory},
	volume = {31},
	year = {2018}}

}

\clearpage

\appendix
\etocdepthtag.toc{appendices}
\setstretch{1.35}
\renewcommand{\thesection}{\Alph{section}}

\numberwithin{equation}{section}
\numberwithin{algorithm}{section}
\numberwithin{figure}{section}
\numberwithin{table}{section}
\numberwithin{assumption}{section}
\numberwithin{definition}{section}
\numberwithin{lemma}{section}
\numberwithin{proposition}{section}
\numberwithin{theorem}{section}
\numberwithin{corollary}{section}
\numberwithin{remark}{section}
\numberwithin{example}{section}

\begin{center}
  \LARGE
  Supplementary Material for
  Personalized Federated Vector Autoregression
  with Personalization Diversity\par
\end{center}
\vspace{0.5em}

\begingroup
\small
\begin{quotation}
This supplementary material provides theoretical results, proofs, and implementation details for \textsc{PerFeCT-VAR}. Appendix~\ref{app:personalization-diversity} discusses shared--personalized identifiability and describes FCT computation. Appendix~\ref{app:reduced-communication} details the communication-efficient implementation and summarizes the corresponding algorithm. Appendix~\ref{app:full-theory} gives the complete statements of the main theoretical results and illustrative examples of the covariance discrepancy measures. Appendix~\ref{app:main-theorem-proof} proves Theorems~\ref{thm:main_fed_iht} and \ref{thm:initializer_rate}, as well as Corollaries~\ref{cor:main_stat_rates} and \ref{cor:main_equal_sample_rates} in the main article. Appendices~\ref{app:stochastic-results} and~\ref{sec:app-deterministic} establish the auxiliary stochastic concentration results and auxiliary deterministic results used in the main proofs, respectively. Appendix~\ref{app:simulation-details} provides additional simulation details.
\end{quotation}
\endgroup

\begingroup
\etocsettagdepth{main}{none}
\etocsettagdepth{appendices}{subsection}
\etocsetnexttocdepth{subsection}
\renewcommand{\contentsname}{Appendix Contents}
\tableofcontents
\endgroup

\clearpage

\section{Personalization Diversity: Identifiability and FCT}
\label{app:personalization-diversity}

Personalization diversity is formalized through the cross-client frequency cap, under which each transition coordinate can be personalized in at most \(m\) clients. This appendix first discusses identifiability of the shared--personalized decomposition. Then, we present the optimization formulation and implementation of FCT.

\subsection{Identifiability of the shared--personalized decomposition}
\label{subsec:app-global-local-identifiability}

Under the stationary VAR model, each transition matrix is identified by
\[
 \bm A_{i,*}=\mathbb E[\bm y_{i,t}\bm y_{i,t-1}^{\top}]
 \bbm\Gamma_i^{-1}.
\]
Shared--personalized identifiability concerns whether the transition matrices \(\{\bm A_i\}\) uniquely determine the shared component \(\bm G\) and personalized components \(\{\bm S_i\}\) in \(\bm A_i=\bm G+\bm S_i\). Without any structural restrictions, replacing \((\bm G,\bm S_i)\) by \((\bm G+\bm H,\bm S_i-\bm H)\) for $\bm H\in \mathbb{R}^{p\times p}$ leaves every \(\bm A_i\) unchanged. We therefore seek structural conditions that ensure uniqueness of this decomposition.

\paragraph{Identification under the frequency cap.}
Intuitively, a value shared by a strict majority of clients at a coordinate
should be the shared baseline there. To formalize this intuition, we call
a decomposition admissible if its personalized supports
\((\supp(\bm S_i))_{i=1}^N\) belong to \(\mathcal D(\{s_i\},m)\).
Within this class, \(m<N/2\) guarantees such a majority at every coordinate,
uniquely determining the shared--personalized decomposition. We now prove
Proposition~\ref{prop:frequency_capped_identifiability}.

\begin{proof}[Proof of Proposition~\ref{prop:frequency_capped_identifiability}]
Consider two admissible decompositions of the same transition matrices and
set \(\bm H=\widetilde{\bm G}-\bm G\). Then
\begin{equation*}
  \widetilde{\bm S}_i-\bm S_i=-\bm H,
  \qquad i=1,\ldots,N.
\end{equation*}
Fix a coordinate $(j,k)$.  By the two frequency-cap conditions, the set
\begin{equation*}
  \bigl\{i:(\bm S_i)_{jk}\ne0\bigr\}
  \cup
  \bigl\{i:(\widetilde{\bm S}_i)_{jk}\ne0\bigr\}
\end{equation*}
has cardinality at most $2m<N$.  Hence, there exists a client $i$ for which
both $(\bm S_i)_{jk}=0$ and $(\widetilde{\bm S}_i)_{jk}=0$.  Evaluating the
preceding identity at $(j,k)$ gives $H_{jk}=0$.  Since $(j,k)$ was arbitrary,
$\bm H=\bm 0$, and thus $\widetilde{\bm G}=\bm G$.  The same identity then
implies $\widetilde{\bm S}_i=\bm S_i$ for every $i=1,\ldots,N$.
\end{proof}

\paragraph{Uniform and pointwise identification.}
The proposition establishes \emph{uniform identification}: every admissible
configuration has a unique shared--personalized decomposition. When \(m\ge N/2\),
this guarantee can fail, while a particular configuration may still have a
unique decomposition and thus be \emph{pointwise identifiable}.

To illustrate this distinction, let \(N=5\) and \(m=3\), and fix a
coordinate \((j,k)\). Set all other coordinates of the transition matrices
to zero and assume \(s_i\ge1\) for every client. If
\[
  \bigl((\bm A_i)_{jk}\bigr)_{i=1}^5=(0,0,1,1,1),
\]
then this coordinate admits two decompositions satisfying the frequency
cap: one with
\(\bm G_{jk}=0\) and
\([(\bm S_i)_{jk}]_{i=1}^5=(0,0,1,1,1)\), and another with
\(\bm G_{jk}=1\) and
\([(\bm S_i)_{jk}]_{i=1}^5=(-1,-1,0,0,0)\).
Thus, the shared value is not uniquely determined.

In contrast, if
\([(\bm A_i)_{jk}]_{i=1}^5=(0,0,0,0,1)\), then
\(\bm G_{jk}=0\) is the only admissible shared value: any alternative
would make at least four personalized entries nonzero and hence violate
the cap \(m=3\). All other coordinates have the unique baseline zero,
so this configuration is pointwise identifiable despite \(m\ge N/2\).

More generally, if \(N/2\le m<N\), assigning value \(0\) to \(m\)
clients and value \(1\) to the remaining \(N-m\) clients makes either
value an admissible baseline. For \(m=N\), every \(g\in\mathbb R\) is admissible.
This construction shows that the strict-majority threshold is sharp for uniform identification.

\paragraph{Recovery of the shared and personalized components.}
Suppose the transition matrices are recovered exactly, so that
\(\widehat{\bm A}_i=\bm A_{i,*}\) for all \(i\).
If \(s_i\ge s_{i,*}\) and \(m_*\le m<N/2\), any admissible fitted
decomposition \(\widehat{\bm A}_i=\widehat{\bm G}+\widehat{\bm S}_i\)
must coincide with the true decomposition:
\(\widehat{\bm G}=\bm G_*\) and
\(\widehat{\bm S}_i=\bm S_{i,*}\) for all \(i\).
Thus, the estimators also recover \(\bm G_*\) and all \(\bm S_{i,*}\)
exactly.

\subsection{Computing the FCT}
\label{subsec:app-fct-computation}

We formulate the FCT projection in \eqref{eq:FCT} as a maximum-weight bipartite \(b\)-matching problem and describe its solution via linear programming.

\paragraph{Bipartite matching formulation.}
For proposals \(\{\bm M_i\}\), let
\(z_{i,jk}\in\{0,1\}\) indicate whether coordinate \((j,k)\) is selected
for client \(i\). For a fixed feasible support allocation $\{z_{i,jk}\}$, the projection in
\eqref{eq:FCT} retains the selected proposal entries, so the objective of FCT equals
\[
 \sum_{i=1}^N w_i\|\bm M_i\|_{\mathrm F}^2
 -\sum_{i=1}^N w_i\sum_{j,k=1}^p |(\bm M_i)_{jk}|^2z_{i,jk}.
\]
Thus, FCT reduces to the binary support-allocation problem
\begin{equation}
\label{eq:app-fct-support-allocation}
\begin{aligned}
 \max_{\bm z\in\{0,1\}^{N\times p\times p}}\quad
 &\sum_{i=1}^N w_i\sum_{j,k=1}^p |(\bm M_i)_{jk}|^2z_{i,jk}\\
 \text{subject to}\quad
 &\sum_{j,k=1}^p z_{i,jk}\le s_i,\qquad i\in[N],\\
 &\sum_{i=1}^N z_{i,jk}\le m,\qquad (j,k)\in[p]\times[p].
\end{aligned}
\end{equation}
Construct a bipartite graph with client nodes \(u_i\) and coordinate nodes
\(v_{jk}\). Selecting edge \((u_i,v_{jk})\) yields weight
\(w_i|(\bm M_i)_{jk}|^2\) and uses one unit of capacity at each endpoint.
The client capacities are
\(s_i\), and the coordinate capacities are \(m\). Therefore,
\eqref{eq:app-fct-support-allocation} is a maximum-weight bipartite
\(b\)-matching problem with these node capacities
\citep{korte2018combinatorial}.
An optimal allocation \(\widehat{\bm z}\) gives the FCT output through
\((\bm S_i^+)_{jk}=\widehat z_{i,jk}(\bm M_i)_{jk}\).

\paragraph{Integral relaxation and implementation.}
We solve \eqref{eq:app-fct-support-allocation} via its linear programming
(LP) relaxation in \(\bm z\), replacing \(z_{i,jk}\in\{0,1\}\) by
\(0\le z_{i,jk}\le1\) while keeping the objective and capacity constraints
unchanged. The capacity constraint matrix is the totally unimodular
incidence matrix of the bipartite graph, and its right-hand side
\(\bm b=(s_1,\ldots,s_N,m,\ldots,m)^\top\) is integral.
Together with the $[0,1]$ variable bounds, the LP relaxation admits a binary optimal solution that is also optimal for the original problem. We use \texttt{scipy.optimize.linprog} with \texttt{method="highs-ds"},
the HiGHS dual revised simplex solver.

\section{Communication-Efficient Implementation}
\label{app:reduced-communication}

This appendix details the communication-efficient implementation described
in Section~\ref{sec:communication} of the main article. We present the communication and update procedures, followed by the complete algorithm and the proof
of the lossless FCT screening guarantee.

\subsection{Complete procedure}
\label{subsec:app-communication-procedure}

We discuss the communication-efficient implementation outlined in
Section~\ref{sec:communication}. It combines three strategies:
lossless FCT screening, compression of the shared updates, and selective
FCT refresh.

\paragraph{Lossless FCT screening.}
At a refresh round \(l\), client \(i\) forms the proposal
\(\bm S_i^{(l+1/2)}\) in \eqref{eq:local_Si} and the scores
\(q_{i,jk}^{(l)}=|(\bm S_i^{(l+1/2)})_{jk}|^2\). Defining
\begin{equation}
\label{eq:app-fct-candidate-budget}
 L_i:=\min\left\{
 p^2,\ s_i+\left\lfloor\frac{\sum_{h\ne i}s_h}{m}\right\rfloor
 \right\},
\end{equation}
client \(i\) uploads only its \(L_i\) largest scores and their corresponding
index pairs \((j,k)\). The server then solves
\eqref{eq:app-fct-support-allocation} with \(z_{i,jk}=0\) whenever
\((j,k)\) is not reported by client \(i\).
The following proposition establishes that this restriction is lossless.

\begin{proposition}[Lossless FCT screening]
\label{prop:app-lossless-fct-screening}
For each client \(i\), restrict the FCT bipartite graph to the \(L_i\)
candidate edges defined above. The resulting graph contains an optimizer of
the full FCT problem in \eqref{eq:FCT}.
\end{proposition}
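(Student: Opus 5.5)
The plan is an exchange argument applied to an optimal solution of the full support-allocation problem \eqref{eq:app-fct-support-allocation}. Fix client \(i\) and order its scores so that \(C_i\) is the set of its \(L_i\) largest scores, breaking ties by a fixed rule. If \(L_i=p^2\), there is nothing to prove. Otherwise \(L_i=s_i+\lfloor\sum_{h\ne i}s_h/m\rfloor\).

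Among all optimal allocations \(\widehat{\bm z}\) of the full problem, choose one that minimizes the number of selected edges lying outside the candidate sets, \(\sum_i\sum_{(j,k)\notin C_i}\widehat z_{i,jk}\). Such an optimizer exists because the feasible set is finite, and the LP argument in Appendix~\ref{subsec:app-fct-computation} guarantees a binary optimum. I claim this count is zero. Suppose instead that some client \(i\) selects \((j,k)\notin C_i\).

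Client \(i\) selects at most \(s_i\) coordinates in total, so at most \(s_i-1\) coordinates of \(C_i\) are currently selected by \(i\). That leaves at least \(L_i-s_i+1=\lfloor\sum_{h\ne i}s_h/m\rfloor+1\) coordinates of \(C_i\) unselected by \(i\). For such a coordinate \((j',k')\) to be unusable by \(i\), it must be saturated, that is, already selected by \(m\) other clients. Saturating \(\lfloor\sum_{h\ne i}s_h/m\rfloor+1\) coordinates would require more than \(\sum_{h\ne i}s_h\) selections by the other clients, which exceeds their total budget. Hence some \((j',k')\in C_i\) is unselected by \(i\) and has frequency at most \(m-1\).

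Now swap: move client \(i\)'s selection from \((j,k)\) to \((j',k')\). Three things must be checked.

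\begin{itemize}
\item \emph{Client budget.} The number of coordinates selected by \(i\) is unchanged, so the constraint \(\sum_{j,k}z_{i,jk}\le s_i\) still holds.
\item \emph{Coordinate caps.} The frequency at \((j,k)\) decreases by one, and the frequency at \((j',k')\) rises to at most \(m\), so every cap is respected.
\item \emph{Objective.} The objective changes by \(w_i(q_{i,j'k'}-q_{i,jk})\ge0\), because every score in \(C_i\) is at least every score outside it.
\end{itemize}

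The new allocation is therefore feasible and optimal, yet it has strictly fewer edges outside the candidate sets. This contradicts the minimal choice of \(\widehat{\bm z}\). Consequently, some optimizer of the full problem uses only candidate edges, so it lies in the restricted graph. Being optimal for the larger problem, it is also optimal for the restricted one, and the restricted optimum equals the full optimum.

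The main point of care is the counting step showing that a non-saturated free candidate exists. It uses the \(s_i-1\) bound on \(i\)'s own selections together with the total budget \(\sum_{h\ne i}s_h\) of the other clients, and this is exactly what the floor term in \(L_i\) is sized to cover. Ties in the scores are handled by the fixed ordering, and they only make the objective inequality weak, which the minimality argument tolerates.
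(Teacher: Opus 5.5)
Your proposal is correct and is essentially the paper's own argument. You choose a full-problem optimizer that minimizes the number of selections outside the candidate sets. You then use the count of at least $\lfloor\sum_{h\ne i}s_h/m\rfloor+1$ unselected candidates, together with the other clients' total budget $\sum_{h\ne i}s_h$, to find an unsaturated candidate coordinate, and swap it in to obtain a feasible optimizer with one fewer off-candidate selection. One step is superfluous: you do not need LP integrality to guarantee an optimizer exists, because the support-allocation problem is already posed over a finite binary feasible set.
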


The proof is given in Appendix~\ref{subsec:det-lossless-fcht-screening}.

\paragraph{Communication for the shared updates.}
In Algorithm~\ref{alg:perfect_var}, the exact server-side update
of \(\bm S_0\) requires each client to upload a dense \(p\times p\)
gradient matrix. We reduce this cost using Top-\(\widetilde s_0\)
sparsification with error feedback
\citep{stich2018sparsified}.
Specifically, client \(i\) uploads the sparse matrix \(\widetilde{\bm D}_i^{(l)}\) instead of the full gradient \(\bm D_i^{(l)}\). With \(\bm e_i^{(0)}=\bm 0\), client \(i\) computes
\begin{equation}
\label{eq:app-error-feedback}
\begin{aligned}
 \widetilde{\bm D}_i^{(l)}
 &=\operatorname{Top}_{\widetilde s_0}
   (\bm D_i^{(l)}+\bm e_i^{(l)}),\\
 \bm e_i^{(l+1)}
 &=\bm D_i^{(l)}+\bm e_i^{(l)}-\widetilde{\bm D}_i^{(l)},
\end{aligned}
\end{equation}
where \(\operatorname{Top}_k(\cdot)\) retains the \(k\) entries with the largest absolute values and sets all other entries to zero.
The server then updates
\begin{equation}
\label{eq:app-shared-sparse-update}
 \bm S_0^{(l+1)}
 =\mathcal G_{s_0,\gamma}\!\left(
   \bm S_0^{(l)}-\eta_0\sum_{i=1}^Nw_i\widetilde{\bm D}_i^{(l)}
 \right).
\end{equation}
At the same time, each client uploads the two \(p\times r\) factor gradients
\(\bm D_i^{(l)}\bm R^{(l)}\) and
\((\bm D_i^{(l)})^\top\bm L^{(l)}\) for the low-rank update in
\eqref{eq:server_LR}.
Together, they avoid uploading the full gradient matrix.

\paragraph{Selective FCT refresh.}
When the selected supports stabilize in later iterations, refreshing
them at every round may be unnecessary. This motivates selective FCT
refresh as an additional way to reduce communication.
The server specifies a set of refresh rounds
\(\mathcal R\subseteq\{0,\ldots,K-1\}\). At rounds
\(l\in\mathcal R\), clients upload their indexed scores and the server
applies FCT to update the supports. At other rounds, no scores are
uploaded, and each client updates \(\bm S_i\) on its previous support
according to \eqref{eq:client_HT}.

\paragraph{Uplink communication cost.}
With the three procedures described above, we count post-initialization uplink communication. At each round,
client \(i\) uploads \(2pr\) factor-gradient entries and at most
\(\widetilde s_0\) indexed entries of the compressed gradient
\(\widetilde{\bm D}_i^{(l)}\). At refresh rounds, it additionally
uploads \(L_i\) indexed FCT scores. Its uplink cost is
therefore
\[
 \begin{cases}
 \mathcal O(pr+\widetilde s_0+L_i), & l\in\mathcal R,\\
 \mathcal O(pr+\widetilde s_0), & l\notin\mathcal R,
 \end{cases}
\]
at round \(l\), and
\(\mathcal O\bigl(K(pr+\widetilde s_0)+|\mathcal R|L_i\bigr)\)
over \(K\) rounds. If \(s_i=s\) for all clients, \(1\le m\le N\),
and \((1+N/m)s\le p^2\), then
\(L_i=s+\lfloor(N-1)s/m\rfloor=\mathcal O(Ns/m)\), giving the
score-transmission bound stated in the main article. Screening also reduces
the server's allocation problem to a graph with at most \(\sum_iL_i\)
edges instead of \(Np^2\).

We summarize the complete communication-efficient implementation in Algorithm~\ref{alg:efficient_HT_FCHT} below.

\begin{algorithm}[htb]
\caption{Communication-Efficient Implementation of \textsc{PerFeCT-VAR}}
\label{alg:efficient_HT_FCHT}
\small\linespread{1.05}\selectfont
\begin{algorithmic}[1]
\Require Feasible initial values
\(\bm L^{(0)},\bm R^{(0)},\bm S_0^{(0)},\{\bm S_i^{(0)}\}_{i=1}^N\);
sparsity levels \(s_0,\{s_i\}_{i=1}^N\); row--column sparsity level
\(\gamma\); frequency cap \(m\ge1\); step sizes
\(\eta_{\mathrm{LR}},\eta_0,\eta_{\mathrm P}\); gradient sparsification budget
\(\widetilde s_0\); number of communication
rounds \(K\); and refresh set \(\mathcal R\subseteq\{0,\ldots,K-1\}\).
\State Server sets \(w_i=T_i/\Ttot\) and computes \(L_i\) by \eqref{eq:app-fct-candidate-budget} for each $i\in[N]$.
\State Server sends
$\bm L^{(0)},\bm R^{(0)},\bm S_0^{(0)}$, and $L_i$ to client $i$.
\State Each client $i$ stores
$\widehat\Omega_i^{(0)}=\supp(\bm S_i^{(0)})$ and sets
$\bm e_i^{(0)}=\bm 0_{p\times p}$.
\For{$l=0,1,\ldots,K-1$}
  \State Each client $i$ computes $\bm D_i^{(l)}$ by
  \eqref{eq:local_gradient} and $\bm S_i^{(l+1/2)}$ by
  \eqref{eq:local_Si}.
  \State Each client $i$ computes $\widetilde{\bm D}_i^{(l)}$ and
  $\bm e_i^{(l+1)}$ by \eqref{eq:app-error-feedback}, and uploads
  $\bm D_i^{(l)}\bm R^{(l)}$, $(\bm D_i^{(l)})^\top\bm L^{(l)}$, and $\widetilde{\bm D}_i^{(l)}$.
  \State Each client $i$ uploads its $L_i$ largest indexed scores
  $q_{i,jk}^{(l)}=|(\bm S_i^{(l+1/2)})_{jk}|^2$ if $l\in\mathcal R$.
  \State Server updates $\bm L^{(l+1)},\bm R^{(l+1)}$ by
  \eqref{eq:server_LR} using the weighted averages of the uploaded
  factor gradients, and updates $\bm S_0^{(l+1)}$ by
  \eqref{eq:app-shared-sparse-update}.
  \If{$l\in\mathcal R$}
    \State Server solves \eqref{eq:app-fct-support-allocation} with
    $z_{i,jk}=0$ for unreported client--coordinate pairs to obtain
    $\{\widehat\Omega_i^{(l+1)}\}_{i=1}^N$.
  \Else
    \State Each client $i$ retains
    $\widehat\Omega_i^{(l+1)}=\widehat\Omega_i^{(l)}$.
  \EndIf
  \State Server sends $\bm L^{(l+1)},\bm R^{(l+1)},\bm S_0^{(l+1)}$
  to each client $i$, together with $\widehat\Omega_i^{(l+1)}$ if
  $l\in\mathcal R$.
  \State Each client $i$ sets
  $\bm S_i^{(l+1)}=\mathcal P_{\widehat\Omega_i^{(l+1)}}
  (\bm S_i^{(l+1/2)})$.
\EndFor
\Ensure Server returns
$\widehat{\bm L}=\bm L^{(K)}$, $\widehat{\bm R}=\bm R^{(K)}$, and
$\widehat{\bm S}_0=\bm S_0^{(K)}$; each client $i$ retains
$\widehat{\bm S}_i=\bm S_i^{(K)}$.
\end{algorithmic}
\end{algorithm}

\subsection{Proof of Proposition~\ref{prop:app-lossless-fct-screening}}
\label{subsec:det-lossless-fcht-screening}

\begin{proof}
Fix a communication round and suppress the round index. Let
\(\mathcal U_i\subset[p]\times[p]\) be the set of \(L_i\) coordinates
uploaded by client \(i\).  Since these coordinates have the \(L_i\) largest
scores,
\[
  q_{i,cd}\ge q_{i,ab}
  \qquad
  \text{whenever }(c,d)\in\mathcal U_i
  \text{ and }(a,b)\notin\mathcal U_i.
\]

Among all optimal support allocations for the full FCT problem in
\eqref{eq:app-fct-support-allocation}, choose one, denoted by
\((\widehat\Omega_i)_{i=1}^N\), that minimizes
\begin{equation}\label{eq:appC_uplodad_mismatch}
  \sum_{i=1}^N|\widehat\Omega_i\setminus\mathcal U_i|.
\end{equation}
It suffices to show that this minimum is zero.  Suppose otherwise.  Then, for
some client \(i\), there exists
\((a,b)\in\widehat\Omega_i\setminus\mathcal U_i\).
This implies \(L_i<p^2\), and hence
$
  L_i
  =s_i+\left\lfloor\sum_{h\ne i}s_h/m\right\rfloor.
$
Moreover, because \(\widehat\Omega_i\) contains an element outside
\(\mathcal U_i\),
\[
\begin{aligned}
  |\mathcal U_i\setminus\widehat\Omega_i|
  &=L_i-|\mathcal U_i\cap\widehat\Omega_i|\\
  &\ge L_i-(|\widehat\Omega_i|-1)
  \ge L_i-s_i+1\\
  &=\left\lfloor\frac{\sum_{h\ne i}s_h}{m}\right\rfloor+1.
\end{aligned}
\] 
Then, at least one \((c,d)\in\mathcal U_i\setminus\widehat\Omega_i\) must satisfy
\[
  \sum_{h=1}^N\mathbf 1\{(c,d)\in\widehat\Omega_h\}\le m-1.
\]
Otherwise, every coordinate in
\(\mathcal U_i\setminus\widehat\Omega_i\) would attain the frequency cap.
Since none of them belongs to \(\widehat\Omega_i\), this would imply
\[
  m|\mathcal U_i\setminus\widehat\Omega_i|
  =\sum_{(c,d)\in\mathcal U_i\setminus\widehat\Omega_i}
    \sum_{h\ne i}\mathbf 1\{(c,d)\in\widehat\Omega_h\}
    \le\sum_{h\ne i}|\widehat\Omega_h|
  \le\sum_{h\ne i}s_h.
\]
However, the preceding lower bound gives
\[
  m|\mathcal U_i\setminus\widehat\Omega_i|
  \ge
  m\left(
    \left\lfloor\frac{\sum_{h\ne i}s_h}{m}\right\rfloor+1
  \right)
  >\sum_{h\ne i}s_h,
\]
which gives a contradiction.

For such a coordinate \((c,d)\), replace the selected but unreported
coordinate \((a,b)\in\widehat\Omega_i\setminus\mathcal U_i\) by \((c,d)\),
and define
\[
  \widetilde\Omega_i
  :=(\widehat\Omega_i\setminus\{(a,b)\})\cup\{(c,d)\},
  \qquad
  \widetilde\Omega_h:=\widehat\Omega_h\quad(h\ne i).
\]
This replacement is feasible, and changes the
objective value by \(w_i(q_{i,cd}-q_{i,ab})\ge0\).
Since \((\widehat\Omega_i)_{i=1}^N\) is already optimal, the objective cannot
strictly increase; hence \((\widetilde\Omega_i)_{i=1}^N\) is also an optimizer
of the full FCT problem.  Moreover, because
\((a,b)\in\widehat\Omega_i\setminus\mathcal U_i\) and
\((c,d)\in\mathcal U_i\setminus\widehat\Omega_i\),
\[
  \widetilde\Omega_i\setminus\mathcal U_i
  =
  (\widehat\Omega_i\setminus\mathcal U_i)\setminus\{(a,b)\}.
\]
For every \(h\ne i\), we also have
\(\widetilde\Omega_h\setminus\mathcal U_h
=\widehat\Omega_h\setminus\mathcal U_h\).  Therefore,
\[
  \sum_{h=1}^N|\widetilde\Omega_h\setminus\mathcal U_h|
  =
  \sum_{h=1}^N|\widehat\Omega_h\setminus\mathcal U_h|-1.
\]
This contradicts the defining choice of
\((\widehat\Omega_i)_{i=1}^N\) as an optimizer minimizing \eqref{eq:appC_uplodad_mismatch}.
Consequently,
\[
  \sum_{h=1}^N|\widehat\Omega_h\setminus\mathcal U_h|=0.
\]
Hence \(\widehat\Omega_h\subseteq\mathcal U_h\) for every \(h\in[N]\), so
this full-problem optimizer is contained in the restricted graph.
\end{proof}

\section{Complete Statements of the Main Results}
\label{app:full-theory}

This appendix presents the complete statements of the main theorems
and corollaries in the main article, with explicit quantitative conditions and error bounds.
It also provides examples illustrating the directional covariance
discrepancy measures \(\Lambda_{\mathrm{LR}}\) and
\(\Lambda_{\mathrm{S}}\) in
Appendix~\ref{subsec:app-distributional-discrepancy}.

Before stating these results, we introduce the population curvature
quantities used in the theoretical analysis in
Appendix~\ref{subsec:app-full-curvature}.
Appendix~\ref{subsec:app-full-main-theorem} gives the complete version of
Theorem~\ref{thm:main_fed_iht}, including the quantitative conditions
and the explicit joint error recursion.
Appendix~\ref{subsec:app-complete-statistical-rates} presents the complete
versions of Corollaries~\ref{cor:main_stat_rates}--
\ref{cor:main_equal_sample_rates}.
Appendix~\ref{subsec:app-init-full} gives the complete version of
Theorem~\ref{thm:initializer_rate}.
The computational and statistical convergence guarantees concern the exact
\textsc{PerFeCT-VAR} updates in Algorithm~\ref{alg:perfect_var}.

\subsection{Population curvature}
\label{subsec:app-full-curvature}

We use the notation of the main article and write
\(\bm B:=\bm L\bm R^\top\) and
\(\widehat{\bm B}:=\widehat{\bm L}\widehat{\bm R}^\top\)
for brevity. We also set \(w_{\max}:=\max_{i\in[N]}w_i=T_{\max}/\Ttot\).

For the shared low-rank component, define the pooled curvature and
smoothness coefficients
\begin{equation}
\label{eq:app-full-pooled-curvature}
 \alpha_{\mathrm{LR}}:=\lambda_{\min}(\bbm\Gamma_w),\qquad
 \beta_{\mathrm{LR}}:=\lambda_{\max}(\bbm\Gamma_w),\qquad
 \kappa_{\mathrm{LR}}:=\frac{\beta_{\mathrm{LR}}}{\alpha_{\mathrm{LR}}}.
\end{equation}
To define restricted curvature before choosing the fitting levels, fix
numerical upper bounds on the fitting multiples used below. Choose
\(C\) sufficiently large and define the envelopes as
\begin{equation}
\label{eq:app-full-primitive-envelopes}
\begin{aligned}
 s_0'&:=C\kappa_{\mathrm{LR}}^2s_{0,*},
 &\gamma'&:=C\kappa_{\mathrm{LR}}^2\gamma_*,~~
 s_i'&:=C\left(\frac{\phi_{\max}}{\phi_{\min}}\right)^2s_{i,*},~~
 &\mathcal C_0'&:=\mathcal C(s_0',\gamma').
\end{aligned}
\end{equation}
Define the shared sparse and personalized sparse coefficients by
\begin{equation}
\label{eq:app-full-population-curvature-coefficients}
\begin{aligned}
 \alpha_0
 &:=\inf_{\substack{\|\bm H\|_{\mathrm F}=1\\
              \supp(\bm H)\in\mathcal C_0'}}
       \operatorname{tr}(\bm H\bbm\Gamma_w\bm H^\top),
 &\beta_0
 &:=\sup_{\substack{\|\bm H\|_{\mathrm F}=1\\
              \supp(\bm H)\in\mathcal C_0'}}
       \operatorname{tr}(\bm H\bbm\Gamma_w\bm H^\top),\\
 \alpha_i
 &:=\inf_{\substack{\|\bm H\|_{\mathrm F}=1\\
                    |\supp(\bm H)|\le s_i'}}
       \operatorname{tr}(\bm H\bbm\Gamma_i\bm H^\top),
 &\beta_i
 &:=\sup_{\substack{\|\bm H\|_{\mathrm F}=1\\
                    |\supp(\bm H)|\le s_i'}}
       \operatorname{tr}(\bm H\bbm\Gamma_i\bm H^\top).
\end{aligned}
\end{equation}
Set
\begin{equation}
\label{eq:app-full-curvature-ratios}
\begin{aligned}
 \alpha_\mathrm{P}&:=\min_{i\in[N]}\alpha_i,
 &\beta_\mathrm{P}&:=\max_{i\in[N]}\beta_i,\\
 \kappa_0&:=\frac{\beta_0}{\alpha_0},
 &\kappa_{\mathrm{P}}&:=\frac{\beta_\mathrm{P}}{\alpha_\mathrm{P}},
 &\kappa_{\max}&:=\max\{\kappa_{\mathrm{LR}},\kappa_0,\kappa_{\mathrm{P}}\}.
\end{aligned}
\end{equation}
The curvature coefficients satisfy
\begin{equation}
\label{eq:app-full-curvature-order}
\begin{aligned}
 \phi_{\min}&\le\alpha_{\mathrm{LR}}\le\alpha_0\le\beta_0\le\beta_{\mathrm{LR}}\le\phi_{\max},\\
 \phi_{\min}&\le\alpha_\mathrm{P}\le\beta_\mathrm{P}\le\phi_{\max}.
\end{aligned}
\end{equation}
Consequently, the condition numbers satisfy
\[
 \kappa_0\le\kappa_{\mathrm{LR}},
 \qquad
 \kappa_{\mathrm{P}}\le\frac{\phi_{\max}}{\phi_{\min}}.
\]

Throughout this supplement, we use the weighted aggregate error
\begin{equation}
\label{eq:app-weighted-total-error}
 \mathcal V_\beta^{(l)}
 :=e_{\mathrm{LR}}^{(l)}
 +\frac{\beta_0}{\beta_{\mathrm{LR}}}e_0^{(l)}
 +\frac{\beta_\mathrm{P}}{\beta_{\mathrm{LR}}}e_{\mathrm P}^{(l)},
\end{equation}
to replace $\mathcal{V}^{(l)}$ defined in the main article.

\subsection{Joint contraction of \textsc{PerFeCT-VAR} (Theorem~\ref{thm:main_fed_iht})}
\label{subsec:app-full-main-theorem}

The following theorem makes explicit the sample-size, signal-separation,
dependence, and cross-component conditions in
Theorem~\ref{thm:main_fed_iht}. Define
\begin{equation}
\label{eq:app-full-sample-multipliers}
\begin{aligned}
 \Xi_1&:=\max\left\{
 1,\frac{M}{\phi_{\max}},
 \frac{M\phi_{\max}}{\alpha_{\mathrm{LR}}^2},
 \frac{M\phi_{\max}}{\alpha_\mathrm{P}^2},
 \frac{M^2}{\alpha_{\mathrm{LR}}\alpha_\mathrm{P}}\right\},\\
 \Xi_2&:=\frac{\tau^2\phi_{\max}\beta_{\mathrm{LR}}\kappa_{\max}}
                   {\alpha_{\mathrm{LR}}\alpha_0\sigma_r^2}.
\end{aligned}
\end{equation}
For the neighborhood \(\mathcal N_{\varepsilon}\) in the main article, set
\begin{equation}
\label{eq:app-full-local-radius}
 \varepsilon^2=\varepsilon_0\frac{\alpha_{\mathrm{LR}}\alpha_0}{\beta_{\mathrm{LR}}^2},
 \qquad 0<\varepsilon<1/100,
\end{equation}
where \(\varepsilon_0>0\) is a small constant.

\begin{theorem}[Full version of Theorem~\ref{thm:main_fed_iht}]
\label{thm:app-main-full}
Suppose Assumptions~\ref{assump:stationarity}--\ref{ass:cov_compat} hold. Assume the following conditions:

\begin{enumerate}[label=\textup{(\roman*)}]
\item \textbf{Fitting levels and step sizes.} For some appropriate constants, the fitting levels and step sizes satisfy
\begin{equation}
\label{eq:app-full-fitting-levels}
\begin{aligned}
 s_0&\asymp\kappa_0^2s_{0,*},&
 \gamma&\asymp\kappa_0^2\gamma_*,&
 s_i&\asymp\kappa_{\mathrm{P}}^2s_{i,*},&
 m&\asymp\kappa_{\mathrm{P}}^2m_*,\\
 \eta_{\mathrm{LR}}&\asymp\beta_{\mathrm{LR}}^{-1},&
 \eta_0&\asymp\beta_0^{-1},&
 \eta_{\mathrm P}&\asymp\beta_\mathrm{P}^{-1}.
\end{aligned}
\end{equation}

\item \textbf{Shared low-rank--sparse separation.}
Require
\begin{equation}
\label{eq:app-full-primitive-separation}
 \gamma_*\lesssim 
 \frac{\alpha_{\mathrm{LR}}\alpha_0^3}
 {\beta_0^2(\widetilde\phi_w^2+\beta_{\mathrm{LR}}^2)\mu r}.
\end{equation}

\item \textbf{Sample-size conditions.}
Require the pooled and clientwise sample sizes to satisfy
\begin{equation}
\label{eq:app-full-sample-conditions}
 \Ttot\gtrsim \Xi_1(pr+s_0^\dagger\log p),\qquad
 T_i\gtrsim \Xi_1(s_i^\dagger\log p+\log N),\quad i\in[N],
\end{equation}
and
\begin{equation}
\label{eq:app-full-signal-sample-condition}
 \Ttot\gtrsim \Xi_2
 \left\{
 \frac{pr}{\alpha_{\mathrm{LR}}}
 +\frac{s_0^\dagger\log p}{\alpha_0}
 +\frac{\sum_{i=1}^Ns_i^\dagger\log p+N\log N}{\alpha_\mathrm{P}}
 \right\}.
\end{equation}

\item \textbf{Initialization.}
Require a feasible initializer satisfying
\begin{equation}
\label{eq:app-full-initial-condition}
 \mathcal V_\beta^{(0)}\le \varepsilon^2\sigma_r^2.
\end{equation}
\end{enumerate}
Then, with probability at least
\begin{equation}
\label{eq:app-full-main-probability}
 1-C\exp(-cp)
 -C\exp\{-c(s_{0,*}\vee1)\log p\}
 -C\exp\{-c(s_{\min,*}\vee1)\log p\},
\end{equation}
the errors of Algorithm~\ref{alg:perfect_var} satisfy, simultaneously
for all \(l\ge0\),
\begin{equation}
\label{eq:app-full-vector-recursion}
 \bm e^{(l+1)}\preceq\bbm\Theta\bm e^{(l)}+\bm b.
\end{equation}
Here  $\rho(\bbm\Theta)<1$. The statistical error vector $\bm b$ is
\begin{equation}
\label{eq:app-full-noise-vector}
 \bm b:=\frac{C\tau^2\phi_{\max}}{\Ttot}
 \left(
 \frac{pr}{\alpha_{\mathrm{LR}}\beta_{\mathrm{LR}}},\quad
 \frac{s_0^\dagger\log p}{\alpha_0\beta_0},\quad
 \frac{\sum_{i=1}^Ns_i^\dagger\log p+N\log N}
      {\alpha_\mathrm{P}\beta_\mathrm{P}}
 \right)^\top,
\end{equation}
and \(\bbm\Theta\) is an entrywise nonnegative matrix satisfying
\begin{equation}
\label{eq:app-full-transition-matrix}
 \bbm\Theta\preceq
 \begin{pmatrix}
 1-c\kappa_{\mathrm{LR}}^{-1}
 &c\alpha_0/\beta_{\mathrm{LR}}
 &\Theta_{\mathrm{LR},\mathrm P}\\[1mm]
 c\alpha_{\mathrm{LR}}/\beta_0
 &1-c\kappa_0^{-1}
 &\Theta_{0,\mathrm P}\\[1mm]
 c\alpha_{\mathrm{LR}}/\beta_\mathrm{P}
 &c\alpha_0/\beta_\mathrm{P}
 &1-c\kappa_{\mathrm{P}}^{-1}
 \end{pmatrix},
\end{equation}
where the local-to-global entries from the main article obey
\begin{equation}
\label{eq:app-full-local-to-global-couplings}
\begin{aligned}
 \alpha_{\mathrm{LR}}\beta_{\mathrm{LR}}\Theta_{\mathrm{LR},\mathrm P}
 &\le C\left\{
 m^\dagger w_{\max}(\phi_{\max}^2+\Lambda_{\mathrm{LR}}^2)
 +\frac{M^2}{\Ttot}
       \left(pr+\sum_{i=1}^Ns_i^\dagger\log p\right)\right\},\\
 \alpha_0\beta_0\Theta_{0,\mathrm P}
 &\le C\left\{
 m^\dagger w_{\max}(\phi_{\max}^2+\Lambda_{\mathrm{S}}^2)
 +\frac{M^2}{\Ttot}
       \left(s_0^\dagger\log p+\sum_{i=1}^Ns_i^\dagger\log p\right)\right\}.
\end{aligned}
\end{equation}

Consequently,
\begin{equation}
\label{eq:app-full-iterated-bound}
 \bm e^{(l)}\preceq\bbm\Theta^l\bm e^{(0)}
       +(\bm I-\bbm\Theta)^{-1}\bm b.
\end{equation}
\end{theorem}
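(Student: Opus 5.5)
The proof is a one-step deterministic perturbation analysis of each update. It is carried out on a single high-probability event that collects all concentration bounds, followed by an induction showing that the iterates never leave the local region. First, we define the event $\mathcal E$. On it, the following bounds hold simultaneously.
\begin{itemize}
\item Restricted deviation bounds for the sample covariances $\widehat{\bbm\Gamma}_i=T_i^{-1}\sum_t\bm y_{i,t-1}\bm y_{i,t-1}^\top$, along sparse, frequency-capped, and rank-$2r$ directions. These follow from the Basu--Michailidis spectral bounds with the constant $M$.
\item Dual-norm bounds on the score matrices $\bm Z_i=T_i^{-1}\sum_t\bbm\epsilon_{i,t}\bm y_{i,t-1}^\top$ and $\sum_iw_i\bm Z_i$. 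In operator norm the pooled score is of order $\tau\sqrt{\phi_{\max}p/\Ttot}$. Over $(s,\gamma)$-sparse directions it is of order $\sqrt{s\log p/\Ttot}$. Clientwise over $s_i^\dagger$-sparse directions it is of order $\sqrt{(s_i^\dagger\log p+\log N)/T_i}$, where the $\log N$ comes from a union bound over clients.
\end{itemize}
Covering arguments, sub-exponential (Hanson--Wright type) bounds for Gaussian VAR quadratic forms, and union bounds give $\mathcal E$ the probability stated in the theorem, under the sample-size conditions (iii). Everything after this is deterministic on $\mathcal E$.

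Next we analyze the three blocks at a fixed round $l$, assuming the iterate lies in the region $\mathcal V_\beta^{(l)}\le\varepsilon^2\sigma_r^2$. Write $\bm\Delta_{\mathrm B}=\bm B^{(l)}-\bm B_*$, $\bm\Delta_0=\bm S_0^{(l)}-\bm S_{0,*}$, and $\bm\Delta_i=\bm S_i^{(l)}-\bm S_{i,*}$. The gradient then decomposes as
\[
\bm D_i^{(l)}=(\bm\Delta_{\mathrm B}+\bm\Delta_0+\bm\Delta_i)\widehat{\bbm\Gamma}_i-\bm Z_i,
\qquad
\overline{\bm D}^{(l)}=(\bm\Delta_{\mathrm B}+\bm\Delta_0)\widehat{\bbm\Gamma}_w+\sum_iw_i\bm\Delta_i\widehat{\bbm\Gamma}_i-\sum_iw_i\bm Z_i .
\]

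\emph{Low-rank block.} We follow the ScaledGD analysis of Tong, Ma and Chi with optimal alignment $\bm Q$. The term $\bm\Delta_{\mathrm B}\bbm\Gamma_w$ gives the contraction factor $1-c\kappa_{\mathrm{LR}}^{-1}$ for $\eta_{\mathrm{LR}}\asymp\beta_{\mathrm{LR}}^{-1}$. The $\bm\Delta_0$ term is projected onto the tangent-type directions. It is controlled through the transversality condition and covariance-adjusted incoherence (Assumption~\ref{assump:cov_adjusted_incoherence}), together with the bound on $\gamma_*$ in (ii), which gives the entry $c\alpha_0/\beta_{\mathrm{LR}}$. The noise enters via $\|\sum w_i\bm Z_i\|_{\mathrm{op}}^2\cdot r$, which gives the term $pr/\Ttot$.

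\emph{Shared sparse block.} We use the expansiveness of capacity-aware hard thresholding: for $\mathcal G_{s_0,\gamma}$ with $s_0\asymp\kappa_0^2s_{0,*}$ there is a factor $1+O(\sqrt{s_{0,*}/s_0})$. Combined with restricted strong convexity and smoothness over $\mathcal C_0'$ in the sense of Jain et al., this gives the contraction factor $1-c\kappa_0^{-1}$.

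\emph{Personalized block.} We need an FCT analogue of the thresholding lemma. The key observation is that $\mathcal D(\{s_i\},m)$ is a matroid-intersection-type constraint. Comparing the FCT output with the truth on the union of supports, which lies in $\mathcal D(\{s_i^\dagger\},m^\dagger)$, gives a weighted expansiveness factor $1+O(\sqrt{s_{i,*}/s_i}\vee\sqrt{m_*/m})$. This is why $m\asymp\kappa_{\mathrm P}^2m_*$ is required. The personalized update is then client-local with curvature $\bbm\Gamma_i$, which gives $1-c\kappa_{\mathrm P}^{-1}$. Its shared-to-personal couplings come from $\bm\Delta_{\mathrm B},\bm\Delta_0$ acting through $\bbm\Gamma_i$.

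The main obstacle is the pair of personalized-to-shared entries $\Theta_{\mathrm{LR},\mathrm P}$ and $\Theta_{0,\mathrm P}$. We split the population part as
\[
\sum_iw_i\bm\Delta_i\bbm\Gamma_i=\sum_iw_i\bm\Delta_i\overline{\bbm\Gamma}+\sum_iw_i\bm\Delta_i\bm E_i
\]
and pair each piece with a test direction $\bm X\in\mathcal X_a$.
\begin{itemize}
\item For the first piece, at each coordinate at most $m^\dagger$ clients contribute. Cauchy--Schwarz over those clients gives $\|\sum_iw_i\bm\Delta_i\|_{\mathrm F}^2\le m^\dagger w_{\max}\sum_iw_i\|\bm\Delta_i\|_{\mathrm F}^2$. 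Multiplying by $\overline{\bbm\Gamma}$ produces the $\phi_{\max}^2$ term.
\item For the second piece, we write $w_i\bm\Delta_i=\sqrt{w_{\max}}\cdot(w_i/\sqrt{w_{\max}})\bm\Delta_i$ and renormalize to an element of $\mathcal H_{\mathrm P}$. The definition of $\Lambda_a$ then bounds it by $\sqrt{m^\dagger}\Lambda_a\sqrt{w_{\max}e_{\mathrm P}}$.
\end{itemize}
Squaring and dividing by the curvature gives the $m^\dagger w_{\max}(\phi_{\max}^2+\Lambda_a^2)$ terms. The empirical remainder $\sum_iw_i\bm\Delta_i(\widehat{\bbm\Gamma}_i-\bbm\Gamma_i)$ is bounded uniformly on $\mathcal E$ by a pooled supremum over the joint structured class. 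This gives the $M^2(pr+\sum_is_i^\dagger\log p)/\Ttot$ term; getting this supremum with the pooled $\Ttot$ rather than the per-client $T_i$ is the delicate concentration step. Finally, Assumption~\ref{ass:cov_compat}, the signal condition \eqref{eq:app-full-signal-sample-condition}, and the smallness of the off-diagonal entries yield $\rho(\bbm\Theta)<1$, for instance by exhibiting a positive vector $\bm v$ with $\bbm\Theta\bm v\preceq(1-c)\bm v$. The same recursion shows $\mathcal V_\beta^{(l+1)}\le\varepsilon^2\sigma_r^2$, which closes the induction, and iterating gives \eqref{eq:app-full-iterated-bound}.
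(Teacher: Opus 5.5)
You follow the paper's architecture. There is one good event, then deterministic one-step bounds: ScaledGD with optimal alignment, expansiveness of $\mathcal G_{s_0,\gamma}$ and of FCT, and the split $\widehat{\bbm\Gamma}_i=\overline{\bbm\Gamma}+\bm E_i+(\widehat{\bbm\Gamma}_i-\bbm\Gamma_i)$. In that split the frequency cap handles the common part and the definition of $\Lambda_a$ handles the discrepancy part. A Collatz--Wielandt argument and an induction finish.

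\textbf{The gap.} It is the step you flag as delicate and leave to ``covering, Hanson--Wright and union bounds'', namely
\[
\sup_{\bm X,\{\Omega_i\}}\sum_{i=1}^N w_i\bigl\|\mathcal P_{\Omega_i}\{\bm X(\widehat{\bbm\Gamma}_i-\bbm\Gamma_i)\}\bigr\|_{\mathrm F}^2\lesssim\frac{M^2}{\Ttot}\Bigl(pr+\log p\sum_{i=1}^N s_i^\dagger\Bigr).
\]
The supremum runs over $\bm X\in\mathcal X_{\mathrm{LR}}$ and all supports with $|\Omega_i|\le s_i^\dagger$. There is an analogue over $\mathcal X_{\mathrm S}$ with $s_0^\dagger\log p$ in place of $pr$. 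Your toolbox does not deliver this under the theorem's clientwise condition $T_i\gtrsim\Xi_1(s_i^\dagger\log p+\log N)$.
\begin{itemize}
\item In dual form the quantity is a Gaussian chaos indexed by $(\bm X,\{\bm D_i\})$ with $\sum_iw_i\|\bm D_i\|_{\mathrm F}^2\le1$.
\item Its Hanson--Wright Frobenius parameter is $O(M/\sqrt{\Ttot})$, which is fine.
\item Its operator parameter is only $O(M\max_i\|\bm D_i\|_{\mathrm F}/\Ttot)=O\bigl(M/\sqrt{\Ttot T_{\min}}\bigr)$, where $T_{\min}:=\min_iT_i$. This is because the constraint allows all mass on a single client.
\item A union bound over a net of log-cardinality $K\asymp pr+\sum_is_i^\dagger\log p$ therefore gives only $M\sqrt{K/\Ttot}+MK/\sqrt{\Ttot T_{\min}}$.
\item That is of the required order only if $T_{\min}\gtrsim pr+\sum_is_i^\dagger\log p$. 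This per-client sample size grows with $N$ and defeats the federated gain.
\end{itemize}

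\textbf{The missing idea.} Lemma~\ref{lem:prob-mixed-geometry} supplies it: eliminate the sub-exponential part by decoupling and conditioning.
\begin{itemize}
\item Reduce the VAR trajectory to i.i.d.\ Gaussians at cost $CM$, using $\|\bbm\Upsilon\|_{\mathrm{op}}\le M$.
\item Replace the centred quadratic form by a bilinear form $\mathbb X_\theta(\bm u,\bm v)$ in two independent Gaussian arrays, and condition on $\bm u$.
\item Conditionally, $\mathbb X_\theta$ is a Gaussian process in $\bm v$ with conditional variance at most $CR(\bm u)/\Ttot$. Here $R(\bm u)$ is the largest clientwise $s_i^\dagger$-sparse eigenvalue of the empirical covariance of $\bm u$.
\item Sudakov--Fernique against a comparison process splits the width into a rank-$2r$ part $\sqrt{pr}$ and a sparse part $(\sum_is_i^\dagger\log p)^{1/2}$.
\item Gaussian concentration gives $L_q$ bounds, and $\|R(\bm u)^{1/2}\|_{L_q}\lesssim1$ needs only $T_i\gtrsim s_i^\dagger\log p+\log N$.
\end{itemize}
Without this, the $M^2/\Ttot$ parts of $\Theta_{\mathrm{LR},\mathrm P}$ and $\Theta_{0,\mathrm P}$ remain unproven, and so do those of $\Theta_{31}$ and $\Theta_{32}$.

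\textbf{Smaller fixes.}
\begin{itemize}
\item The stated $\Theta_{31}\le c\alpha_{\mathrm{LR}}/\beta_{\mathrm P}$ and $\Theta_{32}\le c\alpha_0/\beta_{\mathrm P}$ cannot come from ``$\bbm\Delta_{\bm B},\bbm\Delta_0$ acting through $\bbm\Gamma_i$''. Bounding $\sum_iw_i\|\mathcal P_{\Omega_i}(\bbm\Delta_{\bm B}\widehat{\bbm\Gamma}_i)\|_{\mathrm F}^2$ via $\|\widehat{\bbm\Gamma}_i\|_{\mathrm{op}}$ gives $\phi_{\max}^2/(\alpha_{\mathrm P}\beta_{\mathrm P})$, which is not small. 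You need the same bracket as for $\Theta_{13}$, using the cap in dual form, $\sum_iw_i\|\mathcal P_{\Omega_i}(\bm X\overline{\bbm\Gamma})\|_{\mathrm F}^2\le m^\dagger w_{\max}\|\bm X\overline{\bbm\Gamma}\|_{\mathrm F}^2$.
\item Closing the induction on $\mathcal V_\beta$ needs the left-vector statement $(1,\beta_0/\beta_{\mathrm{LR}},\beta_{\mathrm P}/\beta_{\mathrm{LR}})\,\bbm\Theta\preceq(1-\varrho)(1,\beta_0/\beta_{\mathrm{LR}},\beta_{\mathrm P}/\beta_{\mathrm{LR}})$, with $\varrho\asymp\min\{\kappa_{\mathrm{LR}}^{-1},\kappa_0^{-1},\kappa_{\mathrm P}^{-1}\}$ rather than an absolute constant. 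A right vector $\bm v$ only yields $\rho(\bbm\Theta)<1$.
\item The signal condition \eqref{eq:app-full-signal-sample-condition} is what makes the weighted noise at most $(\varrho/2)\varepsilon^2\sigma_r^2$. It is not what gives $\rho(\bbm\Theta)<1$.
\item Transversality does not bound $\Theta_{12}$. The quantitative control comes from the row/column caps combined with $\|\bbm\Gamma_w\bm V_*\|_{2,\infty}$, $\|\bm U_*\|_{2,\infty}$, and condition (ii).
\end{itemize}
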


Condition~\textup{(i)} requires fitting levels larger
than their true counterparts by factors determined by the squared curvature
ratios, and step sizes inversely proportional to the corresponding RSS coefficients, agreeing with the IHT analysis of
\citet{jain2014iterative}. Condition~\textup{(ii)} controls the interaction
between the shared low-rank and sparse components; similar sparsity
conditions appear in existing analyses of low-rank-plus-sparse recovery
\citep{yi2016fast,zhang2018unified,tong2021accelerating}.
Condition~\textup{(iii)} requires sufficient pooled and clientwise
sample sizes, depending on model complexity, temporal dependence, and the
noise and signal levels, while condition~\textup{(iv)} requires an initial
estimate sufficiently close to the truth.

\subsection{Statistical rates (Corollaries~\ref{cor:main_stat_rates} and \ref{cor:main_equal_sample_rates})}
\label{subsec:app-complete-statistical-rates}

We write \(D_{\mathrm G}^{\dagger}:=pr+s_0^{\dagger}\log p\),
\(D_{\mathrm P}^{\dagger}:=N^{-1}\sum_{i=1}^N s_i^{\dagger}\log p+\log N\),
and \(D_{\Sigma}^{\dagger}:=D_{\mathrm G}^{\dagger}+\sum_{i=1}^N s_i^{\dagger}\log p\)
for the fitting-level complexities. Let
\(\varrho:=c\min\{\alpha_{\mathrm{LR}}/\beta_{\mathrm{LR}},
\alpha_0/\beta_0,\alpha_\mathrm{P}/\beta_\mathrm{P}\}\)
be the contraction margin in \eqref{eq:app-contraction-factor}.
The following corollary gives the general statistical error bounds.
\begin{corollary}[Full version of Corollary~\ref{cor:main_stat_rates}]
\label{cor:app-stat-full}
Suppose the conditions of Theorem~\ref{thm:app-main-full} hold. If the number of
communication rounds satisfies
\begin{equation}
\label{eq:app-full-statistical-rounds}
 K\ge
 \frac{1}{-\log(1-\varrho)}
 \log\Biggl[
 \frac{C\beta_{\mathrm{LR}}\Ttot\mathcal V_\beta^{(0)}}
 {\tau^2\phi_{\max}}\times
 \max\left\{
 \frac{\alpha_{\mathrm{LR}}}{pr},
 \frac{\alpha_0}{s_0^{\dagger}\log p},
 \frac{\alpha_\mathrm{P}}{ND_{\mathrm P}^{\dagger}}
 \right\}\Biggr],
\end{equation}
on the same event, and with the same probability, as in
Theorem~\ref{thm:app-main-full},
\begin{equation}
\label{eq:app-full-general-shared-rate}
\begin{aligned}
 &\|\widehat{\bm B}-\bm B_*\|_{\mathrm F}^2
 \ \vee\ \frac{\alpha_0}{\alpha_{\mathrm{LR}}}
 \|\widehat{\bm S}_0-\bm S_{0,*}\|_{\mathrm F}^2\\
 &\quad\lesssim
 \frac{\tau^2\phi_{\max}D_{\mathrm G}^{\dagger}}{\alpha_{\mathrm{LR}}^2\Ttot}
 +\frac{\tau^2\phi_{\max}}{\alpha_{\mathrm{LR}}^2\alpha_\mathrm{P}^2}
 \frac{m^{\dagger}}{N}\frac{T_{\max}}{\overline{T}}
 (\phi_{\max}^2+\Lambda^2)\frac{D_{\mathrm P}^{\dagger}}{\overline{T}}
 +\frac{\tau^2\phi_{\max}M^2D_{\Sigma}^{\dagger}ND_{\mathrm P}^{\dagger}}
 {\alpha_{\mathrm{LR}}^2\alpha_\mathrm{P}^2\Ttot^2},
\end{aligned}
\end{equation}
and
\begin{equation}
\label{eq:app-full-general-personalized-rate}
 \sum_{i=1}^Nw_i
 \|\widehat{\bm S}_i-\bm S_{i,*}\|_{\mathrm F}^2\lesssim
 \frac{\tau^2\phi_{\max}}{\alpha_\mathrm{P}\Ttot}
 \left\{
 \frac{ND_{\mathrm P}^{\dagger}}{\alpha_\mathrm{P}}
 +\frac{pr}{\alpha_{\mathrm{LR}}}
 +\frac{s_0^{\dagger}\log p}{\alpha_0}
 \right\}.
\end{equation}
\end{corollary}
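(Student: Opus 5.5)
The plan is to unroll the vector recursion of Theorem~\ref{thm:app-main-full} and read off the three coordinates of its fixed point. By \eqref{eq:app-full-iterated-bound},
\[
\bm e^{(K)}\preceq\bbm\Theta^K\bm e^{(0)}+(\bm I-\bbm\Theta)^{-1}\bm b .
\]
So there are two tasks. First, show that $\bbm\Theta^K\bm e^{(0)}$ lies below the target once $K$ satisfies \eqref{eq:app-full-statistical-rounds}. Second, bound the components of $(\bm I-\bbm\Theta)^{-1}\bm b$ in a way that keeps the $\Theta_{\cdot,\mathrm P}$ coupling visible.

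\textbf{Optimization error.} I would work with the weighted Lyapunov quantity $\mathcal V_\beta^{(l)}$ in \eqref{eq:app-weighted-total-error}. I would show that the weights $(1,\beta_0/\beta_{\mathrm{LR}},\beta_{\mathrm P}/\beta_{\mathrm{LR}})$ form a left test vector $\bm v$ satisfying $\bm v^\top\bbm\Theta\preceq(1-\varrho)\bm v^\top$. This is a column-sum check using \eqref{eq:app-full-transition-matrix}:
\begin{itemize}
\item the off-diagonal entries $c\alpha_0/\beta_{\mathrm{LR}}$, etc., become $O(c)$ times the diagonal gap after reweighting;
\item the coupling entries $\Theta_{\mathrm{LR},\mathrm P}$ and $\Theta_{0,\mathrm P}$ are small by Assumption~\ref{ass:cov_compat} together with the sample-size condition \eqref{eq:app-full-sample-conditions}. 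Note that $m^\dagger w_{\max}=\frac{m^\dagger}{N}\frac{T_{\max}}{\overline T}$, so their first parts reduce to $\chi_a$ up to curvature factors.
\end{itemize}
It then follows that $\bm v^\top\bbm\Theta^K\bm e^{(0)}\le(1-\varrho)^K\mathcal V_\beta^{(0)}$. The choice of $K$ in \eqref{eq:app-full-statistical-rounds} makes this at most the smallest entry of the stationary bound below, divided by the corresponding weight.

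\textbf{Statistical floor.} Write $\bbm\Theta=\bm D+\bm O$ with $\bm D=\operatorname{diag}(1-c\kappa^{-1}_\cdot)$, so that $\bm I-\bm D=\operatorname{diag}(c\kappa_{\mathrm{LR}}^{-1},c\kappa_0^{-1},c\kappa_{\mathrm P}^{-1})$. Solving $\bm x=\bbm\Theta\bm x+\bm b$ componentwise gives
\[
(1-\Theta_{\mathrm{LR},\mathrm{LR}})x_{\mathrm{LR}}\le b_{\mathrm{LR}}+\Theta_{\mathrm{LR},0}x_0+\Theta_{\mathrm{LR},\mathrm P}x_{\mathrm P},
\]
and analogously for $x_0$ and $x_{\mathrm P}$.

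I would first bound $x_{\mathrm P}$. From the third row, $x_{\mathrm P}\lesssim \kappa_{\mathrm P}\{b_{\mathrm P}+(\alpha_{\mathrm{LR}}x_{\mathrm{LR}}+\alpha_0x_0)/\beta_{\mathrm P}\}$. Substituting this into the first two rows and absorbing the small cross terms, which is legitimate because the spectral-radius margin is uniform, yields two bounds:
\[
x_{\mathrm{LR}}\vee x_0\lesssim \kappa(b_{\mathrm{LR}}+b_0)+\kappa(\Theta_{\mathrm{LR},\mathrm P}\vee\Theta_{0,\mathrm P})\,\kappa_{\mathrm P}b_{\mathrm P},
\]
\[
x_{\mathrm P}\lesssim\kappa_{\mathrm P}b_{\mathrm P}+(b_{\mathrm{LR}}+b_0)\cdot(\text{curvature ratios}).
\]

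Plugging in $\bm b$ from \eqref{eq:app-full-noise-vector} produces the pieces of the shared rate \eqref{eq:app-full-general-shared-rate}:
\begin{itemize}
\item $b_{\mathrm{LR}}+b_0$ gives the term $\tau^2\phi_{\max}D_{\mathrm G}^\dagger/(\alpha_{\mathrm{LR}}^2\Ttot)$;
\item the product of $b_{\mathrm P}=C\tau^2\phi_{\max}N D_{\mathrm P}^\dagger/(\alpha_{\mathrm P}\beta_{\mathrm P}\Ttot)$ with the first part of $\Theta_{\cdot,\mathrm P}$ in \eqref{eq:app-full-local-to-global-couplings} gives the $\frac{m^\dagger}{N}\frac{T_{\max}}{\overline T}(\phi_{\max}^2+\Lambda^2)D_{\mathrm P}^\dagger/\overline T$ term, using $N/\Ttot=1/\overline T$;
\item the product with the $M^2 D_\Sigma^\dagger/\Ttot$ part gives the remainder term.
\end{itemize}
The personalized rate \eqref{eq:app-full-general-personalized-rate} follows directly from the $x_{\mathrm P}$ bound. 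Converting $e_{\mathrm{LR}}$ to $\|\widehat{\bm B}-\bm B_*\|_{\mathrm F}^2$ uses the local equivalence of the alignment metric, valid in $\mathcal N_\varepsilon$, which Theorem~\ref{thm:app-main-full} keeps for all iterates.

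\textbf{Main obstacle.} The delicate step is the bookkeeping of curvature constants. I expect it to be hardest to get exactly $\alpha_{\mathrm{LR}}^{-2}\alpha_{\mathrm P}^{-2}$ in the coupling term while not letting $\kappa$ factors multiply uncontrollably when the second-order cross terms are absorbed.

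To handle this, I would use the explicit $3\times3$ inverse: a Neumann series in $(\bm I-\bm D)^{-1}\bm O$, whose norm is $O(c)$ in the $\bm v$-weighted norm, truncated after first order. The second-order terms are then bounded by a constant times the first-order ones. Any leftover $\kappa_{\max}$ factors are treated as constants, consistent with the $\lesssim$ in the statement.
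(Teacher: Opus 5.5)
Your overall plan matches the paper's proof. You unroll \eqref{eq:app-full-iterated-bound} and kill $\bbm\Theta^K\bm e^{(0)}$ with the left vector $\bm v=(1,\beta_0/\beta_{\mathrm{LR}},\beta_{\mathrm P}/\beta_{\mathrm{LR}})$. You then invert $\bm I-\bbm\Theta$ after normalizing each row by its diagonal gap, so that $(\bm I-\bm D)^{-1}\bm O$ is exactly the paper's $\bbm G$, with $g_{jk}=\Theta_{jk}/(1-\Theta_{jj})$. One small correction: the stated $K$ is precisely $(1-\varrho)^K\mathcal V_\beta^{(0)}\le\min_j v_jb_j$, with the weight multiplied rather than divided, and this gives $\bbm\Theta^K\bm e^{(0)}\preceq\bm b\preceq(\bm I-\bbm\Theta)^{-1}\bm b$.

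The gap is in the step you yourself flag. Truncating $\sum_k\bbm G^k$ after first order, and controlling the tail because $\bbm G$ is $O(c)$ in a weighted norm, cannot give \eqref{eq:app-full-general-shared-rate}. Take the natural weights $(\alpha_{\mathrm{LR}},\alpha_0,\alpha_{\mathrm P})$, the image of your $\bm v$ under the diagonal normalization, and write $\widetilde b_j=b_j/(1-\Theta_{jj})$. Such a bound controls the $\mathrm{LR}$ coordinate of $\sum_{k\ge2}\bbm G^k\widetilde{\bm b}$ only by a small constant times $\alpha_{\mathrm{LR}}^{-1}\sum_j\alpha_j\widetilde b_j$. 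The personalized summand $\alpha_{\mathrm P}\widetilde b_{\mathrm P}\asymp\tau^2\phi_{\max}D_{\mathrm P}^\dagger/(\alpha_{\mathrm P}\overline T)$ then injects a term of order $\tau^2\phi_{\max}D_{\mathrm P}^\dagger/(\alpha_{\mathrm{LR}}\alpha_{\mathrm P}\overline T)$ into the shared error. That term carries no factor $\frac{m^\dagger}{N}\frac{T_{\max}}{\overline T}(\phi_{\max}^2+\Lambda^2)$. In the balanced case it is the client-level rate $D_{\mathrm P}/T$, which is exactly what the corollary must exclude. Your claim that higher-order terms are a constant times first-order ones is true only because of a structural fact no norm bound sees. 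Every term of $[\bbm G^k]_{\mathrm{LR},\mathrm P}$ contains a factor $g_{\mathrm{LR},\mathrm P}$ or $g_{0,\mathrm P}$, and each of these is proportional to the coupling bracket in \eqref{eq:app-full-local-to-global-couplings}; call these brackets $\mathfrak H_{\mathrm{LR}}$ and $\mathfrak H_{\mathrm S}$.

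The fix is to invert exactly, as the paper does. Since $\det(\bm I-\bbm G)\ge(1-\rho(\bbm G))^3$ is bounded below, the adjugate gives $[(\bm I-\bbm G)^{-1}]_{\mathrm{LR},\mathrm P}=(g_{\mathrm{LR},\mathrm P}+g_{\mathrm{LR},0}\,g_{0,\mathrm P})/\det(\bm I-\bbm G)$. Using $g_{\mathrm{LR},0}\lesssim\alpha_0/\alpha_{\mathrm{LR}}$, $g_{\mathrm{LR},\mathrm P}\lesssim\mathfrak H_{\mathrm{LR}}/\alpha_{\mathrm{LR}}^2$ and $g_{0,\mathrm P}\lesssim\mathfrak H_{\mathrm S}/\alpha_0^2$, the coefficient of $\widetilde b_{\mathrm P}$ keeps the coupling factor. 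Equivalently, finish the elimination you started, without approximation:
\[
(1-g_{\mathrm{LR},\mathrm P}g_{\mathrm P,\mathrm{LR}})x_{\mathrm{LR}}=(g_{\mathrm{LR},0}+g_{\mathrm{LR},\mathrm P}g_{\mathrm P,0})x_0+\widetilde b_{\mathrm{LR}}+g_{\mathrm{LR},\mathrm P}\widetilde b_{\mathrm P},
\]
then do the same for $x_0$ and solve the resulting $2\times2$ system.

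Also, do not absorb $\kappa_{\max}$ into the constant. The statement tracks curvature explicitly, and with the normalized $g_{jk}$ the only comparison needed is $\alpha_0\ge\alpha_{\mathrm{LR}}$. Your display $x_{\mathrm{LR}}\vee x_0\lesssim\kappa(b_{\mathrm{LR}}+b_0)+\cdots$ loses precisely this. First, $b_0$ reaches $x_{\mathrm{LR}}$ only through $g_{\mathrm{LR},0}\widetilde b_0$, so its coefficient is $\beta_0/\alpha_{\mathrm{LR}}$ rather than $\kappa_{\mathrm{LR}}$, giving $\tau^2\phi_{\max}s_0^\dagger\log p/(\alpha_{\mathrm{LR}}\alpha_0\Ttot)$. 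Second, the statement bounds $\frac{\alpha_0}{\alpha_{\mathrm{LR}}}x_0$, which a common bound on $x_{\mathrm{LR}}\vee x_0$ does not deliver.
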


The following corollary gives simplified rates for clients with equal
sample sizes and sparsity levels, under additional regularity conditions.
\begin{corollary}[Full version of Corollary~\ref{cor:main_equal_sample_rates}]
\label{cor:app-balanced-full}
Suppose the conditions of Theorem~\ref{thm:app-main-full} hold,
\(T_i=T\) and \(s_{i,*}=s_*\) for \(i\in[N]\).
Assume that, for constants \(c,C>0\),
$
 \min\{\alpha_{\mathrm{LR}},\alpha_0,\alpha_\mathrm{P}\}\ge c$,
 $\max\{\phi_{\max},M,\tau^2,\sigma_1,
        \Lambda_{\mathrm{LR}},\Lambda_{\mathrm{S}}\}\le C$.
Suppose further that $N s_*\log p\,(s_*\log p+\log N)
 \lesssim T\bigl(pr+s_{0,*}\log p\bigr)$.
After \(K\gtrsim\log(NT)\) rounds, the following bounds hold on the event of
Theorem~\ref{thm:app-main-full}:
\begin{align}
\label{eq:app-full-balanced-shared-rate}
 \|\widehat{\bm B}-\bm B_*\|_{\mathrm F}^2
 \ \vee\
 \|\widehat{\bm S}_0-\bm S_{0,*}\|_{\mathrm F}^2
 &\lesssim
 \frac{pr+s_{0,*}\log p+m(s_*\log p+\log N)}{NT},\\
\label{eq:app-full-balanced-personalized-rate}
 \frac1N\sum_{i=1}^N
 \|\widehat{\bm S}_i-\bm S_{i,*}\|_{\mathrm F}^2
 &\lesssim
 \frac{s_*\log p+\log N}{T}
 +\frac{pr+s_{0,*}\log p}{NT}.
\end{align}
\end{corollary}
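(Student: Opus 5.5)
We derive Corollary~\ref{cor:app-balanced-full} by specializing Corollary~\ref{cor:app-stat-full}. Its general bounds \eqref{eq:app-full-general-shared-rate} and \eqref{eq:app-full-general-personalized-rate} already hold on the event of Theorem~\ref{thm:app-main-full}. So the work is to evaluate each term under balanced sizes, absorb constants using the uniform curvature and scale bounds, and check the round count. With \(T_i=T\) we have \(\Ttot=NT\), \(\overline T=T\), \(T_{\max}/\overline T=1\) and \(w_{\max}=1/N\). The assumed bounds \(\min\{\alpha_{\mathrm{LR}},\alpha_0,\alpha_\mathrm{P}\}\ge c\) and \(\max\{\phi_{\max},M,\tau^2,\sigma_1,\Lambda_{\mathrm{LR}},\Lambda_{\mathrm S}\}\le C\) give \(\kappa_{\max}\lesssim1\). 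The fitting levels in \eqref{eq:app-full-fitting-levels} are then constant multiples of the truth: \(s_0^\dagger\asymp s_{0,*}\), \(s_i^\dagger\asymp s_*\) and \(m^\dagger\asymp m_*\le m\). Consequently \(D_{\mathrm G}^\dagger\asymp pr+s_{0,*}\log p\), \(D_{\mathrm P}^\dagger\asymp s_*\log p+\log N\), and \(D_\Sigma^\dagger\asymp pr+s_{0,*}\log p+Ns_*\log p\).

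Next we evaluate the three terms of \eqref{eq:app-full-general-shared-rate}. The first is \(\lesssim(pr+s_{0,*}\log p)/(NT)\). The second is \(\lesssim (m/N)(\phi_{\max}^2+\Lambda^2)D_{\mathrm P}^\dagger/T\), which equals \(m(s_*\log p+\log N)/(NT)\) up to constants, since \(\Lambda\le C\). The third, the remainder, is \(\lesssim D_\Sigma^\dagger N D_{\mathrm P}^\dagger/(NT)^2\). We split it as follows.
\begin{itemize}
\item The part from \(D_{\mathrm G}^\dagger\) is \(D_{\mathrm G}^\dagger D_{\mathrm P}^\dagger/(NT^2)\). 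The sample-size condition (iii) gives \(T\gtrsim D_{\mathrm P}^\dagger\), so this part is \(\lesssim D_{\mathrm G}^\dagger/(NT)\).
\item The part from \(Ns_*\log p\) is \(s_*\log p\,(s_*\log p+\log N)/T^2\). The hypothesis \(Ns_*\log p(s_*\log p+\log N)\lesssim T(pr+s_{0,*}\log p)\) bounds it by \((pr+s_{0,*}\log p)/(NT)\).
\end{itemize}
Since \(\alpha_0/\alpha_{\mathrm{LR}}\ge c/\phi_{\max}\gtrsim1\), the same bound transfers to \(\|\widehat{\bm S}_0-\bm S_{0,*}\|_{\mathrm F}^2\). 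This yields \eqref{eq:app-full-balanced-shared-rate}.

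For the personalized rate, \(w_i=1/N\), so the left side of \eqref{eq:app-full-general-personalized-rate} is the unweighted average. Its right side becomes \((NT)^{-1}\{N(s_*\log p+\log N)+pr+s_{0,*}\log p\}\) up to constants, which is exactly \eqref{eq:app-full-balanced-personalized-rate}.

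Finally, we check the round requirement \eqref{eq:app-full-statistical-rounds}. Here \(\varrho\) is bounded below by a constant, so \(1/(-\log(1-\varrho))\lesssim1\). Also \(\mathcal V_\beta^{(0)}\le\varepsilon^2\sigma_r^2\le C\). The max inside the logarithm is at most a constant, because each denominator is \(\ge1\) (we may take \(pr\ge1\)). The logarithm's argument is therefore \(\lesssim NT\), so \(K\gtrsim\log(NT)\) suffices. The only delicate point is this handling of the remainder term and the log argument; everything else is direct substitution.
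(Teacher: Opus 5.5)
Your argument is essentially the paper's proof: you specialize the general bounds with \(\Ttot=NT\), \(w_{\max}=1/N\), \(T_{\max}/\overline T=1\), absorb the curvature and scale factors, and dispose of \(\mathcal E_{\mathrm{pers}}\) by splitting it into a \(D_{\mathrm G}\) part (handled by \(T\gtrsim D_{\mathrm P}\)) and an \(Ns_*\log p\) part (handled by the extra sample-size hypothesis). The one place you diverge is the round count, and there the step is not quite justified: you verify the explicit condition \eqref{eq:app-full-statistical-rounds} of Corollary~\ref{cor:app-stat-full}, whose logarithm also carries the factor \(\beta_{\mathrm{LR}}/(\tau^2\phi_{\max})\), but the hypotheses bound \(\tau^2\) only from above, so ``argument \(\lesssim NT\)'' needs a lower bound on \(\tau^2\) that you did not supply (one can be recovered, e.g.\ \(\tau^2\ge\phi_{\min}^2/M\) via a left eigenvector of some \(\bm A_{i,*}\), but it has to be said); the paper avoids this by working directly with the iterated recursion and showing \((1-\varrho)^K\mathcal V_\beta^{(0)}\lesssim (NT)^{-1}\le D_{\mathrm G}/(NT)\) using \(pr\ge1\), which never requires driving the optimization error below the \(\tau^2\)-scaled floor \(\bm b\).
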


\subsection{Validity of the federated initializer (Theorem~\ref{thm:initializer_rate})}
\label{subsec:app-init-full}

The following theorem gives the complete conditions for the initializer
in Section~\ref{sec:algorithm_initialization}.

\begin{theorem}[Full version of Theorem~\ref{thm:initializer_rate}]
\label{thm:app-init-full}
Suppose Assumptions~\ref{assump:stationarity}--\ref{assump:regular_covariance} hold. Assume the following
conditions:

\begin{enumerate}[label=\textup{(\roman*)}]
\item \textbf{Fitting levels and personalization frequency.}
There exists a constant \(\delta\in(0,1/2)\) such that
\begin{equation}
\label{eq:app-init-fitting-full}
 m_*\le(1/2-\delta)N,\qquad
 s_0\ge s_{0,*},\quad \gamma\ge\gamma_*,\quad
 s_i\ge s_{i,*}\ (i\in[N]),\quad m_*\le m<N/2.
\end{equation}

\item \textbf{Spikiness radius and penalties.}
Choose the spikiness radius and the penalties as
\begin{equation}
\label{eq:app-init-tuning-full}
\begin{aligned}
 \nu&=\mu r\sigma_1,\qquad
 \lambda_{B,i}\asymp\tau
     \sqrt{\frac{\phi_{\max}(p+\log N)}{T_i}},\\
 \lambda_{C,i}&\asymp 
     \tau\sqrt{\frac{\phi_{\max}\log(pN)}{T_i}}
     +\phi_{\min}\frac{\mu r\sigma_1}{p},\qquad i\in[N].
\end{aligned}
\end{equation}

\item \textbf{Sample-size and dependence conditions.}
For each client \(i\in[N]\), require
\begin{equation}
\label{eq:app-init-rsc-full}
 T_i\gtrsim  M\phi_{\max}\phi_{\min}^{-2}(p+\log N).
\end{equation}
\end{enumerate}
Let
\begin{equation}
\label{eq:app-init-client-rate-envelope}
\begin{aligned}
 R_i^{\mathrm{init}}:=\frac{\tau^2\phi_{\max}}{\phi_{\min}^2T_i}
 \left\{r(p+\log N)+(s_{0,*}+s_{i,*})\log(pN)\right\}+\frac{\mu^2r^2\sigma_1^2}{p^2}(s_{0,*}+s_{i,*}).
\end{aligned}
\end{equation}
Then, with probability at least
\(1-C\exp\{-c(p+\log N)\}-C(pN)^{-c}\), the federated initializer satisfies
\begin{equation}
\label{eq:app-init-total-error-rate}
 \mathcal V_\beta^{(0)}
 \lesssim \kappa_{\max}
 \sum_{i=1}^N\left(w_i+\frac{1}{\delta N}\right)R_i^{\mathrm{init}}.
\end{equation}

Suppose further that the following condition holds:
\begin{enumerate}[label=\textup{(\roman*)},start=4]
\item \textbf{Local basin entry.}
The sample sizes satisfy
\begin{equation}
\label{eq:app-init-basin-sample-full}
 T_i\gtrsim \frac{\beta_{\mathrm{LR}}\Xi_2}{\delta\phi_{\min}^2}
 \left\{r(p+\log N)+(s_{0,*}+s_{i,*})\log(pN)\right\},
 \qquad i\in[N],
\end{equation}
and the sparsity and signal levels satisfy
\begin{equation}
\label{eq:app-init-separation-full}
 s_{0,*}+\max_{i\in[N]}s_{i,*}
 \le c\frac{\delta\varepsilon^2\sigma_r^2}
             {\kappa_{\max}\mu^2r^2\sigma_1^2}\,p^2.
\end{equation}
\end{enumerate}
Then \(\mathcal V_\beta^{(0)}\le\varepsilon^2\sigma_r^2\), so the initializer satisfies \eqref{eq:app-full-initial-condition}.
\end{theorem}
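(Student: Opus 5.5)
The argument has three stages: local error bounds for each client, error propagation through the server aggregation steps, and then a check that the resulting sum falls below \(\varepsilon^2\sigma_r^2\).

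\textbf{Stage 1: local bounds.} For each client we apply the low-rank-plus-sparse analysis of \citet{basu2019low} to the local program \eqref{eq:init_local_basu}. The target pair is \((\bm B_*,\bm C_{i,*})\) with \(\bm C_{i,*}=\bm S_{0,*}+\bm S_{i,*}\), which has sparsity at most \(s_{0,*}+s_{i,*}\). The spikiness constraint is feasible because incoherence gives \(\|\bm B_*\|_{\max}\le\mu r\sigma_1/p=\nu/p\).

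Two stochastic ingredients are needed.
\begin{enumerate}[leftmargin=1.2em]
\item A restricted eigenvalue bound on the sample covariance. This follows from the spectral-density concentration of \citet{basu2015regularized} under \(T_i\gtrsim M\phi_{\max}\phi_{\min}^{-2}(p+\log N)\).
\item Deviation bounds \(\|T_i^{-1}\sum_t\bbm\epsilon_{i,t}\bm y_{i,t-1}^\top\|_{\mathrm{op}}\lesssim\tau\sqrt{\phi_{\max}(p+\log N)/T_i}\) and \(\|\cdot\|_{\max}\lesssim\tau\sqrt{\phi_{\max}\log(pN)/T_i}\). These are made uniform over \(i\in[N]\) by a union bound, which is where the \(\log N\) terms and the probability \(1-C e^{-c(p+\log N)}-C(pN)^{-c}\) come from.
\end{enumerate}

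With \(\lambda_{B,i},\lambda_{C,i}\) as in \eqref{eq:app-init-tuning-full}, the deterministic oracle inequality of \citet{basu2019low} then yields
\[
\|\widetilde{\bm B}_i-\bm B_*\|_{\mathrm F}^2+\|\widetilde{\bm C}_i-\bm C_{i,*}\|_{\mathrm F}^2\lesssim R_i^{\mathrm{init}},
\]
with the term \(\mu^2r^2\sigma_1^2(s_{0,*}+s_{i,*})/p^2\) coming from the nonidentifiability slack \(\nu/p\). Since \(\bm B_*\) has rank \(r\), the best rank-\(r\) approximation satisfies \(\|\widetilde{\bm B}_{i,r}-\bm B_*\|_{\mathrm F}\le2\|\widetilde{\bm B}_i-\bm B_*\|_{\mathrm F}\).

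\textbf{Stage 2: aggregation.}

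\emph{Low-rank part.} By convexity, \(\|\overline{\bm B}-\bm B_*\|_{\mathrm F}^2\le\sum_iw_i\|\widetilde{\bm B}_{i,r}-\bm B_*\|_{\mathrm F}^2\lesssim\sum_iw_iR_i^{\mathrm{init}}\), and \(\operatorname{SVD}_r\) again costs at most a factor 4. To pass from \(\|\bm L^{(0)}\bm R^{(0)\top}-\bm B_*\|_{\mathrm F}\) to the aligned factor error \(e_{\mathrm{LR}}^{(0)}\), we use the ScaledGD lemma of \citet{tong2021accelerating} for balanced factors: \(\|(\bm L,\bm R)\|_{\mathrm{align},*}^2\lesssim\|\bm L\bm R^\top-\bm B_*\|_{\mathrm F}^2\) once the latter is at most \(c\sigma_r^2\).

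\emph{Shared sparse part.} This is the key step. Fix \((j,k)\). At least \(N-m_*\ge(1/2+\delta)N\) clients have \((\bm C_{i,*})_{jk}=(\bm S_{0,*})_{jk}\). Let \(a_i=|(\widetilde{\bm C}_i-\bm C_{i,*})_{jk}|\). The median of the \((\widetilde{\bm C}_i)_{jk}\) lies between two order statistics of \(\{(\bm S_{0,*})_{jk}\pm a_i\}\) taken over the uncontaminated clients. So its deviation is at most the \(\lceil\delta N\rceil\)-th largest \(a_i\), whose square is bounded by \((\delta N)^{-1}\sum_ia_i^2\). Summing over coordinates gives
\[
\|\widetilde{\bm S}_0-\bm S_{0,*}\|_{\mathrm F}^2\lesssim(\delta N)^{-1}\sum_i\|\widetilde{\bm C}_i-\bm C_{i,*}\|_{\mathrm F}^2,
\]
which is exactly the \(1/(\delta N)\) weight in \eqref{eq:app-init-total-error-rate}. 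Making this order-statistic argument clean is the main obstacle; the quantile inequality should be checked directly. The thresholding operator \(\mathcal G_{s_0,\gamma}\) is a near-projection onto a feasible set containing \(\bm S_{0,*}\), because \(s_0\ge s_{0,*}\) and \(\gamma\ge\gamma_*\). Hence \(\|\mathcal G(\bm X)-\bm S_{0,*}\|_{\mathrm F}\le2\|\bm X-\bm S_{0,*}\|_{\mathrm F}\), provided the greedy capacity-aware selection is shown to be within a constant of the projection. If it is not, we bound it through the support \(\Omega_{0,*}\) directly.

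\emph{Personalized part.} Each residual satisfies \(\widetilde{\bm S}_i^{\mathrm{res}}-\bm S_{i,*}=(\widetilde{\bm C}_i-\bm C_{i,*})+(\bm S_{0,*}-\bm S_0^{(0)})\). The true collection \(\{\bm S_{i,*}\}\) is feasible for \(\mathcal D(\{s_i\},m)\). Since FCT is an exact weighted projection onto the union of subspaces, it obeys
\[
\sum_iw_i\|\bm S_i^{(0)}-\bm S_{i,*}\|_{\mathrm F}^2\le4\sum_iw_i\|\widetilde{\bm S}_i^{\mathrm{res}}-\bm S_{i,*}\|_{\mathrm F}^2,
\]
which is bounded by \(\sum_iw_iR_i^{\mathrm{init}}+(\delta N)^{-1}\sum_iR_i^{\mathrm{init}}\).

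\textbf{Stage 3: combination and basin entry.} Combining the three components with the weights \(\beta_0/\beta_{\mathrm{LR}}\) and \(\beta_{\mathrm P}/\beta_{\mathrm{LR}}\), both at most \(\kappa_{\max}\), gives \eqref{eq:app-init-total-error-rate}.

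For basin entry, note that \(\sum_i(w_i+(\delta N)^{-1})\le1+\delta^{-1}\). Under \eqref{eq:app-init-basin-sample-full}, the statistical part of each \(R_i^{\mathrm{init}}\) is at most \(c\delta\varepsilon^2\sigma_r^2/\kappa_{\max}\). Under \eqref{eq:app-init-separation-full}, the spikiness bias is bounded by the same quantity. Together these give \(\mathcal V_\beta^{(0)}\le\varepsilon^2\sigma_r^2\). The same bound also validates the earlier use of the factor-alignment lemma, which required the low-rank error to be at most \(c\sigma_r^2\).
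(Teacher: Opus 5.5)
Your plan is essentially the paper's proof. The shared steps are: the \citet{basu2019low} oracle inequality with restricted curvature $\phi_{\min}/2$ and union-bounded score bounds; Jensen plus two rank-$r$ truncations (total factor $16$) followed by the Tong--Ma--Chi comparison; the coordinatewise-median counting; and the final sum using $\sum_i(w_i+1/(\delta N))=1+\delta^{-1}$.

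Your order-statistic step is correct. The clean form, which the paper uses, is as follows. At least $\lceil N/2\rceil$ clients lie on the far side of the median from $(\bm S_{0,*})_{jk}$. At least $\lceil N/2\rceil-m_*\ge\delta N$ of these are uncontaminated, and each has $|(\widetilde{\bm C}_i-\bm C_{i,*})_{jk}|\ge|(\widetilde{\bm S}_0-\bm S_{0,*})_{jk}|$.

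For the FCT step you take a genuinely simpler route. FCT is an exact weighted projection and $\{\bm S_{i,*}\}$ is feasible because $s_i\ge s_{i,*}$ and $m\ge m_*$, so the triangle inequality in the $w$-norm gives a factor $4$. The paper instead invokes Lemma~\ref{lem:det-weighted-fcht}, with $1+\rho_L\le 1+2\sqrt2$. Your argument suffices here because only a constant is needed. The paper's sharper lemma matters in the contraction analysis, where $\rho_L$ must be small.

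The one step you leave unestablished is the stability of $\mathcal G_{s_0,\gamma}$. It is a greedy capacity-aware scan, not a projection, so the triangle-inequality argument you use for FCT is unavailable, and your proviso is never discharged. The paper closes this with Lemma~\ref{lem:G_stability}, and your fallback ``bound it through $\Omega_{0,*}$'' is in effect that argument, but it must be carried out:
\begin{enumerate}
\item Classify each missed true entry by why it was excluded: saturated row, saturated column, or budget $s_0$ exhausted.
\item A missed entry in a saturated row or column is no larger than every false selection in that row or column. A never-scanned entry is no larger than every false selection.
\item Counting (at most $\gamma_*p-|\mathcal I_{a,\cdot}|$ misses against $\gamma p-|\mathcal I_{a,\cdot}|$ false picks per saturated row, similarly per column, and $s_{0,*}$ against $s_0$) gives
\[
\|\mathcal P_{\mathcal A}(\bm X)\|_{\mathrm F}^2\le\Bigl(\frac{s_{0,*}}{s_0}+\frac{2\gamma_*}{\gamma}\Bigr)\|\mathcal P_{\mathcal B}(\bm X)\|_{\mathrm F}^2 .
\]
\item A $2\times2$ quadratic-form bound then yields $\|\mathcal G_{s_0,\gamma}(\bm X)-\bm S_{0,*}\|_{\mathrm F}^2\le(1+\rho_0)\|\bm X-\bm S_{0,*}\|_{\mathrm F}^2$, with $1+\rho_0\le C$ whenever $s_0\ge s_{0,*}$ and $\gamma\ge\gamma_*$.
\end{enumerate}

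Finally, drop the smallness proviso on the factor-alignment step. The comparison $(\sqrt2-1)\|(\bm L,\bm R)\|_{\mathrm{align},*}^2\le\|\bm L\bm R^\top-\bm B_*\|_{\mathrm F}^2$ holds for every factor pair (Lemma~\ref{lem:metric_equivalence}). You need this unconditional form, because the rate bound \eqref{eq:app-init-total-error-rate} is claimed without condition (iv). With a conditional version you would only obtain it when the low-rank error is already below $c\sigma_r^2$.
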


\subsection{Directional covariance discrepancy measures}
\label{subsec:app-distributional-discrepancy}

We give four examples to examine
\(\Lambda_{\mathrm{LR}}\) and \(\Lambda_{\mathrm{S}}\) under different covariance
structures. We first record an inequality used below. For any feasible
\(\bm X\) and \((\bm H_i)\) in
\eqref{eq:directional_heterogeneity_levels}, let
\(I_{jk}=\{i:(\bm H_i)_{jk}\ne0\}\).
The frequency cap gives \(|I_{jk}|\le m^\dagger\). Applying
Cauchy--Schwarz to these nonzero entries yields
\[
 \left(\sum_i|(\bm H_i)_{jk}|\right)^2
 \le |I_{jk}|\sum_i|(\bm H_i)_{jk}|^2
 \le m^\dagger\sum_i|(\bm H_i)_{jk}|^2.
\]
Summing over \((j,k)\) and using the normalization
\(\sum_i\|\bm H_i\|_{\mathrm F}^2=1\) gives
\begin{equation}
\label{eq:app-discrepancy-frequency-bound}
 \sum_{j,k}\left(\sum_i|(\bm H_i)_{jk}|\right)^2
 \le m^\dagger\sum_{j,k}\sum_i|(\bm H_i)_{jk}|^2
 =m^\dagger.
\end{equation}

\begin{enumerate}[label=(\arabic*),leftmargin=*]

\item \textbf{Diagonal deviations.}
Suppose \(\bm E_i\) is diagonal and
\(\max_{i,j}|(\bm E_i)_{jj}|\le\rho\).
Using Cauchy--Schwarz across coordinates and
\eqref{eq:app-discrepancy-frequency-bound}, we obtain
\[
 \left|\sum_i\langle\bm H_i,\bm X\bm E_i\rangle\right|
 \le\rho\sum_{j,k}|X_{jk}|\sum_i|(\bm H_i)_{jk}|\le\rho\|\bm X\|_{\mathrm F}
 \left\{\sum_{j,k}\left(\sum_i|(\bm H_i)_{jk}|\right)^2\right\}^{1/2}
 \le\rho\sqrt{m^\dagger}.
\]
Dividing by \(\sqrt{m^\dagger}\) and taking the supremum gives
\[
 \Lambda_{\mathrm{LR}}\vee\Lambda_{\mathrm{S}}\le\rho.
\]

\item \textbf{Banded deviations.}
Suppose
\[
 (\bm E_i)_{jk}=0\quad\text{if }|j-k|>b,
 \qquad
 \max_{i,j,k}|(\bm E_i)_{jk}|\le\rho,
\]
so each row of \(\bbm\Gamma_i\) has at most \(2b+1\) entries that
may differ from the corresponding entries of \(\overline{\bbm\Gamma}\).
Decompose $\bm E_i=\sum_{h=-b}^{b}\bm E_i^{(h)}$,
where \(\bm E_i^{(h)}\) retains only the entries with \(k-j=h\).
Applying the argument in the first example to each diagonal and summing gives
\[
 \left|\sum_i\langle\bm H_i,\bm X\bm E_i\rangle\right|
 \le\sum_{h=-b}^{b}
 \left|\sum_i\langle\bm H_i,\bm X\bm E_i^{(h)}\rangle\right|\le(2b+1)\rho\sqrt{m^\dagger}.
\]
Dividing by
\(\sqrt{m^\dagger}\) and taking the supremum yields
\[
 \Lambda_{\mathrm{LR}}\vee\Lambda_{\mathrm{S}}\le(2b+1)\rho.
\]
For fixed bandwidth and $\rho$, $\Lambda_\mathrm{LR}$ and $\Lambda_{\mathrm{S}}$ are uniformly bounded.

\item \textbf{Toeplitz deviations with quadratic decay.}
Let
\[
 (\bm E_i)_{jk}
 =
 \begin{cases}
  0, & j=k,\\
  a_i|j-k|^{-2}, & j\ne k,
 \end{cases}
 \qquad
 |a_i|\le\rho,\qquad \sum_i a_i=0.
\]
Write \(\bm E_i=a_i\bm D\), where \(D_{jj}=0\) and
\(D_{jk}=|j-k|^{-2}\) for \(j\ne k\).
Since \(|a_i|\le\rho\),
\eqref{eq:app-discrepancy-frequency-bound} gives
\[
 \left\|\sum_i a_i\bm H_i\right\|_{\mathrm F}^2
 \le\rho^2\sum_{j,k}\left(\sum_i|(\bm H_i)_{jk}|\right)^2
 \le\rho^2m^\dagger.
\]
It follows that
\[
 \left|\sum_i\langle\bm H_i,\bm X\bm E_i\rangle\right|
 =\left|\left\langle\sum_i a_i\bm H_i,\bm X\bm D\right\rangle\right|\le\rho\sqrt{m^\dagger}\|\bm X\bm D\|_{\mathrm F}
 \le\rho\sqrt{m^\dagger}\|\bm D\|_{\mathrm{op}}.
\]
Because \(\bm D\) is symmetric, its operator norm is bounded by
its maximum absolute row sum:
\[
 \|\bm D\|_{\mathrm{op}}
 \le\max_j\sum_{k\ne j}|j-k|^{-2}
 \le2\sum_{h=1}^{p-1}h^{-2}\le4.
\]
Hence
\[
 \Lambda_{\mathrm{LR}}\vee\Lambda_{\mathrm{S}}
 \le
 2\rho\sum_{h=1}^{p-1}h^{-2}
 \le4\rho.
\]
Thus, summable decay gives a bound independent of \(N\) and \(p\).

\item \textbf{Finite deviation patterns.}
Let
\[
 \bm E_i=\sum_{l=1}^{K}a_{il}\bm D_l,
 \qquad
 |a_{il}|\le1,
 \qquad
 \sum_i a_{il}=0,
\]
where \(K\) is fixed and the patterns \(\bm D_l\) are symmetric.
For each \(l\), \(|a_{il}|\le1\) and
\eqref{eq:app-discrepancy-frequency-bound} imply
\[
 \left\|\sum_i a_{il}\bm H_i\right\|_{\mathrm F}
 \le\sqrt{m^\dagger}.
\]
Hence,
\[
 \left|\sum_i\langle\bm H_i,\bm X\bm E_i\rangle\right|
 =
 \left|\sum_{l=1}^{K}
 \left\langle
 \sum_i a_{il}\bm H_i,\bm X\bm D_l
 \right\rangle\right|\le
 \sqrt{m^\dagger}\sum_{l=1}^{K}
 \|\bm D_l\|_{\mathrm{op}},
\]
which gives
\[
 \Lambda_{\mathrm{LR}}\vee\Lambda_{\mathrm{S}}
 \le
 \sum_{l=1}^{K}\|\bm D_l\|_{\mathrm{op}}
 \le
 K\max_{l\in[K]}\|\bm D_l\|_{\mathrm{op}}.
\]
Hence, in this case, if \(K\) and \(\max_l\|\bm D_l\|_{\mathrm{op}}\) are bounded, the directional covariance discrepancy measures are bounded.

\end{enumerate}

\section{Proofs of Main Results}
\label{app:main-theorem-proof}

This appendix proves the theoretical results in the main article
by establishing their complete versions stated in
Appendix~\ref{app:full-theory}. The proofs are organized into two parts. Appendices~\ref{subsec:app-main-setting}--\ref{subsec:app-corollary-proofs} establish the convergence and
statistical guarantees in Theorem~\ref{thm:main_fed_iht}
and Corollaries~\ref{cor:main_stat_rates} and
\ref{cor:main_equal_sample_rates}.
Appendix~\ref{app:initializer-proof} proves the initialization
guarantee in Theorem~\ref{thm:initializer_rate}.

\subsection{Setup and proof outline for
Theorem~\ref{thm:main_fed_iht}}
\label{subsec:app-main-setting}

We use the model and notation of the main article and the population curvature
coefficients in Appendix~\ref{subsec:app-full-curvature}. For the one-step analysis below, fix a communication round \(l\) and suppress the superscript on the current
iterates \(\bm L,\bm R,\bm S_0,\{\bm S_i\}_{i=1}^N\).
Recall \(\bm B_*=\bm U_*\bbm\Sigma_*\bm V_*^\top\),
\(\bm L_*=\bm U_*\bbm\Sigma_*^{1/2}\), and
\(\bm R_*=\bm V_*\bbm\Sigma_*^{1/2}\).
We use the alignment error
\(\left\|(\bm L,\bm R)\right\|_{\mathrm{align},*}^2\)
defined in \eqref{eq:scaled_factor_error}.

For each client, define the empirical covariance and score matrices by
\begin{equation}
\label{eq:app-covariance-score}
 \widehat{\bbm\Gamma}_i
 :=\frac1{T_i}\sum_{t=1}^{T_i}
 \bm y_{i,t-1}\bm y_{i,t-1}^{\top},
 \qquad
 \widehat{\bm W}_i
 :=\frac1{T_i}\sum_{t=1}^{T_i}
 \bbm\epsilon_{it}\bm y_{i,t-1}^{\top}.
\end{equation}
With \(w_i=T_i/\Ttot\), their pooled versions are
\begin{equation}
\label{eq:app-weighted-covariance-score}
 \widehat{\bbm\Gamma}_w:=\sum_{i=1}^Nw_i\widehat{\bbm\Gamma}_i,
 \qquad
 \widehat{\bm W}_w:=\sum_{i=1}^Nw_i\widehat{\bm W}_i.
\end{equation}
Recall that \(\mathbb E[\widehat{\bbm\Gamma}_i]=\bbm\Gamma_i\) and
\(\bbm\Gamma_w=\sum_iw_i\bbm\Gamma_i\).

Define the current componentwise errors as
\begin{equation}
\label{eq:app-current-errors}
 \bbm\Delta_{\bm B}:=\bm L\bm R^\top-\bm B_*,
 \qquad
 \bbm\Delta_0:=\bm S_0-\bm S_{0,*},
 \qquad
 \bbm\Delta_i:=\bm S_i-\bm S_{i,*}.
\end{equation}
The local matrix gradients satisfy
\begin{equation}
\label{eq:app-local-gradients}
 \nabla_{\bm B}F_i
 =\nabla_{\bm S_0}F_i
 =\nabla_{\bm S_i}F_i
 =(\bbm\Delta_{\bm B}+\bbm\Delta_0+\bbm\Delta_i)
 \widehat{\bbm\Gamma}_i-\widehat{\bm W}_i.
\end{equation}
For \(F=\sum_iw_iF_i\), the global matrix gradients are
\begin{equation}
\label{eq:app-global-gradients}
 \nabla_{\bm B}F=\nabla_{\bm S_0}F=(\bbm\Delta_{\bm B}+\bbm\Delta_0)\widehat{\bbm\Gamma}_w
 +\sum_{i=1}^Nw_i\bbm\Delta_i\widehat{\bbm\Gamma}_i
 -\widehat{\bm W}_w,\qquad\nabla_{\bm S_i}F=w_i\nabla_{\bm S_i}F_i.
\end{equation}
The factor gradients follow by the chain rule:
\begin{equation}
\label{eq:app-factor-gradients}
 \nabla_{\bm L}F=(\nabla_{\bm B}F)\bm R,
 \qquad
 \nabla_{\bm R}F=(\nabla_{\bm B}F)^\top\bm L,
\end{equation}
with the same identities holding for each \(F_i\).

For a support set \(\Omega\subset[p]\times[p]\),
\(\mathcal P_\Omega(\cdot)\) denotes coordinate projection onto \(\Omega\).
We use the support envelopes \(s_0^\dagger,\gamma^\dagger,s_i^\dagger,m^\dagger\),
the direction classes \(\mathcal X_S,\mathcal X_{\mathrm{LR}}\), and
\(\Lambda_a\) from main-text Section~\ref{subsec:covariance_compatibility},
and write \(s_{\min}^\dagger:=\min_i s_i^\dagger\).
Recall also \(\overline{\bbm\Gamma}=N^{-1}\sum_i\bbm\Gamma_i\) and
\(\bm E_i=\bbm\Gamma_i-\overline{\bbm\Gamma}\).

Let $\Eall$ denote the intersection of the concentration events in
Lemmas~\ref{lem:prob-M0-geometry},
\ref{lem:prob-pooled-lr-score},
\ref{lem:prob-sparse-geometry}\textup{(i)--(ii)},
\ref{lem:prob-sparse-scores}\textup{(i)--(ii)}, and
\ref{lem:prob-mixed-geometry}\textup{(i)--(ii)}.  Under the sample-size
conditions stated in Theorem~\ref{thm:app-main-full}, these events hold simultaneously with probability
at least
\begin{equation}
  \mathbb P(\Eall)
  \ge
  1-C\exp(-cp)
  -C\exp\{-c(s_0^\dagger\vee1)\log p\}
  -C\exp\{-c(s_{\min}^\dagger\vee1)\log p\}.
  \label{eq:app-good-event-probability}
\end{equation}
Throughout Appendices~\ref{subsec:app-main-setting}--%
\ref{subsec:app-corollary-proofs}, we work on the event \(\Eall\),
so the concentration bounds defining this event hold simultaneously.

\paragraph{Proof outline for Theorem~\ref{thm:main_fed_iht}.}

We prove Theorem~\ref{thm:main_fed_iht} by establishing
its complete version, Theorem~\ref{thm:app-main-full},
through induction on the communication round.
The initialization condition~\eqref{eq:app-full-initial-condition}
provides the base case. Suppose that
the iterate at round $l$ lies in the local basin, i.e., $\mathcal V_\beta^{(l)}\le \varepsilon^2\sigma_r^2$. Then, appendices~\ref{subsec:app-lowrank-update},
\ref{subsec:app-global-sparse-update}, and
\ref{subsec:app-local-sparse-update} analyze one update of the global
low-rank, shared-sparse, and personalized-sparse components, respectively,
and establish the three one-step bounds
\eqref{eq:lowrank_final}, \eqref{eq:app-shared-step}, and
\eqref{eq:app-local-step}.  Appendix~\ref{subsec:app-joint-contraction}
collects these bounds into the componentwise vector recursion
\[
  \bm e^{(l+1)}
  \preceq
  \bbm\Theta\bm e^{(l)}+\bm b.
\]
Here $\bm e^{(l)}$ collects the three squared estimation errors,
$\bbm\Theta$ is their one-step error propagation matrix, and $\bm b$ is the statistical error vector.  Then we verify that this one-step recursion is contractive under the conditions stated in  Theorem~\ref{thm:app-main-full}.

Appendix~\ref{subsec:app-basin-conclusion} verifies that the iterates remain
in the local basin, closing the induction, and iterates the recursion.
Appendix~\ref{subsec:app-corollary-proofs} then bounds the statistical floor
and the required number of rounds, specializes the rates to a simplified setting,
proving Corollaries~\ref{cor:main_stat_rates} and
\ref{cor:main_equal_sample_rates}.

\subsection{Global low-rank update}
\label{subsec:app-lowrank-update}

On the server side, the low-rank factors are updated by
\begin{equation}
\begin{aligned}
  \bm L^{(l+1)}
  &=\bm L-\eta_{\mathrm{LR}}
  \left\{
  \sum_{i=1}^Nw_i
  \nabla_{\bm L}F_i(\bm L,\bm R,\bm S_0,\bm S_i)
  \right\}
  (\bm R^\top\bm R)^{-1}.
\end{aligned}
\label{eq:app-lowrank-L-update}
\end{equation}
\begin{equation}
\begin{aligned}
  \bm R^{(l+1)}
  &=\bm R-\eta_{\mathrm{LR}}
  \left\{
  \sum_{i=1}^Nw_i
  \nabla_{\bm R}F_i(\bm L,\bm R,\bm S_0,\bm S_i)
  \right\}
  (\bm L^\top\bm L)^{-1}.
\end{aligned}
\label{eq:app-lowrank-R-update}
\end{equation}

Let $\bm Q\in\mathrm{GL}(r)$ attain the infimum in
\eqref{eq:scaled_factor_error} for the current $(\bm L,\bm R)$, and its dependence on $l$ is also suppressed. We then define
\begin{equation}
\begin{aligned}
  \bm P_R
  &=\bm R(\bm R^\top\bm R)^{-1}\bm Q\bbm\Sigma_*^{1/2},
  \\
  \bm P_L
  &=\bm L(\bm L^\top\bm L)^{-1}\bm Q^{-\top}\bbm\Sigma_*^{1/2}.
\end{aligned}
\label{eq:app-lowrank-aligned-objects}
\end{equation}
For the estimation error after one-step update, we have
\begin{equation}\label{eq:d_split1}
\begin{aligned}
  \left\|(\bm L^{(l+1)},\bm R^{(l+1)})\right\|_{\mathrm{align},*}^2 \leq &  \left\|(\bm L^{(l+1)}\bm Q-\bm L_*)\bbm\Sigma_*^{1/2}\right\|_{\mathrm F}^2
  +\left\|(\bm R^{(l+1)}\bm Q^{-\top}-\bm R_*)\bbm\Sigma_*^{1/2}\right\|_{\mathrm F}^2\\
  = & \Bigg\|
  (\bm L\bm Q-\bm L_*)\bbm\Sigma_*^{1/2}
  -\eta_{\mathrm{LR}}
  \nabla_{\bm B}F(\bm L,\bm R,\bm S_0,\{\bm S_i\})\bm P_R
  \Bigg\|_{\mathrm F}^2
  \\
  &+
  \Bigg\|
  (\bm R\bm Q^{-\top}-\bm R_*)\bbm\Sigma_*^{1/2}
  -\eta_{\mathrm{LR}}
  \left\{\nabla_{\bm B}F(\bm L,\bm R,
  \bm S_0,\{\bm S_i\})\right\}^\top\bm P_L
  \Bigg\|_{\mathrm F}^2,
\end{aligned}
\end{equation}
where the inequality comes from the optimality of $\bm Q$.
We first split the first term of \eqref{eq:d_split1}. For any $\zeta_1>0$, we have
\begin{equation}
\begin{aligned}
  &\left\|
  (\bm L\bm Q-\bm L_*)\bbm\Sigma_*^{1/2}
  -\eta_{\mathrm{LR}}
  \nabla_{\bm B}F(\bm L,\bm R,\bm S_0,\{\bm S_i\})\bm P_R
  \right\|_{\mathrm F}^2\\
  = & \left\|
  (\bm L\bm Q-\bm L_*)\bbm\Sigma_*^{1/2}
  -\eta_{\mathrm{LR}}
  \mathbb E[\nabla_{\bm B}F(\bm L,\bm R,\bm S_{0,*},\{\bm S_{i,*}\})]\bm P_R\right.\\
  & \left. -\eta_{\mathrm{LR}}\Big\{\nabla_{\bm B}F(\bm L,\bm R,\bm S_0,\{\bm S_i\})-E[\nabla_{\bm B}F(\bm L,\bm R,\bm S_{0,*},\{\bm S_{i,*}\})]\Big\}\bm P_R\right\|_{\mathrm F}^2\\
  ={}&\left\|
  (\bm L\bm Q-\bm L_*)\bbm\Sigma_*^{1/2}
  -\eta_{\mathrm{LR}}
  \bbm\Delta_{\bm B}\bbm\Gamma_w\bm P_R
  -\eta_{\mathrm{LR}}
  \left\{
  \bbm\Delta_{\bm B}
  (\widehat{\bbm\Gamma}_w-\bbm\Gamma_w)
  +
  \bbm\Delta_0\widehat{\bbm\Gamma}_w
  +\sum_{i=1}^Nw_i\bbm\Delta_i\widehat{\bbm\Gamma}_i
  -\widehat{\bm W}_w
  \right\}\bm P_R
  \right\|_{\mathrm F}^2
  \\
  \le{}&(1+\zeta_1)
  \underbrace{
  \left\|
  (\bm L\bm Q-\bm L_*)\bbm\Sigma_*^{1/2}
  -\eta_{\mathrm{LR}}
  \bbm\Delta_{\bm B}\bbm\Gamma_w\bm P_R
  \right\|_{\mathrm F}^2
  }_{\mathscr T_{\bm L,1}}
  \\
  &+(1+\zeta_1^{-1})\eta_{\mathrm{LR}}^2
  \underbrace{
  \left\|
  \left(
  \bbm\Delta_{\bm B}
  (\widehat{\bbm\Gamma}_w-\bbm\Gamma_w)
  +
  \bbm\Delta_0\widehat{\bbm\Gamma}_w
  +\sum_{i=1}^Nw_i\bbm\Delta_i\widehat{\bbm\Gamma}_i
  -\widehat{\bm W}_w
  \right)\bm P_R
  \right\|_{\mathrm F}^2
  }_{\mathscr T_{\bm L,2}}.
\end{aligned}
\label{eq:app-lowrank-L-decomposition}
\end{equation}
By an analogous argument, we split the second term of \eqref{eq:d_split1}, and the resulting two $\bm R$-factor terms are defined by
\begin{equation}
\begin{aligned}
  \mathscr T_{\bm R,1}
  &:={}
  \left\|
  (\bm R\bm Q^{-\top}-\bm R_*)\bbm\Sigma_*^{1/2}
  -\eta_{\mathrm{LR}}
  (\bbm\Delta_{\bm B}\bbm\Gamma_w)^\top\bm P_L
  \right\|_{\mathrm F}^2,
  \\
  \text{and}\quad \mathscr T_{\bm R,2}
  &:={}
  \left\|
  \left(
  \bbm\Delta_{\bm B}
  (\widehat{\bbm\Gamma}_w-\bbm\Gamma_w)
  +
  \bbm\Delta_0\widehat{\bbm\Gamma}_w
  +\sum_{i=1}^Nw_i\bbm\Delta_i\widehat{\bbm\Gamma}_i
  -\widehat{\bm W}_w
  \right)^\top\bm P_L
  \right\|_{\mathrm F}^2.
\end{aligned}
\label{eq:app-lowrank-R-terms}
\end{equation}
Consequently,
\begin{equation}
  \left\|(\bm L^{(l+1)},\bm R^{(l+1)})\right\|_{\mathrm{align},*}^2
  \le (1+\zeta_1)
  (\mathscr T_{\bm L,1}+\mathscr T_{\bm R,1})+(1+\zeta_1^{-1})\eta_{\mathrm{LR}}^2
  (\mathscr T_{\bm L,2}+\mathscr T_{\bm R,2}).
\end{equation}
Next, we bound the two groups of terms separately.

\paragraph{Step 1: Upper bound of
$\mathscr T_{\bm L,1}+\mathscr T_{\bm R,1}$.}
Directly expanding the sum of the two terms gives
\begin{equation}
\begin{aligned}
  \mathscr T_{\bm L,1}+\mathscr T_{\bm R,1}
  ={}&\left\|(\bm L^{(l)},\bm R^{(l)})\right\|_{\mathrm{align},*}^2
  \\
  &+\eta_{\mathrm{LR}}^2
  \left\{
  \|\bbm\Delta_{\bm B}\bbm\Gamma_w\bm P_R\|_{\mathrm F}^2
  +\|(\bbm\Delta_{\bm B}\bbm\Gamma_w)^\top
  \bm P_L\|_{\mathrm F}^2
  \right\}
  \\
  &-2\eta_{\mathrm{LR}}
  \inner{\bbm\Delta_{\bm B}\bbm\Gamma_w\bm P_R}
  {(\bm L\bm Q-\bm L_*)\bbm\Sigma_*^{1/2}}
  \\
  &-2\eta_{\mathrm{LR}}
  \inner{(\bbm\Delta_{\bm B}\bbm\Gamma_w)^\top\bm P_L}
  {(\bm R\bm Q^{-\top}-\bm R_*)\bbm\Sigma_*^{1/2}}.
\end{aligned}
\label{eq:app-lowrank-T1-direct-expansion}
\end{equation}
For the second term, Lemma~\ref{lem:perturbation} implies
\begin{equation}
  \|\bm P_R\|_{\op}^2
  \vee
  \|\bm P_L\|_{\op}^2
  \le\frac{1}{(1-\varepsilon)^2},
\end{equation}
Hence,
\begin{equation}
  \|\bbm\Delta_{\bm B}\bbm\Gamma_w\bm P_R\|_{\mathrm F}^2
  +\|(\bbm\Delta_{\bm B}\bbm\Gamma_w)^\top
  \bm P_L\|_{\mathrm F}^2
  \le
  \frac{2}{(1-\varepsilon)^2}\|\bbm\Delta_{\bm B}\bbm\Gamma_w\|_{\mathrm F}^2.
  \label{eq:app-lowrank-gradient-factor-bound}
\end{equation}
Therefore,
\begin{equation}
\begin{aligned}
  \mathscr T_{\bm L,1}+\mathscr T_{\bm R,1}
  \le{}&\left\|(\bm L^{(l)},\bm R^{(l)})\right\|_{\mathrm{align},*}^2
  +\frac{2\eta_{\mathrm{LR}}^2}{(1-\varepsilon)^2}
  \|\bbm\Delta_{\bm B}\bbm\Gamma_w\|_{\mathrm F}^2
  \\
  &-2\eta_{\mathrm{LR}} I_1
  +2\eta_{\mathrm{LR}} I_2
  +2\eta_{\mathrm{LR}} I_3
  +2\eta_{\mathrm{LR}} I_4,
\end{aligned}
\label{eq:app-lowrank-T1-decompose}
\end{equation}
where
\begin{equation}
\begin{aligned}
  I_1
  &:={}
  \inner{\bbm\Delta_{\bm B}\bbm\Gamma_w}
  {\bm L\bm R^\top-\bm L_*\bm R_*^\top},
  \\
  I_2
  &:={}
  \inner{\bbm\Delta_{\bm B}\bbm\Gamma_w}
  {(\bm L\bm Q-\bm L_*)
  (\bm R\bm Q^{-\top}-\bm R_*)^\top},
  \\
  I_3
  &:={}
  \left\langle
  \bbm\Delta_{\bm B}\bbm\Gamma_w,
  (\bm L\bm Q-\bm L_*)\bbm\Sigma_*^{1/2}
  \left(
    \bm V_*^\top
    -\bbm\Sigma_*^{1/2}\bm Q^\top
    (\bm R^\top\bm R)^{-1}\bm R^\top
  \right)
  \right\rangle,
  \\
  I_4
  &:={}
  \left\langle
  \bbm\Delta_{\bm B}\bbm\Gamma_w,
  \left(
    \bm U_*
    -\bm L(\bm L^\top\bm L)^{-1}
    \bm Q^{-\top}\bbm\Sigma_*^{1/2}
  \right)
  \bbm\Sigma_*^{1/2}
  (\bm R\bm Q^{-\top}-\bm R_*)^\top
  \right\rangle.
\end{aligned}
\label{eq:app-lowrank-I1-I4}
\end{equation}

We now bound these four terms accordingly. By
\eqref{eq:app-lowrank-population-rcg} and Lemma~\ref{lem:metric_equivalence},
\begin{equation}
\begin{aligned}
  I_1
  \ge{}&
  \frac{\alpha_{\mathrm{LR}}}{2}
  \|\bm L\bm R^\top-\bm L_*\bm R_*^\top\|_{\mathrm F}^2
  +\frac{1}{2\beta_{\mathrm{LR}}}\|\bbm\Delta_{\bm B}\bbm\Gamma_w\|_{\mathrm F}^2
  \\
  \ge{}&
  \frac{(\sqrt{2}-1)\alpha_{\mathrm{LR}}}{2}\left\|(\bm L,\bm R)\right\|_{\mathrm{align},*}^2
  +\frac{1}{2\beta_{\mathrm{LR}}}\|\bbm\Delta_{\bm B}\bbm\Gamma_w\|_{\mathrm F}^2.
\end{aligned}
\label{eq:LR_I1}
\end{equation}
For $I_2$, Young's inequality and the factor-basin bounds imply
\begin{equation}
\begin{aligned}
  I_2
  \le{}&
  \frac{1}{4\beta_{\mathrm{LR}}}\|\bbm\Delta_{\bm B}\bbm\Gamma_w\|_{\mathrm F}^2
  +\beta_{\mathrm{LR}}
  \left\|
    (\bm L\bm Q-\bm L_*)\bbm\Sigma_*^{1/2}
    \left(
      (\bm R\bm Q^{-\top}-\bm R_*)\bbm\Sigma_*^{-1/2}
    \right)^\top
  \right\|_{\mathrm F}^2
  \\
  \le{}&
  \frac{1}{4\beta_{\mathrm{LR}}}\|\bbm\Delta_{\bm B}\bbm\Gamma_w\|_{\mathrm F}^2
  +C\beta_{\mathrm{LR}}\varepsilon^2\left\|(\bm L,\bm R)\right\|_{\mathrm{align},*}^2.
\end{aligned}
\label{eq:LR_I2}
\end{equation}
For $I_3$, Lemma~\ref{lem:perturbation} gives
\begin{equation}
  \left\|
  \bm V_*^\top
  -\bbm\Sigma_*^{1/2}\bm Q^\top
  (\bm R^\top\bm R)^{-1}\bm R^\top
  \right\|_{\op}
  =  \left\|
  \bm V_*^\top
  -\bm P_R^\top
  \right\|_{\op}
  \le\frac{\sqrt{2}\varepsilon}{1-\varepsilon}.
  \label{eq:V_perturb}
\end{equation}
Therefore,
\begin{equation}
\begin{aligned}
  I_3
  \le{}&
  \|\bbm\Delta_{\bm B}\bbm\Gamma_w\|_{\mathrm F}
  \left\|
  (\bm L\bm Q-\bm L_*)\bbm\Sigma_*^{1/2}
  \right\|_{\mathrm F}
  \left\|
  \bm V_*^\top
  -\bbm\Sigma_*^{1/2}\bm Q^\top
  (\bm R^\top\bm R)^{-1}\bm R^\top
  \right\|_{\op}
  \\
  \le{}&
  \frac{1}{16\beta_{\mathrm{LR}}}\|\bbm\Delta_{\bm B}\bbm\Gamma_w\|_{\mathrm F}^2
  +4\beta_{\mathrm{LR}}\left\|
  \bm V_*^\top
  -\bbm\Sigma_*^{1/2}\bm Q^\top
  (\bm R^\top\bm R)^{-1}\bm R^\top
  \right\|_{\op}^2
  \left\|
  (\bm L\bm Q-\bm L_*)\bbm\Sigma_*^{1/2}
  \right\|_{\mathrm F}^2\\
  \leq & \frac{1}{16\beta_{\mathrm{LR}}}\|\bbm\Delta_{\bm B}\bbm\Gamma_w\|_{\mathrm F}^2 + \frac{C\beta_{\mathrm{LR}}\varepsilon^2}{(1-\varepsilon)^2}\left\|(\bm L,\bm R)\right\|_{\mathrm{align},*}^2
\end{aligned}
\label{eq:LR_I3}
\end{equation}
In addition, the analogous argument for $I_4$ gives
\begin{equation}
  I_4
  \le
  \frac{1}{16\beta_{\mathrm{LR}}}\|\bbm\Delta_{\bm B}\bbm\Gamma_w\|_{\mathrm F}^2
  +\frac{C\beta_{\mathrm{LR}}\varepsilon^2}{(1-\varepsilon)^2}
  \left\|(\bm L,\bm R)\right\|_{\mathrm{align},*}^2.
  \label{eq:LR_I4}
\end{equation}

Therefore, substituting \eqref{eq:LR_I1}, \eqref{eq:LR_I2}, \eqref{eq:LR_I3}, and
\eqref{eq:LR_I4} into
\eqref{eq:app-lowrank-T1-decompose} with $\varepsilon< 0.01$ yields
\begin{equation}
\begin{aligned}
  \mathscr T_{\bm L,1}+\mathscr T_{\bm R,1}
  \le{}&
  \left(
    1+C\beta_{\mathrm{LR}}\varepsilon^2
    \eta_{\mathrm{LR}}
    -(\sqrt{2}-1)\alpha_{\mathrm{LR}}\eta_{\mathrm{LR}}
  \right)\left\|(\bm L,\bm R)\right\|_{\mathrm{align},*}^2
  \\
  &+
  \left(
    5\eta_{\mathrm{LR}}
    -\frac{1}{4\beta_{\mathrm{LR}}}
  \right)
  \eta_{\mathrm{LR}}\|\bbm\Delta_{\bm B}\bbm\Gamma_w\|_{\mathrm F}^2.
\end{aligned}
\label{eq:app-lowrank-T1-contraction-raw}
\end{equation}
Since $\varepsilon^2\leq\varepsilon_0\alpha_{\mathrm{LR}}/\beta_{\mathrm{LR}}$ for a sufficiently
small $\varepsilon_0>0$, take
$\eta_{\mathrm{LR}}=c/\beta_{\mathrm{LR}}$ with
$c>0$ sufficiently small so that
\begin{equation}
  C\beta_{\mathrm{LR}}\varepsilon^2
  \le \frac{\alpha_{\mathrm{LR}}}{4},
  \quad
  5\eta_{\mathrm{LR}}
  \le\frac{1}{8\beta_{\mathrm{LR}}}.
  \label{eq:app-lowrank-ideal-smallness}
\end{equation}
Then the population low-rank step satisfies
\begin{equation}
  \mathscr T_{\bm L,1}+\mathscr T_{\bm R,1}
  \le
  \left(1-c\frac{\alpha_{\mathrm{LR}}}{\beta_{\mathrm{LR}}}\right)
  \left\|(\bm L^{(l)},\bm R^{(l)})\right\|_{\mathrm{align},*}^2.
\label{eq:TLTR1_lowrank}
\end{equation}
Here the second inequality in
\eqref{eq:app-lowrank-ideal-smallness} makes the coefficient of
$\|\bbm\Delta_{\bm B}\bbm\Gamma_w\|_{\mathrm F}^2$ in
\eqref{eq:app-lowrank-T1-contraction-raw} nonpositive, so that term is
discarded.
\paragraph{Step 2: Upper bound of
$\mathscr T_{\bm L,2}+\mathscr T_{\bm R,2}$.}
We first bound $\mathscr T_{\bm L,2}$. We have
\begin{equation}
  \mathscr T_{\bm L,2}
  \le{} 4\underbrace{
  \left\|
  \bbm\Delta_{\bm B}
  (\widehat{\bbm\Gamma}_w-\bbm\Gamma_w)\bm P_R
  \right\|_{\mathrm F}^2
  }_{\mathscr T_{\bm L,2}^{(0)}}
  +4\underbrace{
  \left\|\bbm\Delta_0\widehat{\bbm\Gamma}_w\bm P_R\right\|_{\mathrm F}^2
  }_{\mathscr T_{\bm L,2}^{(1)}} +4\underbrace{\left\|
  \sum_{i=1}^Nw_i\bbm\Delta_i
  \widehat{\bbm\Gamma}_i\bm P_R
  \right\|_{\mathrm F}^2}_{\mathscr T_{\bm L,2}^{(2)}} +4\underbrace{
  \left\|\widehat{\bm W}_w\bm P_R\right\|_{\mathrm F}^2
  }_{\mathscr T_{\bm L,2}^{(3)}}.
\label{eq:app-lowrank-L-residual-decomposition}
\end{equation}
Analogously,
\begin{equation}
  \mathscr T_{\bm R,2}
  \le{} 4\underbrace{
  \left\|
  \left\{
  \bbm\Delta_{\bm B}
  (\widehat{\bbm\Gamma}_w-\bbm\Gamma_w)
  \right\}^\top\bm P_L
  \right\|_{\mathrm F}^2
  }_{\mathscr T_{\bm R,2}^{(0)}}
  +4\underbrace{
  \left\|(\bbm\Delta_0\widehat{\bbm\Gamma}_w)^\top
  \bm P_L\right\|_{\mathrm F}^2
  }_{\mathscr T_{\bm R,2}^{(1)}}
  +4\underbrace{\left\|
  \left(\sum_{i=1}^Nw_i\bbm\Delta_i
  \widehat{\bbm\Gamma}_i\right)^\top\bm P_L
  \right\|_{\mathrm F}^2}_{\mathscr T_{\bm R,2}^{(2)}}
  +4\underbrace{
  \left\|\widehat{\bm W}_w^\top\bm P_L\right\|_{\mathrm F}^2
  }_{\mathscr T_{\bm R,2}^{(3)}}.
\label{eq:app-lowrank-R-residual-decomposition}
\end{equation}
In the following, we bound the four terms for $\bm L$ and $\bm R$
accordingly. 

\paragraph{Step 2.1: Upper bound of $\mathscr T_{\bm L,2}^{(0)}+\mathscr T_{\bm R,2}^{(0)}$.}Firstly, on the pooled covariance concentration event in Lemma~\ref{lem:prob-M0-geometry}, we have
\begin{equation}
  \left\|\widehat{\bbm\Gamma}_w-\bbm\Gamma_w\right\|_{\op}^2
  \le
  C M\phi_{\max}\frac{p}{\Ttot}.
  \label{eq:app-pooled-covariance-concentration}
\end{equation}
Then, by Lemmas~\ref{lem:metric_equivalence} and~\ref{lem:perturbation},
\begin{equation}
\begin{aligned}
  \mathscr T_{\bm L,2}^{(0)}+\mathscr T_{\bm R,2}^{(0)}\le &
  4\left\|\bbm\Delta_{\bm B}\right\|_{\mathrm F}^2
  \left\|\widehat{\bbm\Gamma}_w-\bbm\Gamma_w\right\|_{\op}^2
  \left(\left\|\bm P_R\right\|_{\op}^2+\left\|\bm P_L\right\|_{\op}^2\right)\\
  \le &
  \frac{8}{(1-\varepsilon)^2}
  \left\|\bbm\Delta_{\bm B}\right\|_{\mathrm F}^2
  \left\|\widehat{\bbm\Gamma}_w-\bbm\Gamma_w\right\|_{\op}^2\\
  \le & C\frac{M\phi_{\max}p}{\Ttot}
  \left\|(\bm L,\bm R)\right\|_{\mathrm{align},*}^2.
\end{aligned}
\end{equation}

\paragraph{Step 2.2: Upper bound of $\mathscr T_{\bm L,2}^{(1)}+\mathscr T_{\bm R,2}^{(1)}$.}
The term $\mathscr T_{\bm L,2}^{(1)}$ and $\mathscr T_{\bm R,2}^{(1)}$ quantifies the interference from the
estimation error of $\bm S_0$ to the low-rank factors. We first decompose $\mathscr T_{\bm L,2}^{(1)}$ as
\begin{equation}
  \left\|\bbm\Delta_0\widehat{\bbm\Gamma}_w\bm P_R\right\|_{\mathrm F}^2
  \le{}
  4\left\|\bbm\Delta_0\bbm\Gamma_w\bm V_*\right\|_{\mathrm F}^2
  +4\left\|\bbm\Delta_0\bbm\Gamma_w
  (\bm P_R-\bm V_*)\right\|_{\mathrm F}^2
  +2\left\|\bbm\Delta_0
  \left(\widehat{\bbm\Gamma}_w-\bbm\Gamma_w\right)
  \bm P_R\right\|_{\mathrm F}^2.
\label{eq:TL2_1_decompose}
\end{equation}
For each row and column of
$\bbm\Delta_0$, let
\begin{equation}
  \mathcal J_a
  =\supp\left((\bbm\Delta_0)_{a,\cdot}\right),
  \qquad
  \mathcal K_b
  =\supp\left((\bbm\Delta_0)_{\cdot,b}\right).
  \label{eq:app-M0-support-slices}
\end{equation}
The thresholding step $\mathcal G_{s_0,\gamma}(\cdot)$ ensures
\begin{equation}
  \max_a|\mathcal J_a|
  \vee
  \max_b|\mathcal K_b|
  \le(\gamma+\gamma_*) p.
  \label{eq:app-M0-row-column-cap}
\end{equation}
Then, by Assumption~\ref{assump:cov_adjusted_incoherence}, the first term of
\eqref{eq:TL2_1_decompose} is bounded by
\begin{equation}
\begin{aligned}
  \left\|\bbm\Delta_0\bbm\Gamma_w\bm V_*\right\|_{\mathrm F}^2
  &=\sum_{a=1}^p
  \left\|
    (\bbm\Delta_0)_{a,\mathcal J_a}
    (\bbm\Gamma_w\bm V_*)_{\mathcal J_a,\cdot}
  \right\|_2^2
  \\
  &\le
  \left\|\bbm\Gamma_w\bm V_*\right\|_{2,\infty}^2
  \sum_{a=1}^p
  \left\|(\bbm\Delta_0)_{a,\mathcal J_a}\right\|_1^2
  \\
  &\le
  \widetilde\phi_w^2(\gamma+\gamma_*)\mu r
  \left\|\bbm\Delta_0\right\|_{\mathrm F}^2.
\end{aligned}
\label{eq:app-M0-left-true}
\end{equation}
By Lemma~\ref{lem:perturbation}, the second term of \eqref{eq:TL2_1_decompose} is bounded by
\begin{equation}
\begin{aligned}
  \|\bbm\Delta_0\bbm\Gamma_w
  (\bm P_R-\bm V_*)\|_{\mathrm F}^2
  &\le
  \|\bbm\Delta_0\|_{\mathrm F}^2
  \|\bbm\Gamma_w\|_{\op}^2
  \|\bm P_R-\bm V_*\|_{\op}^2
  \\
  &\le
  \frac{2\varepsilon^2}{(1-\varepsilon)^2}
  \|\bbm\Gamma_w\|_{\op}^2
  \|\bbm\Delta_0\|_{\mathrm F}^2\\
  &\le
  \frac{2\varepsilon^2}{(1-\varepsilon)^2}
  \beta_{\mathrm{LR}}^2
  \|\bbm\Delta_0\|_{\mathrm F}^2.
\end{aligned}
\label{eq:app-M0-left-perturbation}
\end{equation}
In addition, by Lemmas~\ref{lem:prob-M0-geometry}
and~\ref{lem:perturbation}, the third term of
\eqref{eq:TL2_1_decompose} is bounded by
\begin{equation}
\left\|\bbm\Delta_0
  \left(\widehat{\bbm\Gamma}_w-\bbm\Gamma_w\right)
  \bm P_R\right\|_{\mathrm F}^2
  \le
  \left\|\bbm\Delta_0\right\|_{\mathrm F}^2
  \left\|\widehat{\bbm\Gamma}_w-\bbm\Gamma_w\right\|_{\op}^2\left\|\bm P_R\right\|_{\op}^2
  \lesssim
  M\phi_{\max}\frac{p}{\Ttot}
  \left\|\bbm\Delta_0\right\|_{\mathrm F}^2.
\label{eq:app-M0-left-covariance-perturbation}
\end{equation}

For $\mathscr T_{\bm R,2}^{(1)}$, we decompose analogously as
\begin{equation}
  \left\|(\bbm\Delta_0\widehat{\bbm\Gamma}_w)^\top
  \bm P_L\right\|_{\mathrm F}^2
  \leq
  4\left\|\bbm\Gamma_w\bbm\Delta_0^\top\bm U_*\right\|_{\mathrm F}^2
  +4\left\|\bbm\Gamma_w\bbm\Delta_0^\top
  (\bm P_L-\bm U_*)\right\|_{\mathrm F}^2
  +2\left\|
  (\widehat{\bbm\Gamma}_w-\bbm\Gamma_w)
  \bbm\Delta_0^\top\bm P_L
  \right\|_{\mathrm F}^2.
  \label{eq:TR2_1_decompose}
\end{equation}

For the first term of \eqref{eq:TR2_1_decompose},
\begin{equation}
\begin{aligned}
  \|\bbm\Gamma_w\bbm\Delta_0^\top\bm U_*\|_{\mathrm F}^2
  &\le
  \|\bbm\Gamma_w\|_{\op}^2
  \|\bbm\Delta_0^\top\bm U_*\|_{\mathrm F}^2
  \\
  &=
  \|\bbm\Gamma_w\|_{\op}^2
  \sum_{b=1}^p
  \left\|
    (\bbm\Delta_0)_{\mathcal K_b,b}^\top
    (\bm U_*)_{\mathcal K_b,\cdot}
  \right\|_2^2
  \\
  &\le
  \|\bbm\Gamma_w\|_{\op}^2
  \|\bm U_*\|_{2,\infty}^2
  \sum_{b=1}^p
  \left\|(\bbm\Delta_0)_{\mathcal K_b,b}\right\|_1^2
  \\
  &\le
  \|\bbm\Gamma_w\|_{\op}^2
  \frac{\mu r}{p}
  \sum_{b=1}^p
  |\mathcal K_b|
  \left\|(\bbm\Delta_0)_{\mathcal K_b,b}\right\|_2^2
  \\
  &\le
  (\gamma+\gamma_*)\mu r\,\beta_{\mathrm{LR}}^2
  \|\bbm\Delta_0\|_{\mathrm F}^2.
\end{aligned}
\label{eq:app-M0-right-true-bound}
\end{equation}
For the second term of \eqref{eq:TR2_1_decompose}, similarly
\begin{equation}
\begin{aligned}
  &\|\bbm\Gamma_w\bbm\Delta_0^\top
  (\bm P_L-\bm U_*)\|_{\mathrm F}^2
  \\
  \le &
  \|\bbm\Gamma_w\|_{\op}^2
  \|\bbm\Delta_0\|_{\mathrm F}^2
  \|\bm P_L-\bm U_*\|_{\op}^2
  \\
  \le &
  \frac{2\varepsilon^2}{(1-\varepsilon)^2}
  \beta_{\mathrm{LR}}^2
  \|\bbm\Delta_0\|_{\mathrm F}^2.
\end{aligned}
\label{eq:app-M0-right-factor-perturbation-bound}
\end{equation}
Moreover, the third term of \eqref{eq:TR2_1_decompose} satisfies
\begin{equation}
  \left\|
  (\widehat{\bbm\Gamma}_w-\bbm\Gamma_w)
  \bbm\Delta_0^\top\bm P_L
  \right\|_{\mathrm F}^2
  \le
  \left\|\widehat{\bbm\Gamma}_w-\bbm\Gamma_w\right\|_{\op}^2
  \|\bbm\Delta_0\|_{\mathrm F}^2
  \|\bm P_L\|_{\op}^2
  \lesssim
  M\phi_{\max}\frac{p}{\Ttot}
  \|\bbm\Delta_0\|_{\mathrm F}^2.
\label{eq:app-M0-right-covariance-perturbation}
\end{equation}
Combining the six pieces gives
\begin{equation}
  \mathscr T_{\bm L,2}^{(1)}+\mathscr T_{\bm R,2}^{(1)}
  \le
  C\left[
    \widetilde\phi_w^2(\gamma+\gamma_*)\mu r
    +M\phi_{\max}\frac{p}{\Ttot}
    +\left(
      (\gamma+\gamma_*)\mu r
      +\frac{\varepsilon^2}{(1-\varepsilon)^2}
    \right)
    \beta_{\mathrm{LR}}^2
  \right]
  \|\bbm\Delta_0^{(l)}\|_{\mathrm F}^2.
\end{equation}

\paragraph{Step 2.3: Upper bound of $\mathscr T_{\bm L,2}^{(3)}+\mathscr T_{\bm R,2}^{(3)}$.}

$\mathscr T_{\bm L,2}^{(3)}$ and $\mathscr T_{\bm R,2}^{(3)}$ quantify the
statistical error of estimating the low-rank factors using pooled data.
Conditioned on the pooled low-rank score event in
Lemma~\ref{lem:prob-pooled-lr-score}, we have
\begin{equation}
\begin{aligned}
  &\|\widehat{\bm W}_w\bm P_R\|_{\mathrm F}^2
  +\|\widehat{\bm W}_w^\top\bm P_L\|_{\mathrm F}^2
  \\
  \le &
  \|\widehat{\bm W}_w\|_{\op}^2
  (\|\bm P_R\|_{\mathrm F}^2+\|\bm P_L\|_{\mathrm F}^2)
  \\
  \le &
  r\|\widehat{\bm W}_w\|_{\op}^2
  (\|\bm P_R\|_{\op}^2+\|\bm P_L\|_{\op}^2)
  \\
  \le &
  \frac{Cr}{(1-\varepsilon)^2}\|\widehat{\bm W}_w\|_{\op}^2
  \le C\frac{\tau^2\phi_{\max}pr}{\Ttot}.
\end{aligned}
\label{eq:app-pooled-score-detailed}
\end{equation}

\paragraph{Step 2.4: Upper bound of $\mathscr T_{\bm L,2}^{(2)}+\mathscr T_{\bm R,2}^{(2)}$.}
By the dual representation of the Frobenius norm,
\begin{equation}
  \mathscr T_{\bm L,2}^{(2)}+\mathscr T_{\bm R,2}^{(2)}
  =
  \sup_{\substack{\|\bm K_L\|_{\mathrm F}^2
  +\|\bm K_R\|_{\mathrm F}^2\le1}}
  \left[
    \sum_{i=1}^Nw_i
    \left\langle
      \bbm\Delta_i,
      \bm K_L\bm P_R^\top\widehat{\bbm\Gamma}_i
      +\bm P_L\bm K_R\widehat{\bbm\Gamma}_i
    \right\rangle
  \right]^2,
  \label{eq:app-bilinear-ML}
\end{equation}
where $\bm K_L\in\mathbb R^{p\times r}$ and
$\bm K_R\in\mathbb R^{r\times p}$. Letting $\Omega_i=\supp(\bbm\Delta_i)$, weighted Cauchy--Schwarz inequality gives
\begin{equation}
\begin{aligned}
  \mathscr T_{\bm L,2}^{(2)}+\mathscr T_{\bm R,2}^{(2)}
  \le{}&
  \left(\sum_{i=1}^Nw_i\|\bbm\Delta_i\|_{\mathrm F}^2\right)
  \\
  &\quad\times
  \sup_{\substack{\|\bm K_L\|_{\mathrm F}^2
  +\|\bm K_R\|_{\mathrm F}^2\le1}}
  \sum_{i=1}^Nw_i
  \left\|
    \mathcal P_{\Omega_i}
    \left(
      \bm K_L\bm P_R^\top\widehat{\bbm\Gamma}_i
      +\bm P_L\bm K_R\widehat{\bbm\Gamma}_i
    \right)
  \right\|_{\mathrm F}^2.
\end{aligned}
\label{eq:app-ML-weighted-CS}
\end{equation}

We now bound the supremum in \eqref{eq:app-ML-weighted-CS}, which suffices to consider
\(\|\bm K_L\|_{\mathrm F}^2+\|\bm K_R\|_{\mathrm F}^2=1\).
For such a pair, define
\(
  \bm X_{\mathrm{LR}}
  :=\bm K_L\bm P_R^\top+\bm P_L\bm K_R
\).
By Lemma~\ref{lem:orthonormal-lowrank-directions}\textup{(i)},
\(\bm X_{\mathrm{LR}}/\|\bm X_{\mathrm{LR}}\|_{\mathrm F}
\in\mathcal X_{\mathrm{LR}}\) and
\(\|\bm X_{\mathrm{LR}}\|_{\mathrm F}\le C\), with the zero case being immediate. 
Meanwhile, the FCT update ensures that
  $(\Omega_i)_{i=1}^N
  \in\mathcal D(\{s_i^\dagger\}_{i=1}^N,m^\dagger)$.

Starting from the three-way covariance decomposition
$
  \widehat{\bbm\Gamma}_i
  =\overline{\bbm\Gamma}+\bm E_i
  +(\widehat{\bbm\Gamma}_i-\bbm\Gamma_i)$, we have
\begin{equation}\label{eq:appA_het_decomp}
\begin{aligned}
  \sum_{i=1}^Nw_i
  \|\mathcal P_{\Omega_i}(\bm X_{\mathrm{LR}}\widehat{\bbm\Gamma}_i)\|_{\mathrm F}^2
  \le{}&
  3\sum_{i=1}^Nw_i
  \|\mathcal P_{\Omega_i}
  (\bm X_{\mathrm{LR}}\overline{\bbm\Gamma})\|_{\mathrm F}^2
  +3\sum_{i=1}^Nw_i
  \|\mathcal P_{\Omega_i}(\bm X_{\mathrm{LR}}\bm E_i)\|_{\mathrm F}^2\\
  &+3\sum_{i=1}^Nw_i
  \|\mathcal P_{\Omega_i}
  \{\bm X_{\mathrm{LR}}(\widehat{\bbm\Gamma}_i-\bbm\Gamma_i)\}\|_{\mathrm F}^2.
\end{aligned}
\end{equation}

For the first term of \eqref{eq:appA_het_decomp} (common covariance term), the frequency cap,
Lemma~\ref{lem:perturbation}, and
\(\|\overline{\bbm\Gamma}\|_{\op}\le\phi_{\max}\) give
\begin{equation*}
\begin{aligned}
  \sum_{i=1}^Nw_i
  \|\mathcal P_{\Omega_i}
  (\bm X_{\mathrm{LR}}\overline{\bbm\Gamma})\|_{\mathrm F}^2
  &=
  \sum_{j,k}|(\bm X_{\mathrm{LR}}\overline{\bbm\Gamma})_{jk}|^2
  \sum_{i=1}^Nw_i\mathbf 1\{(j,k)\in\Omega_i\}
  \\
  &\le
  m^\dagger w_{\max}
  \|\bm X_{\mathrm{LR}}\overline{\bbm\Gamma}\|_{\mathrm F}^2\\
  &\le m^\dagger w_{\max}
  \|\bm X_{\mathrm{LR}}\|_{\mathrm F}^2
  \|\overline{\bbm\Gamma}\|_{\op}^2\\
  &\le
  C m^\dagger w_{\max}\phi_{\max}^2.
\end{aligned}
\end{equation*}

For the second term of \eqref{eq:appA_het_decomp} (centered
covariance discrepancy term), since
\(\bm X_{\mathrm{LR}}/\|\bm X_{\mathrm{LR}}\|_{\mathrm F}
\in\mathcal X_{\mathrm{LR}}\), Frobenius duality,
\eqref{eq:directional_heterogeneity_levels}, and
Assumption~\ref{ass:cov_compat} yield
\begin{equation*}
  \sum_{i=1}^Nw_i
  \|\mathcal P_{\Omega_i}(\bm X_{\mathrm{LR}}\bm E_i)\|_{\mathrm F}^2
  \le
  w_{\max}
  \sup_{\substack{
    \supp(\bm H_i)\subseteq\Omega_i\\
    \sum_i\|\bm H_i\|_{\mathrm F}^2=1}}
  \left|
    \sum_{i=1}^N\langle\bm H_i,\bm X_{\mathrm{LR}}\bm E_i\rangle
  \right|^2
  \le
  C m^\dagger w_{\max}\Lambda_{\mathrm{LR}}^2.
\end{equation*}

For the third term of \eqref{eq:appA_het_decomp} (empirical fluctuation
term), since \(\bm X_{\mathrm{LR}}/\|\bm X_{\mathrm{LR}}\|_{\mathrm F}
\in\mathcal X_{\mathrm{LR}}\),
Lemma~\ref{lem:prob-mixed-geometry}\textup{(i)} and homogeneity give
\begin{equation*}
  \sum_{i=1}^Nw_i
  \|\mathcal P_{\Omega_i}
  \{\bm X_{\mathrm{LR}}(\widehat{\bbm\Gamma}_i-\bbm\Gamma_i)\}\|_{\mathrm F}^2
  \le
  C M^2
  \left\{
    \frac{pr}{\Ttot}
    +\frac{\log p}{\Ttot}\sum_{i=1}^Ns_i^\dagger
  \right\}
\end{equation*}
with high probability.

Combining these three bounds, taking the supremum, and substituting into
\eqref{eq:app-ML-weighted-CS} gives
\begin{equation}
\begin{aligned}
  \mathscr T_{\bm L,2}^{(2)}+\mathscr T_{\bm R,2}^{(2)}
  \le{}&
  C\left[
    m^\dagger w_{\max}
    \left(\phi_{\max}^2+\Lambda_{\mathrm{LR}}^2\right)
    +M^2
    \left\{
      \frac{pr}{\Ttot}
      +\frac{\log p}{\Ttot}\sum_{i=1}^Ns_i^\dagger
    \right\}
  \right]
  \sum_{i=1}^Nw_i\|\bbm\Delta_i\|_{\mathrm F}^2.
\end{aligned}
  \label{eq:app-ML-revised-detailed-bound}
\end{equation}

Therefore, combining Steps 2.1--2.4, we obtain
\begin{equation}
\begin{aligned}
  \mathscr T_{\bm L,2}+\mathscr T_{\bm R,2}
  \le{}&
  C\frac{M\phi_{\max}p}{\Ttot}
  \left\|(\bm L,\bm R)\right\|_{\mathrm{align},*}^2
  \\
  &+C\Bigg[
    \widetilde\phi_w^2(\gamma+\gamma_*)\mu r
    +M\phi_{\max}\frac{p}{\Ttot}
  \\
  &\hspace{1.8cm}
    +\left\{
      (\gamma+\gamma_*)\mu r
      +\frac{\varepsilon^2}{(1-\varepsilon)^2}
    \right\}\beta_{\mathrm{LR}}^2
  \Bigg]
  \|\bbm\Delta_0\|_{\mathrm F}^2
  \\
  &+C\Bigg[
    m^\dagger w_{\max}
    \left(\phi_{\max}^2+\Lambda_{\mathrm{LR}}^2\right)
    +M^2
    \left\{
      \frac{pr}{\Ttot}
      +\frac{\log p}{\Ttot}\sum_{i=1}^Ns_i^\dagger
    \right\}
  \Bigg]
  \sum_{i=1}^Nw_i\|\bbm\Delta_i\|_{\mathrm F}^2
  \\
  &+C\frac{\tau^2\phi_{\max}pr}{\Ttot}.
\end{aligned}
  \label{eq:TLTR2_lowrank}
\end{equation}
Finally, take $\zeta_1=c\alpha_{\mathrm{LR}}/(2\beta_{\mathrm{LR}})$.  By
$(1+x/2)(1-x)\le 1-x/2$ for every $x\ge0$, we have
\begin{equation}\label{TLTR1_lowrank_zeta}
  (1+\zeta_1)
  \left(\mathscr T_{\bm L,1}+\mathscr T_{\bm R,1}\right)
  \le
  \left(1-\frac{c}{2}\frac{\alpha_{\mathrm{LR}}}{\beta_{\mathrm{LR}}}\right)
  \left\|(\bm L,\bm R)\right\|_{\mathrm{align},*}^2.
\end{equation}
With the chosen step size
$\eta_{\mathrm{LR}}=c/\beta_{\mathrm{LR}}$ and
$\alpha_{\mathrm{LR}}\le\beta_{\mathrm{LR}}$,
\begin{equation}
  (1+\zeta_1^{-1})\eta_{\mathrm{LR}}^2
  \le \frac{C}{\alpha_{\mathrm{LR}}\beta_{\mathrm{LR}}}.
  \label{eq:app-lowrank-residual-prefactor}
\end{equation}
Combining \eqref{eq:TLTR2_lowrank} and \eqref{TLTR1_lowrank_zeta}, and
renaming $c/2$ as $C$, gives the fully explicit recursion
\begin{equation}
\begin{aligned}
  &\left\|(\bm L^{(l+1)},\bm R^{(l+1)})\right\|_{\mathrm{align},*}^2
  \\
  &\le{}
  \left(
    1-C\frac{\alpha_{\mathrm{LR}}}{\beta_{\mathrm{LR}}}
    +C\frac{M\phi_{\max}p}
    {\alpha_{\mathrm{LR}}\beta_{\mathrm{LR}}\Ttot}
  \right)
  \left\|(\bm L,\bm R)\right\|_{\mathrm{align},*}^2
  \\
  &+\frac{C}{\alpha_{\mathrm{LR}}\beta_{\mathrm{LR}}}
  \Bigg[
    \widetilde\phi_w^2(\gamma+\gamma_*)\mu r
    +M\phi_{\max}\frac{p}{\Ttot}
  \\
  &\hspace{2.3cm}
    +\left\{
      (\gamma+\gamma_*)\mu r
      +\frac{\varepsilon^2}{(1-\varepsilon)^2}
    \right\}\beta_{\mathrm{LR}}^2
  \Bigg]
  \|\bbm\Delta_0\|_{\mathrm F}^2
  \\
  &+\frac{C}{\alpha_{\mathrm{LR}}\beta_{\mathrm{LR}}}
  \Bigg[
    m^\dagger w_{\max}
    \left(\phi_{\max}^2+\Lambda_{\mathrm{LR}}^2\right)
    +M^2
    \left\{
      \frac{pr}{\Ttot}
      +\frac{\log p}{\Ttot}\sum_{i=1}^Ns_i^\dagger
    \right\}
  \Bigg]
  \sum_{i=1}^Nw_i\|\bbm\Delta_i\|_{\mathrm F}^2
  \\
  &+C\frac{\tau^2\phi_{\max}pr}
  {\alpha_{\mathrm{LR}}\beta_{\mathrm{LR}}\Ttot}.
\end{aligned}
\label{eq:lowrank_final0}
\end{equation}
Under the condition that
\[
  \frac{M\phi_{\max}p}
  {\alpha_{\mathrm{LR}}\beta_{\mathrm{LR}}\Ttot}
  \leq C\frac{\alpha_{\mathrm{LR}}}{\beta_{\mathrm{LR}}}
\]
for a sufficiently small constant $C$, this term is absorbed in the first term of \eqref{eq:lowrank_final0}, hence
\begin{equation}
\begin{aligned}
  &\left\|(\bm L^{(l+1)},\bm R^{(l+1)})\right\|_{\mathrm{align},*}^2
  \\
  &\le{}
  \left(
    1-C\frac{\alpha_{\mathrm{LR}}}{\beta_{\mathrm{LR}}}
  \right)
  \left\|(\bm L,\bm R)\right\|_{\mathrm{align},*}^2
  \\
  &+\frac{C}{\alpha_{\mathrm{LR}}\beta_{\mathrm{LR}}}
  \Bigg[
    \widetilde\phi_w^2(\gamma+\gamma_*)\mu r
    +M\phi_{\max}\frac{p}{\Ttot}
  \\
  &\hspace{2.3cm}
    +\left\{
      (\gamma+\gamma_*)\mu r
      +\frac{\varepsilon^2}{(1-\varepsilon)^2}
    \right\}\beta_{\mathrm{LR}}^2
  \Bigg]
  \|\bbm\Delta_0\|_{\mathrm F}^2
  \\
  &+\frac{C}{\alpha_{\mathrm{LR}}\beta_{\mathrm{LR}}}
  \Bigg[
    m^\dagger w_{\max}
    \left(\phi_{\max}^2+\Lambda_{\mathrm{LR}}^2\right)
    +M^2
    \left\{
      \frac{pr}{\Ttot}
      +\frac{\log p}{\Ttot}\sum_{i=1}^Ns_i^\dagger
    \right\}
  \Bigg]
  \sum_{i=1}^Nw_i\|\bbm\Delta_i\|_{\mathrm F}^2
  \\
  &+C\frac{\tau^2\phi_{\max}pr}
  {\alpha_{\mathrm{LR}}\beta_{\mathrm{LR}}\Ttot}.
\end{aligned}
\label{eq:lowrank_final}
\end{equation}

\subsection{Global sparse update}
\label{subsec:app-global-sparse-update}

We still suppress the superscript $l$ on the current iterate.  The weighted gradient
proposal is
\begin{equation}
  \widetilde{\bm S}_0^{(l+1)}
  =\bm S_0-\eta_0\sum_{i=1}^Nw_i
  \nabla_{\bm S_0}F_i(\bm L,\bm R,\bm S_0,\bm S_i).
  \label{eq:global-sparse-gradient-proposal}
\end{equation}
Then, the server applies the capacity-aware greedy thresholding operator
described in Section~\ref{sec:perfect_algorithm} and updates
\begin{equation}
  \bm S_0^{(l+1)}
  =\mathcal G_{s_0,\gamma}
  (\widetilde{\bm S}_0^{(l+1)}).
  \label{eq:global-sparse-capacity-greedy}
\end{equation}
Let the enlarged support be
\begin{equation}
  \Omega_0
  :=\supp(\bm S_0)
  \cup\supp(\bm S_0^{(l+1)})
  \cup\supp(\bm S_{0,*}).
  \label{eq:global-sparse-analysis-support}
\end{equation}
Then,
\begin{equation}
\begin{gathered}
  |\Omega_0|\le2s_0+s_{0,*}=s_0^\dagger,
  \\
  \max_a|\{b:(a,b)\in\Omega_0\}|
  \vee
  \max_b|\{a:(a,b)\in\Omega_0\}|
  \le(2\gamma+\gamma_*)p.
\end{gathered}
  \label{eq:global-sparse-one-step-support-geometry}
\end{equation}
Writing \(\mathcal C_0^\dagger:=\mathcal C(s_0^\dagger,\gamma^\dagger)\)
with \(\gamma^\dagger\) from main-text
Section~\ref{subsec:covariance_compatibility}, we have
\(\Omega_0\in\mathcal C_0^\dagger\).
The fitting conditions in Theorem~\ref{thm:app-main-full} give
\(\mathcal C_0^\dagger\subseteq\mathcal C_0'\), so the curvature bounds
\(\alpha_0,\beta_0\) in
\eqref{eq:app-full-population-curvature-coefficients} apply to matrices
supported on \(\Omega_0\).
For any $\zeta_0>0$, we have
\begin{equation}\label{eq:S0_decomp}
\begin{aligned}
  \|\bbm\Delta_0^{(l+1)}\|_{\mathrm F}^2
  \le{}&
  (1+\rho_0)
  \left\|
    \mathcal P_{\Omega_0}
    \left\{
      \bbm\Delta_0-\eta_0
      \nabla_{\bm S_0}F
      (\bm L,\bm R,\bm S_0,\{\bm S_i\}_{i=1}^N)
    \right\}
  \right\|_{\mathrm F}^2
  \\
  ={}&
  (1+\rho_0)
  \Bigg\|
    \mathcal P_{\Omega_0}
    \Bigg[
      \bbm\Delta_0
      -\eta_0
      \mathbb E\!\left[
        \nabla_{\bm S_0}F
        (\bm L_*,\bm R_*,\bm S_0,\{\bm S_{i,*}\}_{i=1}^N)
      \right]
      \\
  &
      -\eta_0
      \Big\{
          \nabla_{\bm S_0}F
          (\bm L,\bm R,\bm S_0,\{\bm S_i\}_{i=1}^N)
          \quad-\mathbb E\!\left[
            \nabla_{\bm S_0}F
            (\bm L_*,\bm R_*,\bm S_0,\{\bm S_{i,*}\}_{i=1}^N)
          \right]
      \Big\}
    \Bigg]
  \Bigg\|_{\mathrm F}^2
  \\
  \le{}&
  (1+\rho_0)(1+\zeta_0)
  \left\|
    \mathcal P_{\Omega_0}
    (\bbm\Delta_0-\eta_0\bbm\Delta_0\bbm\Gamma_w)
  \right\|_{\mathrm F}^2
  \\
  &+(1+\rho_0)(1+\zeta_0^{-1})\eta_0^2
  \Bigg\|
    \mathcal P_{\Omega_0}
    \Bigg\{
      \bbm\Delta_{\bm B}\bbm\Gamma_w
      +\bbm\Delta_0
      (\widehat{\bbm\Gamma}_w-\bbm\Gamma_w)
      +\bbm\Delta_{\bm B}
      (\widehat{\bbm\Gamma}_w-\bbm\Gamma_w)
      \\
  &\hspace{8.0cm}
      +\sum_{i=1}^Nw_i\bbm\Delta_i\widehat{\bbm\Gamma}_i
      -\widehat{\bm W}_w
    \Bigg\}
  \Bigg\|_{\mathrm F}^2,
\end{aligned}
\end{equation}
where the first inequality follows from
Lemma~\ref{lem:G_stability} and 
$\supp(\bm S_0^{(l+1)})\cup\supp(\bm S_{0,*})\subseteq\Omega_0$. By the choice
that $s_0\gtrsim \beta_0^2\alpha_0^{-2}s_{0,*}$ and
$\gamma\gtrsim \beta_0^2\alpha_0^{-2}\gamma_*$, the coefficient $\rho_0$
in Lemma~\ref{lem:G_stability} is bounded by
\[
  \rho_0\leq C\frac{\alpha_0}{\beta_0}.
\]

\paragraph{Step 1: Contraction along the population ideal direction.}
For the first term of \eqref{eq:S0_decomp}, since $\supp(\bbm\Delta_0)\subseteq\Omega_0$, direct expansion gives
\begin{equation}
\begin{aligned}
  &\left\|
    \mathcal P_{\Omega_0}
    (\bbm\Delta_0-\eta_0\bbm\Delta_0\bbm\Gamma_w)
  \right\|_{\mathrm F}^2
  \\
  &=
  \|\bbm\Delta_0\|_{\mathrm F}^2
  +\eta_0^2
  \left\|
    \mathcal P_{\Omega_0}
    (\bbm\Delta_0\bbm\Gamma_w)
  \right\|_{\mathrm F}^2
  -2\eta_0
  \inner{\bbm\Delta_0\bbm\Gamma_w}{\bbm\Delta_0}.
\end{aligned}
  \label{eq:global-sparse-population-step-expansion}
\end{equation}
Part~\textup{(ii)} of Lemma~\ref{lem:det-population-rcg} applies with
$\bm H=\bbm\Delta_0$ and $\Omega_0$.
Substituting \eqref{eq:global-sparse-population-RCG} into
\eqref{eq:global-sparse-population-step-expansion}, and taking
$\eta_0\asymp \beta_0^{-1}$, yields
\begin{equation}
  \left\|
    \mathcal P_{\Omega_0}
    (\bbm\Delta_0-\eta_0\bbm\Delta_0\bbm\Gamma_w)
  \right\|_{\mathrm F}^2
  \le
  (1-c\alpha_0\beta_0^{-1})
  \|\bbm\Delta_0\|_{\mathrm F}^2.
  \label{eq:global-sparse-population-contraction}
\end{equation}
Choose $\zeta_0=c\alpha_0\beta_0^{-1}/2$, the first of \eqref{eq:S0_decomp} is therefore bounded by
\begin{equation}
  (1+\rho_0)(1+\zeta_0)
  \left\|
    \mathcal P_{\Omega_0}
    (\bbm\Delta_0-\eta_0\bbm\Delta_0\bbm\Gamma_w)
  \right\|_{\mathrm F}^2
  \le
  \left(1-c\frac{\alpha_0}{\beta_0}\right)
  \|\bbm\Delta_0\|_{\mathrm F}^2.
  \label{eq:global-sparse-amplified-population-contraction}
\end{equation}

\paragraph{Step 2: Upper bound of the gradient discrepancy.}
Expanding the second term in \eqref{eq:S0_decomp} gives
\begin{equation}
\begin{aligned}
  &\left\|
    \mathcal P_{\Omega_0}
    \left\{
      \bbm\Delta_{\bm B}\bbm\Gamma_w
      +\bbm\Delta_0
      (\widehat{\bbm\Gamma}_w-\bbm\Gamma_w)
      +\bbm\Delta_{\bm B}
      (\widehat{\bbm\Gamma}_w-\bbm\Gamma_w)
      +\sum_{i=1}^Nw_i\bbm\Delta_i\widehat{\bbm\Gamma}_i
      -\widehat{\bm W}_w
    \right\}
  \right\|_{\mathrm F}^2
  \\
  \le{}&
  \underbrace{
    5\left\|
      \mathcal P_{\Omega_0}
      (\bbm\Delta_{\bm B}\bbm\Gamma_w)
    \right\|_{\mathrm F}^2
  }_{\mathscr{S}_1}
  +\underbrace{
    5\left\|
      \mathcal P_{\Omega_0}
      \left\{
        \bbm\Delta_0
        (\widehat{\bbm\Gamma}_w-\bbm\Gamma_w)
      \right\}
    \right\|_{\mathrm F}^2
  }_{\mathscr{S}_2}
  \\
  &+\underbrace{
    5\left\|
      \mathcal P_{\Omega_0}
      \left\{
        \bbm\Delta_{\bm B}
        (\widehat{\bbm\Gamma}_w-\bbm\Gamma_w)
      \right\}
    \right\|_{\mathrm F}^2
  }_{\mathscr{S}_3}
  \\
  &+\underbrace{
    5\left\|
      \mathcal P_{\Omega_0}
      \left(
        \sum_{i=1}^Nw_i\bbm\Delta_i\widehat{\bbm\Gamma}_i
      \right)
    \right\|_{\mathrm F}^2
  }_{\mathscr{S}_4}
  +\underbrace{
    5\left\|
      \mathcal P_{\Omega_0}(\widehat{\bm W}_w)
    \right\|_{\mathrm F}^2
  }_{\mathscr{S}_5}.
\end{aligned}
  \label{eq:global-sparse-residual-decomposition}
\end{equation}
We bound $\mathscr{S}_1,\ldots,\mathscr{S}_5$ separately.

\paragraph{Step 2.1: Upper bound of $\mathscr{S}_1$.}
For any $\bm A,\bm C\in\mathbb R^{p\times r}$, the row and column caps in
\eqref{eq:global-sparse-one-step-support-geometry} imply
\begin{equation}
\begin{aligned}
  \|\mathcal P_{\Omega_0}(\bm A\bm C^\top)\|_{\mathrm F}^2
  &=
  \sum_{a=1}^p
  \sum_{b:(a,b)\in\Omega_0}
  \left\langle \bm A_{a,\cdot},\bm C_{b,\cdot}\right\rangle^2
  \\
  &\le
  \sum_{a=1}^p
  \|\bm A_{a,\cdot}\|_2^2
  \sum_{b:(a,b)\in\Omega_0}
  \|\bm C_{b,\cdot}\|_2^2
  \\
  &\le
  \max_{a\in[p]}|\{b:(a,b)\in\Omega_0\}|
  \|\bm C\|_{2,\infty}^2\|\bm A\|_{\mathrm F}^2
  \\
  &\le
  (2\gamma+\gamma_*)p
  \|\bm C\|_{2,\infty}^2\|\bm A\|_{\mathrm F}^2,
\end{aligned}
  \label{eq:global-sparse-capped-product-bound}
\end{equation}
Similarly, 
\[
  \|\mathcal P_{\Omega_0}(\bm A\bm C^\top)\|_{\mathrm F}^2
  \le
  (2\gamma+\gamma_*)p
  \|\bm A\|_{2,\infty}^2\|\bm C\|_{\mathrm F}^2.
\]
Decomposing $\bbm\Delta_{\bm B}=(\bm L\bm Q-\bm L_*)(\bm R\bm Q^{-\top})^\top+\bm L_*(\bm R\bm Q^{-\top}-\bm R_*)^\top$, we have
\[
  \mathscr{S}_1
  \le
  10\left\|
    \mathcal P_{\Omega_0}
    \left\{
      (\bm L\bm Q-\bm L_*)
      (\bm R\bm Q^{-\top})^\top\bbm\Gamma_w
    \right\}
  \right\|_{\mathrm F}^2
  +10\left\|
    \mathcal P_{\Omega_0}
    \left\{
      \bm L_*
      (\bm R\bm Q^{-\top}-\bm R_*)^\top\bbm\Gamma_w
    \right\}
  \right\|_{\mathrm F}^2.
\]
For the first term,  we obtain
\begin{equation}
\begin{aligned}
  &\left\|
    \mathcal P_{\Omega_0}
    \left\{
      (\bm L\bm Q-\bm L_*)
      (\bm R\bm Q^{-\top})^\top\bbm\Gamma_w
    \right\}
  \right\|_{\mathrm F}^2
  \\
  \le{}&
  2\left\|
    \mathcal P_{\Omega_0}
    \left\{
      (\bm L\bm Q-\bm L_*)\bbm\Sigma_*^{1/2}
      (\bbm\Gamma_w\bm V_*)^\top
    \right\}
  \right\|_{\mathrm F}^2
  \\
  &+2\left\|
    (\bm L\bm Q-\bm L_*)\bbm\Sigma_*^{1/2}
    \left\{
      \bbm\Gamma_w
      (\bm R\bm Q^{-\top}-\bm R_*)\bbm\Sigma_*^{-1/2}
    \right\}^{\!\top}
  \right\|_{\mathrm F}^2
  \\
  \le{}&
  C\left\{
    \widetilde\phi_w^2(2\gamma+\gamma_*)\mu r
    +\frac{\varepsilon^2}{(1-\varepsilon)^2}\beta_{\mathrm{LR}}^2
  \right\}
  \left\|
    (\bm L\bm Q-\bm L_*)\bbm\Sigma_*^{1/2}
  \right\|_{\mathrm F}^2.
\end{aligned}
  \label{eq:global-sparse-aligned-first-piece}
\end{equation}
For the second term,
\begin{equation}
\begin{aligned}
  &\left\|
    \mathcal P_{\Omega_0}
    \left\{
      \bm L_*
      (\bm R\bm Q^{-\top}-\bm R_*)^\top\bbm\Gamma_w
    \right\}
  \right\|_{\mathrm F}^2
  \\
  &=
  \left\|
    \mathcal P_{\Omega_0}
    \left[
      \bm U_*
      \left\{
        \bbm\Gamma_w
        (\bm R\bm Q^{-\top}-\bm R_*)\bbm\Sigma_*^{1/2}
      \right\}^{\!\top}
    \right]
  \right\|_{\mathrm F}^2
  \\
  &\le
  (2\gamma+\gamma_*)\mu r\,\beta_{\mathrm{LR}}^2
  \left\|
    (\bm R\bm Q^{-\top}-\bm R_*)\bbm\Sigma_*^{1/2}
  \right\|_{\mathrm F}^2.
\end{aligned}
  \label{eq:global-sparse-aligned-second-piece}
\end{equation}
Combining the aligned pieces gives
\begin{equation}
  \left\|
    \mathcal P_{\Omega_0}
    (\bbm\Delta_{\bm B}\bbm\Gamma_w)
  \right\|_{\mathrm F}^2
  \le{}
  C\Bigg[
    \widetilde\phi_w^2(2\gamma+\gamma_*)\mu r
    +\left\{
      (2\gamma+\gamma_*)\mu r
      +\frac{\varepsilon^2}{(1-\varepsilon)^2}
    \right\}\beta_{\mathrm{LR}}^2
  \Bigg]
  \left\|(\bm L^{(l)},\bm R^{(l)})\right\|_{\mathrm{align},*}^2.
  \label{eq:global-sparse-projected-lowrank-bound}
\end{equation}

\paragraph{Step 2.2: Upper bounds of $\mathscr{S}_2$ and $\mathscr{S}_3$.}
Since $\Omega_0$ belongs to $\mathcal C_0^\dagger$,
Lemma~\ref{lem:prob-sparse-geometry}\textup{(i)} gives, for
$\mathscr{S}_2$,
\begin{equation}
  \left\|
    \mathcal P_{\Omega_0}
    \left\{
      \bbm\Delta_0
      (\widehat{\bbm\Gamma}_w-\bbm\Gamma_w)
    \right\}
  \right\|_{\mathrm F}^2
  \le
  C M\phi_{\max}
  \frac{s_0^\dagger\log p}{\Ttot}
  \|\bbm\Delta_0\|_{\mathrm F}^2.
  \label{eq:global-sparse-centered-shared-projection}
\end{equation}
For $\mathscr{S}_3$,
Lemmas~\ref{lem:prob-M0-geometry} and~\ref{lem:metric_equivalence} give
\begin{equation}
  \left\|
    \mathcal P_{\Omega_0}
    \left\{
      \bbm\Delta_{\bm B}
      (\widehat{\bbm\Gamma}_w-\bbm\Gamma_w)
    \right\}
  \right\|_{\mathrm F}^2
  \le
  \|\bbm\Delta_{\bm B}\|_{\mathrm F}^2
  \|\widehat{\bbm\Gamma}_w-\bbm\Gamma_w\|_{\op}^2
  \le
  C M\phi_{\max}\frac{p}{\Ttot}
  \left\|(\bm L^{(l)},\bm R^{(l)})\right\|_{\mathrm{align},*}^2.
  \label{eq:global-sparse-centered-lowrank-projection}
\end{equation}

\paragraph{Step 2.3: Upper bound of $\mathscr{S}_4$.}
Let \(\Omega_i:=\supp(\bbm\Delta_i)\).  The FCT update ensures
$
  (\Omega_i)_{i=1}^N
  \in\mathcal D(\{s_i^\dagger\}_{i=1}^N,m^\dagger)$.
Using Frobenius duality, and
weighted Cauchy--Schwarz inequality, we first obtain
\begin{equation}
\begin{aligned}
  \left\|
    \mathcal P_{\Omega_0}
    \left(
      \sum_{i=1}^Nw_i\bbm\Delta_i\widehat{\bbm\Gamma}_i
    \right)
  \right\|_{\mathrm F}^2
  &=
  \sup_{\substack{
    \supp(\bm X_S)\subseteq\Omega_0\\
    \|\bm X_S\|_{\mathrm F}=1}}
  \left[
    \sum_{i=1}^Nw_i
    \left\langle
      \bbm\Delta_i,
      \mathcal P_{\Omega_i}
      (\bm X_S\widehat{\bbm\Gamma}_i)
    \right\rangle
  \right]^2
  \\
  &\le
  \left(\sum_{i=1}^Nw_i\|\bbm\Delta_i\|_{\mathrm F}^2\right)
  \sup_{\substack{
    \supp(\bm X_S)\subseteq\Omega_0\\
    \|\bm X_S\|_{\mathrm F}=1}}
  \sum_{i=1}^Nw_i
  \left\|
    \mathcal P_{\Omega_i}
    (\bm X_S\widehat{\bbm\Gamma}_i)
  \right\|_{\mathrm F}^2.
\end{aligned}
  \label{eq:app-SL-weighted-CS}
\end{equation}
By \eqref{eq:global-sparse-one-step-support-geometry}, every \(\bm X_S\)
in this supremum belongs to \(\mathcal X_S\).

For a fixed \(\bm X_S\), the decomposition
\(
  \widehat{\bbm\Gamma}_i
  =\overline{\bbm\Gamma}+\bm E_i
  +(\widehat{\bbm\Gamma}_i-\bbm\Gamma_i)
\)
gives
\begin{equation}
\begin{aligned}
  \sum_{i=1}^Nw_i
  \|\mathcal P_{\Omega_i}
  (\bm X_S\widehat{\bbm\Gamma}_i)\|_{\mathrm F}^2
  \le{}&
  3\sum_{i=1}^Nw_i
  \|\mathcal P_{\Omega_i}
  (\bm X_S\overline{\bbm\Gamma})\|_{\mathrm F}^2
  +3\sum_{i=1}^Nw_i
  \|\mathcal P_{\Omega_i}(\bm X_S\bm E_i)\|_{\mathrm F}^2
  \\
  &+3\sum_{i=1}^Nw_i
  \|\mathcal P_{\Omega_i}
  \{\bm X_S(\widehat{\bbm\Gamma}_i-\bbm\Gamma_i)\}\|_{\mathrm F}^2.
\end{aligned}
  \label{eq:appA-sparse-het-decomp}
\end{equation}

For the first term of \eqref{eq:appA-sparse-het-decomp}, the frequency
cap and \(\|\overline{\bbm\Gamma}\|_{\op}\le\phi_{\max}\) give
\begin{equation*}
\begin{aligned}
  \sum_{i=1}^Nw_i
  \|\mathcal P_{\Omega_i}
  (\bm X_S\overline{\bbm\Gamma})\|_{\mathrm F}^2
  &=
  \sum_{j,k}|(\bm X_S\overline{\bbm\Gamma})_{jk}|^2
  \sum_{i=1}^Nw_i\mathbf 1\{(j,k)\in\Omega_i\}
  \\
  &\le
  m^\dagger w_{\max}
  \|\bm X_S\overline{\bbm\Gamma}\|_{\mathrm F}^2
  \\
  &\le
  m^\dagger w_{\max}\phi_{\max}^2.
\end{aligned}
\end{equation*}

For the second term of \eqref{eq:appA-sparse-het-decomp}, since
\(\bm X_S\in\mathcal X_S\), Frobenius duality,
\eqref{eq:directional_heterogeneity_levels}, and
Assumption~\ref{ass:cov_compat} yield
\begin{equation*}
  \sum_{i=1}^Nw_i
  \|\mathcal P_{\Omega_i}(\bm X_S\bm E_i)\|_{\mathrm F}^2
  \le
  w_{\max}
  \sup_{\substack{
    \supp(\bm H_i)\subseteq\Omega_i\\
    \sum_i\|\bm H_i\|_{\mathrm F}^2=1}}
  \left|
    \sum_{i=1}^N
    \langle\bm H_i,\bm X_S\bm E_i\rangle
  \right|^2
  \le
  m^\dagger w_{\max}\Lambda_{\mathrm{S}}^2.
\end{equation*}

For the third term of \eqref{eq:appA-sparse-het-decomp}, since
$\bm X_S\in\mathcal X_S$,
Lemma~\ref{lem:prob-mixed-geometry}\textup{(ii)} gives
\begin{equation*}
  \sum_{i=1}^Nw_i
  \|\mathcal P_{\Omega_i}
  \{\bm X_S(\widehat{\bbm\Gamma}_i-\bbm\Gamma_i)\}\|_{\mathrm F}^2
  \le
  C M^2
  \left\{
    \frac{s_0^\dagger\log p}{\Ttot}
    +\frac{\log p}{\Ttot}\sum_{i=1}^Ns_i^\dagger
  \right\}
\end{equation*}
with high probability.

Combining these three bounds, taking the supremum in
\eqref{eq:app-SL-weighted-CS}, and recalling the definition of
\(\mathscr S_4\) gives
\begin{equation}
\begin{aligned}
  \mathscr S_4
  \le{}&
  C\Bigg[
    m^\dagger w_{\max}
    \left(\phi_{\max}^2+\Lambda_{\mathrm{S}}^2\right)
    +M^2
    \left\{
      \frac{s_0^\dagger\log p}{\Ttot}
      +\frac{\log p}{\Ttot}\sum_{i=1}^Ns_i^\dagger
    \right\}
  \Bigg]
  \sum_{i=1}^Nw_i\|\bbm\Delta_i\|_{\mathrm F}^2.
\end{aligned}
  \label{eq:global-sparse-local-quadratic-adjoint}
\end{equation}

\paragraph{Step 2.4: Upper bound of $\mathscr{S}_5$.}
Since $|\Omega_0|\le s_0^\dagger$, Lemma~\ref{lem:prob-sparse-scores}\textup{(i)} gives
\begin{equation}
  \left\|
    \mathcal P_{\Omega_0}(\widehat{\bm W}_w)
  \right\|_{\mathrm F}^2
  \le
  C\frac{\tau^2\phi_{\max}
  s_0^\dagger\log p}{\Ttot}.
  \label{eq:global-sparse-score-event-bound}
\end{equation}
This is the pooled $\Ttot$-sample statistical error for the shared sparse
component.

\paragraph{Steps 2.5: Combine the bounds.}
Applying \eqref{eq:global-sparse-projected-lowrank-bound},
\eqref{eq:global-sparse-centered-shared-projection},
\eqref{eq:global-sparse-centered-lowrank-projection},
\eqref{eq:global-sparse-local-quadratic-adjoint},
and \eqref{eq:global-sparse-score-event-bound} to
$\mathscr S_1,\ldots,\mathscr S_5$, respectively, in
\eqref{eq:global-sparse-residual-decomposition} gives
\begin{equation}
\begin{aligned}
  &\left\|
    \mathcal P_{\Omega_0}
    \left\{
      \nabla_{\bm S_0}F-\bbm\Delta_0\bbm\Gamma_w
    \right\}
  \right\|_{\mathrm F}^2
  \\
  \le{}&
  C\Bigg[
    \widetilde\phi_w^2(2\gamma+\gamma_*)\mu r
    +M\phi_{\max}\frac{p}{\Ttot}
    +\left\{
      (2\gamma+\gamma_*)\mu r
      +\frac{\varepsilon^2}{(1-\varepsilon)^2}
    \right\}\beta_{\mathrm{LR}}^2
  \Bigg]
  \left\|(\bm L,\bm R)\right\|_{\mathrm{align},*}^2
  \\
  &+C M\phi_{\max}
  \frac{s_0^\dagger\log p}{\Ttot}
  \|\bbm\Delta_0\|_{\mathrm F}^2
  \\
  &+C\Bigg[
    m^\dagger w_{\max}
    \left(\phi_{\max}^2+\Lambda_{\mathrm{S}}^2\right)
    +M^2
    \left\{
      \frac{s_0^\dagger\log p}{\Ttot}
      +\frac{\log p}{\Ttot}\sum_{i=1}^Ns_i^\dagger
    \right\}
  \Bigg]
  \sum_{i=1}^Nw_i\|\bbm\Delta_i\|_{\mathrm F}^2
  \\
  &+C\frac{\tau^2\phi_{\max}
  s_0^\dagger\log p}{\Ttot}.
\end{aligned}
  \label{eq:global-sparse-residual-bound}
\end{equation}
Finally, combining
\eqref{eq:S0_decomp},
\eqref{eq:global-sparse-amplified-population-contraction},
and \eqref{eq:global-sparse-residual-bound} with $\zeta_0=c\alpha_0\beta_0^{-1}/2$, $\eta_0=c\beta_0^{-1}$, and 
\[
  \frac{M\phi_{\max}s_0^\dagger\log p}
  {\alpha_0\beta_0\Ttot}
  \leq c\frac{\alpha_0}{\beta_0},
\]
we have
\begin{equation}
\begin{aligned}
  &\|\bbm\Delta_0^{(l+1)}\|_{\mathrm F}^2
  \\
  &\le{}
  \left(1-c\frac{\alpha_0}{\beta_0}\right)
  \|\bbm\Delta_0\|_{\mathrm F}^2
  \\
  &+ \frac{C}{\alpha_0\beta_0}
  \Bigg[
    \widetilde\phi_w^2(2\gamma+\gamma_*)\mu r
    +M\phi_{\max}\frac{p}{\Ttot}
    +\left\{
      (2\gamma+\gamma_*)\mu r
      +\frac{\varepsilon^2}{(1-\varepsilon)^2}
    \right\}\beta_{\mathrm{LR}}^2
  \Bigg]
  \left\|(\bm L,\bm R)\right\|_{\mathrm{align},*}^2
  \\
  &+ \frac{C}{\alpha_0\beta_0}
  \Bigg[
    m^\dagger w_{\max}
    \left(\phi_{\max}^2+\Lambda_{\mathrm{S}}^2\right)
    +M^2
    \left\{
      \frac{s_0^\dagger\log p}{\Ttot}
      +\frac{\log p}{\Ttot}\sum_{i=1}^Ns_i^\dagger
    \right\}
  \Bigg]
  \sum_{i=1}^Nw_i\|\bbm\Delta_i\|_{\mathrm F}^2
  \\
  &+ \frac{C\tau^2\phi_{\max}
  s_0^\dagger\log p}{\alpha_0\beta_0\Ttot}.
\end{aligned}
  \label{eq:app-shared-step}
\end{equation}
\subsection{Local sparse update}
\label{subsec:app-local-sparse-update}

Recall that at round $l$, client $i$ first computes the local proposal
\begin{equation}
  \widetilde{\bm S}_i^{(l+1)}
  =\bm S_i-\eta_{\mathrm P}
  \nabla_{\bm S_i}F_i(\bm L,\bm R,\bm S_0,\bm S_i),
  \label{eq:app-local-sparse-proposal}
\end{equation}
Under the theorem choice, $\eta_{\mathrm P}\asymp\beta_\mathrm{P}^{-1}$.
The server applies the weighted FCT step in
\eqref{eq:FCT} to
$\{\widetilde{\bm S}_i^{(l+1)}\}_{i=1}^N$, and each client obtains $\{\bm S_i^{(l+1)}\}_{i=1}^N$.
Define
\begin{equation}
  \Omega_i
  :=\supp(\bm S_i^{(l)})
  \cup\supp(\bm S_i^{(l+1)})
  \cup\supp(\bm S_{i,*}),
  \label{eq:app-local-witness-support}
\end{equation}
so that
\(
  (\Omega_i)_{i=1}^N
  \in\mathcal D(\{s_i^\dagger\}_{i=1}^N,m^\dagger)
\).
The fitting conditions in Theorem~\ref{thm:app-main-full} ensure
\(|\Omega_i|\le s_i^\dagger\le s_i'\), so the curvature bounds
\(\alpha_i,\beta_i\) in
\eqref{eq:app-full-population-curvature-coefficients} apply to matrices
supported on \(\Omega_i\).
By Lemma~\ref{lem:det-weighted-fcht}, using the theorem choices
$s_i\asymp \beta_\mathrm{P}^2\alpha_\mathrm{P}^{-2}s_{i,*}$ and
$m\asymp \beta_\mathrm{P}^2\alpha_\mathrm{P}^{-2}m_*$,  the FCT error satisfies
\begin{equation}
  \sum_{i=1}^Nw_i
  \|\bbm\Delta_i^{(l+1)}\|_{\mathrm F}^2
  \le
  (1+\rho_L)\sum_{i=1}^Nw_i
  \left\|
  \mathcal P_{\Omega_i}
  \left(\widetilde{\bm S}_i^{(l+1)}-\bm S_{i,*}\right)
  \right\|_{\mathrm F}^2,
  \label{eq:app-local-fcht-amplification}
\end{equation}
where 
  $\rho_L
  \le C\alpha_\mathrm{P}\beta_\mathrm{P}^{-1}$.
Therefore, decomposing the gradient update similar to the update of $\bm S_0$, for every $\zeta_L>0$, we have
\begin{equation}\label{eq:Si_decomp}
\begin{aligned}
  \sum_{i=1}^Nw_i
  \|\bbm\Delta_i^{(l+1)}\|_{\mathrm F}^2
  \le{}&
  (1+\rho_L)(1+\zeta_L)
  \sum_{i=1}^Nw_i
  \left\|
    \mathcal P_{\Omega_i}
    \left(
      \bbm\Delta_i-\eta_{\mathrm P}
      \bbm\Delta_i\bbm\Gamma_i
    \right)
  \right\|_{\mathrm F}^2
  \\
  +&(1+\rho_L)(1+\zeta_L^{-1})\eta_{\mathrm P}^2
  \sum_{i=1}^Nw_i
  \left\|
    \mathcal P_{\Omega_i}
    \left\{
      \bbm\Delta_i
      (\widehat{\bbm\Gamma}_i-\bbm\Gamma_i)
      +\bbm\Delta_{\bm B}\widehat{\bbm\Gamma}_i
      +\bbm\Delta_0\widehat{\bbm\Gamma}_i
      -\widehat{\bm W}_i
    \right\}
  \right\|_{\mathrm F}^2.
\end{aligned}
\end{equation}

\paragraph{Step 1: Contraction of the population ideal direction.}
Direct expansion gives
\begin{equation}
  \left\|
    \mathcal P_{\Omega_i}
    \left(
      \bbm\Delta_i-\eta_{\mathrm P}\bbm\Delta_i\bbm\Gamma_i
    \right)
  \right\|_{\mathrm F}^2
  =
  \|\bbm\Delta_i\|_{\mathrm F}^2
  +\eta_{\mathrm P}^2
  \left\|
    \mathcal P_{\Omega_i}
    (\bbm\Delta_i\bbm\Gamma_i)
  \right\|_{\mathrm F}^2
  -2\eta_{\mathrm P}
  \inner{\bbm\Delta_i\bbm\Gamma_i}{\bbm\Delta_i}.
  \label{eq:app-local-population-step-expansion}
\end{equation}
Part~\textup{(iii)} of Lemma~\ref{lem:det-population-rcg} applies with
$\bm H=\bbm\Delta_i$ and $\Omega_i$.
Substituting \eqref{eq:app-local-population-rcg} into
\eqref{eq:app-local-population-step-expansion}, and using
$\eta_{\mathrm P}\le 1/\beta_\mathrm{P}\le1/\beta_i$, yields
\begin{equation}
  \left\|
    \mathcal P_{\Omega_i}
    \left(
      \bbm\Delta_i-\eta_{\mathrm P}\bbm\Delta_i\bbm\Gamma_i
    \right)
  \right\|_{\mathrm F}^2
  \le
  (1-c\frac{\alpha_i}{\beta_\mathrm{P}})\|\bbm\Delta_i\|_{\mathrm F}^2.
  \label{eq:app-local-ideal-contraction}
\end{equation}
Summing with weights $w_i$ therefore gives
\begin{equation}
  \sum_{i=1}^Nw_i
  \left\|
    \mathcal P_{\Omega_i}
    \left(
      \bbm\Delta_i-\eta_{\mathrm P}\bbm\Delta_i\bbm\Gamma_i
    \right)
  \right\|_{\mathrm F}^2
  \le
  (1-c\alpha_\mathrm{P}\beta_\mathrm{P}^{-1})
  \sum_{i=1}^Nw_i\|\bbm\Delta_i\|_{\mathrm F}^2.
  \label{eq:app-local-averaged-ideal-contraction}
\end{equation}
Take $\zeta_L=c\alpha_\mathrm{P}\beta_\mathrm{P}^{-1}$ for a sufficiently
small absolute constant $c$. It follows that
\begin{equation}
  (1+\rho_L)(1+\zeta_L)
  \sum_{i=1}^Nw_i
  \left\|
    \mathcal P_{\Omega_i}
    \left(
      \bbm\Delta_i-\eta_{\mathrm P}\bbm\Delta_i\bbm\Gamma_i
    \right)
  \right\|_{\mathrm F}^2
  \le
  (1-c\alpha_\mathrm{P}\beta_\mathrm{P}^{-1})
  \sum_{i=1}^Nw_i\|\bbm\Delta_i\|_{\mathrm F}^2.
  \label{eq:app-local-amplified-ideal-contraction}
\end{equation}

\paragraph{Step 2: Upper bound of the local gradient discrepancy.}
Expanding the second term of \eqref{eq:Si_decomp} gives
\begin{equation}
\begin{aligned}
  &\sum_{i=1}^Nw_i
  \left\|
    \mathcal P_{\Omega_i}
    \left\{
      \bbm\Delta_i
      (\widehat{\bbm\Gamma}_i-\bbm\Gamma_i)
      +\bbm\Delta_{\bm B}\widehat{\bbm\Gamma}_i
      +\bbm\Delta_0\widehat{\bbm\Gamma}_i
      -\widehat{\bm W}_i
    \right\}
  \right\|_{\mathrm F}^2
  \\
  \le{}&
  \underbrace{
    4\sum_{i=1}^Nw_i
    \left\|
      \mathcal P_{\Omega_i}
      \left\{
        \bbm\Delta_i
        (\widehat{\bbm\Gamma}_i-\bbm\Gamma_i)
      \right\}
    \right\|_{\mathrm F}^2
  }_{\mathscr{L}_1}
  +\underbrace{
    4\sum_{i=1}^Nw_i
    \left\|
      \mathcal P_{\Omega_i}
      (\bbm\Delta_{\bm B}\widehat{\bbm\Gamma}_i)
    \right\|_{\mathrm F}^2
  }_{\mathscr{L}_2}
  \\
  &+\underbrace{
    4\sum_{i=1}^Nw_i
    \left\|
      \mathcal P_{\Omega_i}
      (\bbm\Delta_0\widehat{\bbm\Gamma}_i)
    \right\|_{\mathrm F}^2
  }_{\mathscr{L}_3}
  +\underbrace{
    4\sum_{i=1}^Nw_i
    \left\|
      \mathcal P_{\Omega_i}(\widehat{\bm W}_i)
    \right\|_{\mathrm F}^2
  }_{\mathscr{L}_4}.
\end{aligned}
  \label{eq:app-local-residual-decomposition}
\end{equation}
We bound the four terms separately.

\paragraph{Step 2.1: Upper bound of $\mathscr{L}_1$.}
Since $\supp(\bbm\Delta_i)\subseteq\Omega_i$,
Lemma~\ref{lem:prob-sparse-geometry}\textup{(ii)} gives
\begin{equation}
\begin{aligned}
  &\sum_{i=1}^Nw_i
  \left\|
    \mathcal P_{\Omega_i}
    \left\{
      \bbm\Delta_i
      (\widehat{\bbm\Gamma}_i-\bbm\Gamma_i)
    \right\}
  \right\|_{\mathrm F}^2
  \\
  &\le
  C M\phi_{\max}
  \sum_{i=1}^Nw_i
  \frac{s_i^\dagger\log p+\log N}{T_i}
  \|\bbm\Delta_i\|_{\mathrm F}^2
  \\
  &\le
  C M\phi_{\max}
  \max_{i\in[N]}
  \frac{s_i^\dagger\log p+\log N}{T_i}
  \sum_{i=1}^Nw_i\|\bbm\Delta_i\|_{\mathrm F}^2.
\end{aligned}
  \label{eq:app-local-centered-covariance}
\end{equation}

\paragraph{Step 2.2: Upper bound of $\mathscr{L}_2$.}
If $\bbm\Delta_{\bm B}=\bm0$, the bound is immediate. Otherwise,
Lemma~\ref{lem:orthonormal-lowrank-directions}\textup{(ii)} gives
\[
  \bm X_{\mathrm{LR}}
  :=\frac{\bbm\Delta_{\bm B}}{\|\bbm\Delta_{\bm B}\|_{\mathrm F}}
  \in\mathcal X_{\mathrm{LR}},
  \qquad
  \|\bbm\Delta_{\bm B}\|_{\mathrm F}\le C\,\left\|(\bm L,\bm R)\right\|_{\mathrm{align},*}.
\]
The support tuple is admissible because
$(\Omega_i)_{i=1}^N
\in\mathcal D(\{s_i^\dagger\}_{i=1}^N,m^\dagger)$; when applying
\eqref{eq:directional_heterogeneity_levels} below, take
$\supp(\bm H_i)\subseteq\Omega_i$ for every $i\in[N]$.

As in \eqref{eq:appA_het_decomp}, decompose
$\widehat{\bbm\Gamma}_i
=\overline{\bbm\Gamma}+\bm E_i
+(\widehat{\bbm\Gamma}_i-\bbm\Gamma_i)$.
The frequency cap and $\|\overline{\bbm\Gamma}\|_{\op}\le\phi_{\max}$
control the common term; \eqref{eq:directional_heterogeneity_levels}
and Lemma~\ref{lem:prob-mixed-geometry}\textup{(i)} control the
covariance discrepancy and empirical-fluctuation terms, respectively. Thus,
\begin{equation}
\begin{aligned}
  \mathscr{L}_2
  ={}&4\|\bbm\Delta_{\bm B}\|_{\mathrm F}^2\sum_{i=1}^Nw_i
  \left\|
    \mathcal P_{\Omega_i}
    (\bm X_{\mathrm{LR}}\widehat{\bbm\Gamma}_i)
  \right\|_{\mathrm F}^2
  \\
  \le{}&
  C\Bigg[
    m^\dagger w_{\max}
    \left(\phi_{\max}^2+\Lambda_{\mathrm{LR}}^2\right)
    +M^2
    \left\{
      \frac{pr}{\Ttot}
      +\frac{\log p}{\Ttot}\sum_{i=1}^Ns_i^\dagger
    \right\}
  \Bigg]
  \left\|(\bm L,\bm R)\right\|_{\mathrm{align},*}^2.
\end{aligned}
  \label{eq:app-local-lowrank-interaction}
\end{equation}

\paragraph{Step 2.3: Upper bound of $\mathscr{L}_3$.}  If $\bbm\Delta_0=\bm 0$, the claim is
immediate.  Otherwise, define $\bm X_S:=\bbm\Delta_0/\|\bbm\Delta_0\|_{\mathrm F}$.
Since $\supp(\bbm\Delta_0)\subseteq\Omega_0$,
\eqref{eq:global-sparse-one-step-support-geometry} implies that
$\bm X_S\in\mathcal X_S$. Therefore, applying the same argument as that following
\eqref{eq:appA-sparse-het-decomp} gives
\begin{equation}
\begin{aligned}
  \mathscr L_3
  ={}&4\|\bbm\Delta_0\|_{\mathrm F}^2
  \sum_{i=1}^Nw_i
  \left\|
    \mathcal P_{\Omega_i}
    (\bm X_S\widehat{\bbm\Gamma}_i)
  \right\|_{\mathrm F}^2
  \\
  \le{}&
  C\Bigg[
    m^\dagger w_{\max}
    \left(\phi_{\max}^2+\Lambda_{\mathrm{S}}^2\right)
    +M^2
    \left\{
      \frac{s_0^\dagger\log p}{\Ttot}
      +\frac{\log p}{\Ttot}\sum_{i=1}^Ns_i^\dagger
    \right\}
  \Bigg]
  \|\bbm\Delta_0\|_{\mathrm F}^2.
\end{aligned}
  \label{eq:app-local-shared-interaction}
\end{equation}

\paragraph{Step 2.4: Upper bound of $\mathscr{L}_4$.} Applying
Lemma~\ref{lem:prob-sparse-scores}\textup{(ii)} gives
\begin{equation}
\begin{aligned}
  \mathscr L_4
  &=4\sum_{i=1}^Nw_i
  \left\|
    \mathcal P_{\Omega_i}(\widehat{\bm W}_i)
  \right\|_{\mathrm F}^2
  \\
  &\le
  C\frac{\tau^2\phi_{\max}}{\Ttot}
  \left\{
    \log p\sum_{i=1}^Ns_i^\dagger+N\log N
  \right\}.
\end{aligned}
  \label{eq:app-local-score-bound}
\end{equation}
This bound holds with probability at least
$1-C\exp\{-c((s_{\min}^\dagger\vee1)\log p+\log N)\}$.

Applying \eqref{eq:app-local-centered-covariance},
\eqref{eq:app-local-lowrank-interaction},
\eqref{eq:app-local-shared-interaction}, and
\eqref{eq:app-local-score-bound} to
$\mathscr L_1,\ldots,\mathscr L_4$, respectively, in
\eqref{eq:app-local-residual-decomposition} yields
\begin{equation}
\begin{aligned}
  &\sum_{i=1}^Nw_i
  \left\|
    \mathcal P_{\Omega_i}
    \left\{
      \bbm\Delta_i
      (\widehat{\bbm\Gamma}_i-\bbm\Gamma_i)
      +\bbm\Delta_{\bm B}\widehat{\bbm\Gamma}_i
      +\bbm\Delta_0\widehat{\bbm\Gamma}_i
      -\widehat{\bm W}_i
    \right\}
  \right\|_{\mathrm F}^2
  \\
  \le{}
    &C\Bigg[
    m^\dagger w_{\max}
    \left(\phi_{\max}^2+\Lambda_{\mathrm{LR}}^2\right)
    +M^2
    \left\{
      \frac{pr}{\Ttot}
      +\frac{\log p}{\Ttot}\sum_{i=1}^Ns_i^\dagger
    \right\}
  \Bigg]
  \left\|(\bm L,\bm R)\right\|_{\mathrm{align},*}^2
  \\
  &+C\Bigg[
    m^\dagger w_{\max}
    \left(\phi_{\max}^2+\Lambda_{\mathrm{S}}^2\right)
    +M^2
    \left\{
      \frac{s_0^\dagger\log p}{\Ttot}
      +\frac{\log p}{\Ttot}\sum_{i=1}^Ns_i^\dagger
    \right\}
  \Bigg]
  \|\bbm\Delta_0\|_{\mathrm F}^2
  \\
  &+ CM\phi_{\max}
  \max_{i\in[N]}
  \frac{s_i^\dagger\log p+\log N}{T_i}
  \sum_{i=1}^Nw_i\|\bbm\Delta_i\|_{\mathrm F}^2
  \\
  &+C\frac{\tau^2\phi_{\max}}{\Ttot}
  \left\{
    \log p\sum_{i=1}^Ns_i^\dagger+N\log N
  \right\}.
\end{aligned}
  \label{eq:app-local-residual-bound}
\end{equation}

The local sample-size condition of
Theorem~\ref{thm:app-main-full} ensures that
\begin{equation}
  M\phi_{\max}
  \max_{i\in[N]}
  \frac{s_i^\dagger\log p+\log N}{T_i}
  \le C\alpha_\mathrm{P}^2.
  \label{eq:app-local-diagonal-absorption}
\end{equation}
Moreover, the choices of $\eta_{\mathrm P}$, $\zeta_L$, and $\rho_L$
give
\[
  (1+\rho_L)(1+\zeta_L^{-1})\eta_{\mathrm P}^2
  \le\frac{C}{\alpha_\mathrm{P}\,\beta_\mathrm{P}}.
\]
Combining
\eqref{eq:Si_decomp},
\eqref{eq:app-local-amplified-ideal-contraction}, and
\eqref{eq:app-local-residual-bound}, and absorbing the resulting diagonal
term using \eqref{eq:app-local-diagonal-absorption}, gives
\begin{equation}
\begin{aligned}
  &\sum_{i=1}^Nw_i
  \|\bbm\Delta_i^{(l+1)}\|_{\mathrm F}^2
  \\
  &\le{}
  \left(1-c\frac{\alpha_\mathrm{P}}{\beta_\mathrm{P}}\right)
  \sum_{i=1}^Nw_i
  \|\bbm\Delta_i\|_{\mathrm F}^2
  \\
  &+\frac{C}{\alpha_\mathrm{P}\,\beta_\mathrm{P}}
  \Bigg[
    m^\dagger w_{\max}
    \left(\phi_{\max}^2+\Lambda_{\mathrm{LR}}^2\right)
    +M^2
    \left\{
      \frac{pr}{\Ttot}
      +\frac{\log p}{\Ttot}\sum_{i=1}^Ns_i^\dagger
    \right\}
  \Bigg]
  \left\|(\bm L,\bm R)\right\|_{\mathrm{align},*}^2
  \\
  &+\frac{C}{\alpha_\mathrm{P}\,\beta_\mathrm{P}}
  \Bigg[
    m^\dagger w_{\max}
    \left(\phi_{\max}^2+\Lambda_{\mathrm{S}}^2\right)
    +M^2
    \left\{
      \frac{s_0^\dagger\log p}{\Ttot}
      +\frac{\log p}{\Ttot}\sum_{i=1}^Ns_i^\dagger
    \right\}
  \Bigg]
  \|\bbm\Delta_0\|_{\mathrm F}^2
  \\
  &+\frac{C\tau^2\phi_{\max}}
  {\alpha_\mathrm{P}\,\beta_\mathrm{P}\,\Ttot}
  \left\{
    \log p\sum_{i=1}^Ns_i^\dagger+N\log N
  \right\}.
\end{aligned}
\label{eq:app-local-step}
\end{equation}

\subsection{Joint one-step contraction}
\label{subsec:app-joint-contraction}

For the joint recursion, set
$\bm e^{(l)}
:=\bigl(e_{\mathrm{LR}}^{(l)},e_0^{(l)},
e_{\mathrm P}^{(l)}\bigr)^\top$, where
\begin{equation}
  e_{\mathrm{LR}}^{(l)}:=
  \left\|(\bm L^{(l)},\bm R^{(l)})\right\|_{\mathrm{align},*}^2,
  \quad
  e_0^{(l)}
  :=
  \|\bbm\Delta_0^{(l)}\|_{\mathrm F}^2,
  \quad\text{and}\quad
  e_{\mathrm P}^{(l)}
  :=
  \sum_{i=1}^Nw_i\|\bbm\Delta_i^{(l)}\|_{\mathrm F}^2.
\end{equation}
Combining \eqref{eq:lowrank_final}, \eqref{eq:app-shared-step}, and
\eqref{eq:app-local-step} gives the complete entrywise vector recursion
\begin{equation}
  \bm e^{(l+1)}
  \preceq
  \bbm\Theta\bm e^{(l)}+\bm b,
  \label{eq:app-three-block-recursion}
\end{equation}
where $\preceq$ denotes the componentwise partial order: for
$\bm x,\bm y\in\mathbb R^3$, $\bm x\preceq\bm y$ means
$x_j\le y_j$ for every $j\in [3]$.  The transition matrix $\bbm\Theta$
and statistical-error vector $\bm b$ are defined as follows.
\begin{equation}
\begin{aligned}
  \bbm\Theta
  &:={}
  \begin{pmatrix}
    \Theta_{11} & \Theta_{12} & \Theta_{13}\\
    \Theta_{21} & \Theta_{22} & \Theta_{23}\\
    \Theta_{31} & \Theta_{32} & \Theta_{33}
  \end{pmatrix},
  \\
  \bm b
  &:={}
  \begin{pmatrix}b_1\\b_2\\b_3\end{pmatrix}
  :=
  C\tau^2\phi_{\max}
  \begin{pmatrix}
    \dfrac{pr}{\alpha_{\mathrm{LR}}\beta_{\mathrm{LR}}\Ttot}\\[2mm]
    \dfrac{s_0^\dagger\log p}
    {\alpha_0\beta_0\Ttot}\\[2mm]
    \dfrac{1}{\alpha_\mathrm{P}\,\beta_\mathrm{P}\Ttot}
    \left(
      \log p\sum_{i=1}^Ns_i^\dagger+N\log N
    \right)
  \end{pmatrix}.
\end{aligned}
  \label{eq:app-explicit-Theta-and-b}
\end{equation}
Define the first row of the nonnegative recursion matrix $\bbm\Theta$ by
\begin{equation}
\begin{aligned}
  \Theta_{11}
  :={}&
  1-c\alpha_{\mathrm{LR}}\beta_{\mathrm{LR}}^{-1},
  \\
  \Theta_{12}
  :={}&
  \frac{C}{\alpha_{\mathrm{LR}}\beta_{\mathrm{LR}}}
  \Bigg[
    \widetilde\phi_w^2(\gamma+\gamma_*)\mu r
    +\frac{M\phi_{\max}p}{\Ttot}
    +\left\{
      (\gamma+\gamma_*)\mu r
      +\frac{\varepsilon^2}{(1-\varepsilon)^2}
    \right\}\beta_{\mathrm{LR}}^2
  \Bigg],
  \\
  \Theta_{13}
  :={}&
  \frac{C}{\alpha_{\mathrm{LR}}\beta_{\mathrm{LR}}}
  \Bigg[
    m^\dagger w_{\max}
    \left(\phi_{\max}^2+\Lambda_{\mathrm{LR}}^2\right)
    +M^2
    \left\{
      \frac{pr}{\Ttot}
      +\frac{\log p}{\Ttot}\sum_{i=1}^Ns_i^\dagger
    \right\}
  \Bigg].
\end{aligned}
  \label{eq:app-joint-Theta-row-1}
\end{equation}
The second row is
\begin{equation}
\begin{aligned}
  \Theta_{21}
  :={}&
  \frac{C}{\alpha_0\beta_0}
  \Bigg[
    \widetilde\phi_w^2(2\gamma+\gamma_*)\mu r
    +\frac{M\phi_{\max}p}{\Ttot}
    +\left\{
      (2\gamma+\gamma_*)\mu r
      +\frac{\varepsilon^2}{(1-\varepsilon)^2}
    \right\}\beta_{\mathrm{LR}}^2
  \Bigg],
  \\
  \Theta_{22}
  :={}&1-c\alpha_0\beta_0^{-1},
  \\
  \Theta_{23}
  :={}&
  \frac{C}{\alpha_0\beta_0}
  \Bigg[
    m^\dagger w_{\max}
    \left(\phi_{\max}^2+\Lambda_{\mathrm{S}}^2\right)
    +M^2
    \left\{
      \frac{s_0^\dagger\log p}{\Ttot}
      +\frac{\log p}{\Ttot}\sum_{i=1}^Ns_i^\dagger
    \right\}
  \Bigg].
\end{aligned}
  \label{eq:app-joint-Theta-row-2}
\end{equation}
The third row is
\begin{equation}
\begin{aligned}
  \Theta_{31}
  :={}&
  \frac{C}{\alpha_\mathrm{P}\,\beta_\mathrm{P}}
  \Bigg[
    m^\dagger w_{\max}
    \left(\phi_{\max}^2+\Lambda_{\mathrm{LR}}^2\right)
    +M^2
    \left\{
      \frac{pr}{\Ttot}
      +\frac{\log p}{\Ttot}\sum_{i=1}^Ns_i^\dagger
    \right\}
  \Bigg],
  \\
  \Theta_{32}
  :={}&
  \frac{C}{\alpha_\mathrm{P}\,\beta_\mathrm{P}}
  \Bigg[
    m^\dagger w_{\max}
    \left(\phi_{\max}^2+\Lambda_{\mathrm{S}}^2\right)
    +M^2
    \left\{
      \frac{s_0^\dagger\log p}{\Ttot}
      +\frac{\log p}{\Ttot}\sum_{i=1}^Ns_i^\dagger
    \right\}
  \Bigg],
  \\
  \Theta_{33}
  :={}&1-c\alpha_\mathrm{P}\beta_\mathrm{P}^{-1}.
\end{aligned}
  \label{eq:app-joint-Theta-row-3}
\end{equation}
Here and below, $c>0$ denotes a sufficiently small universal constant whose
value may change from line to line, and $\bbm\Theta$ is entrywise
nonnegative.

The linear system
\eqref{eq:app-three-block-recursion} is contractive exactly when
$\rho(\bbm\Theta)<1$.  We control its off-diagonal entries under the
conditions of Theorem~\ref{thm:app-main-full}.  Upon choosing
$\varepsilon_0$ sufficiently small,
$$
  \varepsilon^2 \leq \varepsilon_0\min\left\{\frac{\alpha_{\mathrm{LR}}}{\beta_{\mathrm{LR}}},\frac{\alpha_{\mathrm{LR}}\alpha_0}{\beta_{\mathrm{LR}}^2}\right\},
$$
the last terms in $\Theta_{12}$ and $\Theta_{21}$ are bounded as
\begin{equation}
  \beta_{\mathrm{LR}}\varepsilon^2\le c\frac{\alpha_{\mathrm{LR}}}{4},
  \quad\text{and}\quad
  \frac{\beta_{\mathrm{LR}}^2\varepsilon^2}{(1-\varepsilon)^2}
  \le c\alpha_{\mathrm{LR}}\alpha_0.
  \label{eq:app-joint-epsilon-smallness}
\end{equation}
Second, the population-coupling, sample-size, and $\gamma$ conditions in
Theorem~\ref{thm:app-main-full} give
\begin{equation}
\begin{aligned}
  &\widetilde\phi_w^2(2\gamma+\gamma_*)\mu r
  +M\phi_{\max}\frac{p}{\Ttot}
  +\left\{
    (2\gamma+\gamma_*)\mu r
    +\frac{\varepsilon^2}{(1-\varepsilon)^2}
  \right\}\beta_{\mathrm{LR}}^2
  \le c\alpha_{\mathrm{LR}}\alpha_0,
  \\
  &m^\dagger w_{\max}
  \left(\phi_{\max}^2+\Lambda_{\mathrm{LR}}^2\right)
  +M^2
  \left\{
    \frac{pr}{\Ttot}
    +\frac{\log p}{\Ttot}\sum_{i=1}^Ns_i^\dagger
  \right\}
  \le c\alpha_{\mathrm{LR}}\alpha_\mathrm{P},
  \\
  &m^\dagger w_{\max}
  \left(\phi_{\max}^2+\Lambda_{\mathrm{S}}^2\right)
  +M^2
  \left\{
    \frac{s_0^\dagger\log p}{\Ttot}
    +\frac{\log p}{\Ttot}\sum_{i=1}^Ns_i^\dagger
  \right\}
  \le c\alpha_0\alpha_\mathrm{P}.
\end{aligned}
  \label{eq:app-readable-cross-block-smallness}
\end{equation}

Next, we verify $\rho(\bbm\Theta)<1$.  Normalize each off-diagonal entry in row
$j$ by the available diagonal contraction $1-\Theta_{jj}$ and define
\begin{equation}
\begin{aligned}
  \bbm G
  &:={}
  \begin{pmatrix}
    0 & g_{12} & g_{13}\\
    g_{21} & 0 & g_{23}\\
    g_{31} & g_{32} & 0
  \end{pmatrix},
  &
  g_{jk}
  &:={}
  \frac{\Theta_{jk}}{1-\Theta_{jj}},
  \qquad j,k\in[3],\quad j\ne k.
\end{aligned}
  \label{eq:app-normalized-coupling-matrix}
\end{equation}
By $0\le\Theta_{jj}<1$ and the definition of $\bbm G$,
\begin{equation}
  \bm I-\bbm\Theta
  =
  \operatorname{diag}
  (1-\Theta_{11},1-\Theta_{22},1-\Theta_{33})
  (\bm I-\bbm G),
  \label{eq:app-normalized-matrix-factorization}
\end{equation}
where the diagonal factor is positive. Note that for any entrywise nonnegative matrix $\bm A$, $\rho(\bm A)<1$ if and
only if there exists $\bm v\succ\bm 0$ such that
$(\bm I-\bm A)\bm v\succ\bm 0$.
Consequently,
\begin{equation}
  \rho(\bbm\Theta)<1
  \quad\Longleftrightarrow\quad
  \rho(\bbm G)<1.
  \label{eq:app-spectral-radius-equivalence}
\end{equation}
Thus it remains to control the spectral radius of the zero-diagonal matrix $\bbm G$. Combining \eqref{eq:app-joint-Theta-row-1}--
\eqref{eq:app-joint-Theta-row-3} with
\eqref{eq:app-readable-cross-block-smallness} gives
\begin{equation}
\begin{aligned}
  g_{12}
  &\le Cc\frac{\alpha_0}{\alpha_{\mathrm{LR}}},
  &
  g_{21}
  &\le Cc\frac{\alpha_{\mathrm{LR}}}{\alpha_0},
  \\
  g_{13}
  &\le Cc\frac{\alpha_\mathrm{P}}{\alpha_{\mathrm{LR}}},
  &
  g_{31}
  &\le Cc\frac{\alpha_{\mathrm{LR}}}{\alpha_\mathrm{P}},
  \\
  g_{23}
  &\le Cc\frac{\alpha_\mathrm{P}}{\alpha_0},
  &
  g_{32}
  &\le Cc\frac{\alpha_0}{\alpha_\mathrm{P}}.
\end{aligned}
  \label{eq:app-normalized-coupling-bounds}
\end{equation}

Letting $\bm v:=
  (\alpha_{\mathrm{LR}}^{-1},\alpha_0^{-1},\alpha_\mathrm{P}^{-1})^\top
  \succ\bm 0$, the six bounds in \eqref{eq:app-normalized-coupling-bounds}
directly give $\bbm G\bm v\preceq2Cc\,\bm v$. Since $\bbm G$ is entrywise nonnegative, the Collatz--Wielandt inequality
gives
\[
  \rho(\bbm G)
  \le
  \max_{j\in[3]}\frac{(\bbm G\bm v)_j}{v_j}
  \le2Cc.
\]
Choose $c$ sufficiently small that
\begin{equation}
  \rho(\bbm G)\le2Cc\le\frac14,
  \qquad
  \rho(2\bbm G)\le\frac12<1.
  \label{eq:app-verified-spectral-radius}
\end{equation}
Equation~\eqref{eq:app-spectral-radius-equivalence} then gives
$\rho(\bbm\Theta)<1$.

Finally, we give the one-step contraction for the estimation error of the
three components. Let
\begin{equation}
  \varrho
  :=
  \frac c2
  \min\left\{
    \frac{\alpha_{\mathrm{LR}}}{\beta_{\mathrm{LR}}},
    \frac{\alpha_0}{\beta_0},
    \frac{\alpha_\mathrm{P}}{\beta_\mathrm{P}}
  \right\}.
  \label{eq:app-contraction-factor}
\end{equation}
By \eqref{eq:app-normalized-coupling-bounds},
\[
  (\alpha_{\mathrm{LR}},\alpha_0,\alpha_\mathrm{P})\bbm G
  \preceq
  2Cc(\alpha_{\mathrm{LR}},\alpha_0,\alpha_\mathrm{P}).
\]
Hence, for sufficiently small $c$, the factorization
\eqref{eq:app-normalized-matrix-factorization} gives
\begin{equation}
\begin{aligned}
  \left(1,\frac{\beta_0}{\beta_{\mathrm{LR}}},
  \frac{\beta_\mathrm{P}}{\beta_{\mathrm{LR}}}\right)
  (\bm I-\bbm\Theta)
  &=
  \frac c{\beta_{\mathrm{LR}}}
  (\alpha_{\mathrm{LR}},\alpha_0,\alpha_\mathrm{P})
  (\bm I-\bbm G)
  \\
  &\succeq
  \frac c{\beta_{\mathrm{LR}}}(1-2Cc)
  (\alpha_{\mathrm{LR}},\alpha_0,\alpha_\mathrm{P})
  \\
  &\succeq
  \frac c{2\beta_{\mathrm{LR}}}
  (\alpha_{\mathrm{LR}},\alpha_0,\alpha_\mathrm{P})
  \\
  &\succeq
  \varrho
  \left(1,\frac{\beta_0}{\beta_{\mathrm{LR}}},
  \frac{\beta_\mathrm{P}}{\beta_{\mathrm{LR}}}\right).
\end{aligned}
  \label{eq:app-weighted-spectral-bound}
\end{equation}
Therefore,
\begin{equation*}
  \left(1,\frac{\beta_0}{\beta_{\mathrm{LR}}},
  \frac{\beta_\mathrm{P}}{\beta_{\mathrm{LR}}}\right)\bbm\Theta
  \preceq
  (1-\varrho)
  \left(1,\frac{\beta_0}{\beta_{\mathrm{LR}}},
  \frac{\beta_\mathrm{P}}{\beta_{\mathrm{LR}}}\right).
\end{equation*}
Define
\begin{equation}
\begin{aligned}
  \mathcal B
  &:={}
  b_1+\frac{\beta_0}{\beta_{\mathrm{LR}}}b_2
  +\frac{\beta_\mathrm{P}}{\beta_{\mathrm{LR}}}b_3
  \\
  &={}
  \frac{C\tau^2\phi_{\max}}
  {\beta_{\mathrm{LR}}\Ttot}
  \left\{
    \frac{pr}{\alpha_{\mathrm{LR}}}
    +\frac{s_0^\dagger\log p}{\alpha_0}
    +\frac{1}{\alpha_\mathrm{P}}
    \left(
      \log p\sum_{i=1}^Ns_i^\dagger+N\log N
    \right)
  \right\}.
\end{aligned}
\end{equation}
Then the one-step contraction at round $l$ is
\begin{equation}
  \mathcal V_\beta^{(l+1)}
  \le(1-\varrho)\mathcal V_\beta^{(l)}+\mathcal B.
  \label{eq:app-one-step-contraction}
\end{equation}

\subsection{Local basin and iterated recursion}
\label{subsec:app-basin-conclusion}

To complete the induction, the initialization and signal-to-noise
sample-size conditions of Theorem~\ref{thm:app-main-full} and the
definition of \(\mathcal B\) give
\begin{equation}
  \mathcal V_\beta^{(0)}\le\varepsilon^2\sigma_r^2,
  \qquad \mathcal B\le\frac{\varrho}{2}\varepsilon^2\sigma_r^2.
\end{equation}
Suppose that
\[
  \mathcal V_\beta^{(l)}\le\varepsilon^2\sigma_r^2.
\]
Then \(e_{\mathrm{LR}}^{(l)}\le\mathcal V_\beta^{(l)}\le\varepsilon^2\sigma_r^2\), so the
three-block recursion \eqref{eq:app-three-block-recursion}, and hence
\eqref{eq:app-one-step-contraction}, applies at round $l$. Therefore,
\begin{equation}
  \mathcal V_\beta^{(l+1)}\le
  \left(1-\frac{\varrho}{2}\right)\varepsilon^2\sigma_r^2
  \le\varepsilon^2\sigma_r^2.
\end{equation}
This closes the induction and shows that all iterates remain in the
prescribed local basin.

Iterating the componentwise recursion gives
\begin{equation}
\label{eq:app-iterated-recursion}
  \bm e^{(l)}
  \preceq
  \bbm\Theta^l\bm e^{(0)}
  +\sum_{k=0}^{l-1}\bbm\Theta^k\bm b
  \preceq
  \bbm\Theta^l\bm e^{(0)}
  +(\bm I-\bbm\Theta)^{-1}\bm b.
\end{equation}
This completes the proof of Theorem~\ref{thm:app-main-full}, thereby proving Theorem~\ref{thm:main_fed_iht}.
\hfill$\square$

\subsection{Proofs of the statistical corollaries}
\label{subsec:app-corollary-proofs}

\paragraph{Proof of Corollary~\ref{cor:main_stat_rates}.}
We prove the complete version stated in
Corollary~\ref{cor:app-stat-full} by applying the iterated recursion \eqref{eq:app-iterated-recursion}.
The factorization \eqref{eq:app-normalized-matrix-factorization} and the
three diagonal gaps in
\eqref{eq:app-joint-Theta-row-1}--\eqref{eq:app-joint-Theta-row-3} give
\begin{equation}
\begin{aligned}
  (\bm I-\bbm\Theta)^{-1}\bm b
  &=
  (\bm I-\bbm G)^{-1}
  \operatorname{diag}
  \left\{
    \frac1{1-\Theta_{11}},
    \frac1{1-\Theta_{22}},
    \frac1{1-\Theta_{33}}
  \right\}\bm b
  \\
  &\preceq
  C\tau^2\phi_{\max}
  (\bm I-\bbm G)^{-1}
  \begin{pmatrix}
    \dfrac{pr}{\alpha_{\mathrm{LR}}^2\Ttot}\\[2mm]
    \dfrac{s_0^\dagger\log p}{\alpha_0^2\Ttot}\\[2mm]
    \dfrac{1}{\alpha_\mathrm{P}^2\Ttot}
    \left(
      \displaystyle\log p\sum_{i=1}^Ns_i^\dagger+N\log N
    \right)
  \end{pmatrix}.
\end{aligned}
  \label{eq:app-normalized-statistical-floor}
\end{equation}
For $\bbm G$,
\begin{equation}
\begin{aligned}
  (\bm I-\bbm G)^{-1}
  =
  \frac1{\det(\bm I-\bbm G)}
  \begin{pmatrix}
    1-g_{23}g_{32}
    &g_{12}+g_{13}g_{32}
    &g_{13}+g_{12}g_{23}
    \\
    g_{21}+g_{23}g_{31}
    &1-g_{13}g_{31}
    &g_{23}+g_{21}g_{13}
    \\
    g_{31}+g_{32}g_{21}
    &g_{32}+g_{31}g_{12}
    &1-g_{12}g_{21}
  \end{pmatrix}.
\end{aligned}
  \label{eq:app-explicit-normalized-inverse}
\end{equation}
Because $\rho(\bbm G)\le1/4$ by
\eqref{eq:app-verified-spectral-radius},
$\det(\bm I-\bbm G)\ge\{1-\rho(\bbm G)\}^3\ge(3/4)^3$, and hence
$\det(\bm I-\bbm G)^{-1}\le C$.  To state the three componentwise
statistical floors explicitly, define
\begin{equation}
\begin{aligned}
  \mathfrak H_{\mathrm{LR}}
  &:={}
  m^\dagger w_{\max}
  \left(\phi_{\max}^2+\Lambda_{\mathrm{LR}}^2\right)
  +M^2
  \left\{
    \frac{pr}{\Ttot}
    +\frac{\log p}{\Ttot}\sum_{i=1}^Ns_i^\dagger
  \right\},
  \\
  \mathfrak H_S
  &:={}
  m^\dagger w_{\max}
  \left(\phi_{\max}^2+\Lambda_{\mathrm{S}}^2\right)
  +M^2
  \left\{
    \frac{s_0^\dagger\log p}{\Ttot}
    +\frac{\log p}{\Ttot}\sum_{i=1}^Ns_i^\dagger
  \right\}.
\end{aligned}
  \label{eq:app-coupling-brackets}
\end{equation}

Moreover, \eqref{eq:app-readable-cross-block-smallness} gives
$\mathfrak H_{\mathrm{LR}}\le
c\alpha_{\mathrm{LR}}\alpha_\mathrm{P}$ and
$\mathfrak H_S\le c\alpha_0\alpha_\mathrm{P}$.  Thus
$g_{13}g_{32}\le c\alpha_0/\alpha_{\mathrm{LR}}$ and
$g_{23}g_{31}\le c\alpha_{\mathrm{LR}}/\alpha_0$.
We then bound the off-diagonal cofactors in
\eqref{eq:app-explicit-normalized-inverse} as follows:
\begin{equation}
\begin{aligned}
  g_{12}+g_{13}g_{32}
  &\lesssim\frac{\alpha_0}{\alpha_{\mathrm{LR}}},
  &
  g_{21}+g_{23}g_{31}
  &\lesssim\frac{\alpha_{\mathrm{LR}}}{\alpha_0},
  \\
  g_{13}+g_{12}g_{23}
  &\lesssim
  \frac{\mathfrak H_{\mathrm{LR}}}{\alpha_{\mathrm{LR}}^2}
  +\frac{\mathfrak H_S}{\alpha_{\mathrm{LR}}\alpha_0},
  &
  g_{23}+g_{21}g_{13}
  &\lesssim
  \frac{\mathfrak H_S}{\alpha_0^2}
  +\frac{\mathfrak H_{\mathrm{LR}}}{\alpha_{\mathrm{LR}}\alpha_0},
  \\
  g_{31}+g_{32}g_{21}
  &\lesssim\frac{\alpha_{\mathrm{LR}}}{\alpha_\mathrm{P}},
  &
  g_{32}+g_{31}g_{12}
  &\lesssim\frac{\alpha_0}{\alpha_\mathrm{P}},
\end{aligned}
  \label{eq:app-all-normalized-path-bounds}
\end{equation}
Here we retain $\mathfrak H_{\mathrm{LR}}$ and $\mathfrak H_S$ in the terms
that quantify the propagation of local estimation errors into the global
components, thereby preserving the federated statistical gain exploited by the frequency cap. Therefore, the statistical error floor \eqref{eq:app-normalized-statistical-floor} is bounded componentwise by
\begin{equation}
\begin{aligned}
  \left[(\bm I-\bbm\Theta)^{-1}\bm b\right]_1
  \lesssim{}&
  \frac{\tau^2\phi_{\max}}{\Ttot}
  \Bigg[
    \frac{pr}{\alpha_{\mathrm{LR}}^2}
    +\frac{s_0^\dagger\log p}{\alpha_{\mathrm{LR}}\alpha_0}
  +
    \left(
      \frac{\mathfrak H_{\mathrm{LR}}}{\alpha_{\mathrm{LR}}^2}
      +\frac{\mathfrak H_S}{\alpha_{\mathrm{LR}}\alpha_0}
    \right)
    \frac{\log p\sum_{i=1}^Ns_i^\dagger+N\log N}
    {\alpha_\mathrm{P}^2}
  \Bigg],
  \\
  \left[(\bm I-\bbm\Theta)^{-1}\bm b\right]_2
  \lesssim{}&
  \frac{\tau^2\phi_{\max}}{\Ttot}
  \Bigg[
    \frac{pr}{\alpha_{\mathrm{LR}}\alpha_0}
    +\frac{s_0^\dagger\log p}{\alpha_0^2}
  +
    \left(
      \frac{\mathfrak H_S}{\alpha_0^2}
      +\frac{\mathfrak H_{\mathrm{LR}}}{\alpha_{\mathrm{LR}}\alpha_0}
    \right)
    \frac{\log p\sum_{i=1}^Ns_i^\dagger+N\log N}
    {\alpha_\mathrm{P}^2}
  \Bigg],
  \\
  \left[(\bm I-\bbm\Theta)^{-1}\bm b\right]_3
  \lesssim{}&
  \frac{\tau^2\phi_{\max}}{\Ttot}
  \Bigg[
    \frac{\log p\sum_{i=1}^Ns_i^\dagger+N\log N}
    {\alpha_\mathrm{P}^2}
    +\frac{pr}{\alpha_{\mathrm{LR}}\alpha_\mathrm{P}}
    +\frac{s_0^\dagger\log p}{\alpha_0\alpha_\mathrm{P}}
  \Bigg].
\end{aligned}
  \label{eq:app-componentwise-statistical-floors}
\end{equation}

It remains to determine when the statistical floor absorbs the computational
error.  Since
$(\bm I-\bbm\Theta)^{-1}\bm b
=\sum_{k=0}^{\infty}\bbm\Theta^k\bm b\succeq\bm b$, it is sufficient to
make $\bbm\Theta^K\bm e^{(0)}\preceq\bm b$.  Applying
\eqref{eq:app-weighted-spectral-bound} repeatedly gives
\begin{equation}
\begin{aligned}
  [\bbm\Theta^K\bm e^{(0)}]_1
  &\le (1-\varrho)^K\mathcal V_\beta^{(0)},
  \\
  [\bbm\Theta^K\bm e^{(0)}]_2
  &\le \frac{\beta_{\mathrm{LR}}}{\beta_0}
  (1-\varrho)^K\mathcal V_\beta^{(0)},
  \\
  [\bbm\Theta^K\bm e^{(0)}]_3
  &\le \frac{\beta_{\mathrm{LR}}}{\beta_\mathrm{P}}
  (1-\varrho)^K\mathcal V_\beta^{(0)}.
\end{aligned}
  \label{eq:app-componentwise-computational-decay}
\end{equation}
Thus the round requirement in Corollary~\ref{cor:app-stat-full} is sufficient:
\begin{equation}
  K\ge
  \frac{1}{-\log(1-\varrho)}
  \log\left[
    \frac{C\beta_{\mathrm{LR}}\Ttot\mathcal V_\beta^{(0)}}
    {\tau^2\phi_{\max}}
    \max\left\{
      \frac{\alpha_{\mathrm{LR}}}{pr},
      \frac{\alpha_0}{s_0^\dagger\log p},
      \frac{\alpha_\mathrm{P}}
      {\log p\sum_{i=1}^Ns_i^\dagger+N\log N}
    \right\}
  \right].
  \label{eq:app-statistical-domination-round}
\end{equation}
Indeed, this choice gives
$\bbm\Theta^K\bm e^{(0)}\preceq\bm b \preceq(\bm I-\bbm\Theta)^{-1}\bm b$.

For any $K$ satisfying \eqref{eq:app-statistical-domination-round}, set
\[
  \widehat{\bm L}:=\bm L^{(K)},
  \qquad
  \widehat{\bm R}:=\bm R^{(K)},
  \qquad
  \widehat{\bm S}_0:=\bm S_0^{(K)},
  \qquad
  \widehat{\bm S}_i:=\bm S_i^{(K)}.
\]
Expanding $\mathfrak H_{\mathrm{LR}}$ and $\mathfrak H_S$ in the first two
bounds of \eqref{eq:app-componentwise-statistical-floors}, using
$\alpha_0\ge\alpha_{\mathrm{LR}}$, we obtain the following bounds for the three components.  The low-rank factor error satisfies
\begin{equation}
\begin{aligned}
  \left\|(\widehat{\bm L},\widehat{\bm R})\right\|_{\mathrm{align},*}^2
  \lesssim{}&
  \tau^2\phi_{\max}
  \Bigg[
      \frac{pr+s_0^\dagger\log p}{\alpha_{\mathrm{LR}}^2\Ttot}
  \\
  &\quad+\frac{m^\dagger w_{\max}}
      {\alpha_{\mathrm{LR}}^2\alpha_\mathrm{P}^2\Ttot}
      \left(
        \log p\sum_{i=1}^Ns_i^\dagger+N\log N
      \right)
      \left(
        \phi_{\max}^2
        +\Lambda_{\mathrm{LR}}^2+\Lambda_{\mathrm{S}}^2
      \right)
  \\
  &\quad+
      \frac{M^2}
      {\alpha_{\mathrm{LR}}^2\alpha_\mathrm{P}^2\Ttot^2}
      \left(
        \log p\sum_{i=1}^Ns_i^\dagger+N\log N
      \right)
      \left(
        pr+s_0^\dagger\log p
        +\log p\sum_{i=1}^Ns_i^\dagger
      \right)
  \Bigg].
\end{aligned}
  \label{eq:app-final-lowrank-rate}
\end{equation}
The shared-sparse error satisfies
\begin{equation}
\begin{aligned}
  \|\widehat{\bm S}_0-\bm S_{0,*}\|_{\mathrm F}^2
  &\lesssim{}
  \tau^2\phi_{\max}
  \Bigg[
      \frac{pr+s_0^\dagger\log p}
      {\alpha_{\mathrm{LR}}\alpha_0\Ttot}
  \\
  &\quad+
      \frac{m^\dagger w_{\max}}
      {\alpha_{\mathrm{LR}}\alpha_0\alpha_\mathrm{P}^2\Ttot}
      \left(
        \log p\sum_{i=1}^Ns_i^\dagger+N\log N
      \right)
      \left(
        \phi_{\max}^2
        +\Lambda_{\mathrm{LR}}^2+\Lambda_{\mathrm{S}}^2
      \right)
  \\
  &\quad+
      \frac{M^2}
      {\alpha_{\mathrm{LR}}\alpha_0\alpha_\mathrm{P}^2\Ttot^2}
      \left(
        \log p\sum_{i=1}^Ns_i^\dagger+N\log N
      \right)
      \left(
        pr+s_0^\dagger\log p
        +\log p\sum_{i=1}^Ns_i^\dagger
      \right)
  \Bigg].
\end{aligned}
  \label{eq:app-final-global-sparse-rate}
\end{equation}
The weighted personalized-sparse error satisfies
\begin{equation}
\begin{aligned}
  \sum_{i=1}^Nw_i
  \|\widehat{\bm S}_i-\bm S_{i,*}\|_{\mathrm F}^2
  \lesssim&{}
  \tau^2\phi_{\max}
  \Bigg[
    \frac{1}{\alpha_\mathrm{P}^2\Ttot}
    \left(
      \log p\sum_{i=1}^Ns_i^\dagger+N\log N
    \right)
  \\
  &\quad+
      \frac{pr}{\alpha_{\mathrm{LR}}\alpha_\mathrm{P}\Ttot}
    +
      \frac{s_0^\dagger\log p}
      {\alpha_0\alpha_\mathrm{P}\Ttot}
  \Bigg].
\end{aligned}
  \label{eq:app-final-local-rate}
\end{equation}

Using \eqref{eq:det-factor-product-equivalence} and the fitting-level
complexities in Appendix~\ref{subsec:app-complete-statistical-rates}, these bounds give
\eqref{eq:app-full-general-shared-rate} and
\eqref{eq:app-full-general-personalized-rate}.
The fitting-level choices in Theorem~\ref{thm:app-main-full} then yield
the main-text rates after absorbing curvature, dependence, and noise factors.
\hfill$\square$

\paragraph{Proof of Corollary~\ref{cor:main_equal_sample_rates}.}
We prove the complete version stated in
Corollary~\ref{cor:app-balanced-full}. Under the stated coefficient bounds, \(\varrho\gtrsim1\) and
\(\mathcal V_\beta^{(0)}\lesssim1\).
Thus \eqref{eq:app-componentwise-computational-decay} makes each
computational error \(O((NT)^{-1})\) after
\(K\gtrsim\log(NT)\) rounds.
Since \(\Ttot=NT\) and \(D_{\mathrm P}=s_*\log p+\log N\), the statistical floors above
give shared and personalized rates
\((D_{\mathrm G}+mD_{\mathrm P})/(NT)+\mathcal E_{\mathrm{pers}}\) and
\(D_{\mathrm P}/T+D_{\mathrm G}/(NT)\), respectively.
The sample-size condition in Theorem~\ref{thm:app-main-full} gives
\(T\gtrsim D_{\mathrm P}\), while the additional condition is
\(Ns_*\log p\,D_{\mathrm P}\lesssim TD_{\mathrm G}\). Consequently,
\[
 \mathcal E_{\mathrm{pers}}
 \lesssim\frac{D_{\mathrm G}D_{\mathrm P}}{NT^2}
 +\frac{s_*\log p\,D_{\mathrm P}}{T^2}
 \lesssim\frac{D_{\mathrm G}}{NT}.
\]
Since \(D_{\mathrm G}\ge pr\ge1\), the computational errors are also
absorbed by \(D_{\mathrm G}/(NT)\), proving
\eqref{eq:app-full-balanced-shared-rate} and
\eqref{eq:app-full-balanced-personalized-rate}.
\hfill$\square$


\subsection{Proof of Theorem~\ref{thm:initializer_rate} (Initialization Guarantees)}
\label{app:initializer-proof}

We prove the complete version stated in
Theorem~\ref{thm:app-init-full}.
For client \(i\), write
\(\bm C_{i,*}=\bm S_{0,*}+\bm S_{i,*}\).

\paragraph{Step 1: Clientwise convex rate.}

We first establish the statistical rate for the clientwise estimates by matching the local program \eqref{eq:init_local_basu} to  \citet{basu2019low} and applying their results of Proposition~1(a). The quantities
\(\mathcal X^\top\mathcal X\) and \(\mathcal X^\top E\) therein correspond
to \(\widehat{\bbm\Gamma}_i\) and \(\widehat{\bm W}_i^\top\), respectively. Meanwhile, Assumption~\ref{assump:shared_separation} gives
\begin{equation*}
  p\|\bm B_*\|_{\max}
  \le
  p\|\bm U_*\|_{2,\infty}\sigma_1
   \|\bm V_*\|_{2,\infty}
  \le
  \mu r\sigma_1
  =\nu,
\end{equation*}
so the low-rank truth satisfies the spikiness constraint.

We next verify the conditions of Proposition~1(a) of \citet{basu2019low}.  By
Lemmas~\ref{lem:prob-local-geometry} and
\ref{lem:prob-clientwise-scores}, the sample-size and penalty choices in
Theorem~\ref{thm:app-init-full} ensure, simultaneously for all
clients,
\begin{equation}
  \lambda_{\min}(\widehat{\bbm\Gamma}_i)
  \ge\frac{\phi_{\min}}2,
  \quad
  \lambda_{B,i}\ge4\|\widehat{\bm W}_i\|_{\op},
  \quad
  \lambda_{C,i}\ge
  4\|\widehat{\bm W}_i\|_{\max}
  +4\left(\frac{\phi_{\min}}2\right)
   \frac{\mu r\sigma_1}{p},
\end{equation}
with probability at least
\(1-C\exp\{-c(p+\log N)\}-C(pN)^{-c}\).  The first inequality yields the
required RSC condition with \(\zeta=\phi_{\min}/2\) and zero tolerance,
while the remaining two inequalities are their tuning conditions.
Therefore, their Proposition~1(a), followed by the penalty choices and
\((a+b)^2\le2a^2+2b^2\), gives
\begin{equation}
\begin{aligned}
  &\|\widetilde{\bm B}_i-\bm B_*\|_{\mathrm F}^2
  +\|\widetilde{\bm C}_i-\bm C_{i,*}\|_{\mathrm F}^2\\
  \lesssim{} &
  \frac{r\lambda_{B,i}^2+(s_{0,*}+s_{i,*})\lambda_{C,i}^2}{\phi_{\min}^2}
  \\
  \lesssim{} &
  \frac{\tau^2\phi_{\max}}
       {\phi_{\min}^2T_i}
  \left\{
    r(p+\log N)
    +(s_{0,*}+s_{i,*})\log(pN)
  \right\}
  +
  \frac{\mu^2r^2\sigma_1^2}{p^2}(s_{0,*}+s_{i,*})
  =R_i^{\mathrm{init}}.
\end{aligned}
\label{eq:app-initializer-clientwise-rate}
\end{equation}

\paragraph{Step 2: Aggregation and postprocessing.}

Each client first forms
\(\widetilde{\bm B}_{i,r}=\operatorname{SVD}_r(\widetilde{\bm B}_i)\) as in
\eqref{eq:init_local_rank_compression}.  Since
\(\bm B_*\) has rank \(r\), optimality of the best rank-\(r\) approximation gives
\begin{equation*}
  \|\widetilde{\bm B}_{i,r}-\bm B_*\|_{\mathrm F}
  \le
  \|\widetilde{\bm B}_{i,r}-\widetilde{\bm B}_i\|_{\mathrm F}
  +\|\widetilde{\bm B}_i-\bm B_*\|_{\mathrm F}
  \le 2\|\widetilde{\bm B}_i-\bm B_*\|_{\mathrm F}.
\end{equation*}
Next, for $\bm B^{(0)}$, using Jensen's inequality, followed by the server's second rank truncation, yields
\begin{equation*}
\begin{aligned}
  \|\bm B^{(0)}-\bm B_*\|_{\mathrm F}^2
  &\le4\|\overline{\bm B}-\bm B_*\|_{\mathrm F}^2
  \le16\sum_{i=1}^Nw_i
     \|\widetilde{\bm B}_i-\bm B_*\|_{\mathrm F}^2.
\end{aligned}
\end{equation*}
Lemma~\ref{lem:metric_equivalence} then implies
\begin{equation}
  e_{\mathrm{LR}}^{(0)}=\left\|(\bm L^{(0)},\bm R^{(0)})\right\|_{\mathrm{align},*}^2
  \le C\sum_{i=1}^Nw_i
      \|\widetilde{\bm B}_i-\bm B_*\|_{\mathrm F}^2.
\label{eq:app-initializer-factor-rate}
\end{equation}

Next, we analyze the coordinatewise median \(\widetilde{\bm S}_0\) of \(\{\widetilde{\bm C}_i\}_{i=1}^N\)\((j,k)\). Fixing $(j,k)$, if
\((\widetilde{\bm S}_0)_{jk}\ge(\bm S_{0,*})_{jk}\), the median ensures that at least \(\lceil N/2\rceil\) clients
satisfy $
  (\widetilde{\bm C}_i)_{jk}
  \ge(\widetilde{\bm S}_0)_{jk}
  \ge(\bm S_{0,*})_{jk}$. If \((\widetilde{\bm S}_0)_{jk}<(\bm S_{0,*})_{jk}\), at least
\(\lceil N/2\rceil\) clients instead satisfy $
  (\widetilde{\bm C}_i)_{jk}
  \le(\widetilde{\bm S}_0)_{jk}
  <(\bm S_{0,*})_{jk}$.
Thus, in either case, at least \(\lceil N/2\rceil\) clients satisfy
\begin{equation}\label{eq:appD_median}
  |(\widetilde{\bm C}_i)_{jk}-(\bm S_{0,*})_{jk}|
  \ge
  |(\widetilde{\bm S}_0)_{jk}-(\bm S_{0,*})_{jk}|.
\end{equation}
Meanwhile, since $
  \sum_{i=1}^N
  \mathbf 1\{(\bm S_{i,*})_{jk}\ne0\}
  \le m_*$ and 
\((\bm C_{i,*})_{jk}-(\bm S_{0,*})_{jk}=(\bm S_{i,*})_{jk}\), at most \(m_*\) clients satisfy
\((\bm C_{i,*})_{jk}\ne(\bm S_{0,*})_{jk}\).
Hence, among the clients identified in \eqref{eq:appD_median}, at least
\(\lceil N/2\rceil-m_*\) of them satisfy
\((\bm C_{i,*})_{jk}=(\bm S_{0,*})_{jk}\). For each of these clients,
the left hand side of \eqref{eq:appD_median} equals
\(|(\widetilde{\bm C}_i-\bm C_{i,*})_{jk}|\). Therefore,
\begin{equation*}
  (\lceil N/2\rceil-m_*)
  |(\widetilde{\bm S}_0-\bm S_{0,*})_{jk}|^2
  \le
  \sum_{i=1}^N
  |(\widetilde{\bm C}_i-\bm C_{i,*})_{jk}|^2.
\end{equation*}
Summing over all coordinates gives
\begin{equation}
\begin{aligned}
  \|\widetilde{\bm S}_0-\bm S_{0,*}\|_{\mathrm F}^2
  &\le
  \frac{1}{\lceil N/2\rceil-m_*}
  \sum_{i=1}^N
  \|\widetilde{\bm C}_i-\bm C_{i,*}\|_{\mathrm F}^2
  \\
  &\le
  \frac{1}{\delta N}
  \sum_{i=1}^N
  \|\widetilde{\bm C}_i-\bm C_{i,*}\|_{\mathrm F}^2.
\end{aligned}
\label{eq:app-initializer-median-rate}
\end{equation}
Here we used \(m_*\le(1/2-\delta)N\).

Recall that
\(\bm S_0^{(0)}=\mathcal G_{s_0,\gamma}(\widetilde{\bm S}_0)\).
Since \(s_0\ge s_{0,*}\) and \(\gamma\ge\gamma_*\), the amplification
factor in Lemma~\ref{lem:G_stability} satisfies \(1+\rho_0\le C\). Together with \eqref{eq:app-initializer-median-rate}, we have
\begin{equation}
  e^{(0)}_0=\|\bm S_0^{(0)}-\bm S_{0,*}\|_{\mathrm F}^2
  \le
  (1+\rho_0)
  \|\widetilde{\bm S}_0-\bm S_{0,*}\|_{\mathrm F}^2
  \le
  \frac{C}{\delta N}
  \sum_{i=1}^N
  \|\widetilde{\bm C}_i-\bm C_{i,*}\|_{\mathrm F}^2.
\label{eq:app-initializer-sparse-rate}
\end{equation}

\paragraph{Step 3: Personalized residual initialization.}

The residual proposal satisfies
\begin{equation*}
  \widetilde{\bm S}_i^{(0)}-\bm S_{i,*}
  =
  (\widetilde{\bm C}_i-\bm C_{i,*})
  -(\bm S_0^{(0)}-\bm S_{0,*}).
\end{equation*}
Since \(s_i\ge s_{i,*}\) and \(m\ge m_*\),
Lemma~\ref{lem:det-weighted-fcht} applies with \(1+\rho_L\le C\).
Consequently,
\begin{equation}
\begin{aligned}
  e_{\mathrm P}^{(0)}
  &=
  \sum_{i=1}^Nw_i
  \|\bm S_i^{(0)}-\bm S_{i,*}\|_{\mathrm F}^2
  \\
  &\le
  (1+\rho_L)\sum_{i=1}^Nw_i
  \|\widetilde{\bm S}_i^{(0)}-\bm S_{i,*}\|_{\mathrm F}^2
  \\
  &\le
  C\sum_{i=1}^Nw_i
  \|\widetilde{\bm C}_i-\bm C_{i,*}\|_{\mathrm F}^2
  +
  C\|\bm S_0^{(0)}-\bm S_{0,*}\|_{\mathrm F}^2
  \\
  &\le
  C\sum_{i=1}^Nw_i
  \|\widetilde{\bm C}_i-\bm C_{i,*}\|_{\mathrm F}^2
  +
  \frac{C}{\delta N}
  \sum_{i=1}^N
  \|\widetilde{\bm C}_i-\bm C_{i,*}\|_{\mathrm F}^2.
\end{aligned}
\label{eq:app-initializer-local-rate}
\end{equation}

\paragraph{Step 4: Conclusion.}

Combining \eqref{eq:app-initializer-factor-rate},
\eqref{eq:app-initializer-sparse-rate}, and
\eqref{eq:app-initializer-local-rate} with
\eqref{eq:app-initializer-clientwise-rate} gives
\begin{equation}
\label{eq:app-initializer-weighted-rate}
\begin{aligned}
 \mathcal V_\beta^{(0)}
 &=e_{\mathrm{LR}}^{(0)}
 +\frac{\beta_0}{\beta_{\mathrm{LR}}}e_0^{(0)}
 +\frac{\beta_\mathrm{P}}{\beta_{\mathrm{LR}}}e_{\mathrm P}^{(0)}\\
 &\le C\sum_{i=1}^N
 \left\{w_i+\frac{\beta_0}{\beta_{\mathrm{LR}}\delta N}
 +\frac{\beta_\mathrm{P}}{\beta_{\mathrm{LR}}}
 \left(w_i+\frac{1}{\delta N}\right)\right\}R_i^{\mathrm{init}}\\
 &\le C\kappa_{\max}
 \sum_{i=1}^N\left(w_i+\frac{1}{\delta N}\right)R_i^{\mathrm{init}},
\end{aligned}
\end{equation}
It remains to verify the basin-entry conclusion under the additional
conditions \eqref{eq:app-init-basin-sample-full} and
\eqref{eq:app-init-separation-full}.
By \eqref{eq:app-init-basin-sample-full} and the definitions of
\(\Xi_2\) and \(\varepsilon\) in
\eqref{eq:app-full-sample-multipliers}--\eqref{eq:app-full-local-radius},
\[
 \frac{\tau^2\phi_{\max}}{\phi_{\min}^2T_i}
 \left\{r(p+\log N)+(s_{0,*}+s_{i,*})\log(pN)\right\}
 \le c\frac{\delta\varepsilon^2}{\kappa_{\max}}\sigma_r^2,
 \qquad i\in[N],
\]
where \(c>0\) can be made sufficiently small by choosing the constant in
\eqref{eq:app-init-basin-sample-full} sufficiently large.
The sparsity condition \eqref{eq:app-init-separation-full} similarly gives
\[
 \frac{\mu^2r^2\sigma_1^2}{p^2}(s_{0,*}+s_{i,*})
 \le c\frac{\delta\varepsilon^2}{\kappa_{\max}}\sigma_r^2,
 \qquad i\in[N].
\]
Consequently,
\begin{equation}
\label{eq:app-initializer-uniform-basin-rate}
 R_i^{\mathrm{init}}
 \le c\frac{\delta\varepsilon^2}{\kappa_{\max}}\sigma_r^2,
 \qquad i\in[N].
\end{equation}
Combining \eqref{eq:app-initializer-weighted-rate} with
\eqref{eq:app-initializer-uniform-basin-rate}, we obtain
\begin{equation}
\label{eq:app-initializer-basin-entry}
 \mathcal V_\beta^{(0)}
 \le C\kappa_{\max}
 \sum_{i=1}^N\left(w_i+\frac{1}{\delta N}\right)R_i^{\mathrm{init}}\le Cc(1+\delta)\varepsilon^2\sigma_r^2
 \le\varepsilon^2\sigma_r^2,
\end{equation}
where the last inequality uses \(\delta<1/2\) and a sufficiently small
choice of \(c\).
This proves the basin-entry conclusion.

\hfill$\square$

\section{Auxiliary Stochastic Concentration Results}
\label{app:stochastic-results}

This appendix collects the concentration results used in the proofs of main results.  Throughout, we work under the federated Gaussian VAR model introduced in \eqref{eq:global_local_model} and retain
the notation and assumptions introduced in the main article and Appendix~\ref{app:main-theorem-proof}.

\subsection{Covariance concentration}
\label{subsec:prob-sample-geometry}

Write the stacked lag observations as
\(\mathcal Y_i=[\bm y_{i,T_i-1},\bm y_{i,T_i-2},\ldots,\bm y_{i,0}]\)
and define their covariance by
\begin{equation}
\label{eq:app-trajectory-envelope}
 \bbm\Upsilon_i
 :=\operatorname{Cov}\!\left(\operatorname{vec}(\mathcal Y_i)\right).
\end{equation}
For \(0\le t,s<T_i\), its \((t,s)\) block is
\(\operatorname{Cov}(\bm y_{i,T_i-1-t},\bm y_{i,T_i-1-s})\).
The VAR spectral bounds of \citet{basu2015regularized}, with
\(M_i\) defined in \eqref{eq:temporal_dependence_scale}, give
\begin{equation}
\label{eq:app-trajectory-spectral-bound}
 \|\bbm\Upsilon_i\|_{\mathrm{op}}\le M_i,
 \qquad
 \|\bbm\Gamma_i\|_{\mathrm{op}}\le M_i,
 \qquad i\in[N].
\end{equation}

\begin{lemma}[Pooled covariance operator concentration]
\label{lem:prob-M0-geometry}
Under Assumption \ref{assump:stationarity}, suppose that $\Ttot
  \gtrsim M\phi_{\max}^{-1}p$.
Then, with probability at least $1-\exp(-Cp)$,
\begin{equation}
  \left\|\widehat{\bbm\Gamma}_w-\bbm\Gamma_w\right\|_{\op}^2
  \lesssim
  \frac{M\phi_{\max}p}{\Ttot}.
  \label{eq:prob-pooled-covariance-concentration}
\end{equation}
\end{lemma}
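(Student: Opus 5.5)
The plan is to treat the pooled deviation as a single Gaussian quadratic form, apply Hanson--Wright to it, and then take a union bound over an $\epsilon$-net of the sphere. For a unit vector $\bm u\in\mathbb R^p$, write
\[
\bm u^\top(\widehat{\bbm\Gamma}_w-\bbm\Gamma_w)\bm u=\frac{1}{\Ttot}\sum_{i=1}^N\bigl(\|\mathcal Y_i^\top\bm u\|_2^2-\mathbb E\|\mathcal Y_i^\top\bm u\|_2^2\bigr),
\]
using $w_i\widehat{\bbm\Gamma}_i=\Ttot^{-1}\sum_t\bm y_{i,t-1}\bm y_{i,t-1}^\top$.

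\textbf{Gaussian representation.} By Assumption~\ref{assump:gaussian_sampling} each $\operatorname{vec}(\mathcal Y_i)$ is a stationary Gaussian vector, and the clients are independent. Hence the concatenation $\bm z_i:=\mathcal Y_i^\top\bm u\in\mathbb R^{T_i}$ can be written as $\bm z_i=\bm Q_i^{1/2}\bm g_i$, where $\bm g_i$ is standard Gaussian and
\[
\bm Q_i=(\bm I_{T_i}\otimes\bm u^\top)\bbm\Upsilon_i(\bm I_{T_i}\otimes\bm u).
\]
By \eqref{eq:app-trajectory-spectral-bound}, $\|\bm Q_i\|_{\op}\le M_i\le M$, and $\operatorname{tr}(\bm Q_i)=T_i\bm u^\top\bbm\Gamma_i\bm u\le T_i\phi_{\max}$.

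\textbf{Hanson--Wright.} Stacking the clients, the pooled form is $\bm g^\top\bm Q\bm g-\mathbb E[\bm g^\top\bm Q\bm g]$ with $\bm Q=\operatorname{blockdiag}(\bm Q_i)$. This gives $\|\bm Q\|_{\op}\le M$ and
\[
\|\bm Q\|_{\mathrm F}^2\le\sum_i\|\bm Q_i\|_{\op}\operatorname{tr}(\bm Q_i)\le M\phi_{\max}\Ttot.
\]
Hanson--Wright then yields
\[
\mathbb P\bigl(|\bm u^\top(\widehat{\bbm\Gamma}_w-\bbm\Gamma_w)\bm u|>x\bigr)\le2\exp\Bigl[-c\min\Bigl\{\frac{\Ttot x^2}{M\phi_{\max}},\frac{\Ttot x}{M}\Bigr\}\Bigr].
\]

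\textbf{Net argument.} Take a $1/4$-net $\mathcal N$ of the unit sphere with $|\mathcal N|\le9^p$. For symmetric matrices,
\[
\|\widehat{\bbm\Gamma}_w-\bbm\Gamma_w\|_{\op}\le2\max_{\bm u\in\mathcal N}|\bm u^\top(\widehat{\bbm\Gamma}_w-\bbm\Gamma_w)\bm u|.
\]
Set $x=C\sqrt{M\phi_{\max}p/\Ttot}$. The condition $\Ttot\gtrsim M\phi_{\max}^{-1}p$ gives $x\lesssim\phi_{\max}$, so $\Ttot x^2/(M\phi_{\max})\lesssim \Ttot x/M$ and the sub-Gaussian branch is the active one. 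A union bound over $\mathcal N$ then gives failure probability at most $2\cdot9^p e^{-cC^2p}\le e^{-C'p}$ once $C$ is large. Squaring the resulting bound gives \eqref{eq:prob-pooled-covariance-concentration}.

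\textbf{Main obstacle.} The delicate step is the Frobenius bound on the stacked $\bm Q$. It is what lets the pooled sample size $\Ttot$, and not $\min_iT_i$, appear in the rate, and it relies on both the per-client spectral bound $M_i\le M$ and the trace bound $\operatorname{tr}(\bm Q_i)\le T_i\phi_{\max}$, which comes from Assumption~\ref{assump:regular_covariance}. Independence across clients is used only to make $\bm Q$ block diagonal.
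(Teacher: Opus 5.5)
Your proposal is correct and follows essentially the same route as the paper: you stack the scalar projections into a Gaussian vector with block-diagonal covariance, bound its operator norm by $M$, its trace by $\Ttot\phi_{\max}$ and its squared Frobenius norm by $\Ttot M\phi_{\max}$, apply Hanson--Wright, and finish with a $1/4$-net of size $9^p$ at the level $x\asymp\sqrt{M\phi_{\max}p/\Ttot}$. One small imprecision: the bound $x\lesssim\phi_{\max}$ carries a constant that depends on $C$, so by itself it does not make the sub-Gaussian branch the minimum; as the paper checks, however, the sample-size condition gives $\Ttot x/M=C\sqrt{\Ttot\phi_{\max}p/M}\gtrsim Cp$ for the other branch, so both branches exceed the $p\log 9$ entropy cost once $C$ is large.
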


\begin{proof}
We first control one quadratic direction of the difference and then take a
net over the unit sphere.  Fix $\bm v\in\mathbb S^{p-1}$ and stack the scalar
projections of the lag observations as
\begin{equation}
  \bm z_{\bm v}
  :=
  \begin{bmatrix}
    (\bm I_{T_1}\otimes\bm v^\top)\operatorname{vec}(\mathcal Y_1)
    \\
    \vdots
    \\
    (\bm I_{T_N}\otimes\bm v^\top)\operatorname{vec}(\mathcal Y_N)
  \end{bmatrix}
  \in\mathbb R^{\Ttot}.
  \label{eq:prob-pooled-scalar-trajectory}
\end{equation}
Independence of the client trajectories gives its block-diagonal covariance matrix:
\begin{equation}
  \bbm\Psi_{\bm v}
  :=\operatorname{Cov}(\bm z_{\bm v})
  =
  \begin{bmatrix}
    (\bm I_{T_1}\otimes\bm v^\top)
  \bbm\Upsilon_1
  (\bm I_{T_1}\otimes\bm v)&\cdots&\bm 0
    \\
    \vdots&\ddots&\vdots
    \\
    \bm 0&\cdots&(\bm I_{T_N}\otimes\bm v^\top)
  \bbm\Upsilon_N
  (\bm I_{T_N}\otimes\bm v)
  \end{bmatrix}.
\end{equation}
Since $\|\bm v\|_2=1$ and
$\max_{i\in[N]}\|\bbm\Upsilon_i\|_{\op}\le M$,
\begin{equation}
\begin{aligned}
  \|\bbm\Psi_{\bm v}\|_{\op}
  &\le \max_{i\in [N]} \|\bm I_{T_i}\otimes\bm v^\top\|_{\op}\|\bbm\Upsilon_i\|_{\op}
  \|\bm I_{T_i}\otimes\bm v\|_{\op}\leq M,
  \\
  \operatorname{tr}(\bbm\Psi_{\bm v})
  &=\sum_{i=1}^NT_i\bm v^\top\bbm\Gamma_i\bm v
  \le\Ttot\phi_{\max},
  \\
  \|\bbm\Psi_{\bm v}\|_{\mathrm F}^2
  &=\operatorname{tr}(\bbm\Psi_{\bm v}^2)
  \le
  \|\bbm\Psi_{\bm v}\|_{\op}
  \operatorname{tr}(\bbm\Psi_{\bm v})
  \le \Ttot\phi_{\max}M.
  \label{eq:prob-pooled-scalar-trajectory-norms}
\end{aligned}
\end{equation}

Because the trajectories are jointly Gaussian, we write
\begin{equation}
  \bm z_{\bm v}=
  \bbm\Psi_{\bm v}^{1/2}\bbm\xi,
  \quad
  \bbm \xi\sim N(\bm 0,\bm I_{\Ttot}).
  \label{eq:prob-pooled-scalar-Gaussian-representation}
\end{equation}
Then,
\begin{equation}
\begin{aligned}
  \bm v^\top
  (\widehat{\bbm\Gamma}_w-\bbm\Gamma_w)\bm v
  &=\sum_{i=1}^Nw_i
  \bm v^\top
  (\widehat{\bbm\Gamma}_i-\bbm\Gamma_i)\bm v
  \\
  &=\frac1{\Ttot}
  \sum_{i=1}^N\sum_{t=1}^{T_i}
  \Big[
    (\bm v^\top\bm y_{i,t-1})^2
    -\mathbb E
    \big(
      (\bm v^\top\bm y_{i,t-1})^2
    \big)
  \Big]
  \\
  &=\frac1{\Ttot}
  \left\{
    \|\bm z_{\bm v}\|_2^2
    -\mathbb E\|\bm z_{\bm v}\|_2^2
  \right\}
  \\
  &=
  \frac1{\Ttot}
  \left\{
    \bbm\xi^\top\bbm\Psi_{\bm v}\bbm\xi
    -\operatorname{tr}(\bbm\Psi_{\bm v})
  \right\}.
  \label{eq:prob-pooled-covariance-fixed-direction}
\end{aligned}
\end{equation}

Therefore, we apply the Hanson--Wright inequality
\citep[Theorem~6.2.1]{vershynin2018high} with
the sub-Gaussian norms absorbed into the constant $C$. For every $t>0$,
\begin{equation}\label{eq:lemB1_hanson}
\begin{aligned}
  &\mathbb P\left(
    \left|
      \bm v^\top
      (\widehat{\bbm\Gamma}_w-\bbm\Gamma_w)\bm v
    \right|>t
  \right)
  \\
  &\quad\le
  2\exp\left[
    -C\min\left\{
      \frac{\Ttot^2t^2}{\|\bbm\Psi_{\bm v}\|_{\mathrm F}^2},
      \frac{\Ttot t}{\|\bbm\Psi_{\bm v}\|_{\op}}
    \right\}
  \right]
  \\
  &\quad\le
  2\exp\left[
    -C\min\left\{
      \frac{\Ttot t^2}{M\phi_{\max}},
      \frac{\Ttot t}{M}
    \right\}
  \right],
\end{aligned}
\end{equation}

Take $\overline{\mathbb S}$ to be a $1/4$-net of $\mathbb S^{p-1}$. Then, $|\overline{\mathbb S}|\le 9^p$.
For any $\bm v\in\mathbb S^{p-1}$, choose
$\overline{\bm v}\in\overline{\mathbb S}$ such that
$\|\bm v-\overline{\bm v}\|_2\le1/4$.  If
$\bm v=\overline{\bm v}$, the following conclusion is immediate; otherwise,
for every symmetric matrix $\bm M$,
\[
\begin{aligned}
  |\bm v^\top\bm M\bm v|
  &\le
  |\overline{\bm v}^\top\bm M\overline{\bm v}|
  +|(\bm v-\overline{\bm v})^\top\bm M\bm v|
  +|\overline{\bm v}^\top\bm M(\bm v-\overline{\bm v})|
  \\
  &=
  |\overline{\bm v}^\top\bm M\overline{\bm v}|
  +\|\bm v-\overline{\bm v}\|_2
  \left|
    \frac{(\bm v-\overline{\bm v})^\top}
    {\|\bm v-\overline{\bm v}\|_2}
    \bm M\bm v
  \right|
  \\
  &\quad
  +\|\bm v-\overline{\bm v}\|_2
  \left|
    \overline{\bm v}^\top\bm M
    \frac{\bm v-\overline{\bm v}}
    {\|\bm v-\overline{\bm v}\|_2}
  \right|
  \\
  &\le
  \max_{\overline{\bm v}\in\overline{\mathbb S}}
  |\overline{\bm v}^\top\bm M\overline{\bm v}|
  +2\|\bm v-\overline{\bm v}\|_2\|\bm M\|_{\op}
  \\
  &\le
  \max_{\overline{\bm v}\in\overline{\mathbb S}}
  |\overline{\bm v}^\top\bm M\overline{\bm v}|
  +\frac12\|\bm M\|_{\op}.
\end{aligned}
\]
Taking the supremum over $\bm v\in\mathbb S^{p-1}$ and using
$\|\bm M\|_{\op}=\sup_{\|\bm v\|_2=1}|\bm v^\top\bm M\bm v|$ give
\begin{equation}\label{eq:lemB1_discrete}
  \|\bm M\|_{\op}
  \le2\max_{\overline{\bm v}\in\overline{\mathbb S}}
  |\overline{\bm v}^\top\bm M\overline{\bm v}|.  
\end{equation}
Therefore, by \eqref{eq:lemB1_hanson} and the union bound, we have
\begin{equation}\label{eq:lemB1_union}
\begin{aligned}
  \mathbb P\left(\left\|\widehat{\bbm\Gamma}_w-\bbm\Gamma_w\right\|_{\op}\geq t\right) & \le \mathbb P\left(\max_{\overline{\bm v}\in\overline{\mathbb S}}\left|
      \overline{\bm v}^\top
      (\widehat{\bbm\Gamma}_w-\bbm\Gamma_w)\overline{\bm v}
  \right|>\frac{t}{2}\right)\\
  & \le \sum_{\overline{\bm v}\in\overline{\mathbb S}}\mathbb P\left(\left|
      \overline{\bm v}^\top
      (\widehat{\bbm\Gamma}_w-\bbm\Gamma_w)\overline{\bm v}
  \right|>\frac{t}{2}\right)\\
  & \le 9^p\times 2\exp\left[
    -C\min\left\{
      \frac{\Ttot t^2}{M\phi_{\max}},
      \frac{\Ttot t}{M}
    \right\}
  \right]
\end{aligned}
\end{equation}
Setting
\[
  t=C\sqrt{\frac{M\phi_{\max}p}{\Ttot}},
\]
the condition $\Ttot\gtrsim M\phi_{\max}^{-1}p$
gives
\[
  \frac{\Ttot t^2}{M\phi_{\max}}=Cp,
  \quad\text{and}\quad
  \frac{\Ttot t}{M}
  =C\sqrt{\frac{\Ttot\phi_{\max}p}{M}}
  \ge Cp.
\]
Consequently, choosing the absolute constants sufficiently large, we obtain
\begin{equation}
\begin{aligned}
  \mathbb P\left(\left\|\widehat{\bbm\Gamma}_w-\bbm\Gamma_w\right\|_{\op}\geq C\sqrt{\frac{M\phi_{\max}p}{\Ttot}}\right)
  \le \exp\left(
    -Cp
  \right).
\end{aligned}
\end{equation}
\end{proof}

\begin{lemma}[Clientwise covariance bound]
\label{lem:prob-local-geometry}
Under Assumption~\ref{assump:stationarity}, suppose that
\begin{equation*}
  T_i
  \gtrsim
  M\phi_{\max}^{-1}(p+\log N),
  \qquad i\in[N].
\end{equation*}
Then, with probability at least
$1-C\exp\{-c(p+\log N)\}$, simultaneously for all $i\in[N]$,
\begin{equation}
  \left\|\widehat{\bbm\Gamma}_i-\bbm\Gamma_i\right\|_{\op}^2
  \lesssim
  \frac{M\phi_{\max}(p+\log N)}{T_i}.
  \label{eq:prob-local-covariance-concentration}
\end{equation}
Moreover,
\begin{equation*}
  \lambda_{\max}(\widehat{\bbm\Gamma}_i)
  \le
  C\phi_{\max}.
\end{equation*}
If, in addition,
\begin{equation*}
  T_i
  \gtrsim
  M\phi_{\max}\phi_{\min}^{-2}(p+\log N),
  \qquad i\in[N],
\end{equation*}
then, on the same event,
\begin{equation*}
  \frac{\phi_{\min}}{2}
  \le
  \lambda_{\min}(\widehat{\bbm\Gamma}_i)
  \le
  \lambda_{\max}(\widehat{\bbm\Gamma}_i)
  \le
  \phi_{\max}+\frac{\phi_{\min}}{2}
  \le
  \frac{3\phi_{\max}}{2}.
\end{equation*}
\end{lemma}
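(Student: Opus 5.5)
The plan is to mirror the proof of Lemma~\ref{lem:prob-M0-geometry}, but applied to each client separately. A union bound over clients then costs only a $\log N$ term in the exponent. Fix $i\in[N]$ and a unit vector $\bm v\in\mathbb S^{p-1}$, and set $\bm z_{i,\bm v}:=(\bm I_{T_i}\otimes\bm v^\top)\operatorname{vec}(\mathcal Y_i)\in\mathbb R^{T_i}$. This vector is Gaussian with covariance $\bbm\Psi_{i,\bm v}$. By \eqref{eq:app-trajectory-spectral-bound}, exactly as in \eqref{eq:prob-pooled-scalar-trajectory-norms}, it satisfies
\[
\|\bbm\Psi_{i,\bm v}\|_{\op}\le M,\qquad \operatorname{tr}(\bbm\Psi_{i,\bm v})\le T_i\phi_{\max},\qquad \|\bbm\Psi_{i,\bm v}\|_{\mathrm F}^2\le T_i\phi_{\max}M.
\]
We can then write
\[
\bm v^\top(\widehat{\bbm\Gamma}_i-\bbm\Gamma_i)\bm v=T_i^{-1}\{\bbm\xi^\top\bbm\Psi_{i,\bm v}\bbm\xi-\operatorname{tr}(\bbm\Psi_{i,\bm v})\}.
\]
The Hanson--Wright inequality gives the tail bound
\[
2\exp[-C\min\{T_it^2/(M\phi_{\max}),\,T_it/M\}].
\]

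Next I would discretize. The $1/4$-net argument leading to \eqref{eq:lemB1_discrete} reduces $\|\widehat{\bbm\Gamma}_i-\bbm\Gamma_i\|_{\op}$ to at most $9^p$ quadratic forms. Taking a union over these net points and over the $N$ clients gives a failure probability of at most
\[
2N\,9^p\exp[-C\min\{T_it^2/(M\phi_{\max}),\,T_it/M\}].
\]
I would choose $t_i=C'\sqrt{M\phi_{\max}(p+\log N)/T_i}$. The first term in the minimum is then $C'^2(p+\log N)$. The second term is $C'\sqrt{T_i\phi_{\max}(p+\log N)/M}$, which is at least $C'(p+\log N)$ precisely under the assumed condition $T_i\gtrsim M\phi_{\max}^{-1}(p+\log N)$. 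Taking $C'$ large absorbs the factor $N9^p=\exp(p\log9+\log N)$, which leaves probability at least $1-C\exp\{-c(p+\log N)\}$ simultaneously for all $i$. This proves \eqref{eq:prob-local-covariance-concentration}.

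The remaining claims follow deterministically on this event via Weyl's inequality. The same sample-size condition gives $M\phi_{\max}(p+\log N)/T_i\lesssim\phi_{\max}^2$, hence $\|\widehat{\bbm\Gamma}_i-\bbm\Gamma_i\|_{\op}\le C\phi_{\max}$ and so $\lambda_{\max}(\widehat{\bbm\Gamma}_i)\le\phi_{\max}+C\phi_{\max}$. Under the stronger condition $T_i\gtrsim M\phi_{\max}\phi_{\min}^{-2}(p+\log N)$ with a large enough constant, the deviation is at most $\phi_{\min}/2$. Assumption~\ref{assump:regular_covariance} then yields $\phi_{\min}/2\le\lambda_{\min}(\widehat{\bbm\Gamma}_i)\le\lambda_{\max}(\widehat{\bbm\Gamma}_i)\le\phi_{\max}+\phi_{\min}/2\le3\phi_{\max}/2$.

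The only delicate point is bookkeeping. The constants implicit in the $\gtrsim$ conditions must be large enough that the linear (sub-exponential) branch of Hanson--Wright never dominates. They must also make the union bound over $N$ clients cost only $\log N$, as computed above. Beyond that, the argument is routine.
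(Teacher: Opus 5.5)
Your proposal is correct and follows the paper's proof essentially step for step. Both use Hanson--Wright for each fixed direction with the clientwise bounds on $\bbm\Psi_{i,\bm v}$, then the $1/4$-net reduction, a union bound with $t\asymp\sqrt{M\phi_{\max}(p+\log N)/T_i}$ (where the sample-size condition controls the sub-exponential branch), and finally Weyl's inequality for the eigenvalue claims. The only cosmetic difference is that the paper first records a per-client tail bound for general $x\ge p$ and then sets $x\asymp p+\log N$ before taking the union over clients, whereas you union-bound over net points and clients in a single step.
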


\begin{proof}
Fix $i\in[N]$ and $\bm v\in\mathbb S^{p-1}$, and define 
\begin{equation*}
  \bm z_{i,\bm v}
  :=
  (\bm I_{T_i}\otimes\bm v^\top)
  \operatorname{vec}(\mathcal Y_i),
  \qquad
  \bbm\Psi_{i,\bm v}
  :=
  \operatorname{Cov}(\bm z_{i,\bm v}).
\end{equation*}
Then,
\begin{equation}
\begin{aligned}
  \|\bbm\Psi_{i,\bm v}\|_{\op}
  &\le M,
  \\
  \operatorname{tr}(\bbm\Psi_{i,\bm v})
  &=T_i\bm v^\top\bbm\Gamma_i\bm v
  \le T_i\phi_{\max},
  \\
  \|\bbm\Psi_{i,\bm v}\|_{\mathrm F}^2
  &\le
  \|\bbm\Psi_{i,\bm v}\|_{\op}
  \operatorname{tr}(\bbm\Psi_{i,\bm v})
  \le T_iM\phi_{\max}.
  \label{eq:prob-local-scalar-trajectory-norms}
\end{aligned}
\end{equation}
Therefore, applying the Hanson--Wright inequality and the covering-net
argument from the proof of Lemma~\ref{lem:prob-M0-geometry} clientwise,
with $\Ttot$ replaced by $T_i$, yields the following: for every $x\ge p$
satisfying
\begin{equation}\label{eq:lemB2_sample_size}
  T_i\gtrsim M\phi_{\max}^{-1}x,
\end{equation}
we have
\begin{equation}
  \mathbb P\left(
    \left\|\widehat{\bbm\Gamma}_i-\bbm\Gamma_i\right\|_{\op}
    >C\sqrt{\frac{M\phi_{\max}x}{T_i}}
  \right)
  \le C\exp(-cx).
  \label{eq:prob-local-covariance-clientwise-tail}
\end{equation}
Taking $x\asymp p+\log N$ and then a union bound over $i\in[N]$ proves
\eqref{eq:prob-local-covariance-concentration} under the first sample size
condition $T_i
  \gtrsim
  M\phi_{\max}^{-1}(p+\log N)$ with the stated probability. It
also gives
\begin{equation*}
  \lambda_{\max}(\widehat{\bbm\Gamma}_i)
  \le
  \lambda_{\max}(\bbm\Gamma_i)
  +\|\widehat{\bbm\Gamma}_i-\bbm\Gamma_i\|_{\op}
  \le C\phi_{\max}.
\end{equation*}

Under the stronger sample-size condition, choosing its implicit constant
sufficiently large gives
\begin{equation}
  \left\|\widehat{\bbm\Gamma}_i-\bbm\Gamma_i\right\|_{\op}^2
  \lesssim
  \frac{M\phi_{\max}(p+\log N)}{T_i}
  \le
  \frac{\phi_{\min}^2}{4}.
\end{equation}
Therefore, Weyl's inequality gives
\begin{equation*}
\begin{aligned}
  \lambda_{\min}(\widehat{\bbm\Gamma}_i)
  &\ge
  \lambda_{\min}(\bbm\Gamma_i)
  -\|\widehat{\bbm\Gamma}_i-\bbm\Gamma_i\|_{\op}
  \ge
  \frac{\phi_{\min}}{2},
  \\
  \lambda_{\max}(\widehat{\bbm\Gamma}_i)
  &\le
  \lambda_{\max}(\bbm\Gamma_i)
  +\|\widehat{\bbm\Gamma}_i-\bbm\Gamma_i\|_{\op}
  \le
  \phi_{\max}+\frac{\phi_{\min}}{2}
  \le
  \frac{3\phi_{\max}}{2}.
\end{aligned}
\end{equation*}
These two inequalities prove the asserted two-sided eigenvalue bound.
\end{proof}

\begin{lemma}[Sparse restricted centered-covariance concentration]
\label{lem:prob-sparse-geometry}
Under Assumption~\ref{assump:stationarity}, the following two conclusions hold.
\begin{enumerate}[label=\textup{(\roman*)}]
\item Suppose that
$\Ttot\gtrsim M\phi_{\max}^{-1}s_0^\dagger\log p$. Then, with
probability at least $1-C\exp\{-c(s_0^\dagger\vee1)\log p\}$, uniformly over every
  $\Omega_0\subset[p]\times[p]$ and $\bm H\in\mathbb R^{p\times p}$ satisfying
\begin{equation*}
  |\Omega_0|\le s_0^\dagger,
  \qquad
  \supp(\bm H)\subseteq\Omega_0,
\end{equation*}
we have
\begin{equation}
  \left\|
    \mathcal P_{\Omega_0}
    \left\{\bm H
    (\widehat{\bbm\Gamma}_w-\bbm\Gamma_w)\right\}
  \right\|_{\mathrm F}^2
  \lesssim 
  \frac{M\phi_{\max}s_0^\dagger\log p}{\Ttot}
  \|\bm H\|_{\mathrm F}^2.
  \label{eq:prob-global-sparse-centered-covariance}
\end{equation}

\item For every $i\in[N]$, suppose that
$T_i\gtrsim M\phi_{\max}^{-1}
(s_i^\dagger\log p+\log N)$. Then, with probability at least
$1-C\exp\{-c((s_{\min}^\dagger\vee1)\log p+\log N)\}$, simultaneously for
every $i\in[N]$, every $\Omega_i\subset[p]\times[p]$ satisfying
$|\Omega_i|\le s_i^\dagger$, and every
$\bm H\in\mathbb R^{p\times p}$ satisfying $\supp(\bm H)\subseteq\Omega_i$, we have
\begin{equation}
  \left\|
    \mathcal P_{\Omega_i}
    \left\{\bm H
    (\widehat{\bbm\Gamma}_i-\bbm\Gamma_i)\right\}
  \right\|_{\mathrm F}^2
  \le
  C M\phi_{\max}
  \frac{s_i^\dagger\log p+\log N}{T_i}
  \|\bm H\|_{\mathrm F}^2.
  \label{eq:prob-local-sparse-centered-covariance}
\end{equation}
\end{enumerate}
\end{lemma}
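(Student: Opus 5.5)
The plan is to reuse the Hanson--Wright template from the proof of Lemma~\ref{lem:prob-M0-geometry}, replacing the $1/4$-net of $\mathbb S^{p-1}$ by a union over sparse supports and, within each support, a net over unit-norm matrices supported there. The basic reduction is a linearization. For fixed $\Omega$ with $|\Omega|\le s$, the quantity
\[
\sup_{\substack{\supp(\bm H)\subseteq\Omega\\ \|\bm H\|_{\mathrm F}=1}}\bigl\|\mathcal P_{\Omega}\{\bm H(\widehat{\bbm\Gamma}-\bbm\Gamma)\}\bigr\|_{\mathrm F}
\]
equals
\[
\sup_{\bm H,\bm K}\langle \bm K,\bm H(\widehat{\bbm\Gamma}-\bbm\Gamma)\rangle ,
\]
where the second supremum runs over unit-Frobenius matrices $\bm H,\bm K$ both supported on $\Omega$. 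We then write
\[
\langle\bm K,\bm H\bm M\rangle=\operatorname{tr}(\bm H^\top\bm K\bm M).
\]
Here $\bm H^\top\bm K$ is a $p\times p$ matrix. Its nonzero rows and columns lie in the column index set $J(\Omega)$ of $\Omega$, which has $|J(\Omega)|\le s$, and its nuclear norm is at most $1$. Hence the bilinear form is bounded by $2\|(\widehat{\bbm\Gamma}-\bbm\Gamma)_{J,J}\|_{\op}$, after symmetrizing ($\bm M$ is symmetric). This reduces both parts to controlling principal $s\times s$ submatrices of the centered covariance.

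For a fixed index set $J$ with $|J|\le s$ and a fixed unit vector $\bm v$ supported on $J$, we apply \eqref{eq:lemB1_hanson} exactly as in Lemma~\ref{lem:prob-M0-geometry}. For part (i) we use the pooled vector $\bm z_{\bm v}$ and $\Ttot$; for part (ii) we use client $i$ and $T_i$. The tail is $2\exp[-C\min\{\Ttot t^2/(M\phi_{\max}),\Ttot t/M\}]$. A $1/4$-net on the unit sphere of $\mathbb R^{J}$ has at most $9^{s}$ points, and by \eqref{eq:lemB1_discrete} it controls $\|(\widehat{\bbm\Gamma}-\bbm\Gamma)_{J,J}\|_{\op}$ up to a factor $2$. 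We then take a union bound over at most $\binom{p}{s}\le p^{s}$ sets $J$. This gives a total entropy of $s\log(9p)\lesssim s\log p$.

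For part (i) we choose
\[
t\asymp\sqrt{M\phi_{\max}s_0^\dagger\log p/\Ttot}.
\]
The condition $\Ttot\gtrsim M\phi_{\max}^{-1}s_0^\dagger\log p$ ensures that the linear branch of the minimum also exceeds $C s_0^\dagger\log p$, as in the proof of Lemma~\ref{lem:prob-M0-geometry}. The union bound then leaves probability $1-C\exp\{-c(s_0^\dagger\vee1)\log p\}$. Squaring the operator bound yields \eqref{eq:prob-global-sparse-centered-covariance}, uniformly over $\Omega_0$ and $\bm H$.

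For part (ii) we choose
\[
x\asymp s_i^\dagger\log p+\log N, \qquad t\asymp\sqrt{M\phi_{\max}x/T_i},
\]
for each client, and take the additional union over $i\in[N]$. The $\log N$ term absorbs the factor $N$, and the failure probability is summed over clients. The main obstacle is making this sum come out as $C\exp\{-c((s_{\min}^\dagger\vee1)\log p+\log N)\}$. We handle it by bounding each term by $\exp\{-c(s_i^\dagger\log p+2\log N)\}\le N^{-1}\exp\{-c(s_{\min}^\dagger\log p+\log N)\}$, which requires inflating the constant in $x$. A secondary check is the nuclear-norm/principal-submatrix reduction, which must handle the $\mathcal P_\Omega$ projection correctly: restricting $\bm K$ to $\Omega$ only shrinks the supremum, so the bound remains valid.
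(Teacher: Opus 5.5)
Your proposal is correct and follows essentially the same route as the paper: reduce to $\sup_{|J|\le s}\|(\widehat{\bbm\Gamma}-\bbm\Gamma)_{J,J}\|_{\op}$, control this by the fixed-direction Hanson--Wright tail \eqref{eq:lemB1_hanson} over a union of $1/4$-nets on supports of size $s$ (entropy $\lesssim s\log p$), and add a union over clients in part (ii). The only difference is cosmetic. The paper gets the submatrix reduction row by row, bounding each row of $\mathcal P_{\Omega}\{\bm H(\widehat{\bbm\Gamma}-\bbm\Gamma)\}$ by the principal submatrix on that row's column set, whereas your duality and nuclear-norm argument with $\bm H^\top\bm K$ reaches the same quantity, and in fact without needing the factor $2$.
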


\begin{proof}
Let
$(\Omega_0)_{a,\cdot}=\{b:(a,b)\in\Omega_0\}$ be the active columns corresponding to the $a$-th row of $\Omega_0$. Then,
\begin{equation}
\begin{aligned}
  &\left\|
    \mathcal P_{\Omega_0}
    \left\{\bm H
    (\widehat{\bbm\Gamma}_w-\bbm\Gamma_w)\right\}
  \right\|_{\mathrm F}^2
  \\
  &=\sum_{a=1}^p
  \left\|
    \bm H_{a,(\Omega_0)_{a,\cdot}}
    [\widehat{\bbm\Gamma}_w-\bbm\Gamma_w]
    _{(\Omega_0)_{a,\cdot},(\Omega_0)_{a,\cdot}}
  \right\|_2^2
  \\
  &\le
  \sum_{a=1}^p
  \|\bm H_{a,(\Omega_0)_{a,\cdot}}\|_2^2
  \left\|
    [\widehat{\bbm\Gamma}_w-\bbm\Gamma_w]
    _{(\Omega_0)_{a,\cdot},(\Omega_0)_{a,\cdot}}
  \right\|_{\op}^2
  \\
  &\le
  \sup_{\substack{J\subset[p],\ |J|\le s_0^\dagger}}
  \left\|
    [\widehat{\bbm\Gamma}_w-\bbm\Gamma_w]_{J,J}
  \right\|_{\op}^2
  \|\bm H\|_{\mathrm F}^2.
\end{aligned}
  \label{eq:prob-pooled-restricted-covariance-to-product}
\end{equation}
The same rowwise calculation gives,
\begin{equation}
  \left\|
    \mathcal P_{\Omega_i}
    \left\{\bm H
    (\widehat{\bbm\Gamma}_i-\bbm\Gamma_i)\right\}
  \right\|_{\mathrm F}^2\le
  \sup_{\substack{J\subset[p],\ |J|\le s_i^\dagger}}
  \left\|
    [\widehat{\bbm\Gamma}_i-\bbm\Gamma_i]_{J,J}
  \right\|_{\op}^2
  \|\bm H\|_{\mathrm F}^2.
  \label{eq:prob-local-restricted-covariance-to-product}
\end{equation}
Thus it remains to control the two restricted covariance norms.

For part~\textup{(i)}, fix $J\subset[p]$ with $|J|\le s_0^\dagger$ and a
unit vector $\bm v$ supported on $J$. Repeating the derivation leading to
\eqref{eq:lemB1_hanson} in Lemma~\ref{lem:prob-M0-geometry}, and using
\eqref{eq:prob-pooled-scalar-trajectory-norms}, gives, for every $t>0$,
\begin{equation}
\begin{aligned}
  \mathbb P\left(
    \left|
      \bm v^\top
      (\widehat{\bbm\Gamma}_w-\bbm\Gamma_w)\bm v
    \right|>t
  \right)
  \le
  2\exp\left[
    -C\min\left\{
      \frac{\Ttot t^2}{M\phi_{\max}},
      \frac{\Ttot t}{M}
    \right\}
  \right].
\end{aligned}
  \label{eq:prob-pooled-restricted-fixed-tail}
\end{equation}
For every such $J$, let $\mathcal N_J$ be a $1/4$-net of its supported unit sphere. The standard volumetric bound gives
\begin{equation}
  |\mathcal N_J|
  \le \left(1+\frac{2}{1/4}\right)^{s_0^\dagger\wedge p}
  =9^{s_0^\dagger\wedge p}.
  \label{eq:prob-supported-sphere-covering}
\end{equation}
Accounting for all possible supports, the logarithm of the total net
cardinality satisfies
\begin{equation}\label{eq:lemB2_sparseNet}
  \log\left\{
    \binom{p}{s_0^\dagger\wedge p}
    9^{s_0^\dagger\wedge p}
  \right\}
  \le
  (s_0^\dagger\wedge p)
  \log\left(
    \frac{9ep}{s_0^\dagger\wedge p}
  \right)
  \le
  C s_0^\dagger\log p.
\end{equation}
For a symmetric matrix $\bm M$, the net argument in
\eqref{eq:lemB1_union}, applied within each supported subspace, yields
\begin{equation}
  \sup_{\substack{J\subset[p],\ |J|\le s_0^\dagger}}
  \|\bm M_{J,J}\|_{\op}
  \le
  2\max_{\substack{J\subset[p],\ |J|=s_0^\dagger\wedge p\\
  \bm v\in\mathcal N_J}}
  |\bm v^\top\bm M\bm v|.
  \label{eq:prob-restricted-operator-net}
\end{equation}
Combining \eqref{eq:prob-pooled-restricted-fixed-tail}--
\eqref{eq:prob-restricted-operator-net} and taking a union bound give
\begin{equation}
  \mathbb P\left(
    \sup_{\substack{J\subset[p],\ |J|\le s_0^\dagger}}
    \left\|
      [\widehat{\bbm\Gamma}_w-\bbm\Gamma_w]_{J,J}
    \right\|_{\op}>2t
  \right)
  \le
  2\exp\left[
    C s_0^\dagger\log p
    -C\min\left\{
      \frac{\Ttot t^2}{M\phi_{\max}},
      \frac{\Ttot t}{M}
    \right\}
  \right].
  \label{eq:prob-pooled-restricted-net-tail}
\end{equation}
As in Lemma~\ref{lem:prob-M0-geometry}, set
\begin{equation*}
  t=C
  \sqrt{\frac{M\phi_{\max}s_0^\dagger\log p}{\Ttot}}.
\end{equation*}
Under
$\Ttot\gtrsim M\phi_{\max}^{-1}s_0^\dagger\log p$,
\begin{equation*}
  \frac{\Ttot t^2}{M\phi_{\max}}
  =C s_0^\dagger\log p,
  \qquad
  \frac{\Ttot t}{M}
  =C\sqrt{
    \frac{\Ttot\phi_{\max}s_0^\dagger\log p}{M}}
  \ge C s_0^\dagger\log p.
\end{equation*}
Consequently, choosing the absolute constant sufficiently large in
\eqref{eq:prob-pooled-restricted-net-tail} yields
\begin{equation}
  \sup_{\substack{J\subset[p],\ |J|\le s_0^\dagger}}
  \left\|
    [\widehat{\bbm\Gamma}_w-\bbm\Gamma_w]_{J,J}
  \right\|_{\op}
  \le
  C\sqrt{
    \frac{M\phi_{\max}s_0^\dagger\log p}{\Ttot}}
  \label{eq:prob-pooled-restricted-covariance-absorbed}
\end{equation}
with probability at least $1-C\exp\{-C(s_0^\dagger\vee1)\log p\}$. Combining
\eqref{eq:prob-pooled-restricted-covariance-to-product} with
\eqref{eq:prob-pooled-restricted-covariance-absorbed} proves
part~\textup{(i)}.

For part~\textup{(ii)}, for each fixed client $i$, repeat the preceding
fixed-direction and sparse-net arguments for
$\widehat{\bbm\Gamma}_i-\bbm\Gamma_i$ and set
\begin{equation*}
  t=C\sqrt{
    \frac{M\phi_{\max}(s_i^\dagger\log p+\log N)}{T_i}}.
\end{equation*}
Under
$T_i\gtrsim M\phi_{\max}^{-1}
(s_i^\dagger\log p+\log N)$, we obtain
\begin{equation}
  \sup_{\substack{J\subset[p],\ |J|\le s_i^\dagger}}
  \left\|
    [\widehat{\bbm\Gamma}_i-\bbm\Gamma_i]_{J,J}
  \right\|_{\op}
  \le
  C\sqrt{
    \frac{M\phi_{\max}(s_i^\dagger\log p+\log N)}{T_i}}.
  \label{eq:prob-local-restricted-covariance-absorbed}
\end{equation}
For each fixed $i$, this holds with failure probability at most
$C\exp\{-C(s_i^\dagger\log p+\log N)\}$. Hence, a union bound over all
clients with $s_i^\dagger>0$, for which
$s_i^\dagger\ge s_{\min}^\dagger\vee1$, gives the
simultaneous probability lower bound $1-C\exp\{-C((s_{\min}^\dagger\vee1)\log p+\log N)\}$. Finally,
\eqref{eq:prob-local-restricted-covariance-to-product} and
\eqref{eq:prob-local-restricted-covariance-absorbed} prove
part~\textup{(ii)}.
\end{proof}

\subsection{Score bounds}
\label{subsec:prob-scores}

\begin{lemma}[Pooled operator score]
\label{lem:prob-pooled-lr-score}
Under Assumption \ref{assump:stationarity}, suppose
$
  \Ttot
  \gtrsim
  M\phi_{\max}^{-1}p
$. 
Then, with probability at least $1-\exp(-Cp)$,
\begin{equation}
  \|\widehat{\bm W}_w\|_{\op}^2
  \le
  C\frac{\tau^2\phi_{\max}p}{\Ttot}.
  \label{eq:prob-pooled-lr-score}
\end{equation}
\end{lemma}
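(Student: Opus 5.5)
We bound $\|\widehat{\bm W}_w\|_{\op}=\sup_{\bm u,\bm v\in\mathbb S^{p-1}}\bm u^\top\widehat{\bm W}_w\bm v$ by controlling one bilinear form for fixed unit $(\bm u,\bm v)$ and then taking a covering net over $\mathbb S^{p-1}\times\mathbb S^{p-1}$, in the same way as in Lemma~\ref{lem:prob-M0-geometry}.

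\emph{Fixed directions.} Fix unit $\bm u,\bm v$ and write
\[
\bm u^\top\widehat{\bm W}_w\bm v=\Ttot^{-1}\sum_{i,t}(\bm u^\top\bbm\epsilon_{i,t})(\bm v^\top\bm y_{i,t-1}).
\]
Stack the scalars $a_{i,t}=\bm u^\top\bbm\epsilon_{i,t}$ into a vector $\bm a\in\mathbb R^{\Ttot}$, and the scalars $\bm v^\top\bm y_{i,t-1}$ into the vector $\bm z_{\bm v}$ of \eqref{eq:prob-pooled-scalar-trajectory}. Then the sum is $\Ttot^{-1}\bm a^\top\bm z_{\bm v}$.

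We use the polarization identity $4\bm a^\top\bm z=\|\bm a+\bm z\|^2-\|\bm a-\bm z\|^2$. Since $\bbm\epsilon_{i,t}$ is independent of $\bm y_{i,t-1}$, each of the two terms has the same mean, so the means cancel. It therefore suffices to bound the deviation of each quadratic form $\|\bm a\pm\bm z_{\bm v}\|_2^2$ from its expectation.

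Each $\bm a\pm\bm z_{\bm v}$ is jointly Gaussian with covariance $\bbm\Psi_\pm$. We have $\|\operatorname{Cov}(\bm a)\|_{\op}\le\tau^2$ and $\|\bbm\Psi_{\bm v}\|_{\op}\le M$, so $\|\bbm\Psi_\pm\|_{\op}\le 2(\tau^2+M)$. For the trace, $\operatorname{tr}(\bbm\Psi_\pm)\le 2\Ttot(\tau^2+\phi_{\max})$.

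Applying Hanson--Wright with $\|\bbm\Psi\|_{\mathrm F}^2\le\|\bbm\Psi\|_{\op}\operatorname{tr}(\bbm\Psi)$, as in \eqref{eq:lemB1_hanson}, gives a tail of the form
\[
2\exp\!\left[-c\min\left\{\frac{\Ttot t^2}{K_1},\frac{\Ttot t}{K_2}\right\}\right].
\]

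\emph{Main obstacle.} The issue is obtaining the sharp product scale $\tau^2\phi_{\max}$ instead of the cruder scale $(\tau^2+M)(\tau^2+\phi_{\max})$. To get it, we rescale before polarizing: apply the identity to $\lambda\bm a$ and $\lambda^{-1}\bm z_{\bm v}$ with $\lambda^2=\sqrt{\phi_{\max}}/\tau$. This balances the two traces at $\Ttot\tau\sqrt{\phi_{\max}}$, so the Frobenius scale becomes of order $\Ttot\tau\sqrt{\phi_{\max}}\,(\tau\sqrt{\phi_{\max}}+M\ldots)$.

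An alternative that gives the clean constant is to condition on $\{\bm y_{i,t-1}\}$ and use the martingale structure. Conditionally, $\bm a^\top\bm z_{\bm v}$ is Gaussian given the predictors only in a martingale sense. So we would apply a Freedman or sub-Gaussian martingale bound with predictable variance $\tau^2\|\bm z_{\bm v}\|^2$. On the event of Lemma~\ref{lem:prob-M0-geometry}, this variance is at most $2\tau^2\Ttot\phi_{\max}$, which gives
\[
\mathbb P\left(|\bm u^\top\widehat{\bm W}_w\bm v|>t\right)\le2\exp\left(-c\,\Ttot t^2/(\tau^2\phi_{\max})\right).
\]
We would pursue this martingale route first, since it uses only the already-proved covariance event and the Gaussian innovations of Assumption~\ref{assump:gaussian_sampling}.

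\emph{Net argument.} Take $1/4$-nets of $\mathbb S^{p-1}$ for both $\bm u$ and $\bm v$, giving at most $81^p$ pairs. The standard bilinear net inequality yields $\|\bm W\|_{\op}\le C\max_{\text{net}}|\bm u^\top\bm W\bm v|$. We then take a union bound and set $t=C\sqrt{\tau^2\phi_{\max}p/\Ttot}$; the sample-size condition $\Ttot\gtrsim M\phi_{\max}^{-1}p$ ensures the covariance event holds and that any linear-tail term is dominated. This gives \eqref{eq:prob-pooled-lr-score} with probability $1-\exp(-Cp)$, after intersecting with the event of Lemma~\ref{lem:prob-M0-geometry}.
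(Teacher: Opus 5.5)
Your martingale route is correct and is essentially the paper's proof. For fixed $(\bm u,\bm v)$ the paper shows $\mathbb E\exp(\lambda S-\lambda^2V/2)=1$ from the conditional Gaussian MGF, the tower property and independence across clients, and deduces $\mathbb P(|S|\ge z,\,V\le\overline V)\le 2\exp\{-z^2/(2\overline V)\}$; it then bounds $V\le C\tau^2\phi_{\max}/\Ttot$ on the event $\|\widehat{\bbm\Gamma}_w\|_{\op}\le C\phi_{\max}$ and finishes with $1/4$-nets, a union bound over $9^{2p}$ pairs and an event decomposition, exactly as you outline. The rescaled polarization/Hanson--Wright detour can be dropped: it treats $\bm a$ and $\bm z_{\bm v}$ as generic correlated Gaussians and generally leaves a dependence on $M$ in the sub-Gaussian scale, so committing to the martingale argument is the right choice.
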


\begin{proof}
For any $\bm u, \bm v\in\mathbb S^{p-1}$, define
\begin{equation*}
  S:=\bm u^\top\widehat{\bm W}_w\bm v
  =
  \sum_{i=1}^N\sum_{t=1}^{T_i}
  \frac{w_i}{T_i}
  (\bm u^\top\bbm\epsilon_{it})
  (\bm v^\top\bm y_{i,t-1}).
\end{equation*}
For client $i$, let
$\mathcal F_{i,t-1}:=\sigma(\bm y_{i,0},\ldots,\bm y_{i,t-1})$.
Since $\bbm\epsilon_{it}$ is independent of $\mathcal F_{i,t-1}$, conditionally on the past, $\bm u^\top\bbm\epsilon_{it}$ remains centered Gaussian and
$\bm v^\top\bm y_{i,t-1}$ is fixed. Then, define the conditional accumulated variance as
\begin{equation*}
  V:=\sum_{i=1}^N\sum_{t=1}^{T_i}
  \operatorname{Var}\left(
    \frac{w_i}{T_i}
    (\bm u^\top\bbm\epsilon_{it})
    (\bm v^\top\bm y_{i,t-1})
    \mathrel{}\middle|\mathrel{}\mathcal F_{i,t-1}
  \right)
  =\sum_{i=1}^N\sum_{t=1}^{T_i}
  \frac{w_i^2}{T_i^2}
  (\bm u^\top\bbm\Sigma_{\epsilon,i}\bm u)
  (\bm v^\top\bm y_{i,t-1})^2.
\end{equation*}

For any $z>0$, $\overline V>0$, and $\lambda>0$, Markov's
inequality gives
\begin{equation}\label{eq:lemB2_Markov}
\begin{aligned}
  \mathbb P(S\ge z,\ V\le\overline V)
  &\le
  \mathbb P\left(
    \exp\left\{
      \lambda S-\frac{\lambda^2}{2}V
    \right\}
    \ge
    \exp\left\{
      \lambda z-\frac{\lambda^2}{2}\overline V
    \right\}
  \right)
  \\
  &\le
  \exp\left(
    -\lambda z+\frac{\lambda^2}{2}\overline V
  \right)
  \mathbb E\exp\left(
    \lambda S-\frac{\lambda^2}{2}V
  \right).
\end{aligned}
\end{equation}
For each term in the summation of $S$, we have
\begin{equation}
  \mathbb E\left[
    \exp\left\{
      \lambda\frac{w_i}{T_i}
      (\bm u^\top\bbm\epsilon_{it})
      (\bm v^\top\bm y_{i,t-1})
    \right\}
    \mathrel{}\middle|\mathrel{}\mathcal F_{i,t-1}
  \right]
  =
  \exp\left\{
    \frac{\lambda^2w_i^2}{2T_i^2}
    (\bm u^\top\bbm\Sigma_{\epsilon,i}\bm u)
    (\bm v^\top\bm y_{i,t-1})^2
  \right\}.
  \label{eq:prob-pooled-score-conditional-mgf}
\end{equation}
Hence, for each client $i$, the tower property yields
\begin{equation*}
\begin{aligned}
  &\mathbb E\exp\left[
    \sum_{t=1}^{T_i}
    \left\{
      \lambda\frac{w_i}{T_i}
      (\bm u^\top\bbm\epsilon_{it})
      (\bm v^\top\bm y_{i,t-1})
      -\frac{\lambda^2w_i^2}{2T_i^2}
      (\bm u^\top\bbm\Sigma_{\epsilon,i}\bm u)
      (\bm v^\top\bm y_{i,t-1})^2
    \right\}
  \right]
  \\
  =&\mathbb E\exp\left[
    \sum_{t=1}^{T_i-1}
    \left\{
      \lambda\frac{w_i}{T_i}
      (\bm u^\top\bbm\epsilon_{it})
      (\bm v^\top\bm y_{i,t-1})
      -\frac{\lambda^2w_i^2}{2T_i^2}
      (\bm u^\top\bbm\Sigma_{\epsilon,i}\bm u)
      (\bm v^\top\bm y_{i,t-1})^2
    \right\}
  \right]
  \\
  =&\cdots=1,
\end{aligned}
\end{equation*}
where the first equality conditions on $\mathcal F_{i,T_i-1}$ and the
remaining equalities repeat the same argument backward in time. Since the client trajectories are independent, we have for every $\lambda\in\mathbb R$,
\begin{equation*}
\begin{aligned}
  &\mathbb E\exp\left[
    \sum_{i=1}^N\sum_{t=1}^{T_i}
    \left\{
      \lambda\frac{w_i}{T_i}
      (\bm u^\top\bbm\epsilon_{it})
      (\bm v^\top\bm y_{i,t-1})
      -\frac{\lambda^2w_i^2}{2T_i^2}
      (\bm u^\top\bbm\Sigma_{\epsilon,i}\bm u)
      (\bm v^\top\bm y_{i,t-1})^2
    \right\}
  \right]
  \\
  =&\prod_{i=1}^N
  \mathbb E\exp\left[
    \sum_{t=1}^{T_i}
    \left\{
      \lambda\frac{w_i}{T_i}
      (\bm u^\top\bbm\epsilon_{it})
      (\bm v^\top\bm y_{i,t-1})
      -\frac{\lambda^2w_i^2}{2T_i^2}
      (\bm u^\top\bbm\Sigma_{\epsilon,i}\bm u)
      (\bm v^\top\bm y_{i,t-1})^2
    \right\}
  \right]
  =1.
\end{aligned}
\end{equation*}
Equivalently,
\begin{equation*}
  \mathbb E\exp\left(
    \lambda S-\frac{\lambda^2}{2}V
  \right)
  =1.
\end{equation*}
Consequently, substituting the preceding identity into the \eqref{eq:lemB2_Markov} gives
\begin{equation*}
  \mathbb P(S\ge z,\ V\le\overline V)
  \le
  \exp\left(
    -\lambda z+\frac{\lambda^2}{2}\overline V
  \right)
  \le
  \exp\left(-\frac{z^2}{2\overline V}\right),
\end{equation*}
where the last line takes $\lambda=z/\overline V$.  Applying the same
argument to $-S$ gives the two-sided bound:
\begin{equation*}
  \mathbb P(|S|\ge z,\ V\le\overline V)
  \le
  2\exp\left(-\frac{z^2}{2\overline V}\right).
\end{equation*}
By the same Hanson--Wright and net argument as in
Lemma~\ref{lem:prob-M0-geometry}, now retaining the sharper bound
$\|\bbm\Psi_{\bm v}\|_{\mathrm F}^2
\le\Ttot M\phi_{\max}$, the condition
$\Ttot\gtrsim M\phi_{\max}^{-1}p$ gives
\begin{equation*}
  \|\widehat{\bbm\Gamma}_w-\bbm\Gamma_w\|_{\op}
  \le C\phi_{\max}
\end{equation*}
with probability at least $1-\exp(-Cp)$.  Since
$\|\bbm\Gamma_w\|_{\op}\le\phi_{\max}$, the same event satisfies
$\|\widehat{\bbm\Gamma}_w\|_{\op}\le C\phi_{\max}$.  On this event,
$V\le C\tau^2\phi_{\max}/\Ttot$, and hence
\begin{equation*}
  \{|S|>z\}
  \cap
  \{\|\widehat{\bbm\Gamma}_w\|_{\op}\le C\phi_{\max}\}
  \subseteq
  \left\{
    |S|>z,\ 
    V\le C\frac{\tau^2\phi_{\max}}{\Ttot}
  \right\}.
\end{equation*}
Taking
$$
  z=C\sqrt{\tau^2\phi_{\max}x/\Ttot}
$$
yields, for every $x>0$,
\begin{equation}
  \mathbb P\left(
    \left\{
      |\bm u^\top\widehat{\bm W}_w\bm v|
      >C\sqrt{
        \frac{\tau^2\phi_{\max}x}{\Ttot}}
    \right\}
    \cap
    \left\{
      \|\widehat{\bbm\Gamma}_w\|_{\op}
      \le C\phi_{\max}
    \right\}
  \right)
  \le 2\exp(-x).
  \label{eq:prob-pooled-score-fixed-tail}
\end{equation}

It remains to pass from fixed directions to the operator norm.  Similarly
to the net argument in Lemma~\ref{lem:prob-M0-geometry}, let
$\mathcal N_u$ and $\mathcal N_v$ be $1/4$-nets of
$\mathbb S^{p-1}$, each with cardinality at most $9^p$. The bilinear version of \eqref{eq:lemB1_discrete} gives, for every $p\times p$ matrix $\bm M$,
\begin{equation*}
  \|\bm M\|_{\op}
  \le
  2\max_{\bm u\in\mathcal N_u,\,\bm v\in\mathcal N_v}
  |\bm u^\top\bm M\bm v|.
\end{equation*}
Set $x=Cp$ in \eqref{eq:prob-pooled-score-fixed-tail}, where $C$ is
sufficiently large.  A union bound over the at most $9^{2p}$ pairs of net
points and the preceding bilinear inequality give
\begin{equation*}
\begin{aligned}
  \mathbb P\left(
    \left\{
      \|\widehat{\bm W}_w\|_{\op}^2
      >C\frac{\tau^2\phi_{\max}p}{\Ttot}
    \right\}
    \cap
    \left\{
      \|\widehat{\bbm\Gamma}_w\|_{\op}\le C\phi_{\max}
    \right\}
  \right)
  &\le
  2\cdot9^{2p}\exp(-Cp)
  \\
  &\le
  \exp(-Cp).
\end{aligned}
\end{equation*}
Finally, the preceding covariance bound and an event decomposition give
\begin{equation*}
\begin{aligned}
  &\mathbb P\left(
    \|\widehat{\bm W}_w\|_{\op}^2
    >C\frac{\tau^2\phi_{\max}p}{\Ttot}
  \right)
  \\
  &\le
  \mathbb P\left(
    \left\{
      \|\widehat{\bm W}_w\|_{\op}^2
      >C\frac{\tau^2\phi_{\max}p}{\Ttot}
    \right\}
    \cap
    \left\{
      \|\widehat{\bbm\Gamma}_w\|_{\op}\le C\phi_{\max}
    \right\}
  \right)
  +
  \mathbb P\left(
    \|\widehat{\bbm\Gamma}_w\|_{\op}>C\phi_{\max}
  \right)
  \\
  &\le
  \exp(-C_1p)+\exp(-C_2p)
  \le
  \exp(-Cp).
\end{aligned}
\end{equation*}
\end{proof}

\begin{lemma}[Clientwise score bounds]
\label{lem:prob-clientwise-scores}
Under Assumption~\ref{assump:stationarity}, the following conclusions
hold.
\begin{enumerate}[label=\textup{(\roman*)}]
\item Suppose that
\begin{equation*}
  T_i
  \gtrsim
  M\phi_{\max}^{-1}(p+\log N),
  \qquad i\in[N].
\end{equation*}
Then, with probability at least
$1-C\exp\{-c(p+\log N)\}$, simultaneously for all $i\in[N]$,
\begin{equation}
  \|\widehat{\bm W}_i\|_{\op}^2
  \le
  C\frac{\tau^2\phi_{\max}(p+\log N)}{T_i}.
  \label{eq:prob-clientwise-operator-score}
\end{equation}

\item Suppose that
\begin{equation*}
  T_i
  \gtrsim
  M\phi_{\max}^{-1}\log(pN),
  \qquad i\in[N].
\end{equation*}
Then, with probability at least $1-C(pN)^{-c}$, simultaneously for all
$i\in[N]$,
\begin{equation}
  \|\widehat{\bm W}_i\|_{\max}^2
  \le
  C\frac{\tau^2\phi_{\max}\log(pN)}{T_i}.
  \label{eq:prob-clientwise-max-score}
\end{equation}
\end{enumerate}
\end{lemma}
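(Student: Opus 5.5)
For part (i), I will reuse the martingale argument from the proof of Lemma~\ref{lem:prob-pooled-lr-score}, now applied to a single client. Fix $i$ and unit vectors $\bm u,\bm v$, and set
\[
S_i=\bm u^\top\widehat{\bm W}_i\bm v=T_i^{-1}\sum_t(\bm u^\top\bbm\epsilon_{it})(\bm v^\top\bm y_{i,t-1}).
\]
Its conditional accumulated variance is
\[
V_i=T_i^{-2}(\bm u^\top\bbm\Sigma_{\epsilon,i}\bm u)\sum_t(\bm v^\top\bm y_{i,t-1})^2\le \tau^2\|\widehat{\bbm\Gamma}_i\|_{\op}/T_i .
\]
The exponential supermartingale identity gives
\[
\mathbb P(|S_i|\ge z,\ V_i\le\overline V)\le 2\exp\{-z^2/(2\overline V)\}.
\]
By Lemma~\ref{lem:prob-local-geometry}, under $T_i\gtrsim M\phi_{\max}^{-1}(p+\log N)$ we have $\|\widehat{\bbm\Gamma}_i\|_{\op}\le C\phi_{\max}$ simultaneously for all $i$, with probability $1-C\exp\{-c(p+\log N)\}$. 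On that event $V_i\le C\tau^2\phi_{\max}/T_i$. Taking $z=C\sqrt{\tau^2\phi_{\max}x/T_i}$ gives a fixed-direction tail bound of $2e^{-x}$.

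To pass from fixed directions to the operator norm, I will use $1/4$-nets $\mathcal N_u,\mathcal N_v$ of cardinality $9^p$ each, together with the bilinear net inequality $\|\bm M\|_{\op}\le 2\max|\bm u^\top\bm M\bm v|$. I then take $x=C(p+\log N)$ and apply a union bound over the $9^{2p}$ net pairs and the $N$ clients. The resulting failure probability is
\[
N\cdot 9^{2p}e^{-C(p+\log N)}\le C\exp\{-c(p+\log N)\}.
\]
Finally, I add the failure probability of the covariance event through the same event decomposition used at the end of Lemma~\ref{lem:prob-pooled-lr-score}.

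For part (ii), the max norm only requires coordinate directions, so no net is needed. Take $\bm u=\bm e_j$ and $\bm v=\bm e_k$. Then $V_i\le \tau^2 T_i^{-1}(\widehat{\bbm\Gamma}_i)_{kk}$, and only the diagonal entries of $\widehat{\bbm\Gamma}_i$ need to be controlled. For a fixed $k$, apply Hanson--Wright to $\bm z_{i,\bm e_k}$ as in Lemma~\ref{lem:prob-local-geometry}, using $\|\bbm\Psi\|_{\op}\le M$ and $\|\bbm\Psi\|_{\mathrm F}^2\le T_iM\phi_{\max}$. This gives
\[
|(\widehat{\bbm\Gamma}_i-\bbm\Gamma_i)_{kk}|\le C\sqrt{M\phi_{\max}\log(pN)/T_i}\le\phi_{\max}
\]
with failure probability $(pN)^{-c'}$, once $T_i\gtrsim M\phi_{\max}^{-1}\log(pN)$. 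A union bound over the $pN$ pairs $(i,k)$ then gives $\max_{i,k}(\widehat{\bbm\Gamma}_i)_{kk}\le 2\phi_{\max}$. On that event I take $x=C\log(pN)$ in the martingale tail and union bound over the $p^2N$ triples $(i,j,k)$. The total failure probability is $p^2N(pN)^{-C}+(pN)^{-c'}\le C(pN)^{-c}$, which yields \eqref{eq:prob-clientwise-max-score}.

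The main point needing care is the probability bookkeeping: the Hanson--Wright constants must be chosen large enough that the $9^{2p}$, $N$ and $p^2N$ union-bound factors are absorbed. I also need to check the two-regime Hanson--Wright bound in part (ii). The sample-size condition makes the sub-exponential branch $T_it/M\gtrsim\log(pN)$ dominate appropriately, mirroring the calculation in Lemma~\ref{lem:prob-M0-geometry}.
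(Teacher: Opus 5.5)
Your proposal is correct and follows the paper's proof essentially step for step. In part (i), you use the clientwise exponential-martingale tail, with the conditional variance $T_i^{-1}(\bm u^\top\bbm\Sigma_{\epsilon,i}\bm u)\,\bm v^\top\widehat{\bbm\Gamma}_i\bm v$ controlled on the event $\max_i\|\widehat{\bbm\Gamma}_i\|_{\op}\le C\phi_{\max}$ from Lemma~\ref{lem:prob-local-geometry}, followed by a union bound over $N\cdot 9^{2p}$ net pairs; in part (ii), you use a Hanson--Wright bound on the diagonal entries $(\widehat{\bbm\Gamma}_i)_{kk}$ over the $pN$ pairs $(i,k)$ and then a union bound over the $p^2N$ coordinate triples, which is exactly the paper's argument.
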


\begin{proof}
Fix $i\in[N]$ and $\bm u,\bm v\in\mathbb S^{p-1}$. Similar to the proof of Lemma \ref{lem:prob-pooled-lr-score}, applying the Markov bound \eqref{eq:lemB2_Markov} and the
conditional MGF identity \eqref{eq:prob-pooled-score-conditional-mgf} clientwise, with only client $i$
retained and $w_i/T_i$ replaced by $1/T_i$, the conditional accumulated
variance of $\bm u^\top\widehat{\bm W}_i\bm v$ is
\begin{equation*}
  \frac{\bm u^\top\bbm\Sigma_{\epsilon,i}\bm u}{T_i}
    \bm v^\top\widehat{\bbm\Gamma}_i\bm v.
\end{equation*}
Hence, optimizing the Markov bound as in the derivation of
\eqref{eq:prob-pooled-score-fixed-tail} gives, for every $z>0$,
\begin{equation}
  \mathbb P\left(
    |\bm u^\top\widehat{\bm W}_i\bm v|>z,\
    \bm v^\top\widehat{\bbm\Gamma}_i\bm v\le C\phi_{\max}
  \right)
  \le
  2\exp\left(
    -c\frac{T_i z^2}
    {\tau^2\phi_{\max}}
  \right),
  \label{eq:prob-clientwise-score-fixed-tail}
\end{equation}

For part~\textup{(i)}, Lemma~\ref{lem:prob-local-geometry} gives
$\max_i\|\widehat{\bbm\Gamma}_i\|_{\op}\le C\phi_{\max}$ with probability
at least $1-C\exp\{-c(p+\log N)\}$.  On this event, apply
\eqref{eq:prob-clientwise-score-fixed-tail} to two $1/4$-nets, each of
cardinality at most $9^p$, and use the bilinear net inequality from the
proof of Lemma~\ref{lem:prob-pooled-lr-score}.  Then, setting
$z_i=C\tau
\sqrt{\phi_{\max}(p+\log N)/T_i}$ and taking a union bound over at most
$N\times 9^{2p}$ triples, each consisting of one client and two net points, proves
\eqref{eq:prob-clientwise-operator-score}.

For part~\textup{(ii)}, let $\bm e_j\in\mathbb R^p$ have one in its $j$-th
coordinate and zeros elsewhere.  The fixed-direction covariance tail in
\eqref{eq:lemB1_hanson}, applied clientwise with $\bm v=\bm e_b$, followed
by a union bound over the $Np$ pairs $(i,b)$, gives
\begin{equation*}
  \max_{i\in[N],\,b\in[p]}(\widehat{\bbm\Gamma}_i)_{bb}
  \le2\phi_{\max}
\end{equation*}
with probability at least $1-C(pN)^{-c}$.  On this event, apply
\eqref{eq:prob-clientwise-score-fixed-tail} with $\bm u=\bm e_a$,
$\bm v=\bm e_b$, and
$z_i=C\tau
\sqrt{\phi_{\max}\log(pN)/T_i}$.  A union bound over the $Np^2$ triples
$(i,a,b)$ proves \eqref{eq:prob-clientwise-max-score}.
\end{proof}

\begin{lemma}[Restricted score bounds]
\label{lem:prob-sparse-scores}
Under Assumption~\ref{assump:stationarity}, the following conclusions
hold.
\begin{enumerate}[label=\textup{(\roman*)}]
\item Suppose
$\Ttot\gtrsim M\phi_{\max}^{-1}s_0^\dagger\log p$.
Then, with probability at least
$1-\exp\{-C(s_0^\dagger\vee1)\log p\}$,
\begin{equation}
  \sup_{\substack{\Omega_0\subset[p]\times[p]\\
  |\Omega_0|\le s_0^\dagger}}
  \|\mathcal P_{\Omega_0}(\widehat{\bm W}_w)\|_{\mathrm F}^2
  \lesssim
  \frac{\tau^2\phi_{\max}
  s_0^\dagger\log p}{\Ttot}.
  \label{eq:prob-pooled-sparse-score}
\end{equation}

\item Suppose, for every $i\in[N]$,
$
  T_i
  \gtrsim
  M\phi_{\max}^{-1}
  (s_i^\dagger\log p+\log N).
$
Then, with probability at least
$1-C\exp\{-c((s_{\min}^\dagger\vee1)\log p+\log N)\}$,
uniformly over all support collections $(\Omega_1,\ldots,\Omega_N)$
satisfying $\Omega_i\subset[p]\times[p]$ and
$|\Omega_i|\le s_i^\dagger$ for every $i\in[N]$,
\begin{equation}
  \sum_{i=1}^Nw_i
  \|\mathcal P_{\Omega_i}(\widehat{\bm W}_i)\|_{\mathrm F}^2
  \lesssim
  \frac{\tau^2\phi_{\max}}{\Ttot}
  \left\{
    \log p\sum_{i=1}^Ns_i^\dagger+N\log N
  \right\}.
\end{equation}
\end{enumerate}
\end{lemma}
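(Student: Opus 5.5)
Part (i) reduces to a uniform bound on the restricted score. Fix a support $\Omega_0$ with $|\Omega_0|\le s_0^\dagger$. Then
\[
\|\mathcal P_{\Omega_0}(\widehat{\bm W}_w)\|_{\mathrm F}=\sup\bigl\{\langle \bm X,\widehat{\bm W}_w\rangle:\ \supp(\bm X)\subseteq\Omega_0,\ \|\bm X\|_{\mathrm F}=1\bigr\}.
\]
So the supremum over all such $\Omega_0$ is the supremum of the linear form $\langle \bm X,\widehat{\bm W}_w\rangle$ over $s_0^\dagger$-sparse unit matrices $\bm X$.

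For a fixed $\bm X$, write
\[
\langle \bm X,\widehat{\bm W}_w\rangle=\sum_i\sum_t \frac{w_i}{T_i}\,\bbm\epsilon_{it}^\top\bm X\bm y_{i,t-1}.
\]
This is a martingale sum. I would repeat the self-normalized exponential supermartingale argument of Lemma~\ref{lem:prob-pooled-lr-score}, using the conditional variance
\[
V=\sum_{i,t}\frac{w_i^2}{T_i^2}\,\bm y_{i,t-1}^\top\bm X^\top\bbm\Sigma_{\epsilon,i}\bm X\bm y_{i,t-1}\le \frac{\tau^2}{\Ttot}\operatorname{tr}(\bm X\widehat{\bbm\Gamma}_w\bm X^\top).
\]
The step I expect to be the main obstacle is controlling $V$ uniformly. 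Here I would use the rowwise bound from the proof of Lemma~\ref{lem:prob-sparse-geometry}: rows of $\bm X$ are $s_0^\dagger$-sparse, so
\[
\operatorname{tr}(\bm X\widehat{\bbm\Gamma}_w\bm X^\top)\le \sup_{|J|\le s_0^\dagger}\|[\widehat{\bbm\Gamma}_w]_{J,J}\|_{\op}\le C\phi_{\max}
\]
on the event of Lemma~\ref{lem:prob-sparse-geometry}(i), which applies under $\Ttot\gtrsim M\phi_{\max}^{-1}s_0^\dagger\log p$. This gives the fixed-$\bm X$ tail $2\exp(-c\Ttot z^2/(\tau^2\phi_{\max}))$ intersected with that event.

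Next I discretize. For each support, I take a $1/4$-net of its unit sphere in $\mathbb R^{|\Omega_0|}$, of size at most $9^{s_0^\dagger}$. The number of supports is $\binom{p^2}{s_0^\dagger}$, whose logarithm is $\lesssim s_0^\dagger\log p$. The usual linear-functional net inequality gives $\sup\le 2\max_{\rm net}$ on each support. Choosing $z^2=C\tau^2\phi_{\max}s_0^\dagger\log p/\Ttot$ and taking a union bound yields failure probability $\exp\{-C(s_0^\dagger\vee1)\log p\}$, after adding the covariance event failure.

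Part (ii) follows the same scheme, applied clientwise. For client $i$, the same argument with $1/T_i$ in place of $w_i/T_i$ and the event of Lemma~\ref{lem:prob-sparse-geometry}(ii) gives
\[
\sup_{|\Omega_i|\le s_i^\dagger}\|\mathcal P_{\Omega_i}(\widehat{\bm W}_i)\|_{\mathrm F}^2\le C\tau^2\phi_{\max}\,\frac{s_i^\dagger\log p+\log N}{T_i}.
\]
I would set the deviation level to include the extra $\log N$ so that a union bound over the $N$ clients is affordable. The probability is then $1-C\exp\{-c((s_{\min}^\dagger\vee1)\log p+\log N)\}$. Multiplying by $w_i=T_i/\Ttot$ and summing over $i$ gives
\[
\sum_i w_i\|\mathcal P_{\Omega_i}(\widehat{\bm W}_i)\|_{\mathrm F}^2\lesssim \frac{\tau^2\phi_{\max}}{\Ttot}\Bigl(\log p\sum_i s_i^\dagger+N\log N\Bigr).
\]
This holds uniformly over support collections, because each client bound is already uniform over its own supports.
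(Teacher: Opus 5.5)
Your proposal is correct and follows essentially the same route as the paper. Both arguments use Frobenius duality over sparse unit test matrices and the self-normalized martingale bound of Lemma~\ref{lem:prob-pooled-lr-score}, with conditional variance at most $\frac{\tau^2}{\Ttot}\operatorname{tr}(\bm X\widehat{\bbm\Gamma}_w\bm X^\top)$ controlled uniformly via Lemma~\ref{lem:prob-sparse-geometry}; they then take a sparse-net union bound of log-cardinality $Cs_0^\dagger\log p$ and repeat the argument clientwise with an extra $\log N$ for part (ii). The only cosmetic difference is that you bound the trace through the restricted operator norm $\sup_{|J|\le s_0^\dagger}\|[\widehat{\bbm\Gamma}_w]_{J,J}\|_{\op}$ established inside the proof of Lemma~\ref{lem:prob-sparse-geometry}, whereas the paper writes $\operatorname{tr}(\bm H\widehat{\bbm\Gamma}_w\bm H^\top)=\operatorname{tr}(\bm H\bbm\Gamma_w\bm H^\top)+\langle\mathcal P_{\Omega_0}\{\bm H(\widehat{\bbm\Gamma}_w-\bbm\Gamma_w)\},\bm H\rangle$ and invokes that lemma's stated conclusion; the two are equivalent.
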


\begin{proof}
We reuse the conditional MGF, tower property, and Markov argument in the proof of 
Lemma~\ref{lem:prob-pooled-lr-score}.  For any $\bm H$ with $\|\bm H\|_{\mathrm F}=1$, replace
$(\bm u^\top\bbm\epsilon_{it})(\bm v^\top\bm y_{i,t-1})$ there by
$\bbm\epsilon_{it}^\top\bm H\bm y_{i,t-1}$.  The accumulated conditional
variance of $\inner{\widehat{\bm W}_w}{\bm H}$ is
\begin{equation*}
\begin{aligned}
  &\sum_{i=1}^N\sum_{t=1}^{T_i}
  \operatorname{Var}\left(
    \left.
    \frac{w_i}{T_i}
    \bbm\epsilon_{it}^\top\bm H\bm y_{i,t-1}
    \right|\mathcal F_{i,t-1}
  \right)
  \\
  =&\sum_{i=1}^N\sum_{t=1}^{T_i}
  \frac{w_i^2}{T_i^2}
  \bm y_{i,t-1}^\top
  \bm H^\top\bbm\Sigma_{\epsilon,i}\bm H
  \bm y_{i,t-1}
  \\
  \le&
  \frac{\tau^2}{\Ttot}
  \operatorname{tr}
  \left(\bm H\widehat{\bbm\Gamma}_w\bm H^\top\right).
\end{aligned}
\end{equation*}
Hence the argument leading to
\eqref{eq:prob-pooled-score-fixed-tail} gives, for every $z>0$,
\begin{equation}
  \mathbb P\left(
    \left\{
      |\inner{\widehat{\bm W}_w}{\bm H}|>z
    \right\}
    \cap
    \left\{
      \operatorname{tr}
      (\bm H\widehat{\bbm\Gamma}_w\bm H^\top)
      \le C\phi_{\max}
    \right\}
  \right)\le
  2\exp\left(
    -C\frac{\Ttot z^2}
    {\tau^2\phi_{\max}}
  \right).
  \label{eq:prob-pooled-sparse-score-fixed-test}
\end{equation}

For part~\textup{(i)}, if $|\Omega_0|\le s_0^\dagger$ and
$\supp(\bm H)\subseteq\Omega_0$, then
\begin{equation}\label{eq:lemB4_trace}
\begin{aligned}
  \operatorname{tr}
  (\bm H\widehat{\bbm\Gamma}_w\bm H^\top)
  &=
  \operatorname{tr}
  (\bm H\bbm\Gamma_w\bm H^\top)
  +\inner{
    \bm H(\widehat{\bbm\Gamma}_w-\bbm\Gamma_w)
  }{\bm H}
  \\
  &=
  \operatorname{tr}
  (\bm H\bbm\Gamma_w\bm H^\top)
  +\inner{
    \mathcal P_{\Omega_0}
    \left\{\bm H
    (\widehat{\bbm\Gamma}_w-\bbm\Gamma_w)\right\}
  }{\bm H}
  \\
  &\le
  \|\bbm\Gamma_w\|_{\op}\|\bm H\|_{\mathrm F}^2
  +\left\|
    \mathcal P_{\Omega_0}
    \left\{\bm H
    (\widehat{\bbm\Gamma}_w-\bbm\Gamma_w)\right\}
  \right\|_{\mathrm F}\|\bm H\|_{\mathrm F}
  \\
  &\le
  \phi_{\max}
  +\left\|
    \mathcal P_{\Omega_0}
    \left\{\bm H
    (\widehat{\bbm\Gamma}_w-\bbm\Gamma_w)\right\}
  \right\|_{\mathrm F}.
\end{aligned}
\end{equation}
Under
$\Ttot\gtrsim M\phi_{\max}^{-1}s_0^\dagger\log p$,
Lemma~\ref{lem:prob-sparse-geometry}\textup{(i)} gives, uniformly over all
such $\Omega_0$ and supported unit-Frobenius $\bm H$,
\begin{equation*}
  \left\|
    \mathcal P_{\Omega_0}
    \left\{\bm H
    (\widehat{\bbm\Gamma}_w-\bbm\Gamma_w)\right\}
  \right\|_{\mathrm F}
  \le
  C\sqrt{\frac{M\phi_{\max}s_0^\dagger\log p}{\Ttot}}
  \le C\phi_{\max}.
\end{equation*}
Together with \eqref{eq:lemB4_trace}, this shows that
\begin{equation*}
  \operatorname{tr}
  (\bm H\widehat{\bbm\Gamma}_w\bm H^\top)
  \lesssim \phi_{\max}
\end{equation*}
with probability at least $1-\exp\{-C(s_0^\dagger\vee1)\log p\}$.
To pass from the fixed-test bound
\eqref{eq:prob-pooled-sparse-score-fixed-test} to the restricted
Frobenius norm in part~\textup{(i)}, we use Frobenius duality. Since
\begin{equation*}
  \sup_{\substack{\Omega_0\subset[p]\times[p]\\
  |\Omega_0|\le s_0^\dagger}}
  \|\mathcal P_{\Omega_0}(\widehat{\bm W}_w)\|_{\mathrm F}
  =
  \sup_{\substack{\Omega_0\subset[p]\times[p],\;
  |\Omega_0|\le s_0^\dagger\\
  \|\bm H\|_{\mathrm F}=1,\;
  \supp(\bm H)\subseteq\Omega_0}}
  \left|\inner{\widehat{\bm W}_w}{\bm H}\right|,
\end{equation*}
it remains to control these scalar inner products uniformly over
all sparse unit-Frobenius test matrices $\bm H$. Similar to \eqref{eq:lemB2_sparseNet}, the union of $1/4$-nets over these supported unit spheres has
log-cardinality at most $Cs_0^\dagger\log p$.  Applying
\eqref{eq:prob-pooled-sparse-score-fixed-test} on the preceding variance
event and taking
\begin{equation*}
  z=C\sqrt{
    \frac{\tau^2\phi_{\max}
    s_0^\dagger\log p}{\Ttot}}
\end{equation*}
and taking a union bound over the sparse nets, together with the
preceding covariance event, yields
\begin{equation*}
\begin{aligned}
  \mathbb P\left\{
    \sup_{\substack{\Omega_0\subset[p]\times[p]\\
    |\Omega_0|\le s_0^\dagger}}
    \|\mathcal P_{\Omega_0}(\widehat{\bm W}_w)\|_{\mathrm F}^2
    >
    C\frac{\tau^2\phi_{\max}
    s_0^\dagger\log p}{\Ttot}
  \right\}
  &\le
  \exp\{-C(s_0^\dagger\vee1)\log p\}.
\end{aligned}
\end{equation*}
This proves part~\textup{(i)}.

For part~\textup{(ii)}, apply the same argument clientwise.  For each
$i\in[N]$, the accumulated conditional variance of
$\inner{\widehat{\bm W}_i}{\bm H}$ is bounded by
\begin{equation*}
  \frac{\tau^2}{T_i}
  \operatorname{tr}
  \left(\bm H\widehat{\bbm\Gamma}_i\bm H^\top\right).
\end{equation*}
Under the sample-size condition in part~\textup{(ii)},
Lemma~\ref{lem:prob-sparse-geometry}\textup{(ii)} and the clientwise
analogue of \eqref{eq:lemB4_trace} give
\begin{equation*}
  \operatorname{tr}
  \left(\bm H\widehat{\bbm\Gamma}_i\bm H^\top\right)
  \le C\phi_{\max}
\end{equation*}
uniformly over all clients and supported unit-Frobenius matrices $\bm H$.
Repeating the sparse-net argument from part~\textup{(i)} clientwise, with
$s_i^\dagger\log p+\log N$ in place of $s_0^\dagger\log p$, and then
taking a union bound over clients with $s_i^\dagger>0$ gives, with probability at least
$1-C\exp\{-C((s_{\min}^\dagger\vee1)\log p+\log N)\}$, simultaneously for all
$i\in[N]$,
\begin{equation}
  \sup_{\substack{\Omega_i\subset[p]\times[p]\\
  |\Omega_i|\le s_i^\dagger}}
  \|\mathcal P_{\Omega_i}(\widehat{\bm W}_i)\|_{\mathrm F}^2
  \le
  C\frac{\tau^2\phi_{\max}
    (s_i^\dagger\log p+\log N)}{T_i},
  \label{eq:prob-local-score-clientwise-raw}
\end{equation}
Since $w_i=T_i/\Ttot$, summing these bounds over $i$ proves
part~\textup{(ii)}.
\end{proof}

\subsection{Empirical covariance bounds for global--local interactions}
\label{subsec:prob-mixed-geometry}

We use the accessible direction classes
$\mathcal X_{\mathrm{LR}}$ and $\mathcal X_S$ defined in
\eqref{eq:shared_lowrank_error_class} and
\eqref{eq:shared_sparse_error_class}, respectively.

\begin{lemma}[Shared--personalized interactions]
\label{lem:prob-mixed-geometry}
Under Assumption~\ref{assump:stationarity}, suppose that
$T_i\gtrsim s_i^\dagger\log p+\log N$ for every $i\in[N]$.
Then the following two conclusions hold.
\begin{enumerate}[label=\textup{(\roman*)}]
\item With probability at least
$1-C\exp(-Cpr)-C\exp\{-C(s_{\min}^\dagger\vee1)\log p\}$,
uniformly over all $\bm X_{\mathrm{LR}}\in\mathcal X_{\mathrm{LR}}$
and all supports $\Omega_i\subset[p]\times[p]$ satisfying
$|\Omega_i|\le s_i^\dagger$ for every $i\in[N]$,
\begin{equation}
  \sum_{i=1}^Nw_i
  \left\|
    \mathcal P_{\Omega_i}
    \left\{
      \bm X_{\mathrm{LR}}
      (\widehat{\bbm\Gamma}_i-\bbm\Gamma_i)
    \right\}
  \right\|_{\mathrm F}^2
  \lesssim M^2
  \left\{
    \frac{pr}{\Ttot}
    +\frac{\log p}{\Ttot}\sum_{i=1}^Ns_i^\dagger
  \right\}.
  \label{eq:prob-ML-centered-geometry}
\end{equation}

\item With probability at least
$1-C\exp\{-C(s_0^\dagger\vee1)\log p\}
  -C\exp\{-C(s_{\min}^\dagger\vee1)\log p\}$,
uniformly over all $\bm X_S\in\mathcal X_S$ and all supports
$\Omega_i\subset[p]\times[p]$ satisfying
$|\Omega_i|\le s_i^\dagger$ for every $i\in[N]$,
\begin{equation}
  \sum_{i=1}^Nw_i
  \left\|
    \mathcal P_{\Omega_i}
    \left\{
      \bm X_S(\widehat{\bbm\Gamma}_i-\bbm\Gamma_i)
    \right\}
  \right\|_{\mathrm F}^2
  \lesssim
  M^2
  \left\{
    \frac{s_0^\dagger\log p}{\Ttot}
    +\frac{\log p}{\Ttot}\sum_{i=1}^Ns_i^\dagger
  \right\}.
  \label{eq:prob-0L-centered-geometry}
\end{equation}
\end{enumerate}
\end{lemma}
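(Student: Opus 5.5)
The plan is to linearize the squared projected norms by Frobenius duality, reduce each linear functional to a centered Gaussian quadratic form in the stacked trajectories, apply Hanson--Wright as in Lemma~\ref{lem:prob-M0-geometry}, and finish with a union bound over nets of the structured direction classes and sparse supports. Write $\bm D_i:=\widehat{\bbm\Gamma}_i-\bbm\Gamma_i$.

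\textbf{Duality and reduction to a quadratic form.} For fixed $\bm X$ and fixed supports $(\Omega_i)$,
\[
\sum_{i=1}^N w_i\|\mathcal P_{\Omega_i}(\bm X\bm D_i)\|_{\mathrm F}^2
=\sup\Bigl\{\Bigl(\sum_{i=1}^N\sqrt{w_i}\,\langle \bm G_i,\bm X\bm D_i\rangle\Bigr)^2:\ \supp(\bm G_i)\subseteq\Omega_i,\ \sum_{i=1}^N\|\bm G_i\|_{\mathrm F}^2\le 1\Bigr\}.
\]
With $\bm A_i:=\tfrac12(\bm X^\top\bm G_i+\bm G_i^\top\bm X)$, the inner sum becomes
\[
\sum_{i=1}^N\frac{1}{\sqrt{\Ttot T_i}}\sum_{t=1}^{T_i}\Bigl\{\bm y_{i,t-1}^\top\bm A_i\bm y_{i,t-1}-\mathbb E\bigl(\bm y_{i,t-1}^\top\bm A_i\bm y_{i,t-1}\bigr)\Bigr\}.
\]
This equals $\bbm\xi^\top\bbm\Psi\bbm\xi-\operatorname{tr}(\bbm\Psi)$ for a standard Gaussian $\bbm\xi$. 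The matrix $\bbm\Psi$ is block diagonal, with blocks $(\Ttot T_i)^{-1/2}\,\bbm\Upsilon_i^{1/2}(\bm I_{T_i}\otimes\bm A_i)\bbm\Upsilon_i^{1/2}$.

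\textbf{Hanson--Wright step.} By \eqref{eq:app-trajectory-spectral-bound} and $\|\bm X\|_{\mathrm{op}}\le\|\bm X\|_{\mathrm F}=1$, we have $\|\bm A_i\|_{\mathrm F}\le\|\bm G_i\|_{\mathrm F}$. This gives
\[
\|\bbm\Psi\|_{\mathrm F}^2\le\frac{M^2}{\Ttot}\sum_{i=1}^N\|\bm G_i\|_{\mathrm F}^2\le\frac{M^2}{\Ttot},
\qquad
\|\bbm\Psi\|_{\mathrm{op}}\le\max_{i\in[N]}\frac{M\|\bm G_i\|_{\mathrm F}}{\sqrt{\Ttot T_i}}.
\]
Hanson--Wright then yields a tail bound of the form $\exp\bigl[-c\min\{\Ttot t^2/M^2,\ t/\|\bbm\Psi\|_{\mathrm{op}}\}\bigr]$ for each fixed $(\bm X,\bm G_1,\ldots,\bm G_N)$.

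\textbf{Uniformity via nets.} I would take $1/4$-nets of three objects and union-bound.
\begin{itemize}
\item For part (i), the class $\mathcal X_{\mathrm{LR}}$ consists of unit-Frobenius matrices of rank at most $2r$. Their standard net has log-cardinality $\lesssim pr$.
\item For part (ii), $\mathcal X_S$ is covered by sparse nets of log-cardinality $\lesssim s_0^\dagger\log p$, as in \eqref{eq:lemB2_sparseNet}.
\item The supports $(\Omega_i)$, together with the nets of the $\bm G_i$ on each support, contribute log-cardinality $\lesssim\log p\sum_{i=1}^N s_i^\dagger$.
\end{itemize}
The net-to-supremum step works as in \eqref{eq:lemB1_discrete}, because the functional is bilinear in $(\bm X,\{\bm G_i\})$. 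Choosing $t^2\asymp M^2\{pr+\log p\sum_i s_i^\dagger\}/\Ttot$ makes the sub-Gaussian branch dominate the total entropy, which gives \eqref{eq:prob-ML-centered-geometry}; replacing $pr$ by $s_0^\dagger\log p$ gives \eqref{eq:prob-0L-centered-geometry}.

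\textbf{Main obstacle: the sub-exponential branch.} The global bound on $\|\bbm\Psi\|_{\mathrm{op}}$ only scales like $(\Ttot T_{\min})^{-1/2}$. That bound would require $T_{\min}$ to exceed the full entropy, including $\log p\sum_i s_i^\dagger$, which is far stronger than the stated condition $T_i\gtrsim s_i^\dagger\log p+\log N$. I would instead exploit independence across clients. First I would decouple the supremum clientwise: for fixed $\bm X$, maximizing over $\{\bm G_i\}$ gives exactly $\sum_i w_i\sup_{\Omega_i}\|\mathcal P_{\Omega_i}(\bm X\bm D_i)\|_{\mathrm F}^2$. Each summand $Z_i(\bm X)$ is then controlled as a sub-exponential variable through a sparse net of entropy $s_i^\dagger\log p$ per client. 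Its sub-exponential scale is $M^2/T_i$, and its centered tail is valid precisely because $T_i\gtrsim s_i^\dagger\log p+\log N$. Finally I would apply Bernstein's inequality to the independent sum $\sum_i w_iZ_i$ with weights $w_iM^2/T_i=M^2/\Ttot$. That sum has mean $\lesssim M^2\log p\sum_i s_i^\dagger/\Ttot$ and a deviation term $M^2\cdot\mathrm{entropy}(\bm X)/\Ttot$, which combine to the claimed rate. Making this peeling-plus-Bernstein step uniform over the $\bm X$-net is where I expect the most care to be needed.
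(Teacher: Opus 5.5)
Your reduction to a Gaussian chaos is correct, and so is your diagnosis of the naive Hanson--Wright route. Decoupling the supremum over $\{\bm G_i\}$ clientwise also correctly confines the sparse entropy $S:=\log p\sum_j s_j^\dagger$ to per-client nets. The gap is in the remaining step over $\bm X$, and it cannot be fixed by more care.

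The claim that $Z_i(\bm X)$ is sub-exponential at scale $M^2/T_i$ holds only for deviation parameters $u\lesssim T_i$, i.e.\ up to $Z_i(\bm X)\asymp M^2$. The Hanson--Wright exponent $c\min\{T_it^2/M^2,\,T_it/M\}$ switches to its linear branch at $t\asymp M$. Beyond that point $Z_i(\bm X)$ has a Weibull-$1/2$ tail $\exp(-cT_i\sqrt z/M)$.

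A union bound over an $\bm X$-net of log-cardinality $\asymp pr$ needs failure probability $e^{-Cpr}$ at each net point, at the target level $Z_i(\bm X)\lesssim M^2(pr+S)/T_i$. This fails in general, as the following example shows.
\begin{itemize}
\item Take $\bm X=\bm K_L\bm V_*^\top\in\mathcal X_{\mathrm{LR}}$ (so $\widetilde{\bm U}=\bm U_*$, $\widetilde{\bm V}=\bm V_*$, $\bm K_R=\bm 0$) whose $a$-th row $\bm x_a^\top$ has $\|\bm x_a\|_2\ge 1/2$, and take $\bm G=\bm e_a\bm e_c^\top$.
\item Then $\langle\bm G,\bm X\bm D_i\rangle=\bm x_a^\top(\widehat{\bbm\Gamma}_i-\bbm\Gamma_i)\bm e_c$ is an average of $T_i$ products of Gaussians.
\item Already for i.i.d.\ data, it exceeds $M\{C(pr+S)/T_i\}^{1/2}$ with probability at least of order $\exp\{-c\sqrt{(pr+S)T_i}\}$.
\item This family of spiky directions still has metric entropy $\asymp pr$. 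Hence every $1/4$-net of $\mathcal X_{\mathrm{LR}}$ contains $e^{cpr}$ points whose $a$-th row has norm at least $1/4$.
\end{itemize}
The union bound therefore diverges unless $T_i\gtrsim (pr)^2/(pr+S)$. Your scheme only trades $T_{\min}\gtrsim pr+S$ for essentially $T_{\min}\gtrsim pr$. In part (ii) the same happens with $s_0^\dagger\log p$ in place of $pr$, using $\bm X_S$ with one dominant entry. That excludes exactly the regime $s_i^\dagger\log p+\log N\lesssim T_i\ll p$ the lemma is meant to cover.

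The missing idea is to pay for the $\bm X$-class only through a Gaussian width, never through a sub-exponential union bound. The paper proceeds as follows.
\begin{itemize}
\item It decouples the centered quadratic form using an independent copy and the rotation $(\bm u,\bm v)$.
\item It transfers from the dependent trajectories to i.i.d.\ Gaussians by writing $\bbm\Upsilon^{1/2}=M^{1/2}\bm C_0$ with $\bm C_0\bm C_0^\top\preceq\bm I$ and completing to standard Gaussian vectors, at the cost of a factor $M$.
\item It then conditions on $\bm u$. Conditionally, $\Ttot^{-1}\sum_{i,t}\langle\bm H_i\bm u_{it},\bm X\bm v_{it}\rangle$, with sparse dual matrices satisfying $\sum_iw_i\|\bm H_i\|_{\mathrm F}^2\le1$, is a centered Gaussian process in $\bm v$.
\item Its squared increments are dominated by $R(\bm u)\|\bm X-\bm X'\|_{\mathrm F}^2/\Ttot$ plus a purely sparse term.
\item Sudakov--Fernique then gives a conditional mean of order $\sqrt{R(\bm u)}\{(pr/\Ttot)^{1/2}+(S/\Ttot)^{1/2}\}$. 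Gaussian concentration in $\bm v$ gives conditional $L_q$ bounds with no sub-exponential branch in $\bm X$.
\end{itemize}
The only non-Gaussian quantity left is $R(\bm u)$, the largest $s_i^\dagger$-sparse restricted eigenvalue of the clientwise sample covariances of $\bm u$. It involves only sparse nets of log-cardinality $s_i^\dagger\log p+\log N$, which is why $T_i\gtrsim s_i^\dagger\log p+\log N$ suffices. Markov's inequality with $q\asymp\min\{pr,(s_{\min}^\dagger\vee1)\log p\}$ then gives the stated probability.
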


\begin{proof}
If all $s_i^\dagger=0$, both claims are immediate. Otherwise, we first
prove part~\textup{(i)}.

\paragraph{Step 1: Target supremum and proof outline.}
Define the supremum of the quantity in part~\textup{(i)} by
\begin{equation}
\begin{aligned}
  \zeta_{ML}^2
  :={}&
  \sup_{\substack{
    \bm X_{\mathrm{LR}}\in\mathcal X_{\mathrm{LR}}\\
    \Omega_i\subset[p]\times[p],\ |\Omega_i|\le s_i^\dagger}}
  \sum_{i=1}^Nw_i
  \left\|
    \mathcal P_{\Omega_i}
    \left\{
      \bm X_{\mathrm{LR}}
      (\widehat{\bbm\Gamma}_i-\bbm\Gamma_i)
    \right\}
  \right\|_{\mathrm F}^2.
  \label{eq:prob-ML-zeta-target-identity}
\end{aligned}
\end{equation}
Weighted Frobenius duality gives the equivalent representation
\begin{equation}
\begin{aligned}
  \zeta_{ML}
  =\sup_{\substack{
    \bm X_{\mathrm{LR}}\in\mathcal X_{\mathrm{LR}}\\
    \bm D_i\in\mathbb R^{p\times p},\\
    |\supp(\bm D_i)|\le s_i^\dagger,\ i\in[N]\\
    \sum_{i=1}^Nw_i\|\bm D_i\|_{\mathrm F}^2\le1}}
  \left|
    \sum_{i=1}^Nw_i
    \inner{\bm D_i}
    {\bm X_{\mathrm{LR}}
    (\widehat{\bbm\Gamma}_i-\bbm\Gamma_i)}
  \right|.
  \label{eq:prob-ML-centered-block}
\end{aligned}
\end{equation}
Below, $\theta$ denotes the full index
$(\bm X_{\mathrm{LR}},\{\bm D_i\}_{i=1}^N)$ in
\eqref{eq:prob-ML-centered-block}, and every supremum over $\theta$ uses
this same index set.

For a random variable $Z$ and $q\ge1$, write
$\|Z\|_{L_q}:=(\mathbb E|Z|^q)^{1/q}$.
We will prove that, whenever
\begin{equation}
  T_i\ge C\{s_i^\dagger\log p+\log N+q\},
  \qquad i\in[N]\text{ with }s_i^\dagger>0,
  \label{eq:prob-ML-iid-moment-sample}
\end{equation}
the target supremum satisfies
\begin{equation}
  \|\zeta_{ML}\|_{L_q}
  \le CM\left[
    \sqrt{\frac{pr+q}{\Ttot}}
    +\left\{\frac{\log p}{\Ttot}
      \sum_{i=1}^N s_i^\dagger\right\}^{1/2}
  \right].
  \label{eq:prob-ML-target-moment}
\end{equation}
We first show that it suffices to consider iid Gaussian observations.
For this case, we construct a conditional Gaussian process,
bound the conditional mean of its supremum, and use Gaussian
concentration to obtain a conditional $L_q$ bound. We then remove the conditioning to establish
\eqref{eq:prob-ML-target-moment} and apply Markov's inequality
to obtain the claimed probability bound.

\paragraph{Step 2: Reduction to independent Gaussian observations.}
Let $\{\bbm\xi_{it}\}$ be independent $N(\bm0,\bm I_p)$ vectors and define
$\zeta_{ML}^{\mathrm{iid}}$ by replacing
$\widehat{\bbm\Gamma}_i-\bbm\Gamma_i$ in
\eqref{eq:prob-ML-centered-block} with
$T_i^{-1}\sum_t\bbm\xi_{it}\bbm\xi_{it}^{\top}-\bm I_p$.
We show that this replacement costs only a factor $CM$ in the
$L_q$ bound.

By independence across clients and
\eqref{eq:app-trajectory-spectral-bound}, the stacked observations have
covariance
\begin{equation}
  \bbm\Upsilon
  =\operatorname{diag}(\bbm\Upsilon_1,\ldots,\bbm\Upsilon_N),
  \qquad \|\bbm\Upsilon\|_{\mathrm{op}}\le M.
  \label{eq:prob-ML-stacked-trajectory-covariance}
\end{equation}
Their joint Gaussian distribution therefore allows us to represent them
as $\bbm\Upsilon^{1/2}\bbm\xi$, where
$\bbm\xi\sim N(\bm0,\bm I_{p\Ttot})$.
Write $\operatorname{sym}(\bm M)=(\bm M+\bm M^\top)/2$ and set
\[
  \bm Q_\theta:=\frac1{\Ttot}
  \operatorname{diag}\!\left(
    \left\{\bm I_{T_i}\otimes
    \operatorname{sym}(\bm D_i^\top\bm X_{\mathrm{LR}})
    \right\}_{i=1}^N\right).
\]
Since $w_i/T_i=1/\Ttot$, the dependent and iid suprema have the
quadratic-form representations
\[
\begin{aligned}
  \zeta_{ML}
  &=\sup_\theta\left|
    \bbm\xi^\top\bbm\Upsilon^{1/2}\bm Q_\theta
      \bbm\Upsilon^{1/2}\bbm\xi
    -\operatorname{tr}(\bbm\Upsilon\bm Q_\theta)
  \right|,\\
  \zeta_{ML}^{\mathrm{iid}}
  &=\sup_\theta\left|
    \bbm\xi^\top\bm Q_\theta\bbm\xi
    -\operatorname{tr}(\bm Q_\theta)
  \right|.
\end{aligned}
\]

We first replace each centered quadratic form by a bilinear form in
independent Gaussian vectors. For any indexed family $\{Y_\theta\}$ and
any sigma-field $\mathcal F$, conditional Jensen's inequality gives
\begin{equation}
  \left\|\sup_\theta
    \left|\mathbb E(Y_\theta\mid\mathcal F)\right|
  \right\|_{L_q}
  \le\left\|\sup_\theta|Y_\theta|\right\|_{L_q}.
  \label{eq:prob-ML-conditional-Jensen}
\end{equation}
Let $\bbm\xi'$ be an independent copy of $\bbm\xi$, and set
$\bm a=(\bbm\xi+\bbm\xi')/\sqrt2$ and
$\bm b=(\bbm\xi-\bbm\xi')/\sqrt2$.
Then $\bm a,\bm b$ are independent standard Gaussian vectors.
For every symmetric matrix $\bm M$,
\[
\begin{aligned}
  \bbm\xi^\top\bm M\bbm\xi-\operatorname{tr}(\bm M)
  &=\mathbb E_{\bbm\xi'}\!\left[
    \bbm\xi^\top\bm M\bbm\xi
    -(\bbm\xi')^\top\bm M\bbm\xi'
    \,\middle|\,\bbm\xi\right],\\
  \bbm\xi^\top\bm M\bbm\xi
    -(\bbm\xi')^\top\bm M\bbm\xi'
  &=2\bm a^\top\bm M\bm b.
\end{aligned}
\]
Applying these identities and \eqref{eq:prob-ML-conditional-Jensen}
with $\bm M=\bbm\Upsilon^{1/2}\bm Q_\theta\bbm\Upsilon^{1/2}$ yields
\[
  \|\zeta_{ML}\|_{L_q}
  \le2\left\|\sup_\theta\left|
    \bm a^\top\bbm\Upsilon^{1/2}\bm Q_\theta
      \bbm\Upsilon^{1/2}\bm b
  \right|\right\|_{L_q}.
\]

To remove the covariance factors, write
$\bbm\Upsilon^{1/2}=M^{1/2}\bm C_0$, so that
$\bm C_0\bm C_0^\top\preceq\bm I$.
Let $\bbm\eta,\bbm\eta'$ be independent
$N(\bm0,\bm I-\bm C_0\bm C_0^\top)$ vectors, independent of
$\bm a,\bm b$, and define
\[
  \bm g=\bm C_0\bm a+\bbm\eta,
  \qquad
  \bm h=\bm C_0\bm b+\bbm\eta'.
\]
The vectors $\bm g,\bm h$ are independent standard Gaussians, and
\[
  (\bm C_0\bm a)^\top\bm Q_\theta(\bm C_0\bm b)
  =\mathbb E_{\bbm\eta,\bbm\eta'}\!\left[
    \bm g^\top\bm Q_\theta\bm h\,\middle|\,\bm a,\bm b
  \right].
\]
Another application of \eqref{eq:prob-ML-conditional-Jensen} gives
\[
  \|\zeta_{ML}\|_{L_q}
  \le2M\left\|\sup_\theta
    |\bm g^\top\bm Q_\theta\bm h|
  \right\|_{L_q}.
\]
Finally, set $\bm g_+=(\bm g+\bm h)/\sqrt2$ and
$\bm g_-=(\bm g-\bm h)/\sqrt2$. They are independent standard
Gaussians, and symmetry of $\bm Q_\theta$ gives
\[
\begin{aligned}
  \bm g^\top\bm Q_\theta\bm h
  ={}&\frac12\left\{
    \bm g_+^\top\bm Q_\theta\bm g_+
    -\operatorname{tr}(\bm Q_\theta)\right\}\\
  &-\frac12\left\{
    \bm g_-^\top\bm Q_\theta\bm g_-
    -\operatorname{tr}(\bm Q_\theta)\right\}.
\end{aligned}
\]
Taking the supremum and using the triangle inequality in $L_q$ gives
\[
\begin{aligned}
  \left\|\sup_\theta
    |\bm g^\top\bm Q_\theta\bm h|\right\|_{L_q}
  &\le\frac12\left\|\sup_\theta
    \left|\bm g_+^\top\bm Q_\theta\bm g_+
    -\operatorname{tr}(\bm Q_\theta)\right|\right\|_{L_q}\\
  &\quad+\frac12\left\|\sup_\theta
    \left|\bm g_-^\top\bm Q_\theta\bm g_-
    -\operatorname{tr}(\bm Q_\theta)\right|\right\|_{L_q}\\
  &=\|\zeta_{ML}^{\mathrm{iid}}\|_{L_q}.
\end{aligned}
\]
The equality holds because each centered supremum has the same
distribution as $\zeta_{ML}^{\mathrm{iid}}$. Combining this with the
preceding bound yields
\begin{equation}
  \|\zeta_{ML}\|_{L_q}
  \le CM\|\zeta_{ML}^{\mathrm{iid}}\|_{L_q}.
  \label{eq:prob-ML-trajectory-transfer}
\end{equation}
It remains to establish the desired moment bound for
$\zeta_{ML}^{\mathrm{iid}}$.

\paragraph{Step 3: A conditional Gaussian representation of the iid process.}
Let $\{\bbm\xi'_{it}\}$ be an independent copy of the iid observations,
and put
\[
  \bm u_{it}=\frac{\bbm\xi_{it}+\bbm\xi'_{it}}{\sqrt2},
  \qquad
  \bm v_{it}=\frac{\bbm\xi_{it}-\bbm\xi'_{it}}{\sqrt2}.
\]
The arrays $\bm u$ and $\bm v$ are independent standard Gaussian arrays.
Define
\begin{equation}
  \mathbb X_\theta(\bm u,\bm v)
  :=\frac1{\Ttot}
  \sum_{i=1}^N\sum_{t=1}^{T_i}
  \inner{\bm D_i\bm u_{it}}
  {\bm X_{\mathrm{LR}}\bm v_{it}}.
  \label{eq:prob-ML-conditional-process}
\end{equation}
The independent-copy argument in
\eqref{eq:prob-ML-conditional-Jensen}, followed by the rotation above,
gives
\begin{equation}
\begin{aligned}
  \|\zeta_{ML}^{\mathrm{iid}}\|_{L_q}
  &\le\left\|\sup_\theta\left|
    \frac1{\Ttot}\sum_{i,t}\left\{
    \inner{\bm D_i\bbm\xi_{it}}{\bm X_{\mathrm{LR}}\bbm\xi_{it}}
    -\inner{\bm D_i\bbm\xi'_{it}}{\bm X_{\mathrm{LR}}\bbm\xi'_{it}}
    \right\}\right|\right\|_{L_q}\\
  &=\left\|\sup_\theta
    |\mathbb X_\theta(\bm u,\bm v)
      +\mathbb X_\theta(\bm v,\bm u)|\right\|_{L_q}\\
  &\le2\left\|\sup_\theta
    |\mathbb X_\theta(\bm u,\bm v)|\right\|_{L_q}.
  \label{eq:prob-ML-Gaussian-rotation}
\end{aligned}
\end{equation}
The last inequality uses exchangeability of $\bm u$ and $\bm v$.
Conditional on $\bm u$, $\mathbb X_\theta$ is a centered Gaussian process
in $\bm v$. We next bound its conditional mean and fluctuations.

\paragraph{Step 4: Conditional mean and moment bounds.}
The dependence on $\bm u$ is controlled by its sparse empirical energies.
Define
\begin{equation}
  R(\bm u)
  :=\max_{i:\,s_i^\dagger>0}
  \sup_{\substack{J\subset[p]\\
  1\le|J|\le s_i^\dagger\wedge p}}
  \lambda_{\max}\left(
    \frac1{T_i}\sum_{t=1}^{T_i}
    \bm u_{it,J}\bm u_{it,J}^\top
  \right).
  \label{eq:prob-ML-restricted-energy}
\end{equation}
Using $\sum_iw_i\|\bm D_i\|_{\mathrm F}^2\le1$ therefore gives
\begin{equation}
\begin{aligned}
  \frac1{\Ttot}\sum_{i=1}^N\sum_{t=1}^{T_i}
  \|\bm D_i\bm u_{it}\|_2^2&=\sum_{i=1}^Nw_i\sum_{a=1}^p
  (\bm D_i)_{a,\cdot}
  \left(\frac1{T_i}\sum_{t=1}^{T_i}
  \bm u_{it}\bm u_{it}^\top\right)
  (\bm D_i)_{a,\cdot}^\top
  \\
  &\le R(\bm u)
  \sum_{i=1}^Nw_i\|\bm D_i\|_{\mathrm F}^2
  \le R(\bm u).
  \label{eq:prob-ML-restricted-energy-use}
\end{aligned}
\end{equation}

We now condition on $\bm u$ and calculate the squared increments of
\eqref{eq:prob-ML-conditional-process}.  If $\theta'$ is a second index,
let $\bm X_{\mathrm{LR}}'$ and $\bm D_i'$ be its associated matrices.
We have
\begin{equation}
\begin{aligned}
  {\mathbb E}_{\bm v}
  |\mathbb X_\theta-\mathbb X_{\theta'}|^2
  &=\frac1{\Ttot^2}{\mathbb E}_{\bm v}
  \left|
    \sum_{i=1}^N\sum_{t=1}^{T_i}
    \inner{
      \bm X_{\mathrm{LR}}^\top\bm D_i\bm u_{it}
      -(\bm X_{\mathrm{LR}}')^\top\bm D_i'\bm u_{it}}
    {\bm v_{it}}
  \right|^2
  \\
  &=\frac1{\Ttot^2}
  \sum_{i=1}^N\sum_{t=1}^{T_i}
  \left\|
    \bm X_{\mathrm{LR}}^\top\bm D_i\bm u_{it}
    -(\bm X_{\mathrm{LR}}')^\top\bm D_i'\bm u_{it}
  \right\|_2^2
  \\
  &=\frac1{\Ttot^2}
  \sum_{i=1}^N\sum_{t=1}^{T_i}
  \left\|
    (\bm X_{\mathrm{LR}}-\bm X_{\mathrm{LR}}')^\top
    \bm D_i\bm u_{it}
    +(\bm X_{\mathrm{LR}}')^\top
    (\bm D_i-\bm D_i')\bm u_{it}
  \right\|_2^2
  \\
  &\le
  \frac{2\|\bm X_{\mathrm{LR}}-\bm X_{\mathrm{LR}}'\|_{\mathrm F}^2}
  {\Ttot^2}
  \sum_{i=1}^N\sum_{t=1}^{T_i}\|\bm D_i\bm u_{it}\|_2^2
  \\
  &\quad+
  \frac{2\|\bm X_{\mathrm{LR}}'\|_{\op}^2}{\Ttot^2}
  \sum_{i=1}^N\sum_{t=1}^{T_i}
  \|(\bm D_i-\bm D_i')\bm u_{it}\|_2^2
  \\
  &\le
  C\frac{R(\bm u)}{\Ttot}
  \|\bm X_{\mathrm{LR}}-\bm X_{\mathrm{LR}}'\|_{\mathrm F}^2
  +\frac{C}{\Ttot^2}
  \sum_{i=1}^N\sum_{t=1}^{T_i}
  \|(\bm D_i-\bm D_i')\bm u_{it}\|_2^2,
\end{aligned}
  \label{eq:prob-ML-conditional-metric}
\end{equation}
where the second line follows from
$\mathbb E_{\bm v}(\bm v_{it}\bm v_{js}^\top)=\bm I_p$ when
$(i,t)=(j,s)$ and $\bm 0$ otherwise, while the last inequality uses
\eqref{eq:prob-ML-restricted-energy-use} and
$\sup_\theta\|\bm X_{\mathrm{LR}}\|_{\op}\le C$.

The two terms in \eqref{eq:prob-ML-conditional-metric} correspond to the
global low-rank and local sparse complexities.  Let $\bm G$ have independent
standard Gaussian entries, and let $\{\bm g_{it}\}$ be independent standard
Gaussian vectors.  Take $\bm G$, $\{\bm g_{it}\}$, and $\bm u$ to be
mutually independent.  For a sufficiently large absolute constant $C$,
define the comparison process
\begin{equation*}
  \mathbb Y_\theta
  :=\sqrt C\left\{
    \sqrt{\frac{R(\bm u)}{\Ttot}}
    \inner{\bm G}{\bm X_{\mathrm{LR}}}
    +\frac1{\Ttot}
    \sum_{i=1}^N\sum_{t=1}^{T_i}
    \inner{\bm g_{it}}{\bm D_i\bm u_{it}}
  \right\}.
\end{equation*}
Conditional on $\bm u$, its squared increment metric between $\theta$ and
$\theta'$ is, by definition, the conditional second moment
\begin{equation}\label{eq:lemB5_Ydiff}
  \mathbb E_{\bm G,\bm g}
  |\mathbb Y_\theta-\mathbb Y_{\theta'}|^2
  =
  C\frac{R(\bm u)}{\Ttot}
  \|\bm X_{\mathrm{LR}}-\bm X_{\mathrm{LR}}'\|_{\mathrm F}^2
  +\frac{C}{\Ttot^2}
  \sum_{i=1}^N\sum_{t=1}^{T_i}
  \|(\bm D_i-\bm D_i')\bm u_{it}\|_2^2.
\end{equation}
By \eqref{eq:prob-ML-conditional-metric}, the squared increment metric of
$\mathbb Y_\theta$ dominates that of $\mathbb X_\theta$ when $C$ is
sufficiently large, i.e.,
\[
  {\mathbb E}_{\bm v}
  |\mathbb X_\theta-\mathbb X_{\theta'}|^2\leq \mathbb E_{\bm G,\bm g}
  |\mathbb Y_\theta-\mathbb Y_{\theta'}|^2.
\]
Therefore, conditional on $\bm u$, the Sudakov--Fernique inequality
\citep[Theorem~7.2.11]{vershynin2018high} gives
\begin{equation}
\begin{aligned}
  \mathbb E_{\bm v}\sup_\theta|\mathbb X_\theta|
  &\le
  \mathbb E_{\bm G,\bm g}\sup_\theta\mathbb Y_\theta
  \\
  &\le
  \sqrt{\frac{C R(\bm u)}{\Ttot}}
  \mathbb E_{\bm G}\sup_\theta
  \inner{\bm G}{\bm X_{\mathrm{LR}}}
  +
  \frac{\sqrt C}{\Ttot}
  \mathbb E_{\bm g}\sup_\theta
  \sum_{i=1}^N\sum_{t=1}^{T_i}
  \inner{\bm g_{it}}{\bm D_i\bm u_{it}}.
\end{aligned}
  \label{eq:prob-ML-comparison-split}
\end{equation}
To bound the first term in
\eqref{eq:prob-ML-comparison-split}, note that every
$\bm X_{\mathrm{LR}}\in\mathcal X_{\mathrm{LR}}$ has unit Frobenius
norm and rank at most $2r$ by
\eqref{eq:shared_lowrank_error_class}.
Hence, the Gaussian width of $\mathcal X_{\mathrm{LR}}$ is bounded by
\begin{equation}
\begin{aligned}
  \mathbb E
  \sup_{\bm X_{\mathrm{LR}}\in\mathcal X_{\mathrm{LR}}}
  \inner{\bm G}{\bm X_{\mathrm{LR}}}
  &\le
  \mathbb E
  \left\{\sum_{j=1}^{\min\{2r,p\}}\sigma_j^2(\bm G)\right\}^{1/2}
  \le C\sqrt{pr}.
  \label{eq:prob-ML-global-width}
\end{aligned}
\end{equation}
Then, the first term on the right-hand side
of \eqref{eq:prob-ML-comparison-split} satisfies
\begin{equation}
  \sqrt{\frac{C R(\bm u)}{\Ttot}}
  \mathbb E_{\bm G}\sup_\theta
  \inner{\bm G}{\bm X_{\mathrm{LR}}}
  \le
  C\sqrt{\frac{R(\bm u)pr}{\Ttot}}.
  \label{eq:prob-ML-global-width-contribution}
\end{equation}

For the second term of \eqref{eq:prob-ML-comparison-split}, set
$\bm B_i:=T_i^{-1}\sum_{t=1}^{T_i}\bm g_{it}\bm u_{it}^\top$.
Then weighted Frobenius duality gives
\begin{equation}
  \frac1{\Ttot}\sup_\theta\sum_{i,t}
    \inner{\bm g_{it}}{\bm D_i\bm u_{it}}=\sup_{\substack{|\supp(\bm D_i)|\le s_i^\dagger\\
    \sum_iw_i\|\bm D_i\|_{\mathrm F}^2\le1}}
    \sum_iw_i\inner{\bm D_i}{\bm B_i}
  =\left\{\sum_iw_i\sup_{|\Omega_i|\le s_i^\dagger}
    \|\mathcal P_{\Omega_i}(\bm B_i)\|_{\mathrm F}^2
    \right\}^{1/2}.
  \label{eq:prob-ML-local-width-duality}
\end{equation}
When $s_i^\dagger=0$, the support $\Omega_i$ is empty,
so the corresponding summand is zero. For every other client and
every unit-Frobenius matrix $\bm M$ with at most $s_i^\dagger$ nonzeros,
\begin{equation}
  \operatorname{Var}_{\bm g}\inner{\bm M}{\bm B_i}
  =\frac1{T_i^2}\sum_t\|\bm M\bm u_{it}\|_2^2
  \le\frac{R(\bm u)}{T_i}.
  \label{eq:prob-ML-local-width-variance}
\end{equation}
Thus, for fixed $\Omega_i$, the conditional covariance of
$\operatorname{vec}(\mathcal P_{\Omega_i}(\bm B_i))$ has operator norm
at most $R(\bm u)/T_i$ and trace at most
$s_i^\dagger R(\bm u)/T_i$. Gaussian concentration gives
\[
  \mathbb P_{\bm g}\!\left\{
    \|\mathcal P_{\Omega_i}(\bm B_i)\|_{\mathrm F}
    >\sqrt{\frac{s_i^\dagger R(\bm u)}{T_i}}+t
    \,\middle|\,\bm u\right\}
  \le\exp\left\{-\frac{cT_it^2}{R(\bm u)}\right\},
  \qquad t\ge0.
\]
Every smaller support can be enlarged to size
$s_i^\dagger\wedge p^2$ without decreasing the projected norm.
A union bound over the $\binom{p^2}{s_i^\dagger\wedge p^2}$ such supports,
followed by integration of the tail, therefore gives
\[
  \mathbb E_{\bm g}\sup_{|\Omega_i|\le s_i^\dagger}
    \|\mathcal P_{\Omega_i}(\bm B_i)\|_{\mathrm F}^2
  \le\frac{CR(\bm u)}{T_i}
  \left\{s_i^\dagger+\log\binom{p^2}{s_i^\dagger\wedge p^2}\right\}\le C\frac{R(\bm u)s_i^\dagger\log p}{T_i}.
\]
Applying Jensen's inequality to
\eqref{eq:prob-ML-local-width-duality} and using
$w_i/T_i=1/\Ttot$ now yields
\begin{equation}
\begin{aligned}
  \frac1{\Ttot}\mathbb E_{\bm g}\sup_\theta
    \sum_{i,t}\inner{\bm g_{it}}{\bm D_i\bm u_{it}}
  &\le\left\{\sum_iw_i\mathbb E_{\bm g}
    \sup_{|\Omega_i|\le s_i^\dagger}
    \|\mathcal P_{\Omega_i}(\bm B_i)\|_{\mathrm F}^2\right\}^{1/2}\\
  &\le C\left\{R(\bm u)\sum_i
    \frac{w_i s_i^\dagger\log p}{T_i}\right\}^{1/2}\\
  &=C\left\{\frac{R(\bm u)\log p}{\Ttot}
    \sum_i s_i^\dagger\right\}^{1/2}.
\end{aligned}
  \label{eq:prob-ML-local-width}
\end{equation}
Substituting these two bounds into
\eqref{eq:prob-ML-comparison-split} gives the conditional mean bound
\begin{equation}
  \mathbb E_{\bm v}\sup_\theta|\mathbb X_\theta|
  \le C\sqrt{R(\bm u)}\left[
    \sqrt{\frac{pr}{\Ttot}}
    +\left\{\frac{\log p}{\Ttot}
      \sum_i s_i^\dagger\right\}^{1/2}\right].
  \label{eq:prob-ML-conditional-mean}
\end{equation}

To turn this mean bound into a conditional $L_q$ bound, we also control
the Gaussian fluctuations. The calculation in
\eqref{eq:prob-ML-conditional-metric} with
$\bm X_{\mathrm{LR}}'=\bm X_{\mathrm{LR}}$ and $\bm D_i'=\bm 0$ gives,
uniformly over the index set,
\begin{equation*}
  \operatorname{Var}_{\bm v}
  \{\mathbb X_\theta(\bm u,\bm v)\mid\bm u\}
  =\frac1{\Ttot^2}
  \sum_{i=1}^N\sum_{t=1}^{T_i}
  \|\bm X_{\mathrm{LR}}^\top\bm D_i\bm u_{it}\|_2^2
  \le C\frac{R(\bm u)}{\Ttot}.
\end{equation*}
Consequently, the supremum is a Lipschitz function of $\bm v$ with
Lipschitz constant at most $C\sqrt{R(\bm u)/\Ttot}$. Gaussian
concentration therefore gives, for $t\ge0$,
\begin{equation}
  \mathbb P_{\bm v}\!\left\{
    \sup_\theta|\mathbb X_\theta|
    >\mathbb E_{\bm v}\sup_\theta|\mathbb X_\theta|
      +C\sqrt{\frac{R(\bm u)t}{\Ttot}}
    \,\middle|\,\bm u\right\}\le e^{-t}.
  \label{eq:prob-ML-conditional-tail}
\end{equation}
Integrating this tail and using
\eqref{eq:prob-ML-conditional-mean} yields
\begin{equation}
  \left\|\sup_\theta
  |\mathbb X_\theta(\bm u,\bm v)|
  \right\|_{L_q(\bm v\mid\bm u)}
  \le C\sqrt{R(\bm u)}
  \left[
    \sqrt{\frac{pr+q}{\Ttot}}
    +\left\{
      \frac{\log p}{\Ttot}\sum_{i=1}^Ns_i^\dagger
    \right\}^{1/2}
  \right].
  \label{eq:prob-ML-conditional-process-moment}
\end{equation}
Only the random factor $\sqrt{R(\bm u)}$ remains to be controlled before
we can remove the conditioning.

\paragraph{Step 5: Removing the conditioning and completing the moment bound.}
We bound $\|R(\bm u)^{1/2}\|_{L_q}$ using the sparse covering
argument in \eqref{eq:lemB2_sparseNet}. For each client, let $\mathcal N_J$ denote
the $1/4$-net on support $J$.  The union of these nets has log-cardinality at
most $C s_i^\dagger\log p$ for client $i$.  Approximating each supported
unit sphere by its $1/4$-net gives
\begin{equation*}
  R(\bm u)^{1/2}
  \le\sqrt2
  \max_{\substack{i\in[N],\ s_i^\dagger>0,\ J\subset[p]\\
  |J|=s_i^\dagger\wedge p}}
  \max_{\overline{\bm v}\in\mathcal N_J}
  \left\{
    \frac1{T_i}\sum_{t=1}^{T_i}
    (\overline{\bm v}^\top\bm u_{it,J})^2
  \right\}^{1/2}.
\end{equation*}
For every fixed $(i,J,\overline{\bm v})$, the vector
$\{\overline{\bm v}^{\top}\bm u_{it,J}\}_{t=1}^{T_i}$ is distributed as
$N(\bm 0,\bm I_{T_i})$.  Hence, by concentration of the Euclidean norm
\citep[Theorem~3.1.1]{vershynin2018high}, for every $y\ge0$,
\begin{equation*}
  \mathbb P\left[
    \left\{
      \frac1{T_i}\sum_{t=1}^{T_i}
      (\overline{\bm v}^\top\bm u_{it,J})^2
    \right\}^{1/2}
    >1+C\sqrt{\frac{y}{T_i}}
  \right]
  \le2\exp(-cy).
\end{equation*}
Taking $y=C\{s_i^\dagger\log p+\log N+x\}$ and applying a union bound over
the supports, net points, and clients gives, for every $x\ge0$,
\begin{equation*}
  \mathbb P\left[
    R(\bm u)^{1/2}
    >
    C\left\{
      1+\max_{i:\,s_i^\dagger>0}
      \sqrt{
        \frac{s_i^\dagger\log p+\log N+x}{T_i}}
    \right\}
  \right]
  \le2\exp(-cx).
\end{equation*}
Integrating this tail bound yields, for every $q\ge1$,
\begin{equation*}
  \|R(\bm u)^{1/2}\|_{L_q}
  \le
  C\left\{
    1+\max_{i:\,s_i^\dagger>0}
    \sqrt{
      \frac{s_i^\dagger\log p+\log N+q}{T_i}}
  \right\}.
\end{equation*}
Under the sample-size condition
\eqref{eq:prob-ML-iid-moment-sample}, this proves
\begin{equation}
  \|R(\bm u)^{1/2}\|_{L_q}\le C.
  \label{eq:prob-ML-restricted-energy-moment}
\end{equation}
To pass to the unconditional moment, the tower property gives
\[
  \left\|\sup_\theta|\mathbb X_\theta|\right\|_{L_q(\bm u,\bm v)}^q
  =\mathbb E_{\bm u}\!\left[
    \left\|\sup_\theta|\mathbb X_\theta|\right\|_{L_q(\bm v\mid\bm u)}^q
  \right].
\]
Substituting \eqref{eq:prob-ML-conditional-process-moment} and
\eqref{eq:prob-ML-restricted-energy-moment}, then applying
\eqref{eq:prob-ML-Gaussian-rotation}, gives
\begin{equation}
  \|\zeta_{ML}^{\mathrm{iid}}\|_{L_q}
  \le
  C\left[
    \sqrt{\frac{pr+q}{\Ttot}}
    +\left\{
      \frac{\log p}{\Ttot}\sum_{i=1}^Ns_i^\dagger
    \right\}^{1/2}
  \right].
  \label{eq:prob-ML-iid-two-width}
\end{equation}
Combining this iid bound with the trajectory reduction
\eqref{eq:prob-ML-trajectory-transfer} proves the target moment bound
\eqref{eq:prob-ML-target-moment}.

\paragraph{Step 6: From moments to the probability bound.}
Take
\begin{equation*}
  q=1+c\min\left\{pr,\ (s_{\min}^\dagger\vee1)\log p\right\}.
\end{equation*}
Since $q\le C s_i^\dagger\log p$ whenever $s_i^\dagger>0$, the assumed sample-size
condition implies \eqref{eq:prob-ML-iid-moment-sample}.  Applying Markov's inequality to \eqref{eq:prob-ML-target-moment},
\begin{equation*}
  \mathbb P\left\{
    \zeta_{ML}>e\|\zeta_{ML}\|_{L_q}
  \right\}\le\exp(-q),
\end{equation*}
gives a failure probability bounded by
\begin{equation*}
  C\exp(-cpr)+C\exp\{-c(s_{\min}^\dagger\vee1)\log p\}.
\end{equation*}
Since $q\le Cpr$, the $q/\Ttot$ term in
\eqref{eq:prob-ML-target-moment} is absorbed by $pr/\Ttot$.
Therefore, with the probability stated in part~\textup{(i)},
\begin{equation}
  \zeta_{ML}^2
  \le CM^2
  \left\{
    \frac{pr}{\Ttot}
    +\frac{\log p}{\Ttot}\sum_{i=1}^Ns_i^\dagger
  \right\}.
  \label{eq:prob-ML-centered-two-width}
\end{equation}
By the exact identity \eqref{eq:prob-ML-zeta-target-identity}, this is precisely
\eqref{eq:prob-ML-centered-geometry}, and hence proves
part~\textup{(i)}.

\paragraph{Shared sparse directions: part~\textup{(ii)}.}
For part~\textup{(ii)}, assume $s_0^\dagger>0$, since the zero case is
immediate. Define the target supremum
\begin{equation}
\begin{aligned}
  \zeta_{0L}^2
  :={}&
  \sup_{\substack{
    \bm X_S\in\mathcal X_S\\
    \Omega_i\subset[p]\times[p],\ |\Omega_i|\le s_i^\dagger}}
  \sum_{i=1}^Nw_i
  \left\|
    \mathcal P_{\Omega_i}
    \{\bm X_S(\widehat{\bbm\Gamma}_i-\bbm\Gamma_i)\}
  \right\|_{\mathrm F}^2.
  \label{eq:prob-0L-zeta-target-identity}
\end{aligned}
\end{equation}
The same weighted Frobenius duality gives
\eqref{eq:prob-ML-centered-block} with
$\bm X_{\mathrm{LR}}\in\mathcal X_{\mathrm{LR}}$ replaced by
$\bm X_S\in\mathcal X_S$ and $\zeta_{ML}$ by $\zeta_{0L}$.
Since the matrices in $\mathcal X_S$ also have unit Frobenius norm,
only the Gaussian width bound needs to be changed in Steps~2--5.
By vectorization and the sparse covering argument in
\eqref{eq:lemB2_sparseNet},
\begin{equation*}
  \mathbb E
  \sup_{\bm X_S\in\mathcal X_S}
  \inner{\bm G}{\bm X_S}
  \le
  \mathbb E
  \sup_{\substack{\Omega\subset[p]\times[p]\\
  |\Omega|\le s_0^\dagger}}
  \|\mathcal P_\Omega(\bm G)\|_{\mathrm F}
  \le C\sqrt{s_0^\dagger\log p}.
\end{equation*}
Substituting this width into the first term of
\eqref{eq:prob-ML-comparison-split} and following Steps~2--5 gives, under
\eqref{eq:prob-ML-iid-moment-sample},
\begin{equation}
  \|\zeta_{0L}\|_{L_q}
  \le CM\left[
    \sqrt{\frac{s_0^\dagger\log p+q}{\Ttot}}
    +\left\{
      \frac{\log p}{\Ttot}\sum_{i=1}^Ns_i^\dagger
    \right\}^{1/2}
  \right].
  \label{eq:prob-0L-moment}
\end{equation}

Finally, take
\begin{equation*}
  q=1+c\min\left\{
    (s_0^\dagger\vee1)\log p,\ 
    (s_{\min}^\dagger\vee1)\log p
  \right\}.
\end{equation*}
Again, $q\le C s_i^\dagger\log p$, so the assumed
clientwise sample sizes imply \eqref{eq:prob-ML-iid-moment-sample}.
The Markov argument in Step~6, applied to
\eqref{eq:prob-0L-moment}, gives a failure probability bounded by
\begin{equation*}
  C\exp\{-c(s_0^\dagger\vee1)\log p\}
  +C\exp\{-c(s_{\min}^\dagger\vee1)\log p\}.
\end{equation*}
Since $q\le C s_0^\dagger\log p$, the $q/\Ttot$ term in
\eqref{eq:prob-0L-moment} is absorbed by
$s_0^\dagger\log p/\Ttot$.  Thus, with the probability stated in
part~\textup{(ii)},
\begin{equation}
  \zeta_{0L}^2
  \le CM^2
  \left\{
    \frac{s_0^\dagger\log p}{\Ttot}
    +\frac{\log p}{\Ttot}\sum_{i=1}^Ns_i^\dagger
  \right\}.
  \label{eq:prob-0L-centered-two-width}
\end{equation}
By \eqref{eq:prob-0L-zeta-target-identity}, this is precisely
\eqref{eq:prob-0L-centered-geometry}, and hence proves
part~\textup{(ii)}.
\end{proof}

\section{Auxiliary Deterministic Results}
\label{sec:app-deterministic}

In this appendix, we establish the deterministic auxiliary results used in the proof of the results. All notation is inherited from Appendix~\ref{subsec:app-main-setting}.
\subsection{Population curvature and factor geometry}
\label{subsec:det-curvature-factor-geometry}

\begin{lemma}[Population RCG inequalities]
\label{lem:det-population-rcg}
The following inequalities hold.
\begin{enumerate}[label=\textup{(\roman*)}]
\item For every $\bm H\in\mathbb R^{p\times p}$,
\begin{equation}
  \inner{\bm H\bbm\Gamma_w}{\bm H}
  \ge
  \frac{\alpha_{\mathrm{LR}}}{2}\|\bm H\|_{\mathrm F}^2
  +\frac{1}{2\beta_{\mathrm{LR}}}
  \|\bm H\bbm\Gamma_w\|_{\mathrm F}^2.
  \label{eq:app-lowrank-population-rcg}
\end{equation}

\item For every $\Omega_0\in\mathcal C_0'$ and every
$\bm H$ satisfying $\supp(\bm H)\subseteq\Omega_0$,
\begin{equation}
  \inner{\bm H\bbm\Gamma_w}{\bm H}
  \ge
  \frac{\alpha_0}{2}\|\bm H\|_{\mathrm F}^2
  +\frac{1}{2\beta_0}
  \|\mathcal P_{\Omega_0}(\bm H\bbm\Gamma_w)\|_{\mathrm F}^2.
  \label{eq:global-sparse-population-RCG}
\end{equation}

\item For every $i\in[N]$, every
$\Omega_i\subset[p]\times[p]$ satisfying $|\Omega_i|\le s_i'$,
and every $\bm H$ satisfying $\supp(\bm H)\subseteq\Omega_i$,
\begin{equation}
  \inner{\bm H\bbm\Gamma_i}{\bm H}
  \ge
  \frac{\alpha_i}{2}\|\bm H\|_{\mathrm F}^2
  +\frac{1}{2\beta_i}
  \|\mathcal P_{\Omega_i}(\bm H\bbm\Gamma_i)\|_{\mathrm F}^2.
  \label{eq:app-local-population-rcg}
\end{equation}
\end{enumerate}
\end{lemma}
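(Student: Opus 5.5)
All three parts will follow from a single deterministic co-coercivity argument for a symmetric positive semidefinite quadratic form restricted to a linear subspace. Fix a subspace $\mathcal S\subseteq\mathbb R^{p\times p}$ (all matrices in part (i), matrices supported on $\Omega_0$ in part (ii), matrices supported on $\Omega_i$ in part (iii)). Let $\mathcal P_{\mathcal S}$ be the orthogonal projection onto $\mathcal S$ and $\bbm\Gamma\in\{\bbm\Gamma_w,\bbm\Gamma_i\}$. Define the self-adjoint operator $\mathcal A(\bm H):=\mathcal P_{\mathcal S}(\bm H\bbm\Gamma)$ on $\mathcal S$. Then $\inner{\mathcal A(\bm H)}{\bm H}=\inner{\bm H\bbm\Gamma}{\bm H}=\operatorname{tr}(\bm H\bbm\Gamma\bm H^\top)$ for $\bm H\in\mathcal S$.

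The plan is to show that the spectrum of $\mathcal A$ on $\mathcal S$ lies in $[a,b]$ with $a=\alpha_{\mathrm{LR}},\alpha_0,\alpha_i$ and $b=\beta_{\mathrm{LR}},\beta_0,\beta_i$ respectively. For part (i) this holds because $\bm H\mapsto\bm H\bbm\Gamma_w$ acts row by row with spectrum $\operatorname{spec}(\bbm\Gamma_w)\subseteq[\alpha_{\mathrm{LR}},\beta_{\mathrm{LR}}]$. For parts (ii) and (iii), the Rayleigh quotient $\inner{\mathcal A(\bm H)}{\bm H}/\|\bm H\|_{\mathrm F}^2$ over $\bm H\in\mathcal S$ is exactly the quantity whose infimum and supremum define $\alpha_0,\beta_0$ (with $\Omega_0\in\mathcal C_0'$) and $\alpha_i,\beta_i$ (with $|\Omega_i|\le s_i'$) in \eqref{eq:app-full-population-curvature-coefficients}. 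Every $\bm H$ supported on such a set is admissible there, since subsets of a support in $\mathcal C_0'$ remain in $\mathcal C_0'$. So the extreme eigenvalues of the self-adjoint $\mathcal A$ are bounded as required.

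With the spectral bounds in hand, the inequality is the standard scalar fact that $\lambda\ge a/2+\lambda^2/(2b)$ for every $\lambda\in[a,b]$. This holds because $\lambda^2/(2b)\le\lambda/2$ and $a/2\le\lambda/2$. Diagonalizing $\mathcal A$ in an orthonormal eigenbasis of $\mathcal S$ and summing this scalar inequality over eigen-coordinates of $\bm H$ gives
\[
\inner{\mathcal A\bm H}{\bm H}\ge \tfrac a2\|\bm H\|_{\mathrm F}^2+\tfrac1{2b}\|\mathcal A\bm H\|_{\mathrm F}^2 .
\]
Substituting $\mathcal A\bm H=\mathcal P_{\mathcal S}(\bm H\bbm\Gamma)$, with $\mathcal P_{\mathcal S}=\mathrm{id}$ in part (i), yields \eqref{eq:app-lowrank-population-rcg}, \eqref{eq:global-sparse-population-RCG}, and \eqref{eq:app-local-population-rcg}.

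The only point that needs care is verifying that the restricted operator is self-adjoint on $\mathcal S$, since this is what lets the Rayleigh-quotient bounds control $\|\mathcal A\bm H\|_{\mathrm F}$ and not merely the quadratic form. Self-adjointness holds because $\bbm\Gamma$ is symmetric and $\mathcal P_{\mathcal S}$ is an orthogonal coordinate projection: $\inner{\mathcal P_{\mathcal S}(\bm H\bbm\Gamma)}{\bm K}=\operatorname{tr}(\bm H\bbm\Gamma\bm K^\top)$ is symmetric in $(\bm H,\bm K)$ for $\bm H,\bm K\in\mathcal S$. No other obstacle is expected.
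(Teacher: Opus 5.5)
Your proof is correct. For part (i) it coincides with the paper's proof, which diagonalizes $\bbm\Gamma_w$ and applies the same scalar inequality $\lambda\ge \alpha_{\mathrm{LR}}/2+\lambda^2/(2\beta_{\mathrm{LR}})$ in the rotated coordinates.

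For parts (ii) and (iii) you take a different route. You compress $\bm H\mapsto\bm H\bbm\Gamma$ to the coordinate subspace of matrices supported on $\Omega_0$ (or $\Omega_i$), check that it is self-adjoint, bound its spectrum by $[\alpha_0,\beta_0]$ through the Rayleigh quotient and downward closure of $\mathcal C_0'$, and then diagonalize. The paper never forms this compressed operator. Instead it proves two separate inequalities and averages them. The lower bound $\inner{\bm H\bbm\Gamma_w}{\bm H}\ge\alpha_0\|\bm H\|_{\mathrm F}^2$ is read off from the definition of $\alpha_0$. For the other half, the paper sets $\bm K=\mathcal P_{\Omega_0}(\bm H\bbm\Gamma_w)$ and writes $\|\bm K\|_{\mathrm F}^2=\inner{\bm H\bbm\Gamma_w^{1/2}}{\bm K\bbm\Gamma_w^{1/2}}$. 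It then applies Cauchy--Schwarz and the $\beta_0$ bound at the single matrix $\bm K$, which gives $\|\bm K\|_{\mathrm F}^2\le\beta_0\inner{\bm H\bbm\Gamma_w}{\bm H}$.

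Your version treats all three parts with one mechanism. It also makes explicit why restricted curvature constants control $\|\mathcal P_{\Omega_0}(\bm H\bbm\Gamma_w)\|_{\mathrm F}$, not just the quadratic form. The paper's version is slightly more elementary: it needs no eigenbasis on the subspace and uses the smoothness bound at only one test matrix. Both rest on the same two facts: $\bbm\Gamma$ is symmetric, and any matrix supported inside $\Omega_0$ (resp.\ $\Omega_i$) is admissible in the definition of the curvature constants.
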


\begin{proof}
Let $\lambda_1\ge \lambda_2\ge\cdots\ge\lambda_p>0$ be the eigenvalues of $\bbm\Gamma_w$. 
For part~\textup{(i)}, write
\begin{equation*}
  \bbm\Gamma_w
  =\bm O\operatorname{diag}(\lambda_1,\ldots,\lambda_p)\bm O^\top,
  \qquad
  \alpha_{\mathrm{LR}}\le\lambda_j\le\beta_{\mathrm{LR}},\quad j\in[p],
\end{equation*}
where $\bm O$ is orthogonal, and set $\widetilde{\bm H}=\bm H\bm O$.
Then
\begin{equation*}
\begin{aligned}
  \inner{\bm H\bbm\Gamma_w}{\bm H}
  &=\sum_{j=1}^p\lambda_j
    \|\widetilde{\bm H}_{\cdot,j}\|_2^2,\\
  \|\bm H\bbm\Gamma_w\|_{\mathrm F}^2
  &=\sum_{j=1}^p\lambda_j^2
    \|\widetilde{\bm H}_{\cdot,j}\|_2^2.
\end{aligned}
\end{equation*}
For every $j\in[p]$,
\begin{equation*}
  \lambda_j
  =\frac{\lambda_j}{2}+\frac{\lambda_j}{2}
  \ge
  \frac{\alpha_{\mathrm{LR}}}{2}+\frac{\lambda_j^2}{2\beta_{\mathrm{LR}}}.
\end{equation*}
Multiplying by $\|\widetilde{\bm H}_{\cdot,j}\|_2^2$ and summing over
$j$ proves part~\textup{(i)}.

For part~\textup{(ii)}, let
\begin{equation*}
  \bm K:=\mathcal P_{\Omega_0}(\bm H\bbm\Gamma_w).
\end{equation*}
Since $\supp(\bm H),\supp(\bm K)\subseteq\Omega_0\in\mathcal C_0'$, the definitions of $\alpha_0$ and $\beta_0$ therefore give, with the zero
cases being immediate,
\begin{equation}
\label{eq:appC_RSCRSS}
\begin{aligned}
  \inner{\bm H\bbm\Gamma_w}{\bm H}
  &=\operatorname{tr}(\bm H\bbm\Gamma_w\bm H^\top)
  \ge\alpha_0\|\bm H\|_{\mathrm F}^2,\\
  \inner{\bm K\bbm\Gamma_w}{\bm K}
  &=\operatorname{tr}(\bm K\bbm\Gamma_w\bm K^\top)
  \le\beta_0\|\bm K\|_{\mathrm F}^2.
\end{aligned}
\end{equation}
Using the Frobenius
Cauchy--Schwarz inequality,
\begin{equation*}
\begin{aligned}
  \|\bm K\|_{\mathrm F}^2
  &=\inner{\mathcal P_{\Omega_0}(\bm H\bbm\Gamma_w)}{\bm K}
  =\inner{\bm H\bbm\Gamma_w}{\bm K}\\
  &=\inner{\bm H\bbm\Gamma_w^{1/2}}
  {\bm K\bbm\Gamma_w^{1/2}}\\
  &\le
  \|\bm H\bbm\Gamma_w^{1/2}\|_{\mathrm F}
  \|\bm K\bbm\Gamma_w^{1/2}\|_{\mathrm F}\\
  &=
  \inner{\bm H\bbm\Gamma_w}{\bm H}^{1/2}
  \inner{\bm K\bbm\Gamma_w}{\bm K}^{1/2}\\
  &\le
  \sqrt{\beta_0}
  \inner{\bm H\bbm\Gamma_w}{\bm H}^{1/2}
  \|\bm K\|_{\mathrm F}.
\end{aligned}
\end{equation*}
If $\bm K\ne\bm 0$, division and squaring give
\begin{equation*}
  \inner{\bm H\bbm\Gamma_w}{\bm H}
  \ge\frac{1}{\beta_0}\|\bm K\|_{\mathrm F}^2,
\end{equation*}
and the same inequality is immediate when $\bm K=\bm 0$.  Averaging it
with the first inequality of \eqref{eq:appC_RSCRSS} yields
\begin{equation*}
  \inner{\bm H\bbm\Gamma_w}{\bm H}
  \ge
  \frac{\alpha_0}{2}\|\bm H\|_{\mathrm F}^2
  +\frac{1}{2\beta_0}\|\bm K\|_{\mathrm F}^2,
\end{equation*}
which proves part~\textup{(ii)}.  Replacing
$\Omega_0$, $\bbm\Gamma_w$, $\alpha_0$, and $\beta_0$ by $\Omega_i$,
$\bbm\Gamma_i$, $\alpha_i$, and $\beta_i$, respectively, gives the
corresponding proof of part~\textup{(iii)}.
\end{proof}

\begin{lemma}[Optimal alignment and metric comparison]
\label{lem:metric_equivalence}
The following result restates Lemmas~22, 24, and~26 of
\citet{tong2021accelerating}.
For any factor pair $(\bm L,\bm R)$, the alignment error in
\eqref{eq:scaled_factor_error} satisfies
\begin{equation}
  (\sqrt{2}-1)\left\|(\bm L,\bm R)\right\|_{\mathrm{align},*}^2
  \le
  \|\bm L\bm R^\top-\bm B_*\|_{\mathrm F}^2.
  \label{eq:det-factor-product-lower}
\end{equation}
Suppose further that $\left\|(\bm L,\bm R)\right\|_{\mathrm{align},*}\le\varepsilon\sigma_r$ for some
\(0\le\varepsilon<1\).
Then the infimum in \eqref{eq:scaled_factor_error} is attained by some
$\bm Q\in\mathrm{GL}(r)$.  For any such optimal alignment,
\begin{equation}
  \left\|(\bm L\bm Q-\bm L_*)
  \bbm\Sigma_*^{-1/2}\right\|_{\op}
  \vee
  \left\|(\bm R\bm Q^{-\top}-\bm R_*)
  \bbm\Sigma_*^{-1/2}\right\|_{\op}
  \le\varepsilon.
  \label{eq:det-aligned-factor-bounds}
\end{equation}
The reverse comparison also holds locally:
\begin{equation}
  \|\bm L\bm R^\top-\bm B_*\|_{\mathrm F}^2
  \le
  2\left(1+\frac{\varepsilon}{2}\right)^2
  \left\|(\bm L,\bm R)\right\|_{\mathrm{align},*}^2.
  \label{eq:det-factor-product-equivalence}
\end{equation}
\end{lemma}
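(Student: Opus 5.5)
The plan is to handle the three claims in reverse order of difficulty. Throughout, for a candidate $\bm Q\in\mathrm{GL}(r)$ write $\bbm\Delta_L:=\bm L\bm Q-\bm L_*$, $\bbm\Delta_R:=\bm R\bm Q^{-\top}-\bm R_*$, and $a:=\|\bbm\Delta_L\bbm\Sigma_*^{1/2}\|_{\mathrm F}$, $b:=\|\bbm\Delta_R\bbm\Sigma_*^{1/2}\|_{\mathrm F}$. The basic identity is
\[
\bm L\bm R^\top-\bm B_*=\bbm\Delta_L\bm R_*^\top+\bm L_*\bbm\Delta_R^\top+\bbm\Delta_L\bbm\Delta_R^\top .
\]
Since $\bm R_*=\bm V_*\bbm\Sigma_*^{1/2}$ and $\bm L_*=\bm U_*\bbm\Sigma_*^{1/2}$ have orthonormal "directions", the first two terms have Frobenius norms exactly $a$ and $b$.

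\emph{Attainment of the infimum and \eqref{eq:det-aligned-factor-bounds}.} First I will show that the infimum is attained when the distance is below $\varepsilon\sigma_r<\sigma_r$. Take a minimizing sequence $\bm Q_n$. Along it, $\|\bbm\Delta_L\bbm\Sigma_*^{-1/2}\|_{\op}\le a/\sigma_r<1$ eventually. This forces $\sigma_r(\bm L\bm Q_n\bbm\Sigma_*^{-1/2})$ to be bounded below and $\|\bm L\bm Q_n\|_{\op}$ to be bounded above, and the same holds for $\bm R\bm Q_n^{-\top}$. Hence $\bm Q_n$ and $\bm Q_n^{-1}$ stay bounded, the sequence lies in a compact subset of $\mathrm{GL}(r)$, and continuity yields a minimizer. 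For any optimal $\bm Q$, the operator bounds then follow directly from
\[
\|\bm A\bbm\Sigma_*^{-1/2}\|_{\op}\le\sigma_r^{-1}\|\bm A\bbm\Sigma_*^{1/2}\|_{\op}\le\sigma_r^{-1}\|\bm A\bbm\Sigma_*^{1/2}\|_{\mathrm F}\le\varepsilon .
\]

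\emph{Upper comparison \eqref{eq:det-factor-product-equivalence}.} Use the optimal $\bm Q$ and split the cross term symmetrically:
\[
\bbm\Delta_L\bbm\Delta_R^\top=\tfrac12(\bbm\Delta_L\bbm\Sigma_*^{1/2})(\bbm\Delta_R\bbm\Sigma_*^{-1/2})^\top+\tfrac12(\bbm\Delta_L\bbm\Sigma_*^{-1/2})(\bbm\Delta_R\bbm\Sigma_*^{1/2})^\top .
\]
By the operator bounds just proved, its Frobenius norm is at most $\tfrac{\varepsilon}{2}(a+b)$. The triangle inequality gives $\|\bm L\bm R^\top-\bm B_*\|_{\mathrm F}\le(1+\varepsilon/2)(a+b)$. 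Squaring and using $(a+b)^2\le2(a^2+b^2)$ gives the claim.

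\emph{Global lower bound \eqref{eq:det-factor-product-lower}.} This is the main obstacle, because it must hold without any locality assumption. The alignment distance is invariant under $(\bm L,\bm R)\mapsto(\bm L\bm P,\bm R\bm P^{-\top})$, so I will first replace $(\bm L,\bm R)$ by the balanced factors of the SVD of $\bm X:=\bm L\bm R^\top$ (if $\operatorname{rank}\bm X<r$, I will use a limiting argument). I will then follow Tong et al.'s Lemma~24 and restrict the infimum to a specific choice of $\bm Q$ rather than the optimal one. Concretely, I will stack the factors as $\bm F=[\bm L;\bm R]$ and $\bm F_*=[\bm L_*;\bm R_*]$, take $\bm Q=\bm O$ as the orthogonal Procrustes solution aligning $\bm F$ to $\bm F_*$, and invoke the classical inequality of Tu et al.,
\[
\min_{\bm O}\|\bm F\bm O-\bm F_*\|_{\mathrm F}^2\le\frac{1}{2(\sqrt2-1)\sigma_r}\|\bm X-\bm B_*\|_{\mathrm F}^2\cdot\text{(const)} .
\]
I will combine it with the $\bbm\Sigma_*^{1/2}$-weighting trick from Tong et al., which moves the weight onto the truth by rescaling the $\bm Q$-class, to arrive at the $(\sqrt2+1)$ factor. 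If the weighted version fails to go through cleanly, the fallback is to cite Lemma~24 of \citet{tong2021accelerating} directly, since the statement is a restatement of that lemma.
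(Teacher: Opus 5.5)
Your arguments for attainment of the infimum, for the operator-norm bounds at an optimal alignment, and for the local upper comparison are correct. They are essentially the proofs of Lemmas~22 and~26 in \citet{tong2021accelerating}. The paper gives no argument of its own for any part of this lemma and simply cites those results.

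The genuine gap is in the global lower bound. The Tu et al.\ inequality you invoke is $\min_{\bm O}\|\bm F\bm O-\bm F_*\|_{\mathrm F}^2\le\|\bm F\bm F^\top-\bm F_*\bm F_*^\top\|_{\mathrm F}^2/\{2(\sqrt2-1)\sigma_r^2(\bm F_*)\}$. Its right side becomes a multiple of $\|\bm X-\bm B_*\|_{\mathrm F}^2/\sigma_r$ only after using balancedness of $\bm F$ and $\bm F_*$. More importantly, it controls only the \emph{unweighted} Procrustes distance. Converting it into the $\bbm\Sigma_*^{1/2}$-weighted alignment objective costs a factor $\sigma_1$, so you end with $(\sqrt2+1)\sigma_1/\sigma_r$ rather than $\sqrt2+1$. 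No rescaling of the $\bm Q$-class repairs this, because the weight sits on the right, where $\bm Q$ acts. For instance, the substitution $\widetilde{\bm Q}=\bm Q\bbm\Sigma_*^{1/2}$ turns the objective into $\|\bm L\widetilde{\bm Q}-\bm U_*\bbm\Sigma_*\|_{\mathrm F}^2+\|(\bm R\widetilde{\bm Q}^{-\top}-\bm V_*)\bbm\Sigma_*\|_{\mathrm F}^2$, which is no longer a Procrustes problem.

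What you need is the sharper intermediate inequality inside Tu et al.'s proof. Take $\bm F=[\bm L;\bm R]$ balanced, $\bm O$ the Procrustes minimizer, and $\bm H=\bm F\bm O-\bm F_*$.
\begin{itemize}
\item Procrustes optimality makes $\bm S:=\bm F_*^\top\bm H$ symmetric with $\bm F_*^\top\bm F_*+\bm S\succeq\bm 0$.
\item Expanding the square and using $\operatorname{tr}\{\bm H^\top\bm H(\bm F_*^\top\bm F_*+\bm S)\}\ge0$ gives $\|\bm F\bm F^\top-\bm F_*\bm F_*^\top\|_{\mathrm F}^2-2(\sqrt2-1)\|\bm H\bm F_*^\top\|_{\mathrm F}^2\ge\|\sqrt2\,\bm S+\bm H^\top\bm H\|_{\mathrm F}^2\ge0$.
\item Because $\bm F_*^\top\bm F_*=2\bbm\Sigma_*$, one has $\|\bm H\bm F_*^\top\|_{\mathrm F}^2=2\|\bm H\bbm\Sigma_*^{1/2}\|_{\mathrm F}^2$. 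This is twice the alignment objective at $\bm Q=\bm O$ (note $\bm O^{-\top}=\bm O$), and it is where the weight comes from, at no cost.
\item For balanced $\bm F$ and $\bm F_*$, $\|\bm F\bm F^\top-\bm F_*\bm F_*^\top\|_{\mathrm F}^2=4\|\bm L\bm R^\top-\bm B_*\|_{\mathrm F}^2-2\|\bm L^\top\bm L_*-\bm R^\top\bm R_*\|_{\mathrm F}^2$.
\end{itemize}
Chaining these gives exactly $(\sqrt2-1)\|(\bm L,\bm R)\|_{\mathrm{align},*}^2\le\|\bm L\bm R^\top-\bm B_*\|_{\mathrm F}^2$.

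Separately, your ``limiting argument'' for $\operatorname{rank}(\bm L\bm R^\top)<r$ does not work as stated. The alignment distance is an infimum over the noncompact group $\mathrm{GL}(r)$ of functions continuous in $(\bm L,\bm R)$, so it is only upper semicontinuous. That is the wrong direction for passing an upper bound on the distance to the limit. What you would need instead is that a balanced factorization of $\bm L\bm R^\top$ lies in the closure of the orbit $\{(\bm L\bm Q,\bm R\bm Q^{-\top})\}$. Your stated fallback, citing Lemma~24 of \citet{tong2021accelerating}, is exactly what the paper does.
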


\begin{lemma}[Preconditioned factor bounds]
\label{lem:perturbation}
Suppose that $\left\|(\bm L,\bm R)\right\|_{\mathrm{align},*}\le\varepsilon\sigma_r$ for some
$0\le\varepsilon<1$, and let $\bm Q$ be an optimal alignment from
Lemma~\ref{lem:metric_equivalence}. Define
\begin{equation}
  \bm P_L
  :=\bm L(\bm L^\top\bm L)^{-1}
  \bm Q^{-\top}\bbm\Sigma_*^{1/2},
  \qquad
  \bm P_R
  :=\bm R(\bm R^\top\bm R)^{-1}
  \bm Q\bbm\Sigma_*^{1/2}.
  \label{eq:det-aligned-preconditioners}
\end{equation}
Then
\begin{equation}
\begin{aligned}
  \|\bm P_L\|_{\op}\vee\|\bm P_R\|_{\op}
  &\le\frac{1}{1-\varepsilon},\\
  \|\bm P_L-\bm U_*\|_{\op}
  \vee\|\bm P_R-\bm V_*\|_{\op}
  &\le\frac{\sqrt{2}\,\varepsilon}{1-\varepsilon}.
\end{aligned}
  \label{eq:det-aligned-preconditioner-bounds}
\end{equation}
\end{lemma}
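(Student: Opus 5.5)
The plan is to reduce both bounds to statements about a single perturbed orthonormal matrix and then apply standard pseudoinverse perturbation theory; I treat $\bm P_R$ and obtain $\bm P_L$ by symmetry. Set
\[
\bbm\Delta_R:=(\bm R\bm Q^{-\top}-\bm R_*)\bbm\Sigma_*^{-1/2}
\quad\text{and}\quad
\bbm\Delta_L:=(\bm L\bm Q-\bm L_*)\bbm\Sigma_*^{-1/2}.
\]
Under $\|(\bm L,\bm R)\|_{\mathrm{align},*}\le\varepsilon\sigma_r$, Lemma~\ref{lem:metric_equivalence}, specifically \eqref{eq:det-aligned-factor-bounds}, gives $\|\bbm\Delta_R\|_{\op}\vee\|\bbm\Delta_L\|_{\op}\le\varepsilon$. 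Because $\bm R_*=\bm V_*\bbm\Sigma_*^{1/2}$, the aligned factor $\widetilde{\bm R}:=\bm R\bm Q^{-\top}$ equals $\bm A\bbm\Sigma_*^{1/2}$ with $\bm A:=\bm V_*+\bbm\Delta_R$.

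The key algebraic step is to rewrite $\bm P_R$ using only $\bm A$. First, $\bm R(\bm R^\top\bm R)^{-1}\bm Q=\widetilde{\bm R}(\widetilde{\bm R}^\top\widetilde{\bm R})^{-1}$, since $\widetilde{\bm R}^\top\widetilde{\bm R}=\bm Q^{-1}\bm R^\top\bm R\bm Q^{-\top}$. Substituting $\widetilde{\bm R}=\bm A\bbm\Sigma_*^{1/2}$ then makes the $\bbm\Sigma_*^{1/2}$ factors cancel:
\[
\bm P_R=\bm A\bbm\Sigma_*^{1/2}\bigl(\bbm\Sigma_*^{1/2}\bm A^\top\bm A\bbm\Sigma_*^{1/2}\bigr)^{-1}\bbm\Sigma_*^{1/2}=\bm A(\bm A^\top\bm A)^{-1}=(\bm A^{+})^\top .
\]
Invertibility of $\bm A^\top\bm A$ is justified below. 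The analogous computation with $\bm L\bm Q=(\bm U_*+\bbm\Delta_L)\bbm\Sigma_*^{1/2}$ gives $\bm P_L=\bigl((\bm U_*+\bbm\Delta_L)^{+}\bigr)^\top$.

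For the first bound I will use Weyl's inequality: $\sigma_r(\bm A)\ge\sigma_r(\bm V_*)-\|\bbm\Delta_R\|_{\op}\ge1-\varepsilon>0$. Hence $\bm A$ has full column rank $r$, so $\bm A^\top\bm A$ is invertible, and $\|\bm P_R\|_{\op}=1/\sigma_r(\bm A)\le1/(1-\varepsilon)$.

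For the second bound I will use Wedin's perturbation bound for pseudoinverses of equal-rank matrices, $\|\bm A^{+}-\bm B^{+}\|_{\op}\le\sqrt2\,\|\bm A^{+}\|_{\op}\|\bm B^{+}\|_{\op}\|\bm A-\bm B\|_{\op}$, with $\bm B=\bm V_*$. Here $\bm B^{+}=\bm V_*^\top$ has unit norm and both matrices have rank $r$, so
\[
\|\bm P_R-\bm V_*\|_{\op}\le\sqrt2\cdot\frac{1}{1-\varepsilon}\cdot1\cdot\varepsilon ,
\]
which is the claimed bound. The same argument applies verbatim to $\bm P_L$ and $\bm U_*$.

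The only delicate point is the cancellation that turns $\bm P_R$ into $(\bm A^{+})^\top$, which depends on the balanced representative $\bm R_*=\bm V_*\bbm\Sigma_*^{1/2}$. Once that identity is in place, the rest is standard perturbation theory. If one prefers not to cite Wedin's $\sqrt2$ constant, an alternative is to write
\[
\bm A(\bm A^\top\bm A)^{-1}-\bm V_*=\bbm\Delta_R(\bm A^\top\bm A)^{-1}+\bm V_*\bigl[(\bm A^\top\bm A)^{-1}-\bm I\bigr]
\]
and split the error into its components in $\operatorname{span}(\bm V_*)$ and its orthogonal complement. This route requires more careful bookkeeping to recover the $\sqrt2$ constant, so I would use Wedin's bound first.
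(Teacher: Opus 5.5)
Your proposal is correct and is essentially the paper's argument. The paper simply invokes \citet[Lemma~25]{tong2021accelerating} for the aligned factors $\bm L\bm Q$ and $\bm R\bm Q^{-\top}$ together with \eqref{eq:det-aligned-factor-bounds}, and your route (writing $\bm P_R=(\bm A^{+})^\top$ with $\bm A=\bm V_*+\bbm\Delta_R$, then Weyl's inequality for the norm bound and Wedin's equal-rank $\sqrt2$ pseudoinverse bound for the deviation, with $\bm P_L$ handled by symmetry) is exactly the proof underlying that cited lemma; the only ordering detail is that full column rank of $\bm A$, and hence invertibility of $\bm R^\top\bm R$, should be established by your Weyl step before the identity $\bm R(\bm R^\top\bm R)^{-1}\bm Q=\widetilde{\bm R}(\widetilde{\bm R}^\top\widetilde{\bm R})^{-1}$ is written.
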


\begin{proof}
The result follows directly from
\citet[Lemma~25]{tong2021accelerating}, applied to the aligned factors
$\bm L\bm Q$ and $\bm R\bm Q^{-\top}$, together with
\eqref{eq:det-aligned-factor-bounds}.
\end{proof}

Recalling the classes
\(\mathcal N_{\varepsilon}\) and \(\mathcal X_{\mathrm{LR}}\) from
\eqref{eq:lowrank_local_neighborhood} and
\eqref{eq:shared_lowrank_error_class}, we introduce the following lemma.
\begin{lemma}[Low-rank directions in the local neighborhood]
\label{lem:orthonormal-lowrank-directions}
Suppose that \(\left\|(\bm L,\bm R)\right\|_{\mathrm{align},*}\le\varepsilon\sigma_r\) with
\(0<\varepsilon\le1/4\), and let \(\bm Q,\bm P_L,\bm P_R\) be as in
Lemma~\ref{lem:perturbation}.
The following holds.
\begin{enumerate}[label=\textup{(\roman*)}]
\item For any \(\bm K_L\in\mathbb R^{p\times r}\) and
\(\bm K_R\in\mathbb R^{r\times p}\) satisfying
\(\|\bm K_L\|_{\mathrm F}^2+\|\bm K_R\|_{\mathrm F}^2\le1\), 
\(\bm X_{\mathrm{LR}}:=\bm K_L\bm P_R^\top+\bm P_L\bm K_R\) satisfies
\(\|\bm X_{\mathrm{LR}}\|_{\mathrm F}\le\sqrt2/(1-\varepsilon)\).
If \(\bm X_{\mathrm{LR}}\ne\bm0\), then
\(\bm X_{\mathrm{LR}}/\|\bm X_{\mathrm{LR}}\|_{\mathrm F}
\in\mathcal X_{\mathrm{LR}}\).
\item The low-rank error
\(\bbm\Delta_{\bm B}:=\bm L\bm R^\top-\bm B_*\) satisfies
$
  \|\bbm\Delta_{\bm B}\|_{\mathrm F}
  \le\sqrt2\left(1+\varepsilon/2\right)\left\|(\bm L,\bm R)\right\|_{\mathrm{align},*}.
$
If \(\bbm\Delta_{\bm B}\ne\bm0\), then
\(\bbm\Delta_{\bm B}/\|\bbm\Delta_{\bm B}\|_{\mathrm F}
\in\mathcal X_{\mathrm{LR}}\).
\end{enumerate}
\end{lemma}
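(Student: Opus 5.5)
Both parts reduce to the factor-geometry facts in Lemmas~\ref{lem:metric_equivalence} and~\ref{lem:perturbation}. The work is to exhibit, for each matrix, an orthonormal pair $(\widetilde{\bm U},\widetilde{\bm V})\in\mathcal N_\varepsilon$ for which the matrix has the tangent form $\bm K_L\widetilde{\bm V}^\top+\widetilde{\bm U}\bm K_R^\top$.

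\textbf{Choice of the subspace pair.} Take $\widetilde{\bm U}$ and $\widetilde{\bm V}$ to be orthonormal bases of the column spaces of $\bm L$ and $\bm R$. These are also the column spaces of $\bm P_L$ and $\bm P_R$, because $(\bm L^\top\bm L)^{-1}\bm Q^{-\top}\bbm\Sigma_*^{1/2}$ is invertible, and likewise for $\bm P_R$.

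\textbf{Membership in $\mathcal N_\varepsilon$.} I need $\|\widetilde{\bm U}\widetilde{\bm U}^\top-\bm U_*\bm U_*^\top\|_{\op}\le2\varepsilon$. I would use the standard projector perturbation bound:
\[
\|\mathcal P_{\mathrm{col}(\bm A)}-\bm U_*\bm U_*^\top\|_{\op}\le \frac{\|\bm A-\bm U_*\|_{\op}}{\sigma_r(\bm A)}.
\]
Apply it with $\bm A=\bm P_L$. By Lemma~\ref{lem:perturbation}, $\|\bm P_L-\bm U_*\|_{\op}\le\sqrt2\varepsilon/(1-\varepsilon)$, so Weyl gives $\sigma_r(\bm P_L)\ge1-\sqrt2\varepsilon/(1-\varepsilon)$. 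The projector distance is then at most $\sqrt2\varepsilon/(1-(1+\sqrt2)\varepsilon)$.

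This is the main obstacle. With the crude constants and $\varepsilon\le1/4$, this bound exceeds $2\varepsilon$. I would try the following fixes, in order:
\begin{enumerate}
\item Use the sharper $\sin\Theta$ identity $\|\bm U_{*\perp}^\top\bm A\|_{\op}/\sigma_r(\bm A)$, which involves only the component of $\bm P_L-\bm U_*$ orthogonal to $\bm U_*$.
\item Alternatively, work directly with $\bm L\bm Q\bbm\Sigma_*^{-1/2}=\bm U_*+(\bm L\bm Q-\bm L_*)\bbm\Sigma_*^{-1/2}$. By \eqref{eq:det-aligned-factor-bounds} its perturbation has operator norm at most $\varepsilon$, which gives a projector distance of at most $\varepsilon/(1-\varepsilon)\le 2\varepsilon$.
\end{enumerate}
The second route is clean, and it treats $\widetilde{\bm V}$ in the same way.

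\textbf{Part (i).} First the norm bound. The triangle inequality and $\|\bm P_R\|_{\op},\|\bm P_L\|_{\op}\le1/(1-\varepsilon)$ give
\[
\|\bm X_{\mathrm{LR}}\|_{\mathrm F}\le(\|\bm K_L\|_{\mathrm F}+\|\bm K_R\|_{\mathrm F})/(1-\varepsilon)\le\sqrt2/(1-\varepsilon),
\]
using Cauchy--Schwarz for the last step. Next the tangent form. Write $\bm P_R=\widetilde{\bm V}\bm M_R$ and $\bm P_L=\widetilde{\bm U}\bm M_L$ with $r\times r$ matrices $\bm M_R,\bm M_L$. Then
\[
\bm X_{\mathrm{LR}}=(\bm K_L\bm M_R^\top)\widetilde{\bm V}^\top+\widetilde{\bm U}(\bm M_L\bm K_R),
\]
which is of the required form. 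Dividing by $\|\bm X_{\mathrm{LR}}\|_{\mathrm F}$ puts it in $\mathcal X_{\mathrm{LR}}$.

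\textbf{Part (ii).} The norm bound is \eqref{eq:det-factor-product-equivalence} after taking square roots. For membership, decompose
\[
\bbm\Delta_{\bm B}=\bm L\bm R^\top-\bm L_*\bm R_*^\top=(\bm L\bm Q-\bm L_*)(\bm R\bm Q^{-\top})^\top+\bm L_*(\bm R\bm Q^{-\top}-\bm R_*)^\top.
\]
The first term has row space inside $\mathrm{col}(\bm R)=\mathrm{col}(\widetilde{\bm V})$. The second has column space inside $\mathrm{col}(\bm U_*)$, not $\widetilde{\bm U}$, so I would instead use the symmetric splitting
\[
\bbm\Delta_{\bm B}=\bm L\bm Q(\bm R\bm Q^{-\top}-\bm R_*)^\top+(\bm L\bm Q-\bm L_*)\bm R_*^\top.
\]
Now the first term lies in $\widetilde{\bm U}(\cdot)^\top$, but the second still carries $\bm R_*$. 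To fix this, write $\bm R_*^\top=\bm R_*^\top\widetilde{\bm V}\widetilde{\bm V}^\top+\bm R_*^\top(\bm I-\widetilde{\bm V}\widetilde{\bm V}^\top)$. The leftover piece is $(\bm L\bm Q-\bm L_*)\bm R_*^\top(\bm I-\widetilde{\bm V}\widetilde{\bm V}^\top)$. Since $\bm R\bm Q^{-\top}$ has columns in $\mathrm{col}(\widetilde{\bm V})$, this equals
\[
(\bm L\bm Q-\bm L_*)(\bm R_*-\bm R\bm Q^{-\top})^\top(\bm I-\widetilde{\bm V}\widetilde{\bm V}^\top).
\]
I would then try to absorb this second-order term by exploiting that the class allows arbitrary $\bm K_L,\bm K_R$. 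If that fails, the fallback is to choose $(\widetilde{\bm U},\widetilde{\bm V})$ differently: take $\widetilde{\bm U}$ to be an orthonormal basis of $\mathrm{col}([\bm L,\bm U_*])$, which has rank up to $2r$. That would require slightly relaxing the class, so this step is where I expect to need the most care.
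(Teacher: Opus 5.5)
\textbf{Gap in part (ii): membership of $\bbm\Delta_{\bm B}/\|\bbm\Delta_{\bm B}\|_{\mathrm F}$ in $\mathcal X_{\mathrm{LR}}$ is never established.}

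Your treatment of $\mathcal N_\varepsilon$ membership and of part (i) is sound. Route 2 works. In fact, since $(\bm I-\widetilde{\bm U}\widetilde{\bm U}^\top)\bm L=\bm 0$, you get $\|(\bm I-\widetilde{\bm U}\widetilde{\bm U}^\top)\bm U_*\|_{\op}\le\varepsilon$ with no $\sigma_r$ denominator. The paper does the same with $\bm P_L$, obtaining $\sqrt2\varepsilon/(1-\varepsilon)\le2\varepsilon$. The factorizations $\bm P_R=\widetilde{\bm V}\bm M_R$ and $\bm P_L=\widetilde{\bm U}\bm M_L$ are exactly what part (i) needs. Part (ii) is where the argument breaks:
\begin{itemize}
\item With the pair $(\widetilde{\bm U},\widetilde{\bm V})$ held fixed, the absorption plan cannot work. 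Every matrix $\bm K_L\widetilde{\bm V}^\top+\widetilde{\bm U}\bm K_R^\top$ is annihilated by $\bm X\mapsto(\bm I-\widetilde{\bm U}\widetilde{\bm U}^\top)\bm X(\bm I-\widetilde{\bm V}\widetilde{\bm V}^\top)$.
\item But $(\bm I-\widetilde{\bm U}\widetilde{\bm U}^\top)\bbm\Delta_{\bm B}(\bm I-\widetilde{\bm V}\widetilde{\bm V}^\top)=-(\bm I-\widetilde{\bm U}\widetilde{\bm U}^\top)\bm B_*(\bm I-\widetilde{\bm V}\widetilde{\bm V}^\top)$, which is generically nonzero. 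So no choice of $\bm K_L,\bm K_R$ can absorb your second-order leftover term.
\item Your rank-$2r$ fallback proves membership in a different class, not the one in the statement.
\end{itemize}

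The missing observation is that $\mathcal X_{\mathrm{LR}}$ lets the orthonormal pair depend on the direction, and $(\bm U_*,\widetilde{\bm V})\in\mathcal N_\varepsilon$. Its left projector distance is zero, and its right one is at most $2\varepsilon$ by the bound you already proved. With this pair, the first splitting you wrote and then discarded already finishes the proof:
\begin{itemize}
\item $(\bm L\bm Q-\bm L_*)(\bm R\bm Q^{-\top})^\top=[(\bm L\bm Q-\bm L_*)\bm Q^{-1}\bm R^\top\widetilde{\bm V}]\,\widetilde{\bm V}^\top$, using $\bm R^\top\widetilde{\bm V}\widetilde{\bm V}^\top=\bm R^\top$;
\item $\bm L_*(\bm R\bm Q^{-\top}-\bm R_*)^\top=\bm U_*[(\bm R\bm Q^{-\top}-\bm R_*)\bbm\Sigma_*^{1/2}]^\top$.
\end{itemize}
Together these give the form $\bm K_L\widetilde{\bm V}^\top+\bm U_*\bm K_R^\top$. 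The paper writes the same thing more directly as $\bbm\Delta_{\bm B}=\bm L(\bm R^\top\bm R)^{1/2}\widetilde{\bm V}^\top-\bm U_*(\bm V_*\bbm\Sigma_*)^\top$; the pair $(\widetilde{\bm U},\bm V_*)$ would work symmetrically. Your norm bound for part (ii), from \eqref{eq:det-factor-product-equivalence}, is fine as is.
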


\begin{proof}
Denote $\widetilde{\bm U}:=\bm L(\bm L^\top\bm L)^{-1/2}$ and $\widetilde{\bm V}:=\bm R(\bm R^\top\bm R)^{-1/2}$, they have
orthonormal columns.
The definition \eqref{eq:det-aligned-preconditioners} gives
\[
\begin{aligned}
  \bm P_L
  &=\widetilde{\bm U}(\bm L^\top\bm L)^{-1/2}
    \bm Q^{-\top}\bbm\Sigma_*^{1/2},\\
  \bm P_R
  &=\widetilde{\bm V}(\bm R^\top\bm R)^{-1/2}
    \bm Q\bbm\Sigma_*^{1/2}.
\end{aligned}
\]
Since \((\bm I-\widetilde{\bm U}\widetilde{\bm U}^{\top})\bm P_L=\bm0\),
Lemma~\ref{lem:perturbation} gives
\begin{equation}
  \|\widetilde{\bm U}\widetilde{\bm U}^{\top}
     -\bm U_*\bm U_*^{\top}\|_{\op}
  =\|(\bm I-\widetilde{\bm U}\widetilde{\bm U}^{\top})\bm U_*\|_{\op}\le\|\bm P_L-\bm U_*\|_{\op}
  \le\frac{\sqrt2\varepsilon}{1-\varepsilon}\le2\varepsilon.
\end{equation}
The analogous bound holds for \(\widetilde{\bm V}\), so
\((\widetilde{\bm U},\widetilde{\bm V})\in\mathcal N_{\varepsilon}\).

For part~\textup{(i)}, the displayed representations of
\(\bm P_L,\bm P_R\) express \(\bm X_{\mathrm{LR}}\) as a sum of the
form in \eqref{eq:shared_lowrank_error_class}, with the orthonormal
pair \((\widetilde{\bm U},\widetilde{\bm V})\). Thus, if
\(\bm X_{\mathrm{LR}}\ne\bm0\),
\begin{equation}
  \frac{\bm X_{\mathrm{LR}}}{\|\bm X_{\mathrm{LR}}\|_{\mathrm F}}
  \in\mathcal X_{\mathrm{LR}}.
  \label{eq:app-lowrank-direction-normalization} 
\end{equation}
Moreover, Lemma~\ref{lem:perturbation} gives
\[
  \|\bm X_{\mathrm{LR}}\|_{\mathrm F}
  \le\frac{\|\bm K_L\|_{\mathrm F}+\|\bm K_R\|_{\mathrm F}}
           {1-\varepsilon}
  \le\frac{\sqrt2}{1-\varepsilon}.
\]

For part~\textup{(ii)}, the pair
\((\bm U_*,\widetilde{\bm V})\) also belongs to
\(\mathcal N_{\varepsilon}\), and
\[
  \bbm\Delta_{\bm B}
  =\bm L(\bm R^\top\bm R)^{1/2}\widetilde{\bm V}^{\top}
   -\bm U_*\bbm\Sigma_*\bm V_*^\top.
\]
For nonzero \(\bbm\Delta_{\bm B}\), dividing this representation by its
Frobenius norm yields an element of \(\mathcal X_{\mathrm{LR}}\).
The norm bound follows from
\eqref{eq:det-factor-product-equivalence}.
\end{proof}

\subsection{Sparse projection and weighted FCT}
\label{subsec:det-thresholding}

\begin{lemma}[Stability of capacity-constrained thresholding]
\label{lem:G_stability}
Let $\mathcal G_{s_0,\gamma}$ be the capacity-constrained thresholding
operator described in Section~\ref{sec:perfect_algorithm}.
Suppose $s_0\ge s_{0,*}$ and $\gamma\ge\gamma_*$. For any $\bm X\in \mathbb{R}^{p\times p}$, let
$\bm S_0^+=\mathcal G_{s_0,\gamma}(\bm X)$, and
define $\Omega_0^+
  :=\supp(\bm S_0^+)\cup\supp(\bm S_{0,*})$.
Then,
\begin{equation}
  \|\bm S_0^+-\bm S_{0,*}\|_{\mathrm F}^2
  \le
  (1+\rho_0)
  \|\mathcal P_{\Omega_0^+}(\bm X-\bm S_{0,*})\|_{\mathrm F}^2,
\end{equation}
where 
\begin{equation}
  \rho_0
  :=\frac12\left(\frac{s_{0,*}}{s_0}+2\frac{\gamma_*}{\gamma}\right)
  +\sqrt{
    \frac{s_{0,*}}{s_0}+2\frac{\gamma_*}{\gamma}
    +\frac14\left(
      \frac{s_{0,*}}{s_0}+2\frac{\gamma_*}{\gamma}
    \right)^2
  }.
\end{equation}
Furthermore, choosing $s_0\ge C \beta_0^2\alpha_0^{-2}s_{0,*}$ and $\gamma\ge C\beta_0^2\alpha_0^{-2}\gamma_*$ for some proper constant $C>0$ gives that, for a sufficiently small constant $c>0$, 
$$
  \rho_0\le c\frac{\alpha_0}{\beta_0}.
$$
\end{lemma}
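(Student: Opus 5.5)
The plan is to adapt the sharp hard-thresholding stability argument (the one behind \citet{jain2014iterative} and its tight refinement) to the capacity-constrained greedy operator. Write $\widehat\Omega:=\supp(\bm S_0^+)$, $\Omega_*:=\supp(\bm S_{0,*})$, and $\bm Z:=\bm X-\bm S_{0,*}$. Since $\mathcal G_{s_0,\gamma}$ keeps entries of $\bm X$ unchanged on $\widehat\Omega$, we have $\bm S_0^+-\bm S_{0,*}=\mathcal P_{\widehat\Omega}(\bm Z)-\mathcal P_{\Omega_*\setminus\widehat\Omega}(\bm S_{0,*})$. Hence
\[
\|\bm S_0^+-\bm S_{0,*}\|_{\mathrm F}^2=\|\mathcal P_{\widehat\Omega}(\bm Z)\|_{\mathrm F}^2+\|\mathcal P_{\Omega_*\setminus\widehat\Omega}(\bm S_{0,*})\|_{\mathrm F}^2 .
\]
On $\Omega_*\setminus\widehat\Omega$ we write $\bm S_{0,*}=\bm X-\bm Z$, which gives
\[
\|\mathcal P_{\Omega_*\setminus\widehat\Omega}(\bm S_{0,*})\|_{\mathrm F}\le \|\mathcal P_{\Omega_*\setminus\widehat\Omega}(\bm X)\|_{\mathrm F}+\|\mathcal P_{\Omega_*\setminus\widehat\Omega}(\bm Z)\|_{\mathrm F}.
\]
On $\widehat\Omega\setminus\Omega_*$ we have $\bm X=\bm Z$. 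Everything therefore reduces to comparing the energy of $\bm X$ on the rejected true coordinates with its energy on the selected false coordinates.

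The key combinatorial step is an inequality of the form $\|\mathcal P_{\Omega_*\setminus\widehat\Omega}(\bm X)\|_{\mathrm F}^2\le\kappa\,\|\mathcal P_{\widehat\Omega\setminus\Omega_*}(\bm X)\|_{\mathrm F}^2$ with $\kappa:=s_{0,*}/s_0+2\gamma_*/\gamma$, possibly with an overlap correction handled as in the sharp top-$s$ analysis. To prove it, take each rejected coordinate $(a,b)\in\Omega_*\setminus\widehat\Omega$. The greedy scan rejected it for one of three reasons.
\begin{itemize}
\item The global budget was full. Then $\widehat\Omega$ already contained $s_0$ entries of magnitude at least $|X_{ab}|$.
\item Row $a$ was full. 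Then row $a$ of $\widehat\Omega$ contained $\gamma p$ such entries.
\item Column $b$ was full. Then column $b$ of $\widehat\Omega$ contained $\gamma p$ such entries.
\end{itemize}
Charge $|X_{ab}|^2$ to the average squared magnitude over the corresponding witness set. Summing over at most $s_{0,*}$ budget-rejections contributes the factor $s_{0,*}/s_0$. Summing row by row, where each row carries at most $\gamma_*p$ true entries, contributes $\gamma_*/\gamma$, and the columns contribute another $\gamma_*/\gamma$. This totals $\kappa$ times the selected energy.

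The one-variable optimization then goes as follows. Abbreviate $u:=\|\mathcal P_{\widehat\Omega\setminus\Omega_*}(\bm Z)\|_{\mathrm F}$, $v:=\|\mathcal P_{\Omega_*\setminus\widehat\Omega}(\bm Z)\|_{\mathrm F}$, and $w:=\|\mathcal P_{\widehat\Omega\cap\Omega_*}(\bm Z)\|_{\mathrm F}$. The decomposition above becomes
\[
\|\bm S_0^+-\bm S_{0,*}\|_{\mathrm F}^2\le w^2+u^2+(\sqrt\kappa\,u+v)^2 .
\]
Maximizing this ratio over $u,v,w$ subject to $u^2+v^2+w^2=\|\mathcal P_{\Omega_0^+}(\bm Z)\|_{\mathrm F}^2$ is a $2\times2$ eigenvalue problem. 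Its largest eigenvalue is $1+\kappa/2+\sqrt{\kappa+\kappa^2/4}=1+\rho_0$, which is exactly the stated constant.

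For the final claim, $\rho_0\le\kappa+\sqrt{\kappa}\le2\sqrt\kappa$ when $\kappa\le1$. Choosing $s_0\ge C\beta_0^2\alpha_0^{-2}s_{0,*}$ and $\gamma\ge C\beta_0^2\alpha_0^{-2}\gamma_*$ gives $\kappa\le 3C^{-1}\alpha_0^2/\beta_0^2$, hence $\rho_0\le c\,\alpha_0/\beta_0$ for $C$ large.

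The main obstacle is the charging step. Witness entries may themselves lie in $\Omega_*$, so the clean bound naturally involves $\|\mathcal P_{\widehat\Omega}(\bm X)\|$ rather than $\|\mathcal P_{\widehat\Omega\setminus\Omega_*}(\bm X)\|$. I would handle this as in the top-$s$ proof: replace each witness set by its part outside $\Omega_*$, using that it still has size at least $s_0-s_{0,*}$ (or $\gamma p-\gamma_*p$), and absorb the resulting factor into $\kappa$. If that fails, I would carry the overlap term $w$ through the eigenvalue computation. The sharp constant may then need the slack $s_0\gg s_{0,*}$, which is harmless for the second claim.
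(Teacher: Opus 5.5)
Your route is the paper's route. You use the same split of $\Omega_0^+$ into missed coordinates $\mathcal A=\Omega_*\setminus\widehat\Omega$, falsely selected coordinates $\mathcal B=\widehat\Omega\setminus\Omega_*$, and correct coordinates $\mathcal I=\widehat\Omega\cap\Omega_*$. You also use the same three-way classification of rejections, the same target comparison $\|\mathcal P_{\mathcal A}(\bm X)\|_{\mathrm F}^2\le\kappa\|\mathcal P_{\mathcal B}(\bm X)\|_{\mathrm F}^2$, and the same $2\times2$ eigenvalue bound and final estimate $\rho_0\le2\sqrt\kappa$.

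The gap is exactly at the step you flag, and neither of your proposed repairs closes it.
\begin{itemize}
\item Shrinking each witness set to its part outside $\Omega_*$, while still counting up to $s_{0,*}$ (resp.\ $\gamma_*p$) missed entries, yields the factor $s_{0,*}/(s_0-s_{0,*})+2\gamma_*/(\gamma-\gamma_*)$ instead of $\kappa$. That is not the stated $\rho_0$. It is also infinite at $s_0=s_{0,*}$ or $\gamma=\gamma_*$, which the hypotheses allow. The initializer proof relies on this boundary case: it uses only $s_0\ge s_{0,*}$ and $\gamma\ge\gamma_*$ to conclude $1+\rho_0\le C$.
\item Carrying the overlap through the eigenvalue computation cannot work. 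Charging to all of $\widehat\Omega$ leaves $\|\mathcal P_{\mathcal I}(\bm X)\|_{\mathrm F}$ on the right-hand side. On $\mathcal I$ you have $\bm X=\bm S_{0,*}+\bm Z$, so this term contains the signal $\bm S_{0,*}$ and is not bounded by $\|\mathcal P_{\Omega_0^+}(\bm Z)\|_{\mathrm F}$.
\end{itemize}

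The missing observation is that the overlap lowers both counts by the same amount. Take the budget case first. If the budget is full, then $|\mathcal B|=s_0-|\mathcal I|$ exactly. Since $\mathcal A_{\mathrm{card}}\subseteq\Omega_*\setminus\mathcal I$, you also have $|\mathcal A_{\mathrm{card}}|\le s_{0,*}-|\mathcal I|$. Every entry of $\mathcal B$ was examined before every (never-examined) entry of $\mathcal A_{\mathrm{card}}$, so
\[
\|\mathcal P_{\mathcal A_{\mathrm{card}}}(\bm X)\|_{\mathrm F}^2\le\frac{s_{0,*}-|\mathcal I|}{s_0-|\mathcal I|}\,\|\mathcal P_{\mathcal B}(\bm X)\|_{\mathrm F}^2\le\frac{s_{0,*}}{s_0}\,\|\mathcal P_{\mathcal B}(\bm X)\|_{\mathrm F}^2 .
\]
The last step holds because $x\mapsto(s_{0,*}-x)/(s_0-x)$ is nonincreasing when $s_{0,*}\le s_0$. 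The denominator is positive because a true entry was missed, so $|\mathcal I|<s_{0,*}\le s_0$.

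The row case is identical. A saturated row $a$ with overlap $\mathcal I_{a,\cdot}$ carries at most $\gamma_*p-|\mathcal I_{a,\cdot}|$ row-charged missed entries. It carries exactly $\gamma p-|\mathcal I_{a,\cdot}|$ false selections, all examined earlier. This gives the factor $\gamma_*/\gamma$ row by row. The rowwise pieces of $\mathcal B$ are disjoint, so summing over rows gives $\gamma_*/\gamma$ in total. The same holds for columns, provided entries with both row and column saturated are charged to rows only.

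Adding the three contributions gives precisely $\kappa=s_{0,*}/s_0+2\gamma_*/\gamma$, with no slack $s_0\gg s_{0,*}$ needed. With this fix, the rest of your argument goes through as written.
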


\begin{proof}
For notational
simplicity, suppose that $\gamma p$ and $\gamma_*p$ are integers, and set
\begin{equation*}
  \vartheta_0:=\frac{s_{0,*}}{s_0}+2\frac{\gamma_*}{\gamma}.
\end{equation*}
Let $\widehat\Omega_0=\supp(\bm S_0^+)$ and
$\Omega_{0,*}=\supp(\bm S_{0,*})$. Define the missed, falsely selected, and
correctly selected supports by
\begin{equation}
  \mathcal A:=\Omega_{0,*}\setminus\widehat\Omega_0,
  \qquad
  \mathcal B:=\widehat\Omega_0\setminus\Omega_{0,*},
  \qquad
  \mathcal I:=\widehat\Omega_0\cap\Omega_{0,*}.
  \label{eq:det-capacity-greedy-ABI}
\end{equation}
We first establish the tail comparison
\begin{equation}
  \|\mathcal P_{\mathcal A}(\bm X)\|_{\mathrm F}^2
  \le
  \vartheta_0\|\mathcal P_{\mathcal B}(\bm X)\|_{\mathrm F}^2.
  \label{eq:det-capacity-greedy-tail-comparison}
\end{equation}

Classify each missed true entry by why it was excluded.  If it was examined
before termination, the scan had not yet reached $s_0$ acceptances, so its row
or column must have been saturated; place it in $\mathcal A_{\rm row}$ when
the row was saturated and otherwise in $\mathcal A_{\rm col}$.  If both were
saturated, use the row class.  If the entry was never examined, the scan had
already accepted $s_0$ entries; place it in $\mathcal A_{\rm card}$.  Hence
\begin{equation*}
  \mathcal A
  =\mathcal A_{\rm row}\cup
   \mathcal A_{\rm col}\cup
   \mathcal A_{\rm card}.
\end{equation*}
The union is disjoint.

Consider a saturated row $a$ receiving at least one missed true entry, and
write $\mathcal I_{a,\cdot}:=\{b\in[p]:(a,b)\in\mathcal I\}$.
The row contains $\gamma p-|\mathcal I_{a,\cdot}|$ false selections and at
most $\gamma_*p-|\mathcal I_{a,\cdot}|$ missed true entries.  Here
$\gamma p-|\mathcal I_{a,\cdot}|>0$.  Since every missed entry
assigned to this row has magnitude no larger than any false selection in
the row, their average squared magnitudes satisfy
\begin{equation*}
  \frac{
    \sum_{\substack{b\in[p]\\(a,b)\in\mathcal A_{\rm row}}}|X_{ab}|^2
  }{
    |\mathcal A_{\rm row}\cap(\{a\}\times[p])|
  }
  \le
  \frac{
    \sum_{\substack{b\in[p]\\(a,b)\in\mathcal B}}|X_{ab}|^2
  }{
    \gamma p-|\mathcal I_{a,\cdot}|
  }.
\end{equation*}
Then,
\begin{equation}
  \sum_{\substack{b\in[p]\\(a,b)\in\mathcal A_{\rm row}}}|X_{ab}|^2
  \le
  \frac{\gamma_*p-|\mathcal I_{a,\cdot}|}
       {\gamma p-|\mathcal I_{a,\cdot}|}
  \sum_{\substack{b\in[p]\\(a,b)\in\mathcal B}}|X_{ab}|^2
  \le
  \frac{\gamma_*}{\gamma}
  \sum_{\substack{b\in[p]\\(a,b)\in\mathcal B}}|X_{ab}|^2.
  \label{eq:det-capacity-greedy-row-charge}
\end{equation}
The last inequality uses $\gamma\ge\gamma_*$.  Summing over rows and
repeating the same argument for saturated columns yields
\begin{equation}
  \|\mathcal P_{\mathcal A_{\rm row}}(\bm X)\|_{\mathrm F}^2
  +\|\mathcal P_{\mathcal A_{\rm col}}(\bm X)\|_{\mathrm F}^2
  \le
  2\frac{\gamma_*}{\gamma}
  \|\mathcal P_{\mathcal B}(\bm X)\|_{\mathrm F}^2.
  \label{eq:det-capacity-greedy-column-charge}
\end{equation}

Next, if $\mathcal A_{\rm card}$ is empty, the desired bound   \eqref{eq:det-capacity-greedy-tail-comparison} is immediate.
Otherwise, the scan stopped after $s_0$ acceptances, so
$|\mathcal B|=s_0-|\mathcal I|$ and
$|\mathcal A_{\rm card}|\le s_{0,*}-|\mathcal I|$.  Since entries are scanned
in nonincreasing order of magnitude, every entry in $\mathcal A_{\rm card}$
has magnitude no larger than any entry in $\mathcal B$.  Their average
squared magnitudes therefore satisfy
\begin{equation*}
  \frac{
    \|\mathcal P_{\mathcal A_{\rm card}}(\bm X)\|_{\mathrm F}^2
  }{
    |\mathcal A_{\rm card}|
  }
  \le
  \frac{
    \|\mathcal P_{\mathcal B}(\bm X)\|_{\mathrm F}^2
  }{
    |\mathcal B|
  }.
\end{equation*}
Multiplying through gives
\begin{equation}
  \|\mathcal P_{\mathcal A_{\rm card}}(\bm X)\|_{\mathrm F}^2
  \le
  \frac{s_{0,*}-|\mathcal I|}{s_0-|\mathcal I|}
  \|\mathcal P_{\mathcal B}(\bm X)\|_{\mathrm F}^2
  \le
  \frac{s_{0,*}}{s_0}\|\mathcal P_{\mathcal B}(\bm X)\|_{\mathrm F}^2.
  \label{eq:det-capacity-greedy-cardinality-charge}
\end{equation}
Combining \eqref{eq:det-capacity-greedy-column-charge} and
\eqref{eq:det-capacity-greedy-cardinality-charge} proves
\eqref{eq:det-capacity-greedy-tail-comparison}.

Set $\bm H=\bm X-\bm S_{0,*}$.  Then
$\mathcal P_{\mathcal A}(\bm S_{0,*})
=\mathcal P_{\mathcal A}(\bm X)-\mathcal P_{\mathcal A}(\bm H)$ and
$\mathcal P_{\mathcal B}(\bm X)=\mathcal P_{\mathcal B}(\bm H)$.  Hence the
triangle inequality and \eqref{eq:det-capacity-greedy-tail-comparison} yield
\begin{equation*}
\begin{aligned}
  \|\mathcal P_{\mathcal A}(\bm S_{0,*})\|_{\mathrm F}
  &\le
  \|\mathcal P_{\mathcal A}(\bm H)\|_{\mathrm F}
  +\|\mathcal P_{\mathcal A}(\bm X)\|_{\mathrm F}
  \\
  &\le
  \|\mathcal P_{\mathcal A}(\bm H)\|_{\mathrm F}
  +\sqrt{\vartheta_0}\,
   \|\mathcal P_{\mathcal B}(\bm H)\|_{\mathrm F}.
\end{aligned}
\end{equation*}
Since $\mathcal A$, $\mathcal B$, and $\mathcal I$ are disjoint, we have
\begin{equation}
\begin{aligned}
  \|\bm S_0^+-\bm S_{0,*}\|_{\mathrm F}^2
  &={}
  \|\mathcal P_{\mathcal B}(\bm H)\|_{\mathrm F}^2
  +\|\mathcal P_{\mathcal I}(\bm H)\|_{\mathrm F}^2
  +\|\mathcal P_{\mathcal A}(\bm S_{0,*})\|_{\mathrm F}^2
  \\
  &\le{}
  \|\mathcal P_{\mathcal B}(\bm H)\|_{\mathrm F}^2
  +\|\mathcal P_{\mathcal I}(\bm H)\|_{\mathrm F}^2
  +\left(
    \|\mathcal P_{\mathcal A}(\bm H)\|_{\mathrm F}
    +\sqrt{\vartheta_0}\,
    \|\mathcal P_{\mathcal B}(\bm H)\|_{\mathrm F}
  \right)^2.
\end{aligned}
\end{equation}
For the second inequality above, the terms involving $\mathcal A$ and $\mathcal B$ form a quadratic form
whose coefficient matrix has largest eigenvalue
\begin{equation*}
  \lambda_{\max}\!\left(
  \begin{pmatrix}
    1 & \sqrt{\vartheta_0}\\
    \sqrt{\vartheta_0} & 1+\vartheta_0
  \end{pmatrix}
  \right)
  =
  1+\frac{\vartheta_0}{2}
  +\sqrt{\vartheta_0+\frac{\vartheta_0^2}{4}}
  =1+\rho_0.
\end{equation*}
Therefore,
\begin{equation*}
  \|\mathcal P_{\mathcal B}(\bm H)\|_{\mathrm F}^2
  +\left(
    \|\mathcal P_{\mathcal A}(\bm H)\|_{\mathrm F}
    +\sqrt{\vartheta_0}\,
     \|\mathcal P_{\mathcal B}(\bm H)\|_{\mathrm F}
  \right)^2\le
  (1+\rho_0)
  \left(
    \|\mathcal P_{\mathcal A}(\bm H)\|_{\mathrm F}^2
    +\|\mathcal P_{\mathcal B}(\bm H)\|_{\mathrm F}^2
  \right).
\end{equation*}
Adding $\|\mathcal P_{\mathcal I}(\bm H)\|_{\mathrm F}^2$ on both sides gives 
$$
  \|\bm S_0^+-\bm S_{0,*}\|_{\mathrm F}^2\le{}(1+\rho_0)
  \left(
    \|\mathcal P_{\mathcal A}(\bm H)\|_{\mathrm F}^2
    +\|\mathcal P_{\mathcal B}(\bm H)\|_{\mathrm F}^2
    +\|\mathcal P_{\mathcal I}(\bm H)\|_{\mathrm F}^2
  \right)=(1+\rho_0)\|\mathcal P_{\Omega_0^+}(\bm H)\|_{\mathrm F}^2.
$$ The last equality follows because $\mathcal A$,
$\mathcal B$, and $\mathcal I$ form a partition of $\Omega_0^+$.

Finally, the choices $s_0\ge C \beta_0^2\alpha_0^{-2}s_{0,*}$ and $\gamma\ge C\beta_0^2\alpha_0^{-2}\gamma_*$ imply
\begin{equation*}
  \vartheta_0
  \le
  \frac{3}{C}\left(\frac{\alpha_0}{\beta_0}\right)^2.
\end{equation*}
Since $\alpha_0\le\beta_0$, sufficiently large $C$ ensures
$\vartheta_0\le1$, and hence
\begin{equation*}
  \rho_0
  \le 2\sqrt{\vartheta_0}
  \le 2\sqrt{\frac{3}{C}}\frac{\alpha_0}{\beta_0}.
\end{equation*}
Choosing $C$ sufficiently large gives
$\rho_0\le c\alpha_0/\beta_0$.
\end{proof}

We next establish the stability of the weighted FCT projection
defined in \eqref{eq:FCT}.

\begin{lemma}[Stability of weighted FCT]
\label{lem:det-weighted-fcht}
Suppose the fitted levels satisfy $m\ge m_*$ and 
$s_i\ge s_{i,*}$, for $i\in[N]$.
For arbitrary $\widetilde{\bm S}_i\in\mathbb R^{p\times p}$, let $\{\bm S_i^+\}_{i=1}^N$ denote their weighted FCT output in
\eqref{eq:FCT}, and define
  $\Omega_i^+
  :=\supp(\bm S_i^+)\cup\supp(\bm S_{i,*})$.
Set
\begin{equation}
  \rho_L
  :=2\sqrt{
    \max_{i\in[N]}\frac{s_{i,*}}{s_i}
    +\frac{m_*}{m}
  },
  \label{eq:det-fcht-amplification-factor}
\end{equation}
Then
\begin{equation}
  \sum_{i=1}^Nw_i
  \|\bm S_i^+-\bm S_{i,*}\|_{\mathrm F}^2
  \le
  (1+\rho_L)
  \sum_{i=1}^Nw_i
  \|\mathcal P_{\Omega_i^+}
  (\widetilde{\bm S}_i-\bm S_{i,*})\|_{\mathrm F}^2.
  \label{eq:det-weighted-fcht-amplification}
\end{equation}
Furthermore, for any sufficiently small absolute constant $c>0$, taking
$C>0$ sufficiently large and choosing
$s_i\ge C\beta_\mathrm{P}^2\alpha_\mathrm{P}^{-2}s_{i,*}$ and
$m\ge C\beta_\mathrm{P}^2\alpha_\mathrm{P}^{-2}m_*$ ensures
\begin{equation*}
  \rho_L\le c\alpha_\mathrm{P}\beta_\mathrm{P}^{-1}.
\end{equation*}
\end{lemma}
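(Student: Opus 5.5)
The plan is to mirror the proof of Lemma~\ref{lem:G_stability}: first establish a weighted tail comparison for the FCT output, then run the same quadratic-form argument. Write $\widehat\Omega_i=\supp(\bm S_i^+)$ and $\Omega_{i,*}=\supp(\bm S_{i,*})$. Split the supports into
\[
\mathcal A_i=\Omega_{i,*}\setminus\widehat\Omega_i,\qquad
\mathcal B_i=\widehat\Omega_i\setminus\Omega_{i,*},\qquad
\mathcal I_i=\widehat\Omega_i\cap\Omega_{i,*}.
\]
Set $\vartheta_L:=\max_i s_{i,*}/s_i+m_*/m$. The target tail comparison is
\[
\sum_i w_i\|\mathcal P_{\mathcal A_i}(\widetilde{\bm S}_i)\|_{\mathrm F}^2\le \vartheta_L\sum_i w_i\|\mathcal P_{\mathcal B_i}(\widetilde{\bm S}_i)\|_{\mathrm F}^2 .
\]
Once this holds, put $\bm H_i=\widetilde{\bm S}_i-\bm S_{i,*}$. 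On the missed set, $\mathcal P_{\mathcal A_i}(\bm S_{i,*})=\mathcal P_{\mathcal A_i}(\widetilde{\bm S}_i)-\mathcal P_{\mathcal A_i}(\bm H_i)$, and on the false set, $\mathcal P_{\mathcal B_i}(\widetilde{\bm S}_i)=\mathcal P_{\mathcal B_i}(\bm H_i)$. Summing over $i$ in the weighted $\ell_2$ norm across clients and applying the triangle inequality gives
\[
\Big(\sum_i w_i\|\mathcal P_{\mathcal A_i}\bm S_{i,*}\|_{\mathrm F}^2\Big)^{1/2}\le \|\mathcal P_{\mathcal A}\bm H\|_w+\sqrt{\vartheta_L}\,\|\mathcal P_{\mathcal B}\bm H\|_w .
\]
From here the $2\times2$ eigenvalue computation of Lemma~\ref{lem:G_stability} gives amplification $1+\vartheta_L/2+\sqrt{\vartheta_L+\vartheta_L^2/4}$. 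This is at most $1+2\sqrt{\vartheta_L}\le 1+\rho_L$ once $\vartheta_L\le1$. If $\vartheta_L>1$, I would instead bound crudely; the clean statement is really only needed in the small regime, and I would add the harmless restriction $\vartheta_L\le 1$ if necessary. The final claim $\rho_L\le c\alpha_{\mathrm P}/\beta_{\mathrm P}$ then follows exactly as at the end of Lemma~\ref{lem:G_stability}.

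The tail comparison is the real work, because FCT is a global $b$-matching optimum rather than a greedy scan. I would use an exchange argument on the optimal allocation. Fix a missed pair $(i,(j,k))$ with $(j,k)\in\mathcal A_i$. Adding it to $\widehat\Omega_i$ is infeasible, so either client $i$ is saturated ($|\widehat\Omega_i|=s_i$) or coordinate $(j,k)$ is saturated (used by $m$ clients). In the first case, swapping it for any $(j',k')\in\mathcal B_i$ keeps feasibility: the frequency of $(j,k)$ cannot exceed $m$ only if it was already at $m$, which is the other case. Optimality then forces $|(\widetilde{\bm S}_i)_{jk}|^2\le|(\widetilde{\bm S}_i)_{j'k'}|^2$. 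Assign such missed entries to the ``client-saturated'' class, charge them against $\mathcal B_i$ within client $i$, and average as in the cardinality charge of Lemma~\ref{lem:G_stability}. Since $|\mathcal B_i|=s_i-|\mathcal I_i|$ and at most $s_{i,*}-|\mathcal I_i|$ entries are missed, the factor is $s_{i,*}/s_i$.

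In the second case, coordinate $(j,k)$ is held by $m$ clients $h$, of which at most $m_*$ have $(j,k)\in\Omega_{h,*}$. So at least $m-m_*$ holders have $(j,k)\in\mathcal B_h$. Reassigning $(j,k)$ from such an $h$ to $i$ (when $i$ is unsaturated) is feasible, so optimality gives $w_i|(\widetilde{\bm S}_i)_{jk}|^2\le w_h|(\widetilde{\bm S}_h)_{jk}|^2$. Per coordinate, at most $m_*$ clients miss it while at least $m-m_*$ falsely select it, so averaging yields a factor $m_*/(m-m_*)$. This weighting by $w_i$ is exactly why the lemma is stated in the weighted norm.

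The main obstacle is the bookkeeping of these two charges. The frequency charge gives $m_*/(m-m_*)$ rather than $m_*/m$, so I would refine it by counting all $m$ holders, bounding correct holders against the missed entries via the same averaging, or simply absorb the difference into the constant in $\rho_L$. A second subtlety is that a missed entry may face both saturations at once, or a client-saturated case where the swap would break the cap. I would handle this by classifying each missed entry by the first applicable case, as in the row/column split of Lemma~\ref{lem:G_stability}, and then checking that each $\mathcal B$ entry is charged by at most one class, so the two factors add rather than multiply.
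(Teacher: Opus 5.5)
Your overall plan---a weighted tail comparison $\sum_i w_i\|\mathcal P_{\mathcal A_i}(\widetilde{\bm S}_i)\|_{\mathrm F}^2\le\vartheta_L\sum_i w_i\|\mathcal P_{\mathcal B_i}(\widetilde{\bm S}_i)\|_{\mathrm F}^2$, followed by the $2\times2$ quadratic-form step in the weighted product norm---is the paper's. Your two single-saturation exchanges are also correct. The gap is the case you defer: a missed cell $(i,(a,b))$ where client $i$ is at its budget $s_i$ \emph{and} coordinate $(a,b)$ is already held by $m$ clients.

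In that case neither single exchange is feasible. A swap inside client $i$ would give $(a,b)$ $m+1$ holders, and moving $(a,b)$ from a holder $h$ to $i$ would give client $i$ $s_i+1$ cells. So neither comparison need hold. Here is an example. Write $\widetilde q_{i,ab}:=w_i|(\widetilde{\bm S}_i)_{ab}|^2$. Take two clients with $s_1=s_2=m=1$, weighted scores $\widetilde q_{1,x}=10$, $\widetilde q_{1,y}=9$, $\widetilde q_{2,x}=9$ and all others $0$. Let the truth be $\Omega_{1,*}=\{x\}$, $\Omega_{2,*}=\varnothing$, so $m_*=1$ and the hypotheses hold. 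The unique optimum selects $y$ for client $1$ and $x$ for client $2$. The missed score $10$ then exceeds every false selection in its client and every false selection in its coordinate. Classifying doubly saturated cells ``by the first applicable case'' therefore asserts a false inequality. The row/column split in Lemma~\ref{lem:G_stability} worked only because the greedy scan order supplies the comparisons, and a global $b$-matching optimum has no such order.

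The missing idea is a three-cell exchange. Let $\mathcal A^{ab},\mathcal B^{ab},\mathcal I^{ab}$ denote the clients $h$ with $(a,b)$ in $\mathcal A_h,\mathcal B_h,\mathcal I_h$ respectively. Both $\mathcal B_i$ and $\mathcal B^{ab}$ are nonempty, since $|\mathcal B_i|=s_i-|\mathcal I_i|\ge s_i-s_{i,*}+1$ and $|\mathcal B^{ab}|=m-|\mathcal I^{ab}|\ge m-m_*+1$. Take any $(c,d)\in\mathcal B_i$ and $j\in\mathcal B^{ab}$. Removing $(i,(c,d))$ and $(j,(a,b))$ and inserting $(i,(a,b))$ is feasible; the removed cells belong to different clients because $i$ does not hold $(a,b)$. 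Optimality then gives the additive bound $\widetilde q_{i,ab}\le\widetilde q_{i,cd}+\widetilde q_{j,ab}$.

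Averaging over $\mathcal B_i\times\mathcal B^{ab}$ yields $\widetilde q_{i,ab}\le\bar q_i+\bar q^{ab}$. Here $\bar q_i$ and $\bar q^{ab}$ are the average false scores in client $i$ and in coordinate $(a,b)$, set to $0$ when the corresponding cap is slack. This one inequality covers all three cases. Summing over missed cells, every false cell is charged through \emph{both} its client average and its coordinate average. That double charging, not a disjoint assignment of $\mathcal B$-entries to classes, is why the two ratios add up to $\vartheta_L$.

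Two smaller fixes are needed:
\begin{itemize}
\item Count $|\mathcal A^{ab}|/|\mathcal B^{ab}|\le(m_*-|\mathcal I^{ab}|)/(m-|\mathcal I^{ab}|)\le m_*/m$, using that missed and correct holders of $(a,b)$ together number at most $m_*$. Your $m_*/(m-m_*)$ is undefined at $m=m_*$, and absorbing it into a constant changes the stated $\rho_L$.
\item Drop the restriction $\vartheta_L\le1$. The bound $1+\vartheta/2+\sqrt{\vartheta+\vartheta^2/4}\le1+2\sqrt{\vartheta}$ holds for all $\vartheta\le9/4$, and the hypotheses already give $\vartheta_L\le2$. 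The initializer proof uses the lemma in exactly this regime.
\end{itemize}
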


\begin{proof}
For brevity, set
\begin{equation*}
  \vartheta_L
  :=\max_{i\in[N]}\frac{s_{i,*}}{s_i}+\frac{m_*}{m},
  \quad
  \widehat\Omega_i:=\supp(\bm S_i^+),
  \quad
  \Omega_{i,*}:=\supp(\bm S_{i,*}).
\end{equation*}
The fitted-level conditions give $0\le\vartheta_L\le2$. For client $i$, define the missed, falsely selected, and correctly selected supports by
\begin{equation*}
  \mathcal A_i:=\Omega_{i,*}\setminus\widehat\Omega_i,
  \qquad
  \mathcal B_i:=\widehat\Omega_i\setminus\Omega_{i,*},
  \qquad
  \mathcal I_i:=\widehat\Omega_i\cap\Omega_{i,*}.
\end{equation*}
For each coordinate $(a,b)$, also let
\begin{equation*}
  \mathcal A^{ab}:=\{i:(a,b)\in\mathcal A_i\},
  \qquad
  \mathcal B^{ab}:=\{i:(a,b)\in\mathcal B_i\},
  \qquad
  \mathcal I^{ab}:=\{i:(a,b)\in\mathcal I_i\}.
\end{equation*}

Define the weighted
entrywise scores
\begin{equation*}
  \widetilde q_{i,ab}:=w_i|(\widetilde{\bm S}_i)_{ab}|^2.
\end{equation*}
For each client $i$, vectorize the weighted score matrix
$(\widetilde q_{i,ab})_{a,b=1}^p$ and stack the resulting vectors as columns
to form a $p^2\times N$ score matrix.  Its rows are indexed by coordinates
$(a,b)$ and its columns by clients $i$; we denote the entry in row $(a,b)$ and column $i$ by $(a,b,i)$. Then, FCT selects a feasible set of
entries maximizing the total retained score, with at most $m$ selected
entries in each row and at most $s_i$ in column $i$. Thus, $\mathcal A_i$, $\mathcal B_i$, and $\mathcal I_i$ are column slices
of the score table, whereas $\mathcal A^{ab}$, $\mathcal B^{ab}$, and
$\mathcal I^{ab}$ are the corresponding row slices. In view of this, we first establish the weighted tail comparison
\begin{equation}
  \sum_{i=1}^Nw_i
  \|\mathcal P_{\mathcal A_i}(\widetilde{\bm S}_i)\|_{\mathrm F}^2
  \le
  \vartheta_L
  \sum_{i=1}^Nw_i
  \|\mathcal P_{\mathcal B_i}(\widetilde{\bm S}_i)\|_{\mathrm F}^2.
  \label{eq:det-weighted-fcht-tail-dominance}
\end{equation}

Fix $(a,b)\in\mathcal A_i$.  If $\widetilde q_{i,ab}>0$, then
\begin{equation*}
  |\widehat\Omega_i|=s_i
  \quad\text{or}\quad
  \sum_{j=1}^N\mathbf 1\{(a,b)\in\widehat\Omega_j\}=m;
\end{equation*}
otherwise, adding $(a,b,i)$ would strictly increase the retained score.
Whenever the first equality holds, the column false-selection set is
nonempty; whenever the second holds, the row false-selection set is nonempty.
Indeed,
\begin{equation*}
  |\mathcal B_i|
  =s_i-|\mathcal I_i|
  \ge s_i-s_{i,*}+1>0,
  \qquad
  |\mathcal B^{ab}|
  =m-|\mathcal I^{ab}|
  \ge m-m_*+1>0,
\end{equation*}
respectively.
If only the column cap is reached, every false selection in column $i$ has
score at least $\widetilde q_{i,ab}$; otherwise, replacing it by $(a,b,i)$
would give a feasible support with larger total score.  Thus,
$\widetilde q_{i,ab}\le\widetilde q_{i,cd}$ for every
$(c,d)\in\mathcal B_i$.  If only the row cap is reached, the same argument
gives $\widetilde q_{i,ab}\le\widetilde q_{j,ab}$ for every
$j\in\mathcal B^{ab}$.  If both caps are reached, the sum of the scores of
any false selection from column $i$ and any false selection from row $(a,b)$
is at least $\widetilde q_{i,ab}$.  Indeed, the two selected cells are
distinct because $(a,b,i)$ is missed; removing them and inserting $(a,b,i)$
is feasible.  Hence, for every $(c,d)\in\mathcal B_i$ and
$j\in\mathcal B^{ab}$, optimality gives
\begin{equation*}
  \widetilde q_{i,ab}
  \le \widetilde q_{i,cd}+\widetilde q_{j,ab}.
\end{equation*}
Averaging the two single-cap comparisons gives, respectively,
\begin{equation*}
  \widetilde q_{i,ab}
  \le
  \frac1{|\mathcal B_i|}
  \sum_{(c,d)\in\mathcal B_i}\widetilde q_{i,cd}
  =:\bar q_i,
  \qquad
  \widetilde q_{i,ab}
  \le
  \frac1{|\mathcal B^{ab}|}
  \sum_{j\in\mathcal B^{ab}}\widetilde q_{j,ab}
  =:\bar q^{ab}.
\end{equation*}
When both caps are reached, averaging over
$\mathcal B_i\times\mathcal B^{ab}$ gives
\begin{equation*}
\begin{aligned}
  \widetilde q_{i,ab}
  &\le
  \frac1{|\mathcal B_i||\mathcal B^{ab}|}
  \sum_{(c,d)\in\mathcal B_i}
  \sum_{j\in\mathcal B^{ab}}
  \left(\widetilde q_{i,cd}+\widetilde q_{j,ab}\right)
  \\
  &=\bar q_i+\bar q^{ab}.
\end{aligned}
\end{equation*}
Set $\bar q_i=0$ when $|\widehat\Omega_i|<s_i$ or
$\mathcal A_i=\varnothing$, and set $\bar q^{ab}=0$ when
$\sum_{j=1}^N\mathbf 1\{(a,b)\in\widehat\Omega_j\}<m$ or
$\mathcal A^{ab}=\varnothing$.  Thus, all three cases can be written
uniformly as
\begin{equation}
  \widetilde q_{i,ab}\le\bar q_i+\bar q^{ab},
  \qquad (a,b)\in\mathcal A_i.
  \label{eq:det-weighted-fcht-pointwise}
\end{equation}
The same inequality is immediate when $\widetilde q_{i,ab}=0$.

Whenever $\bar q_i\ne0$ or $\bar q^{ab}\ne0$, respectively,
\begin{equation*}
  \frac{|\mathcal A_i|}{|\mathcal B_i|}
  \le
  \frac{s_{i,*}-|\mathcal I_i|}{s_i-|\mathcal I_i|}
  \le\frac{s_{i,*}}{s_i},
  \qquad
  \frac{|\mathcal A^{ab}|}{|\mathcal B^{ab}|}
  \le
  \frac{m_*-|\mathcal I^{ab}|}{m-|\mathcal I^{ab}|}
  \le\frac{m_*}{m}.
\end{equation*}
Summing \eqref{eq:det-weighted-fcht-pointwise} over all clients and $(a,b)\in\mathcal A_i$ yields
\begin{equation*}
\begin{aligned}
  \sum_{i=1}^N\sum_{(a,b)\in\mathcal A_i}\widetilde q_{i,ab}
  &\le
  \sum_{i=1}^N|\mathcal A_i|\bar q_i
  +\sum_{a,b=1}^p|\mathcal A^{ab}|\bar q^{ab}
  \\
  &\le
  \sum_{i=1}^N\frac{s_{i,*}}{s_i}
  \sum_{(a,b)\in\mathcal B_i}\widetilde q_{i,ab}
  +\frac{m_*}{m}
  \sum_{a,b=1}^p\sum_{j\in\mathcal B^{ab}}\widetilde q_{j,ab}
  \\
  &=
  \sum_{i=1}^N\frac{s_{i,*}}{s_i}
  \sum_{(a,b)\in\mathcal B_i}\widetilde q_{i,ab}
  +\frac{m_*}{m}
  \sum_{i=1}^N\sum_{(a,b)\in\mathcal B_i}\widetilde q_{i,ab}
  \\
  &\le
  \left(
    \max_{i\in[N]}\frac{s_{i,*}}{s_i}+\frac{m_*}{m}
  \right)
  \sum_{i=1}^N\sum_{(a,b)\in\mathcal B_i}\widetilde q_{i,ab}
  \\
  &=
  \vartheta_L
  \sum_{i=1}^N\sum_{(a,b)\in\mathcal B_i}\widetilde q_{i,ab}.
\end{aligned}
\end{equation*}
Recalling the definition of $\widetilde q_{i,ab}$ proves
\eqref{eq:det-weighted-fcht-tail-dominance}.

Equip
$(\mathbb R^{p\times p})^N$ with the weighted inner product and norm
\begin{equation*}
  \left\langle(\bm X_i),(\bm Y_i)\right\rangle_w
  :=\sum_{i=1}^Nw_i\inner{\bm X_i}{\bm Y_i},
  \qquad
  \left\|(\bm X_i)\right\|_w
  :=\left\{\sum_{i=1}^Nw_i\|\bm X_i\|_{\mathrm F}^2\right\}^{1/2}.
\end{equation*}
Setting $\bm H_i=\widetilde{\bm S}_i-\bm S_{i,*}$, we have
$\mathcal P_{\mathcal A_i}(\bm S_{i,*})
=\mathcal P_{\mathcal A_i}(\widetilde{\bm S}_i)
-\mathcal P_{\mathcal A_i}(\bm H_i)$ and
$\mathcal P_{\mathcal B_i}(\widetilde{\bm S}_i)
=\mathcal P_{\mathcal B_i}(\bm H_i)$. The triangle inequality  gives
\begin{equation*}
  \left\{
    \sum_{i=1}^Nw_i
    \|\mathcal P_{\mathcal A_i}(\bm S_{i,*})\|_{\mathrm F}^2
  \right\}^{1/2}
  \le
  \left\{
    \sum_{i=1}^Nw_i
    \|\mathcal P_{\mathcal A_i}(\bm H_i)\|_{\mathrm F}^2
  \right\}^{1/2}
  +\sqrt{\vartheta_L}
  \left\{
    \sum_{i=1}^Nw_i
    \|\mathcal P_{\mathcal B_i}(\bm H_i)\|_{\mathrm F}^2
  \right\}^{1/2}.
\end{equation*}
Moreover, since $\mathcal A_i$, $\mathcal B_i$, and $\mathcal I_i$ form a
partition of $\Omega_i^+$, the projection error decomposes as
\begin{equation*}
\begin{aligned}
  \bm S_i^+-\bm S_{i,*}
  &=
  \mathcal P_{\widehat\Omega_i}(\widetilde{\bm S}_i)
  -\mathcal P_{\Omega_{i,*}}(\bm S_{i,*})
  \\
  &=
  \mathcal P_{\mathcal B_i}(\widetilde{\bm S}_i)
  +\mathcal P_{\mathcal I_i}
    (\widetilde{\bm S}_i-\bm S_{i,*})
  -\mathcal P_{\mathcal A_i}(\bm S_{i,*})
  \\
  &=
  \mathcal P_{\mathcal B_i}(\bm H_i)
  +\mathcal P_{\mathcal I_i}(\bm H_i)
  -\mathcal P_{\mathcal A_i}(\bm S_{i,*}).
\end{aligned}
\end{equation*}
Consequently,
\begin{equation*}
\begin{aligned}
  \sum_{i=1}^Nw_i\|\bm S_i^+-\bm S_{i,*}\|_{\mathrm F}^2
  &=
  \sum_{i=1}^Nw_i\left\{
    \|\mathcal P_{\mathcal B_i}(\bm H_i)\|_{\mathrm F}^2
    +\|\mathcal P_{\mathcal I_i}(\bm H_i)\|_{\mathrm F}^2
    +\|\mathcal P_{\mathcal A_i}(\bm S_{i,*})\|_{\mathrm F}^2
  \right\}
  \\
  &\le
  \sum_{i=1}^Nw_i\|\mathcal P_{\mathcal I_i}(\bm H_i)\|_{\mathrm F}^2
  +\sum_{i=1}^Nw_i\|\mathcal P_{\mathcal B_i}(\bm H_i)\|_{\mathrm F}^2
  \\
  &\quad+
  \left[
    \left\{\sum_{i=1}^Nw_i
      \|\mathcal P_{\mathcal A_i}(\bm H_i)\|_{\mathrm F}^2
    \right\}^{1/2}
    +\sqrt{\vartheta_L}
    \left\{\sum_{i=1}^Nw_i
      \|\mathcal P_{\mathcal B_i}(\bm H_i)\|_{\mathrm F}^2
    \right\}^{1/2}
  \right]^2.
\end{aligned}
\end{equation*}
The last two terms involving the $\mathcal A_i$ and $\mathcal B_i$
components form the same $2\times2$ quadratic form as in
Lemma~\ref{lem:G_stability}, now in the weighted product norm.  Its largest
eigenvalue satisfies
\begin{equation*}
  \lambda_{\max}\!\left(
  \begin{pmatrix}
    1 & \sqrt{\vartheta_L}\\
    \sqrt{\vartheta_L} & 1+\vartheta_L
  \end{pmatrix}
  \right)
  =
  1+\frac{\vartheta_L}{2}
  +\sqrt{\vartheta_L+\frac{\vartheta_L^2}{4}}
  \le
  1+2\sqrt{\vartheta_L}
  =1+\rho_L,
\end{equation*}
where the inequality uses $0\le\vartheta_L\le2$. Therefore,
\begin{equation*}
  \sum_{i=1}^Nw_i\|\bm S_i^+-\bm S_{i,*}\|_{\mathrm F}^2
  \le
  (1+\rho_L)
  \sum_{i=1}^Nw_i
  \|\mathcal P_{\Omega_i^+}(\bm H_i)\|_{\mathrm F}^2.
\end{equation*}

Finally, the conditions
$s_i\ge C\beta_\mathrm{P}^2\alpha_\mathrm{P}^{-2}s_{i,*}$ and
$m\ge C\beta_\mathrm{P}^2\alpha_\mathrm{P}^{-2}m_*$ imply
\begin{equation*}
  \max_{i\in[N]}\frac{s_{i,*}}{s_i}
  \le
  \frac1C
  \left(\frac{\alpha_\mathrm{P}}{\beta_\mathrm{P}}\right)^2,
  \qquad
  \frac{m_*}{m}
  \le
  \frac1C
  \left(\frac{\alpha_\mathrm{P}}{\beta_\mathrm{P}}\right)^2.
\end{equation*}
Therefore,
\begin{equation*}
  \vartheta_L
  \le
  \frac{2}{C}
  \left(\frac{\alpha_\mathrm{P}}{\beta_\mathrm{P}}\right)^2
  \quad\text{and}\quad
  \rho_L
  =2\sqrt{\vartheta_L}
  \le2\sqrt{\frac{2}{C}}
  \frac{\alpha_\mathrm{P}}{\beta_\mathrm{P}}
  \le c\frac{\alpha_\mathrm{P}}{\beta_\mathrm{P}}
\end{equation*}
for sufficiently large $C$.
\end{proof}

\section{Simulation design and implementation}
\label{app:simulation-details}

In this appendix, we detail the data-generating processes and fitting
settings for the simulations in the main article. Data follow
\eqref{eq:global_local_model} and \eqref{eq:decomp_full},
with \(p=40\) and \(T_i=T\).
Except in the covariance design below, innovations are independent
\(N(\bm0,\bm I_{40})\). Results are averaged over 200 independent replications.
\textsc{PerFeCT-VAR} uses the true rank, 200 rounds,
\(\eta_{\mathrm{LR}}=\eta_0=0.1\), \(\eta_{\mathrm P}=0.3\), and
\(\gamma =3/p\). The initialization follows Section~\ref{sec:algorithm_initialization}.
With
\(\widehat\phi_i=\|\widehat{\bbm\Gamma}_i\|_{\mathrm{op}}\),
the clientwise convex low-rank plus sparse programs use \(\nu=p\) and
\begin{equation}
\label{eq:sim-initialization-penalties}
 \lambda_{B,i}=0.5\sqrt{\frac{\widehat\phi_i(p+\log N)}{T}},
 \qquad
 \lambda_{C,i}=0.5\sqrt{\frac{\widehat\phi_i\log(pN)}{T}},
 \qquad i\in [N].
\end{equation}
They are solved from zero using the AFNSL algorithm of \citet{basu2019low} with at most 100 iterations.

\subsection{Estimation accuracy and rate separation}
\label{subsec:sim-exp1-details}

For the true parameters, set
\(\bm B_*=\bm U_*\operatorname{diag}(1,0.5)\bm V_*^\top\), where \(\bm U_*, \bm V_*\in \mathbb R^{40\times2}\) are random matrices with orthonormal columns. We compute the incoherence parameter $\mu$ and discard draws with \(\mu>5\).
For the support of \(\bm S_{0,*}\), we randomly select
\(s_{0,*}=36\) distinct coordinates, with at most
\(\gamma_*p=2\) in each row and column. For the supports of \(\{\bm S_{i,*}\}_{i=1}^N\), we sample
\(9N\) distinct coordinates uniformly from all \(p^2\) coordinates.
Starting with a budget of 18 entries per client, we sequentially
assign each coordinate to the two clients with the largest remaining
budgets. This gives \(s_{i,*}=18\) for every client and frequency
\(m_*=2\) at each selected coordinate. All nonzero entries have independent random signs and independent magnitudes drawn from
\(U(0.1,0.2)\) for \(\bm S_{0,*}\) and
\(U(0.2,0.4)\) for \(\bm S_{i,*}\).
We retain only draws satisfying
\(\max_i\rho(\bm A_{i,*})<0.8\) at every client count.

We vary \(T\in\{100,200,400,800,1200,1600,2000\}\) at \(N=30\),
and \(N\in\{10,20,30,40,50,60\}\) at \(T=800\).
Within each replication, keep \(\bm B_*\) and \(\bm S_{0,*}\) fixed
across \(N\) and regenerate $\{\bm S_{i,*}\}_{i=1}^N$ for each \(N\).
The samples across \(T\) are terminal segments of one trajectory
generated at \(T=2000\); samples across \(N\) are generated separately.
\textsc{PerFeCT-VAR} uses \((s_0,s_i,m)=(40,20,3)\).

Local LRS and Pooled LRS use AFNSL with at most 200 and 600 iterations, respectively.
Local LRS replaces \(p+\log N\) and \(\log(pN)\) in
\eqref{eq:sim-initialization-penalties} by \(p\) and \(\log p\).
Pooled LRS additionally replaces \(\widehat\phi_i\) by
\(\widehat\phi_w=\|N^{-1}\sum_i\widehat{\bbm\Gamma}_i\|_{\mathrm{op}}\)
and \(T\) by \(NT\).
Ditto solves its personalized quadratic problems exactly with
personalization coefficient \(1.5\).

\subsection{Personalization diversity and covariance discrepancy}
\label{subsec:sim-exp2-details}

Both designs use \(N=30\) and \(T=800\).

\paragraph{(a) Personalization diversity.}
We fix \(s_{i,*}=6\) for every client and vary only the arrangement of
personalized supports within the top-left \(15\times15\) block to obtain $\{\bm S_{i,*}\}$ with
increasing \(m_*\), corresponding to progressively weaker personalization diversity.
For the shared components, set
\(\bm B_*=0.5\bm u\bm u^\top\), where
\(u_j=15^{-1/2}\) for \(j\le15\) and zero otherwise.
Generate \(\bm S_{0,*}\) as in Appendix~\ref{subsec:sim-exp1-details}, with \(s_{0,*}=18\).

For each \(m_*\in\{1,2,3,4,6,9,12\}\), we randomly select the \(180/m_*\) coordinates in the top-left \(15\times15\) block, where \(180=30\times6\) is the total number of personalized entries. Then, we assign these coordinates to clients using the rule in Appendix~\ref{subsec:sim-exp1-details},
with exactly 6 nonzero entries per client and \(m_*\) clients per coordinate. For each selected coordinate, draw one random sign shared by all
clients using it. Draw the magnitude of each nonzero entry
independently from \(U(0.1,0.2)\). We also require \(\max_i\|\bm A_{i,*}\|_{\mathrm{op}}<0.8\). For \textsc{PerFeCT-VAR}, we use fitting levels \((s_0,s_i,m)=(20,8,m_*+1)\).

\paragraph{(b) Directional covariance discrepancy measures.}
To isolate the effect of the directional covariance discrepancy measures, we construct a data-generating process with increasing
\(\Lambda_{\mathrm{LR}}\) and \(\Lambda_{\mathrm{S}}\), while fixing all
curvature and scale coefficients
$\alpha_{\mathrm{LR}}$, $\beta_{\mathrm{LR}}$, $\alpha_0$, $\beta_0$, $\alpha_\mathrm{P}$, $\beta_\mathrm{P}$, $\phi_{\min}$, $\phi_{\max}$, $\tau$.

The idea of this construction is to vary directional covariance discrepancy by changing the number of client
pairs whose $\bm E_i$ align with the global components. We first generate \(\bm B_*\) and
\(\bm S_{0,*}\), choosing a sparse right singular vector for \(\bm B_*\). We then divide the clients into two equal groups and pair them across
groups, constructing opposite \(\bm S_{i,*}\) and \(\bm E_i\)
within each pair. The covariance deviations then cancel and preserve the
prespecified average covariance \(\overline{\bbm\Gamma}\).
We select a subset of client pairs and design their \(\bm E_i\)
so that \(\bm S_{i,*}\bm E_i\) aligns with \(\bm B_*\).
Since the right singular vector of \(\bm B_*\) is sparse, these
contributions can also accumulate in the shared sparse update.
Keeping all true components fixed, we then increase the number of selected pairs, thereby increasing \(\Lambda_{\mathrm{LR}}\) and \(\Lambda_{\mathrm{S}}\).

Specifically, we construct \(\bm B_*=0.3\bm u\bm v^\top\), where
\(\bm u\) has value \(1/\sqrt{18}\) at 18 randomly selected positions
among its first 39 coordinates, and \(\bm v\) has value \(1/\sqrt{8}\)
at 8 randomly selected positions among its first 38 coordinates,
with all remaining entries zero. For $\bm S_{0,*}$, randomly generate $s_{0,*}=18$ entries in the first 39 rows with $\gamma_*=2/p$,
independent random signs, and \(U(0.03,0.07)\)
magnitudes. Next, before generating \(\bm S_{i,*}\), randomly select 15 distinct positions
among the first 38 entries of \(\bm v\), restricted to positions
outside \(\supp(\bm v)\).
For \(j=1,\ldots,15\), let \(\bm z_j\in\mathbb R^p\) have value 1
at the \(j\)th selected position and 0 elsewhere.
Then, we randomly pair the clients and, after relabeling, set
\[
 \bm S_{2j-1,*}=-\bm S_{2j,*}=0.4\bm u\bm z_j^\top,
 \qquad j=1,\ldots,15.
\]
Thus \(s_{i,*}=18\) and \(m_*=2\). Accept draws with
\(\max_i\|\bm A_{i,*}\|_{\mathrm{op}}\le0.6\).

We consider four heterogeneity levels. Each level uses $k$ over 15 pairs of the relabeled clients above, to align $\bm E_i$ with the true parameters. The zero level uses \(k=0\) and \(\bm E_i=\bm0\). For low, medium, and high, let \(k=5,10,15\), respectively, and set
\begin{equation}
\label{eq:sim-aligned-covariance-deviations}
 \bm E_{2j-1}=-\bm E_{2j}
 =0.45(\bm b_j\bm z_j^\top+\bm z_j\bm b_j^\top),
 \qquad
 \bm b_j=
 \begin{cases}
  \bm v, & j\le k,\\
  \bm z_{15}, & j=k+1,\\
  \bm z_{j-1}, & j>k+1,
 \end{cases}
\end{equation}
Finally, set $\overline{\bbm\Gamma}=\operatorname{diag}(1,\ldots,1,0.55,1.45)$ and define
\[
 \bbm\Gamma_i=\overline{\bbm\Gamma}+\bm E_i,
 \qquad
 \bbm\Sigma_{\epsilon,i}
 =\bbm\Gamma_i-\bm A_{i,*}\bbm\Gamma_i\bm A_{i,*}^\top.
\]
We draw initial states from
\(N(\bm0,\bbm\Gamma_i)\) and independent innovations from
\(N(\bm0,\bbm\Sigma_{\epsilon,i})\) to generate data.
We fit \((s_0,s_i,m)=(20,20,3)\), so \(m^\dagger=2m+m_*=8\).

We compute $\Lambda_{\mathrm{LR}}$ and $\Lambda_{\mathrm{S}}$ directly from
\eqref{eq:directional_heterogeneity_levels}. When \(k=0\), all
\(\bm E_i\) are zero, so $\Lambda_{\mathrm{LR}}=\Lambda_{\mathrm{S}}=0$. In the following, consider \(k\in\{5,10,15\}\).

For the low-rank component, in \eqref{eq:directional_heterogeneity_levels}, let $\bm X$ be the normalized true shared direction and $\bm H_i$ be the normalized personalized directions from the
\(2k\) aligned clients:
\[
 \bm X=\frac{\bm B_*}{0.3}=\bm u\bm v^\top,
 \qquad
 \bm H_i=
 \begin{cases}
  \bm S_{i,*}/(0.4\sqrt{2k}), & i\le2k,\\
  \bm0, & i>2k.
 \end{cases}
\]
Then, these directions are feasible, i.e., \(\bm X\in\mathcal X_{\mathrm{LR}}\) and $(\bm H_1,...,\bm H_N) \in \mathcal{H}_{\mathrm{P}}$. Substituting them into
\eqref{eq:directional_heterogeneity_levels} gives
\[
 \Lambda_{\mathrm{LR}}
 \ge\frac{1}{\sqrt{m^\dagger}}
 \left|\sum_{i=1}^{N}\langle\bm H_i,\bm X\bm E_i\rangle\right|=\frac{2}{\sqrt{m^\dagger}}\sum_{j=1}^{k}
 \left\langle\frac{\bm u\bm z_j^\top}{\sqrt{2k}},
 0.45\bm u\bm z_j^\top\right\rangle
 =0.45\sqrt{\frac{2k}{m^\dagger}}.
\]
The factor 2 comes from the equal contributions of the two clients
in each aligned pair.

For the shared sparse component \(\bm S_0\), take any
\(l\in\supp(\bm u)\) and choose
\[
 \bm X=\bm e_l\bm v^\top,
 \qquad
 \bm H_{2j-1}=-\bm H_{2j}
 =\begin{cases}
  \bm e_l\bm z_j^\top/\sqrt{2k}, & j\le k,\\
  \bm0, & j>k.
 \end{cases}
\]
Here \(\bm e_l\in\mathbb R^p\) has value 1 at position \(l\)
and 0 elsewhere. These are
the normalized \(l\)th-row of $\bm B_*$ and $\bm S_{i,*}$.
Since \(\bm v\) has eight nonzeros, \(\bm X\in\mathcal X_\mathrm{S}\) under \(s_0^\dagger=58\) and
\(\gamma^\dagger =8/p\). Meanwhile, \((\bm H_1,...,\bm H_N)\in\mathcal H_{\mathrm P}\). Substituting them into \eqref{eq:directional_heterogeneity_levels}
gives
\[
 \Lambda_{\mathrm{S}}
 \ge\frac{1}{\sqrt{m^\dagger}}
 \left|\sum_{i=1}^{N}\langle\bm H_i,\bm X\bm E_i\rangle\right|\\
 =\frac{2}{\sqrt{m^\dagger}}\sum_{j=1}^{k}
 \left\langle\frac{\bm e_l\bm z_j^\top}{\sqrt{2k}},
 0.45\bm e_l\bm z_j^\top\right\rangle
 =0.45\sqrt{\frac{2k}{m^\dagger}}.
\]

Next, we show that the attained values above are also upper bounds.
For \(a\in\{\mathrm{LR},S\}\), Cauchy--Schwarz and the normalization
of the feasible directions give
\[
\begin{aligned}
  \Lambda_a^2
  =\sup_{\substack{\bm X\in\mathcal X_a\\
  (\bm H_i)_{i=1}^N\in\mathcal H_{\mathrm P}}}
  \frac{\left|\sum_i\langle\bm H_i,\bm X\bm E_i\rangle\right|^2}{m^\dagger}\le\sup_{\bm X\in\mathcal X_a}
  \frac{\sum_i\|\bm X\bm E_i\|_{\mathrm F}^2}{m^\dagger}
  \le\frac{\lambda_{\max}\left(\sum_i\bm E_i\bm E_i^\top\right)}{m^\dagger}.
\end{aligned}
\]
Orthogonality of \(\bm v,\bm z_1,\ldots,\bm z_{15}\) gives
\[
 \sum_{i=1}^N\bm E_i\bm E_i^\top
 =2(0.45)^2\left(
 k\bm v\bm v^\top
 +\sum_{j=1}^k\bm z_j\bm z_j^\top
 +2\sum_{j=k+1}^{15}\bm z_j\bm z_j^\top
 \right),
\]
whose largest eigenvalue is \(2k(0.45)^2\).
Therefore, the upper and lower bounds coincide in our construction, and we have:
\[
 \Lambda_{\mathrm{LR}}=\Lambda_{\mathrm{S}}
 =0.45\sqrt{\frac{2k}{m^\dagger}}
 =0.225\sqrt{k},
 \qquad k\in\{0,5,10,15\}.
\]
The zero, low, medium, and high values are approximately
\(0,0.503,0.712,0.871\), respectively.

Finally, we verify that the remaining curvature and scale coefficients
are fixed across all four levels. Since $T_i\equiv T$ and \(\bm E_{2j-1}+\bm E_{2j}=\bm0\), the pooled covariance is preserved as
\[
 \bbm\Gamma_w
 =\frac1N\sum_{i=1}^N\bbm\Gamma_i
 =\overline{\bbm\Gamma}
 =\operatorname{diag}(1,\ldots,1,0.55,1.45).
\]
For \(k>0\), \(\bm E_i\) has exactly two nonzero eigenvalues, \(\pm0.45\).
Its nonzero entries lie entirely within the top-left \(38\times38\) block. Hence, in
\(\bbm\Gamma_i=\operatorname{diag}(\bm I_{38},0.55,1.45)+\bm E_i\),
the leading block has eigenvalues \(1-0.45\), \(1\), and \(1+0.45\),
while the last two diagonal entries are unchanged. At \(k=0\),
\(\bm E_i=\bm0\). Thus, at every level,
\(\lambda_{\min}(\bbm\Gamma_i)=0.55\) and
\(\lambda_{\max}(\bbm\Gamma_i)=1.45\).
For the restricted curvatures, take \(\bm H\) with \(H_{1,39}=1\)
and all other entries zero. This matrix has \(\|\bm H\|_{\mathrm F}=1\),
satisfies both shared and personalized sparsity constraints, and gives
\[
 \operatorname{tr}(\bm H\bbm\Gamma_w\bm H^\top)
 =\operatorname{tr}(\bm H\bbm\Gamma_i\bm H^\top)=0.55.
\]
Taking \(H_{1,40}=1\) instead, with all other entries zero, gives
\(1.45\) in both quadratic forms. Hence
\eqref{eq:app-full-pooled-curvature} and
\eqref{eq:app-full-population-curvature-coefficients} give
\[
 \alpha_{\mathrm{LR}}=\alpha_0=\alpha_\mathrm{P}=\phi_{\min}=0.55,
 \qquad
 \beta_{\mathrm{LR}}=\beta_0=\beta_\mathrm{P}=\phi_{\max}=1.45.
\]

For the innovation scale, \(\|\bm A_{i,*}\|_{\mathrm{op}}\le0.6\)
gives
\[
 0.028\bm I_p
 =(0.55-1.45\cdot0.6^2)\bm I_p
 \preceq\bbm\Sigma_{\epsilon,i}
 \preceq\bbm\Gamma_i
 \preceq1.45\bm I_p.
\]
The zero fortieth row of \(\bm A_{i,*}\) gives
\((\bm A_{i,*}\bbm\Gamma_i\bm A_{i,*}^\top)_{40,40}=0\). Hence
\[
 1.45=(\bbm\Gamma_i)_{40,40}
 =(\bbm\Sigma_{\epsilon,i})_{40,40}
 \le\|\bbm\Sigma_{\epsilon,i}\|_{\mathrm{op}}
 \le1.45,
\]
so \(\tau^2=\max_i\|\bbm\Sigma_{\epsilon,i}\|_{\mathrm{op}}=1.45\)
at all four levels.

\subsection{Communication-efficient implementation}
\label{subsec:sim-communication-details}

For the experiment in Section~\ref{subsec:sim_communication},
we use Algorithm~\ref{alg:efficient_HT_FCHT} with
\(\widetilde s_0\in\{40,160\}\) and two prescribed FCT refresh
schedules. Using the round index \(l=0,\ldots,199\), the selective
schedule sets
\[
\mathcal R_{\mathrm{sel}}
=\{0,\ldots,39\}
\cup\{44,49,\ldots,99\}
\cup\{109,119,\ldots,199\},
\]
giving 62 refreshes. The every-10 schedule sets
\(\mathcal R_{10}=\{9,19,\ldots,199\}\), giving 20 refreshes.
Between refreshes, clients retain their previous supports while
continuing to update the personalized coefficients.

\end{document}